\newcolumntype{C}[1]{>{\centering\arraybackslash}p{#1}}
\definecolor{lightgray}{gray}{0.9}
\definecolor{awesome}{rgb}{1.0, 0.13, 0.32}
\definecolor{lightgray}{gray}{0.9}
\definecolor{awesome}{rgb}{1.0, 0.13, 0.32}
\newcommand{\calC}{\mathcal{C}}
\newcommand{\calA}{\mathcal{A}}
\newcommand{\calB}{\mathcal{B}}
\newcommand{\calS}{\mathcal{S}}
\newcommand{\calG}{\mathcal{G}}
\newcommand{\calP}{\mathcal{P}}
\newcommand{\calL}{\mathcal{L}}
\newcommand{\lrang}[1]{\left< #1 \right>}
\newcommand{\sig}{\sigma}
\newcommand{\aut}[1]{{\mathrm{Aut}}(#1)}
\newcommand{\mat}[1]{\left( \begin{matrix} #1 \end{matrix} \right)}
\newcommand{\clif}{\mathrm{Clif}_n}
\newcommand{\gf}{\mathbb{F}_2}
\newcommand{\diag}{\mathrm{Diag}}
\renewcommand{\i}{\mathrm{i}}
\newcommand{\diagZ}{\mathrm{DZ}}
\newcommand{\diagX}{\mathrm{DX}}
\newcommand{\cnot}{\mathrm{CX}}
\DeclarePairedDelimiter\ceil{\lceil}{\rceil}
\DeclarePairedDelimiter\floor{\lfloor}{\rfloor}
\theoremstyle{plain}
\newtheorem{thm}{\protect\theoremname}[section]
\theoremstyle{definition}
\newtheorem{defn}[thm]{\protect\definitionname}
\theoremstyle{plain}
\newtheorem{cor}[thm]{\protect\corollaryname}
\theoremstyle{plain}
\newtheorem{lem}[thm]{\protect\lemmaname}
\theoremstyle{plain}
\newtheorem{prop}[thm]{\protect\propositionname}
\theoremstyle{remark}
\newtheorem*{rem*}{\protect\remarkname}
\newtheorem{exmp}[thm]{\protect\examplename}
\theoremstyle{plain}
\newtheorem*{theorem*}{Theorem}
\providecommand{\corollaryname}{Corollary}
\providecommand{\definitionname}{Definition}
\providecommand{\lemmaname}{Lemma}
\providecommand{\propositionname}{Proposition}
\providecommand{\remarkname}{Remark}
\providecommand{\theoremname}{Theorem}
\providecommand{\examplename}{Example}
\DeclareMathAlphabet{\mathcal}{OMS}{cmsy}{m}{n}
\begin{document}

\title{Computing Efficiently in QLDPC Codes}
\author{Alexander J. Malcolm}\affiliation{Photonic Inc.}
\author{Andrew N. Glaudell}\affiliation{Photonic Inc.}
\author{Patricio Fuentes}\affiliation{Photonic Inc.}
\author{Daryus Chandra}\affiliation{Photonic Inc.}
\author{Alexis Schotte}\affiliation{Photonic Inc.}
\author{Colby DeLisle}\affiliation{Photonic Inc.}
\author{Rafael Haenel}\affiliation{Photonic Inc.}
\author{Amir Ebrahimi}\affiliation{Photonic Inc.}
\author{Joschka Roffe}\affiliation{Photonic Inc.}\affiliation{University of Edinburgh, United Kingdom}
\author{Armanda O. Quintavalle}\affiliation{Photonic Inc.}\affiliation{Freie Universit\"at Berlin, Germany}
\author{Stefanie J. Beale}\affiliation{Photonic Inc.}
\author{Nicholas R. Lee-Hone}\affiliation{Photonic Inc.}
\author{Stephanie Simmons}\affiliation{Photonic Inc.}

\date{\today}

\begin{abstract}
    It is the prevailing belief that quantum error correcting techniques will be required to build a utility-scale quantum computer able to perform computations that are out of reach of classical computers.
    The quantum error correcting codes that have been most extensively studied and therefore highly optimized, surface codes, are extremely resource intensive in terms of the number of physical qubits needed.
    A promising alternative, quantum low-density parity check (QLDPC) codes, has been proposed more recently.
    These codes are much less resource intensive, requiring significantly fewer physical qubits per logical qubit than practical surface code implementations.
    A successful application of QLDPC codes would therefore drastically reduce the timeline to reaching quantum computers that can run algorithms with exponential speedups like Shor's algorithm and Quantum Phase Estimation (QPE).
    However to date QLDPC codes have been predominantly studied in the context of quantum memories; there has been no known method for implementing arbitrary logical Clifford operators in a QLDPC code proven efficient in terms of circuit depth.
    In combination with known methods for implementing $T$ gates, an efficient implementation of the Clifford group unlocks resource-efficient universal quantum computation.
    In this paper, we introduce a new family of QLDPC codes that enable efficient compilation of the full Clifford group via transversal operations.
    Our construction executes any $m$-qubit Clifford operation in at most $O(m)$ syndrome extraction rounds, significantly surpassing state-of-the-art lattice surgery methods. 
    We run circuit-level simulations of depth-126 logical circuits to show that logical operations in our QLDPC codes attains near-memory performance.
    These results demonstrate that QLDPC codes are a viable means to reduce the resources required to implement all logical quantum algorithms, thereby unlocking a reduced timeline to commercially valuable quantum computing.
\end{abstract}

\maketitle
\setlength{\belowdisplayskip}{1em} \setlength{\belowdisplayshortskip}{1em}
\setlength{\abovedisplayskip}{1em} \setlength{\abovedisplayshortskip}{1em}

\section{Introduction}
\vspace{-1em}
Quantum computers are poised to deliver the next major evolution in computational technology, with known applications in several high-impact sectors, including drug discovery, materials design, and cryptanalysis. However, by their nature, quantum technologies are highly susceptible to noise and are inherently incompatible with classical error correction techniques, leading to the development of unique quantum-specific error correction solutions. In quantum error correction (QEC), physical redundancy is introduced so that errors can be detected and corrected, ideally without harming the encoded information. QEC researchers have spent decades optimizing so-called ``planar" QEC codes, such as the surface code. These codes have many positive attributes, including nearest-neighbour connectivity for syndrome extraction, competitive error thresholds, and a high degree of symmetry. However, the physical resource requirements of the surface code are of surface code error correction are so onerous that experts have assessed that with these codes quantum technologies capable of breaking RSA-2048 (a benchmark often used to assess commercial utility) are likely to only arrive decades in the future~\cite{evolutionq_main}. Until efficient QEC is unlocked at scale, commercial quantum applications will not deliver material value to society.

Tantalizingly, there are other QEC codes --known as quantum low-density parity check (QLDPC) codes~\footnote{Note that throughout this paper we follow the convention of the field and implicitly exclude surface codes from consideration when discussing QLDPC codes, instead focusing on codes with higher encoding rate that meet the LDPC criteria.}-- that do not suffer from such high physical overheads. These codes have many of the same positive traits that planar codes do, leading to an explosion of interest in them over the past few years \cite{Breuckmann2024_main, Eberhardt2024_main, Gong2024_main, cross2024_main, Cowtan.2024_main, correlated-decoding-2Zhou2024_main, Quintavalle2023_main, Brown.2022,Quintavalle.2021_main}. 
However, despite significant work in this direction \cite{Quintavalle2023_main, cross2024_main, Krishna_2021_main}, it is not presently known how to compile depth-efficient logical operations in QLDPC codes. If provably efficient universal logical gate sets were to be found for QLDPC codes, the timeline for the availability of commercially relevant quantum computers could be brought in by years if not decades, owing to the reduced cost in physical qubit count.

In this paper, we propose QLDPC codes, \emph{Subsystem HYpergraph Product Simplex (SHYPS) codes} (see Fig. \ref{fig:stabsmain}), designed from the bottom up with logical gate implementation as the core consideration. We construct highly symmetric codes capable of implementing immense numbers of logical gates transversally \cite{Sayginel2024_main}, facilitating arbitrary logic with asymptotic circuit depths matching those for \emph{unencoded} logic. This approach is generally compatible with single-shot error correction, providing benefits in reduced qubit-overhead, reduced decoding complexity, and faster logical clock speeds. We demonstrate this compatibility by performing the most advanced circuit-level logical simulation to date, demonstrating near-memory performance for a compiled circuit derived from a randomly sampled logical Clifford circuit on 18 logical qubits.
\begin{figure}[ht]
    \includegraphics[width=0.68\linewidth]{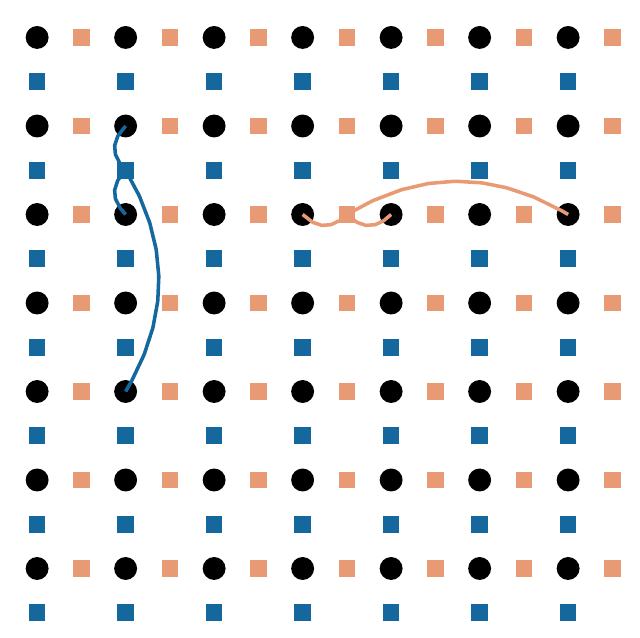}
    \caption{Gauge generator connectivity for $r=3$ SHYPS code. Black circles denote data qubits. Orange (blue) squares indicate auxiliary qubits used to measure 
$X$ ($Z$) gauge generators. Each gauge generator has weight 3 and is measured using three CNOT gates, depicted as edges connecting the corresponding data qubits to a single auxiliary qubit. Two representative gauge generators are illustrated: one $X$-type and one $Z$-type. The full set of gauge generators can be obtained by translating this pattern under periodic boundary conditions.}
    \label{fig:stabsmain}
\end{figure}
\vspace{-1em}
\section{Logic in QLDPC Codes}
\vspace{-1em}
The Clifford group is of particular interest for quantum computation as it provides a universal gate set when combined with any single non-Clifford operator. As there are known methods for implementing non-Clifford operations (e.g., $T$ gates) fault-tolerantly using state injection \cite{Litinski2019_main}, an efficient fault-tolerant implementation of the Clifford group for a given code is sufficient to unlock universal computation in that code.

To date, QLDPC codes have been predominantly studied in the context of quantum memories. Various techniques based on generalized lattice surgery \cite{Brown.2022_main, Cowtan.2024_main} have been proposed to rework logical Clifford implementations in planar codes for application in QLDPC codes. However, these methods often introduce substantial overheads in both time and space: operations are not guaranteed to be parallelizable, large auxiliary patches are required, and any single-shot properties \cite{Quintavalle.2021_main} of the QLDPC code can be compromised, as lattice surgery does not inherently support them.

In the generalized lattice surgery framework, computation is performed via joint logical operator measurements. State-of-the-art schemes for QLDPC surgery \cite{cross2024_main} measure single weight-$d$ logical operators with overheads of $O(d)$ in both space and time for a single logical cycle, where $d$ is the code distance. A constant number of such measurements may then be combined to implement elementary Clifford gates such as CNOT,  Hadamard and Phase. It may be possible that compiling methods will be developed to optimize depth-efficiency for these techniques. However, even assuming a high degree of parallelization commensurate with planar codes --which may or may not be possible in general-- direct compilation with these elementary gates to implement a worst-case $m$-qubit Clifford operator requires $O(m)$ logical cycles, leading to a total depth of $O(md)$.

Recent research \cite{correlated-decoding-2Zhou2024_main} suggests that logical Clifford execution based on transversal gates may avoid the $O(d)$ overhead for a single logical cycle, allowing syndrome information to accrue simultaneously with logical Clifford execution. This indicates that a QLDPC code family with sufficiently many transversal gates that also exhibits single-shot capabilities might execute arbitrary Clifford logic in lower depths with efficient decoding strategies. 
We generate a new family of highly symmetric QLDPC codes --\emph{Subsystem HYpergraph Product Simplex} (SHYPS) codes-- by combining the subsystem hypergraph product (SHP) code construction \cite{li2020numerical_main} with classical simplex codes \cite[Ch. 1.9]{book:MW}, and show that for these codes such low-depth logical Clifford implementations are indeed possible.
The SHYPS code family permits implementations of elementary Clifford gates in $O(1)$ time with typically zero space overhead (see \cref{tab:logcial_ops}). However, the chief figure of merit we analyze is the depth of a worst-case encoded Clifford.

Using new compilation techniques, we achieve a depth-efficient implementation of a worst-case $m$-qubit Clifford in only $4m(1+o(1))$ logical cycles, each consisting of a single depth-1 circuit followed by a depth-6 syndrome extraction round. This is comparable to state-of-the-art depth-efficiency, $2m+O(log^2(m))$, for a purely unitary implementation of Cliffords acting on unencoded qubits \cite{Maslov2022_main}.
The compilation strategy employed to achieve this low-depth implementation constructs circuits not as a sequence of elementary one- and two-qubit logical operations, but rather by leveraging many-qubit logical operations that occur naturally via low physical depth logical generators. Circuit-level simulations of these compiled circuits show near-memory performance (see \cref{fig:logic-simulations-plot}), demonstrating the feasibility of time-efficient logical execution in QLDPC codes.

\vspace{-1em}
\section{Compiling and Costs}
\label{sec:compiling-and-costs}
\vspace{-1em}

When computing in an error correction setting, logical operations are interleaved with rounds of syndrome extraction \footnote{Syndrome extraction results inform error correction, and corrections are often tracked in software and folded into the operators applied later in the circuit. For the purposes of this discussion, we restrict attention to syndrome extraction as this is the piece that is costly in terms of operations applied to qubits.}. The operators are typically drawn from a subset of the full set of logical operations, and we will refer to elements of this subset as \emph{logical generators}.
A sequence of logical generators combine to synthesize a logical operation that the circuit performs. To compute efficiently in a QEC code, a generating set of (ideally low-depth) logical generators is needed. The size of this generating set relative to the space of logical operations it generates, informs the worst-case number of generators needed to implement an arbitrary logical operation, and therefore the number of syndrome extraction steps that need to be interleaved between them. Since syndrome extraction is costly and lengthier computations are less desirable, the aim of compiling is to reduce the number of steps needed to implement a desired logical circuit. This reduction can be achieved by constructing codes which have more native low-depth logical generators. We restrict attention to the space of the $m$-qubit Clifford group, and examine the notion of efficient compilation in that context to quantify the number of logical generators needed to compute efficiently.

The size of the $m$-qubit Clifford group scales as $2^{2 m^2 + O(m)}$ \cite{Rengaswamy2018_main} (see also \cref{subsec:symplectic-representations}), which tends to pose a problem for compiling with a number of logical generators that doesn't also grow at least exponentially in $m$. Consider the case where we work with $\gamma$ fixed-depth logical generators of the $m$-qubit Clifford group. The set of all circuits composed of up to depth $D$ of these logical generators cannot produce more than $(\gamma+1)^D$ unique Clifford operators. Ignoring the action of the Pauli group (all Paulis can be pushed to the end and implemented via a depth-1 circuit in stabilizer codes), this means we need $D$ to be such that
\[
(\gamma+1)^D \geq 2^{2 m^2 + O(m)}
\]
to possibly produce every Clifford operator. Rearranging, this lower bounds the required depth to attain an arbitrary operation in the Clifford group:
\begin{align}
    \label{eq:lower_bound}
    D \geq \frac{2 m^2 + m - 1}{\log_2(\gamma+1)}=:D^*.
\end{align}
We can estimate that the fraction of total Clifford operators achievable with all depths $D < D^*$ is $1/(\gamma+1)$. Since we generally consider cases where $\gamma$ is a monotonically increasing function of $m$, asymptotically we expect $1/(\gamma+1) \rightarrow 0$ so that \emph{almost all} Cliffords require depth \emph{at least} $D^*$.

Consider synthesizing $m$-qubit Clifford operators using depth-1 circuits of arbitrary one- and two-qubit Clifford gates. These circuits are fixed under conjugation by the Clifford subgroup $\lrang{S,H,\text{SWAP}}$, and a counting exercise demonstrates that there are $\gamma\leq e/\sqrt{\pi}\left(10 m /e\right)^{m/2} - 1$ nontrivial depth-1 circuits, up to the right action of this subgroup. Substituting this expression into \cref{eq:lower_bound} yields an asymptotic worst-case (and average-case) depth lower bound of $D\geq 4m/\log_2(10 m/e) + O(1)$. While there are indeed compiling algorithms that achieve depth $O(m/\log_2 m)$ asymptotically \cite{jiang2020optimal_main}, the leading constants are impractically large \cite{Maslov2022_main}. Rather than accept these large overheads, researchers have derived alternative decompositions that achieve better depths in practice for realistic qubit counts with worst-case depths of $O(m)$ \cite{Maslov2022_main, bravyi2021hadamard_main, duncan2020graph_main, Sayginel2024_main}.

We employ an alternative decomposition (see Supplementary Material \ref{supmat:sec:SHSCompSummary}), which focuses on minimizing the rounds of specific subsets of Clifford gates. Every Clifford operator can be written in the five-stage decomposition $\diagZ - \diagX - \diagZ - \diagX - \diagZ(1)$ \footnote{One can consider this decomposition for the inverse or the Hadamard-conjugated version of a Clifford operator to find a total of four related decomposition (see Supplementary Materials \cref{supmat:sec:SHSCompSummary} for details). 
This freedom in reordering allows compilers to leverage interaction with neighbouring T gates effectively.
}, where we have defined subsets of the Clifford group as follows:
\begin{itemize}
	\item $\diagZ = \langle S, CZ\rangle$; $Z$-diagonal Cliffords,\vspace{-0.75em}
	\item $\diagX$; $X$-diagonal Cliffords (Hadamard-rotated $\diagZ$),\vspace{-0.75em}
	\item $\diagZ(1)$; depth-1 $Z$-diagonal operations.
\end{itemize}

Combined with a few additional insights \cite{Maslov2022_main}, any of the listed decompositions yield circuits with asymptotic depths of $2m + O(\log^2 m)$, which are state of the art for realistic qubit counts.
Compiling worst case logical circuits on $m$ patches of surface code with all-to-all connectivity for an arbitrary $m$-qubit logical Clifford will thus require roughly $2m$ logical cycles.

Having established via study of decompositions a worst-case depth for a logical Clifford in the surface code, we can now revisit \cref{eq:lower_bound} and consider the number of logical Clifford generators we require for a different error correcting code to achieve similar efficiency.
Clearly, $\gamma\sim 2^{O(m)}$ is necessary to achieve a logical cycle depth of $O(m)$.
However, note that because CSS codes always have transversal CNOT operations available as logical generators, when using large numbers of code blocks of a CSS code, we will always satisfy this requirement. 
This holds because the number of ways to pair up code blocks to apply cross-block operations scales exponentially in the number of code blocks (and hence the total number of logical qubits). Instead, we would like to capture the compiling behavior in code families at both the ``few'' code block scale and the ``many'' code block scale.

For $b$ code blocks of an $[n,k,d]$ code, the \emph{Clifford compiling ratio}
\[
    \frac{\textrm{depth of a worst case Clifford on }b\cdot k\textrm{ logical qubits }}{b\cdot k}
\]
captures compiling properties for an arbitrary number of code blocks \footnote{Note that the focus on worst case Cliffords is equivalently a focus on average case Cliffords. While we might be tempted to introduce a metric that looks more directly at the worst-case depth increase for a \emph{specific} Clifford operator, it turns out that such a metric is not particularly useful outside this instance. In particular, while there are circuits that achieve low depths in an unencoded setting that require high depths in the encoded setting, there will in general \emph{also} be logical circuits that require low physical depths to implement whose corresponding unencoded counterpart only has high depth implementations. Since the worst case Clifford depth comes from a purely combinatoric argument, these code-specific details become irrelevant, allowing for a straightforward comparison with real predictive power.}. Note that, throughout this paper, we adopt the following notation for error-correcting codes to ensure clarity: $(n,k,d)$ for classical codes, $[n, k, d]$ for subsystem codes, and $[[n, k, d]]$ for stabilizer codes. Any code that achieves an $O(1)$ ratio is said to generate the logical Clifford group \emph{efficiently}.
A code family with parameters $[n(r), k(r), d(r)]$ for which every member generates the logical Clifford group efficiently is also said to have this feature.
Any code family whose associated compiling is such that the Clifford compiling ratio scales with $k(r)$ is failing to keep pace with the surface code due to the depth of logical operations \emph{within} a code block, whereas scaling with $b$ is associated with overheads for compiling \emph{between} code blocks.

Motivated by the advantages of computing in a code that has many fault-tolerant logical generators of low depth, we introduce the SHYPS code family in \cref{sec:SHS}. This family has $\gamma = 2^{O(k)}$ logical generators for a \emph{single} code block, each implemented by a depth-1 physical circuit. Moreover, these logical generator implementations often require 0 additional qubits, rising to at most $n$ for in-block CNOT operators where a scheme involving an auxiliary code block is used \footnote{There exist methods to remove the need for this additional auxiliary block with the same asymptotic cost, but the actual circuit length tends to be larger in the low code block regime}.

In addition to possessing a sufficient number of logical generators, the SHYPS codes achieve the desired $O(1)$ Clifford compiling ratio, with logical Clifford operators across $b$ blocks implemented fault-tolerantly in depth at most $4bk(1+o(1))$ (see \cref{tab:logcial_ops}). Crucially, the depth of Clifford operations compiled in our SHYPS code framework remains independent of the code distance, compared to state-of-the-art lattice surgery methods where the depth scales as $ O(md) $. For a moderately sized code with distance $ d = 20 $, an SHYPS-compiled CNOT gate would achieve an order-of-magnitude reduction in depth relative to the equivalent compiled using lattice surgery ($4$ vs. roughly $40$). This example highlights that --in addition to reducing qubit overheads relative to the surface code-- an SHYPS-based quantum computer would provide substantially faster clock-speeds at the logical level. The efficient Clifford compiling and substantially lower qubit overheads of SHYPS codes result in a lower space-time volume for Clifford operators in SHYPS compared to surface codes with either lattice surgery or transversal gates.

\begin{table}[ht]
    \setlength{\abovecaptionskip}{0pt}
    \setlength{\belowcaptionskip}{-1em}
    \centering
    \setlength\tabcolsep{2.5mm}
    {\renewcommand{\arraystretch}{1.2}

    \begin{tabular}{c | c | c}
        \textbf{Logical Gate} & \textbf{Time cost} & \textbf{Space cost}
        \\ \hline
         CNOT (cross-block) & 4 & 0 \\
         CNOT (in-block) & 4 & $n$ \\
         $S$ (in-block) & 6 & 0 \\
         $CZ$ (cross/in-block) & 4 & 0 \\
         $H$ (in-block) & 8 & 0\\ \hline
         Arbitrary $b$-block Clifford & $4bk(1+o(1))$ & $0$
    \end{tabular}
    }
    \caption{Time and space costs for logical Clifford operations of $SHYPS(r)$ codes ($r\geq 4$) with parameters $[n,k]$. For $r=3$, the logical $S$ and $H$ gates have depths $9$ and $11$, respectively. Here, time corresponds to the number of complete syndrome extraction rounds.
    }
    \label{tab:logcial_ops}
\end{table}

\vspace{-1em}
\section{Clifford operators and automorphisms}
\vspace{-1em}

\subsection{Symplectic representations}\label{subsec:symplectic-representations}
\vspace{-1em}

The Clifford group $\calC_n$ is a collection of unitary operators that maps the Pauli group $\calP_n$ to itself, under conjugation. For example, the two-qubit controlled-not operator $CNOT_{i,j}$, the single-qubit phase gate $S_i$, and the single-qubit Hadamard gate $H_i$, are all Clifford operators, and in fact these suffice to generate the full group.
When considering logical gates of codes, it is convenient to utilise the well-known symplectic representation of Clifford gates: by definition $\calP_n$ is a normal subgroup of $\calC_n$ and the quotient $\calC_n/\calP_n$ is isomorphic to $Sp_{2n}(2)$, the group of $2n \times 2n$ binary symplectic matrices \cite[Thm. 15]{Rengaswamy2018}. Hence each Clifford operator is, up to Pauli, specified by a unique element $g \in Sp_{2n}(2)$. That this representation ignores Pauli factors is of no concern, as for stabilizer codes, any logical Pauli may be implemented transversally. 

The following examples illustrate the symplectic representation of some common families of Clifford operators; note that by convention we assume that elements of $Sp_{2n}(2)$ act on row vectors from the right. 
\begin{exmp}
    \textbf{(CNOT circuits)} The collection of CNOT circuits $\lrang{CNOT_{i,j}\,:\, 1\leq i,j \leq n}$ have symplectic representations
\[
\left\{ \begin{bmatrix}C & 0 \\ 0 & C^{-T}\end{bmatrix} \, : \,C \in GL_{n}(2)\right\},
\]
 where $GL_n(2)$ is the group of invertible $n \times n$ binary matrices. 
\end{exmp}
\begin{exmp}
    \textbf{(Diagonal Clifford operators)} The Clifford operators that act diagonally on the computational basis form an abelian group, generated by single-qubit phase gates $S_i$ and the two-qubit controlled-Z gate $CZ_{i,j}$. They are represented by symplectic matrices of the form
\[
\left\{ \begin{bmatrix}I_n&B\\0&I_n\end{bmatrix}\,:\,B\in M_n(2), B^T=B\right\},
\]
where the diagonal and off-diagonal entries of the symmetric matrix $B$, determine the presence of $S$ and CZ gates, respectively.
\end{exmp}

\vspace{-1em}
\subsection{Code automorphisms}\label{subsec:code-automorphisms}
\vspace{-1em}
Code automorphisms are permutations of the physical qubits that preserve the codespace. They are a promising foundation for computing in QLDPC codes as they can provide nontrivial logical gates implementable by low-depth SWAP circuits, or simply relabelling physical qubits. Moreover, combining automorphisms with additional transversal Clifford gates can give greater access to fault-tolerant logical gate implementations \cite{Breuckmann2024_main, Eberhardt2024_main, Quintavalle2023_main}.

Let $\calC$ be an $[n,k,d]$ CSS code with $X$- and $Z$-type gauge generators determined by matrices $G_X \in \gf^{r_X \times n}$ and $G_Z \in \gf^{r_Z \times n}$, respectively. 
The \emph{(permutation) automorphism group} $\aut{\calC}$ is the collection of permutations $\pi \in S_n$ that preserve the gauge generators, and therefore the codespace.
I.e., $\pi \in \aut{\calC}$ if there exist $g_{\pi,X}\in GL_{r_X}(2)$ and $g_{\pi,Z}\in GL_{r_Z}(2)$ such that
$
g_{\pi,X}G_X = G_X\pi$ and $g_{\pi,Z}G_Z = G_Z\pi.
$

The logical gate implemented by a given $\pi \in \aut\calC$ is determined by its action on the code's logical Pauli operators. In particular, as permutations preserve the $X/Z$-type of a Pauli operator, $\pi$ implements a logical CNOT circuit \cite[Thm. 2]{Grassl2013}. Furthermore, following \cite{Grassl2013_main}, a larger set of fault-tolerant CNOT circuits across two copies of $\calC$ may be derived by conjugating the target block of the standard transversal CNOT operator \footnote{The physical transversal CNOT operator implements logical transversal CNOT in any subsystem CSS code \cite[Sec. 5]{shor1997}} by $\pi$ (see Supplementary Materials \cref{supmat:fig:generalised-transversal-cnot}).

The symplectic representation for this combined operator is given by 
\begin{align}\label{eq:cross-block-cnot}
    \begin{bmatrix}
        I_n & \pi  & \\ 0 & I_n  & &  \\ &  & I_n & 0 \\  &   & \pi^{-1} & I_n\\
    \end{bmatrix}\in Sp_{4n}(2),
\end{align}
where we identify $\pi \in \aut{\calC}$ with the permutation matrix in $GL_n(2)$ whose $(i,j)$-th entry is 1 if $i=\pi(j)$, and zero otherwise. Note that as conjugation by $\pi$ simply permutes the targets of the physical transversal CNOT, this is a fault-tolerant circuit of depth 1. Moreover, (\ref{eq:cross-block-cnot}) implements a \emph{cross-block} CNOT operator, with all controls in the first code block of $k$ logical qubits, and all targets in the latter code block. 

More recently, code automorphisms have been generalized to include qubit permutations that exchange vectors in $G_X$ and $G_Z$. These so-called \emph{$ZX$-dualities} lead to low-depth logical operator implementations involving qubit permutations and single qubit Hadamard gates \cite{Breuckmann2024_main}. Moreover, $ZX$-dualities allow for the construction of logical diagonal Clifford operators in the following manner:
Let $\tau \in S_n$ be an involution ($\tau^2=1$) such that $G_X\tau = G_Z$, and suppose that $\pi \in \aut{\calC}$ is such that $\pi\tau$ is also an involution. Then the physical diagonal Clifford operator given by 
\begin{align}
    \label{eq:diag-type-1}
    \begin{bmatrix}
        I_n & \pi\tau \\ 0 & I_n
    \end{bmatrix} \in Sp_{2n}(2),
\end{align}
is a depth-1 circuit that implements a logical diagonal Clifford operator up to Pauli correction (see \cite{Breuckmann2024_main, Eberhardt2024_main} and Supplementary Materials Lem.~\ref{supmat:lem:depth 1 diagonal lifts}). As there is always a Pauli operator with the appropriate (anti)commutation relations with the stabilizers and logical operators of the code to fix any logical/stabilizer action sign issues, we can ignore this subtlety \cite{Sayginel2024}.

The requirement that $\pi\tau$ is an involution guarantees that the upper-right block of (\ref{eq:diag-type-1}) is symmetric, and thus corresponds to a valid diagonal Clifford operator. This generally restricts the number of automorphisms that may be leveraged to produce valid logical gates. Crucially, this is insignificant for the SHYPS codes we introduce in \cref{sec:SHS}, where we have sufficient symmetry to efficiently implement all diagonal operators in a code block. 

\vspace{-1em}
\section{Code constructions and logical gates}\label{sec:SHS}
\vspace{-1em}

The constructions (\ref{eq:cross-block-cnot}) and (\ref{eq:diag-type-1}) provide a framework for implementing logical Clifford operators with fault-tolerant, low-depth circuits. However the number of such operators that exist for a given subsystem CSS code $\calC$ clearly scales with the size of $\aut{\calC}$. 
This motivates a search for quantum codes with high degrees of permutation symmetry, to achieve the number of fixed-depth Clifford generators necessary for efficient compilation.
There are many methods for constructing quantum codes \cite{Calderbank1996_main, Tillich.2013_main, Panteleev.2021_main, Breuckmann_2021_main, Breuckmann.2021_main, Panteleev.2022_main, ibm-qmem_main}, but in this work we focus on a subsystem hypergraph product construction that allows us to leverage known highly symmetric classical codes, to produce quantum code automorphisms:
Let $H_i$ be parity check matrices for two classical $(n_i,k_i,d_i)$-codes with $i=1,2$ and codespace $\ker H_i$. The \emph{subsystem hypergraph product code} \cite{li2020numerical_main} (SHP), denoted $SHP(H_1,H_2)$, is the subsystem CSS code with gauge generators
\[
G_X = (H_1 \otimes I_{n_2}),\ G_Z = (I_{n_1} \otimes H_2),
\]
and parameters $[n_1n_2,k_1k_2, \min(d_1,d_2)]$ \cite[3.B]{li2020numerical}.

Now the classical codes $\ker H_i$ have analogously defined automorphism groups, and crucially these \emph{lift} to distinct automorphisms of $SHP(H_1,H_2)$.
\begin{lem}\label{lem:lifting-pairs-of-classical-auts}
Let $(\sig_1,\sig_2)\in \aut{\ker H_1}\times \aut{\ker H_2}$. Then $\sig_1\otimes \sig_2 \in \aut{SHP(H_1,H_2)}$. 
\end{lem}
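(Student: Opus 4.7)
The plan is to unpack the definitions and then apply the mixed-product property $(A\otimes B)(C\otimes D) = AC\otimes BD$ of the Kronecker product. First I would restate the classical automorphism condition in matrix form: $\sigma_i \in \aut{\ker H_i}$ means there exists $g_i \in GL_{r_i}(2)$ with $g_i H_i = H_i \sigma_i$, where as in the quantum case we identify $\sigma_i$ with its permutation matrix acting on column vectors from the right. Then I would write down the SHP gauge generators $G_X = H_1 \otimes I_{n_2}$ and $G_Z = I_{n_1} \otimes H_2$, and identify the candidate quantum automorphism with the permutation matrix $\sigma_1 \otimes \sigma_2 \in GL_{n_1 n_2}(2)$ (which is indeed the permutation matrix of the product permutation acting on pairs $(i,j)$).

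Next I would verify the two conditions of \cref{subsec:code-automorphisms} separately. For the $X$-check: using the mixed-product property,
\begin{align*}
G_X(\sigma_1\otimes \sigma_2) &= (H_1\otimes I_{n_2})(\sigma_1\otimes \sigma_2) \\
&= (H_1\sigma_1)\otimes \sigma_2 \\
&= (g_1 H_1)\otimes(\sigma_2 I_{n_2}) \\
&= (g_1\otimes \sigma_2)(H_1\otimes I_{n_2}) \\
&= (g_1\otimes \sigma_2)\, G_X,
\end{align*}
so we may take $g_{\pi,X}:=g_1\otimes \sigma_2$. An entirely symmetric calculation with the roles of the two tensor factors swapped gives $G_Z(\sigma_1\otimes \sigma_2) = (\sigma_1\otimes g_2)\, G_Z$, so we may take $g_{\pi,Z}:=\sigma_1\otimes g_2$.

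Finally I would note that $g_{\pi,X}$ and $g_{\pi,Z}$ are invertible over $\gf$, since they are Kronecker products of invertible matrices (each $g_i$ is invertible because $\sigma_i$ is a classical automorphism and $\sigma_i$ is a permutation matrix). This shows that $\sigma_1\otimes \sigma_2$ preserves both sets of gauge generators, hence lies in $\aut{SHP(H_1,H_2)}$.

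There is no real obstacle here: the entire content of the lemma is the compatibility of the SHP gauge construction with Kronecker products, and everything reduces to one line of Kronecker algebra in each of the two checks. The only small subtlety worth mentioning is keeping straight that permutations act on the right (so that the classical condition is $g_i H_i = H_i \sigma_i$ rather than $H_i \sigma_i^{-1} = g_i^{-1} H_i$), and that the map $(\sigma_1,\sigma_2)\mapsto \sigma_1\otimes \sigma_2$ from $S_{n_1}\times S_{n_2}$ into $S_{n_1 n_2}$ is an injective group homomorphism, which justifies the phrase ``lift to distinct automorphisms'' in the surrounding text.
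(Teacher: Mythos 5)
Your proof matches the paper's own argument essentially line for line: both restate the classical automorphism condition as $H_i\sigma_i = g_i H_i$, apply the mixed-product property of the Kronecker product to obtain $G_X(\sigma_1\otimes\sigma_2) = (g_1\otimes\sigma_2)G_X$ (and symmetrically for $G_Z$), and note that $g_1\otimes\sigma_2$ is invertible. Your closing remark about injectivity of $(\sigma_1,\sigma_2)\mapsto\sigma_1\otimes\sigma_2$ is also exactly the observation the paper uses to derive the $|GL_r(2)|^2$ lower bound in the companion lemma.
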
 

Note that a similar prescription for lifting classical symmetries has been recently proposed for the \emph{stabilizer} hypergraph product codes (see \cite[App. A.1]{xu2024} and \cite[App. D]{hong2024}), in the restricted instance that the linear transformations of the classical checks $H_i$ induced by each $\sigma_i$, are themselves also permutations. By considering the subsystem code here, we remove such limitations and produce a far greater number of quantum code automorphisms (see Supplementary Materials \cref{supmat:sec:SupMat-lifting classical auts} for further details).

To capitalise on \cref{lem:lifting-pairs-of-classical-auts}, we pair the SHP construction with the highly symmetric classical simplex codes, referring to these as \emph{subsystem hypergraph product simplex} (SHYPS) codes. A complete description of the SHYPS family (parameterized by integers $r\geq 3$) is given in Supplementary Materials \ref{supmat:sec:SHSsuppMat} but we note here that each instance, denoted $SHYPS(r)$ has parameters 
\[[n(r),k(r),d(r)] = [(2^r-1)^2,r^2,2^{r-1}].\] 
Moreover, this is a QLDPC code family, as each $SHYPS(r)$ has weight-3 gauge generators.
An immediate corollary of \cref{lem:lifting-pairs-of-classical-auts} and \cite[Ch. 8.5]{book:MW} is that
    \[
    \vert \aut{SHYPS(r)} \vert \geq \vert GL_{r}(2) \vert^2 = O(2^{2r^2}),
    \]
which grows exponentially in the number of logical qubits $k(r)=r^2$, as required in \cref{sec:compiling-and-costs}. By utilising these automorphisms with the constructions outlined in \cref{subsec:code-automorphisms} we are able to efficiently generate all CNOT and diagonal Clifford operators in the SHYPS codes. 

The logical action of operators (\ref{eq:cross-block-cnot}) and (\ref{eq:diag-type-1}) may be characterized explicitly:
For any pair $g_1,g_2 \in GL_r(2)$ there exists a corresponding automorphism $\sig_1 \otimes \sig_2 \in \aut{SHYPS(r)}$ such that the logical cross-block CNOT operator
\begin{align}\label{eq:cross-block-cnot-logical}
    \begin{bmatrix}
        I_k & g_1\otimes g_2  & \\ 0 & I_k  & &  \\ &  & I_k & 0 \\  &  & g_1^{-T}\otimes g_2^{-T} & I_k\\
    \end{bmatrix}\in Sp_{4k}(2),
\end{align}
is implemented by the depth-1 physical circuit of type (\ref{eq:cross-block-cnot}). Furthermore, arbitrary logical CNOT circuits on $b$ blocks of $k$ logical qubits can be constructed from a sequence of $2bk(1+o(1))$ such operators (see Supplementary Materials Thm. \ref{supmat:thm:cross-block-cnot-operators} and Cor. \ref{supmat:cor:b-block-cnot}).

To characterize the logical action of diagonal operators (\ref{eq:diag-type-1}) we first observe that the physical qubits of $SHYPS(r)$ may be naturally arranged in an $2^r-1 \times 2^r-1$ array such that the reflection across the diagonal is a $ZX$-duality exchanging $G_X$ and $G_Z$ (see Supplementary Materials Lem.~\ref{supmat:lem:depth 1 diagonal lifts}). We similarly arrange the logical qubits in an $r \times r$ array, and denote the reflection that exchanges rows and columns by $\tau$. 
Then for all $g \in GL_r(2)$, there exists $\sig \otimes \sig^T \in \aut{SHYPS(r)}$ such that the logical diagonal Clifford operator
\begin{align}
\begin{bmatrix}I& \left(g\otimes g^T \right)\cdot\tau\\0&I\end{bmatrix} \label{eq:diag-type-1-logical}
\end{align}
is implemented by a corresponding generator of type (\ref{eq:diag-type-1}).
The operators (\ref{eq:diag-type-1-logical}) have depth 1 and are fault-tolerant with circuit distance equal to the code distance. Moreover, they are alone sufficient to generate all logical diagonal Clifford operators on an SHYPS code block in depth at most $k(1+o(1))$ (Supplementary Materials Thm.~\ref{supmat:thm:DiagGensType1Generate}). 

For generation of the full logical Clifford group, we note that $SHYPS(r)$ possesses a \emph{Hadamard type} \cite{Breuckmann2024_main} fold-transversal gate whereby the logical all-qubit Hadamard operator (up to logical SWAP) is implemented fault-tolerantly (see Supplementary Materials Lem.~\ref{supmat:lem:all qubit hadamard}). By then applying a Clifford decomposition as discussed in \cref{sec:compiling-and-costs} (see also Supplementary Materials \ref{supmat:sec:SHSCompSummary}), we bound the cost of implementing an arbitrary Clifford operator:
\begin{thm}
    Let $r\geq 3$. An arbitrary logical Clifford operator on $b$ blocks of the $SHYPS(r)$ code may be implemented fault-tolerantly in depth $4bk(r)(1+o(1))$.
    \label{thm:arbitraryCliffordDepth}
\end{thm}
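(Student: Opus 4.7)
The plan is to invoke the four-stage Clifford decomposition of \cref{sec:compiling-and-costs}, which writes any Clifford operator $C$ as a product $C = D_1 D_2 D_3 D_4$ with $D_1 \in \diagZ$, $D_2 \in \cnot$, $D_3 \in \diagX$, and $D_4 \in \diagZ(1)$, and then to bound the depth of each stage using the SHYPS-specific logical generators developed in this section. Summing the per-stage depths should produce the stated $4bk(1+o(1))$ bound.

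For the $\cnot$ stage $D_2$, I would apply Cor.~\ref{supmat:cor:b-block-cnot} directly: the required CNOT circuit on $bk$ logical qubits is synthesised as a product of $2bk(1+o(1))$ generators of the form (\ref{eq:cross-block-cnot-logical}), each realised by a depth-$1$ fault-tolerant physical circuit, contributing $2bk(1+o(1))$ logical cycles. For the $\diagZ$ stage $D_1$, I would split it into an in-block contribution (a block-diagonal diagonal Clifford acting within each of the $b$ blocks) and a cross-block contribution (logical CZ couplings between distinct block pairs). The in-block part is handled in parallel across the $b$ blocks by Thm.~\ref{supmat:thm:DiagGensType1Generate}, yielding depth $k(1+o(1))$. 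For the cross-block CZs, I would conjugate cross-block CNOT generators (\ref{eq:cross-block-cnot-logical}) by a fold-transversal Hadamard (Lem.~\ref{supmat:lem:all qubit hadamard}) on the target block, turning each into a cross-block CZ, and reuse the $\cnot$-style pairing argument to cover all $\binom{b}{2}$ block pairs in depth $bk(1+o(1))$. The $\diagX$ stage $D_3$ is handled symmetrically: apply the all-qubit Hadamard on every block ($O(1)$ depth), execute a $\diagZ$-type circuit of the above form, and apply the Hadamard again, for $bk(1+o(1)) + O(1)$ total. Finally, $D_4 \in \diagZ(1)$ is depth $1$ by definition.

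Adding the per-stage contributions,
\begin{align*}
2bk(1+o(1)) + bk(1+o(1)) + bk(1+o(1)) + O(1) = 4bk(1+o(1)),
\end{align*}
which matches the claimed bound. Fault tolerance is inherited because every generator used is a depth-$1$ transversal operator of the form (\ref{eq:cross-block-cnot-logical}) or (\ref{eq:diag-type-1-logical}), or the fold-transversal Hadamard of Lem.~\ref{supmat:lem:all qubit hadamard}, each with circuit distance equal to the code distance. The main obstacle I expect is the cross-block $\diagZ$ contribution: the in-block diagonal generators (\ref{eq:diag-type-1-logical}) act trivially on other blocks, so a bespoke pairing of blocks together with careful counting is required to realise every cross-block CZ pattern within the allotted $O(bk)$ depth. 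Pinning the leading constant down to exactly $4$, rather than something larger, relies on reusing the same parallel structure of (\ref{eq:cross-block-cnot-logical}) for both the $\cnot$ stage and the cross-block portion of the diagonal stages, so that the $2bk$ savings in the $\cnot$ stage transfer cleanly to the diagonal stages.
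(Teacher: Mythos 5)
Your proposal follows essentially the same route as the paper: the same four-stage decomposition $\diagZ \cdot \cnot \cdot \diagX \cdot \diagZ(1)$ (proved in Supplementary Materials Thm.~\ref{supmat:thm:clifford decomposition}), costed stage by stage via the cross-block CNOT result (Cor.~\ref{supmat:cor:b-block-cnot}), the in-block diagonal result (Thm.~\ref{supmat:thm:DiagGensType1Generate}), the cross-block CZ construction obtained by Hadamard-conjugating cross-block CNOT generators (Lems.~\ref{supmat:lem:2-block-cz-cost}, \ref{supmat:lem:multi-block-cz}, Cor.~\ref{supmat:cor:cross-block-diagonal-depth}), and the observation that $\diagX$ costs the same as $\diagZ$, summing to $2bk + bk + bk + o(bk) = 4bk(1+o(1))$.

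Two bookkeeping points are worth flagging, though neither affects the leading order. First, your claim that $D_4 \in \diagZ(1)$ ``is depth $1$ by definition'' conflates logical depth with physical implementation depth: $\diagZ(1)$ is a depth-$1$ circuit on the \emph{logical} qubits, but realising such an operator on the physical qubits of $SHYPS(r)$ is not free; the paper shows it requires depth up to $11r+13$ (using $\Xi$-type generators from Cor.~\ref{supmat:cor:depth Xi} plus two in-block permutations). Since $11r+13 = O(\sqrt{k}) = o(bk)$, this is absorbed into the $o(1)$ and does not change the answer, but the reasoning as written is not right. Second, the $\cnot$ stage in Cor.~\ref{supmat:cor:b-block-cnot} costs $(2b-1)(r^2+r+4)$ \emph{modulo a logical permutation}; the paper explicitly adds the cost of a $b$-block logical permutation, $36r^2 + 3r + 6$ from Thm.~\ref{supmat:thm:cross-block-permutations}, to account for this. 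That term is again $o(bk)$ (as $b$ grows), so your leading constant of $4$ survives, but the permutation overhead should not be silently dropped. With those two corrections, your argument reproduces the paper's $(4b+35)r^2 + (4b+21)r + 16b + 11 = 4bk(1+o(1))$ bound.
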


In particular, the SHYPS codes achieve the desired $O(1)$ Clifford compiling ratio. Moreover this bound is competitive with best known depths of $2bk+O(log^2(bk))$ for compiling Cliffords on $bk$ unencoded qubits \cite{Maslov2022_main}. We achieve further reductions in overhead for logical permutations and arbitrary Hadamard circuits (see Supplementary Materials Tables \ref{supmat:tab:cnot_depth}, \ref{supmat:tab:diagonal_depth} and \ref{supmat:tab:had_swap_depth} for details).

\vspace{-1em}
\section{Performance of the SHYPS Code}
\vspace{-1em}

We present numerical simulations to evaluate the circuit-level noise performance of the SHYPS code family. Two types of simulations were performed. First, in section \ref{sec:memorysim}, memory simulations for two different instances of SHYPS codes are benchmarked against comparably scaled surface codes. Second, in section \ref{sec:cliffordsim}, we present a circuit-level noise logic simulation on two blocks of the $[49, 9, 4]$ SHYPS code. The logical Clifford circuit is randomly sampled from the $18$ qubit Clifford group and synthesized into low-depth generators using the techniques explained in \cref{sec:SHS}. For a more detailed treatment of the simulations presented in this section, see Supplementary Materials \ref{supmat:fault-tolerant-demonstration}.

\vspace{-1em}
\subsection{Memory Simulation}\label{sec:memorysim}
\vspace{-1em}

\begin{figure}[t!]
\setlength{\abovecaptionskip}{0pt}
\setlength{\belowcaptionskip}{-1em}
    \includegraphics[width=\linewidth]{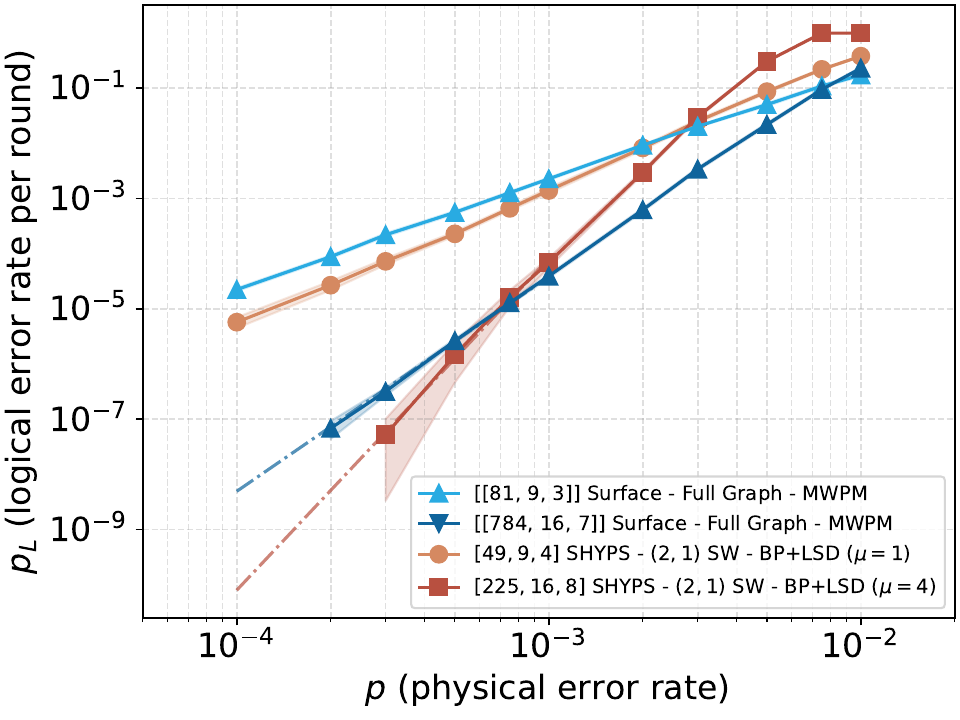}
    \caption{Simulation results for quantum memories under circuit-level noise for SHYPS and surface codes. For these simulations, only $Z$-type detectors are used.}
    \label{fig:memory-simulations-plot}
\end{figure}

Figure \ref{fig:memory-simulations-plot} shows the normalized logical error rate $p_L$ from memory simulations with $d$ syndrome extraction rounds. The logical error rates for the scaled surface codes are adjusted according to the number of logical qubits~\footnote{Note that this is equivalent to considering multiple distance-$d$ surface code patches required to provide the same number of logical qubits as the distance $d$ SHYPS code.} to demonstrate that the SHYPS code is competitive with the surface code while requiring fewer physical qubits per logical qubit. Specifically, we compare $[n, k, d]$ SHYPS codes with even $d$, to $k$ copies of [[$\underline{d}^{2}, 1, \underline{d}$]] rotated surface codes with uneven $\underline{d} = d-1$ \footnote{The maximum weight of guaranteed correctable errors for distance $d$ and $d-1$ codes is the same when $d$ is even.}. This comparison is shown in Table \ref{tab:code-comparison}, where we have used the code pseudo-thresholds and logical error rates at $p=\{3\times10^{-4}, 10^{-3}\}$ as the benchmarking metrics. 
The smallest member of the SHYPS family, the $[49,9,4]$ code, confers no pseudo-threshold advantages, but significantly outperforms the $[[81,9,3]]$ scaled distance-3 surface code in both physical qubit overhead (2x reduction) and logical error rates below pseudo-threshold. The other SHYPS code we simulated, the $[225,16,8]$ SHYPS code, is comparable in terms of performance to the $[[784,16,7]]$ scaled distance-7 surface code whilst reducing qubit count by $3.5\times$. It is true that the $[225,16,8]$ code has a lower pseudo threshold ($\approx 0.35\%$) than the $[[784,16,7]]$ scaled distance-7 surface code ($\approx 0.8\%$), but logical error rate decreases more aggressively in the sub-theshold regime: the slope of the logical error rates for both codes intersect at $p=0.05\%$, a point below which the $[[225, 16, 8]]$ SHYPS code outperforms the $[[784,16,7]]$ scaled distance-7 surface code. We anticipate that a decoder specifically tailored to SHYPS codes would further improve the pseudo-thresholds reported here.

\begin{table}[ht]
    \setlength{\abovecaptionskip}{0pt}
    \setlength{\belowcaptionskip}{-1em}
    \centering
    \setlength\tabcolsep{2.5mm}
    {\renewcommand{\arraystretch}{1.2}

    {\footnotesize
    \begin{tabular}{c | c | c | c}
         \textbf{Code} & \textbf{$P_L(3\times10^{-4})$} & \textbf{$P_L(10^{-3})$} & \textbf{$p_{th}$} 
        \\ \hline
         $[49, 9, 4]$ &$5\times 10^{-5}$ &$1.2\times 10^{-3}$ &$3.2\times 10^{-3} $  \\
         $[[81, 9, 3]]$ &$2\times 10^{-4}$ &$2.3\times 10^{-3}$ &$3.8\times 10^{-3} $ \\
         $[225, 16, 8]$ &$6.2\times 10^{-8}$ &$7\times 10^{-4}$ &$3.5\times 10^{-3} $  \\
         $[[784, 16, 7]]$ &$3\times 10^{-7}$ &$4\times 10^{-4}$ &$8\times 10^{-3} $ \\
    \end{tabular}
    }}
    \vspace{1em}
    \caption{\centering Logical error rates per round of error correction at physical error rates $3\times10^{-4} \ (0.03 \%)$, $10^{-3} \ (0.1 \%)$ and pseudo-thresholds. Here, $P_L(p)$ denotes the block logical error rate per round of syndrome extraction at physical error rate $p$. The pseudo-threshold is defined by the condition $P_L(p_{th} = p) = kp$, where $k$ is the number of logical qubits.}
    \label{tab:code-comparison}
\end{table}

\vspace{1em}

Most impressively, the SHYPS code simulations rely on a sliding window decoding approach with a small window size (2) and commit size (1). The use of a (2,1) sliding window decoder to achieve a high error correction performance is consistent with single-shot properties --the ability to decode based on a single syndrome extraction round between each logical operation. For a more detailed explanation about the single-shot properties of SHYPS codes, see Supplementary Materials \ref{supmat:fault-tolerant-demonstration}. 

\vspace{-1em}
\subsection{Clifford Simulation}\label{sec:cliffordsim}
\vspace{-1em}

We now apply our decomposition and logical generator constructions in a simulation of a randomly sampled Clifford operator on 18 logical qubits in two code blocks of the $[49, 9, 4]$ SHYPS code, up to an in-block CNOT circuit. This prevents the need to use extra auxiliary code blocks to implement in-block CNOTs while ensuring all types of transversal operations appear in the simulation. 
Synthesizing the sampled Clifford using the $\diagZ - \cnot - \diagX - \diagZ(1)$ decomposition requires 63 fault-tolerant logical generators, and consequently a total of 126 logical generators to implement both the Clifford circuit and its inverse.
The simulation proceeds as follows: initialize both code blocks in the encoded all-zero logical state; interleave each logical generator (both of the Clifford and its inverse) with gauge generator measurements; read out the state using a transversal $Z$ measurement.

\begin{figure}[t!]
\setlength{\abovecaptionskip}{0pt}
\setlength{\belowcaptionskip}{-1em}
    \includegraphics[width=\linewidth]{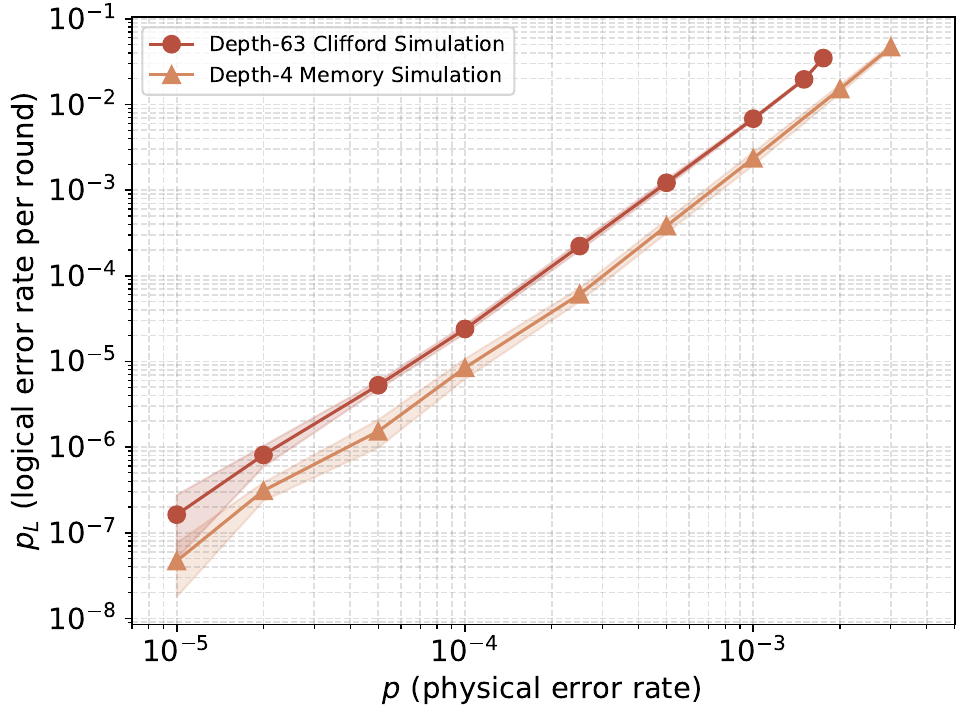}
    \caption{Simulation results for a depth-$126$ logical quantum circuit and a memory composed from two code blocks of the $[49, 9, 4]$ SHYPS code. 
    $Z$- and $X$-type detectors are used in both cases.}
    \label{fig:logic-simulations-plot}
\end{figure}

Figure \ref{fig:logic-simulations-plot} shows the normalized logical error rate $p_L$ 
for the described simulation. For comparison, we also include the logical error rate per syndrome extraction round for a $d$-round SHYPS quantum memory with $18$ logical qubits built from two blocks of the $[49, 9, 4]$ SHYPS code (adjusting SHYPS error rates to account for multiple codeblocks is analogous to scaling surface code data, see Supplementary Materials \ref{supmat:fault-tolerant-demonstration} for details).
Simulation of the Clifford operator achieves near-memory error correction performance in terms of error suppression, demonstrating that logical operations can be executed efficiently and fault-tolerantly in SHYPS codes.

\vspace{-1em}
\section{Conclusion}
\vspace{-1em}

QLDPC codes promise to reduce physical qubit overheads compared to existing surface codes. Despite their advantage as memories, it has been unclear whether any multi-logical qubit code (QLDPC or otherwise) could execute logical operations with as much parallelism as a surface code. Were this to remain unresolved, it may have crippled the speed of any QLDPC-code-based quantum computer when running highly-parallel circuits. By using a product construction of highly-symmetric classical simplex codes alongside novel compiling methods, we have shown that the resultant SHYPS code family shatters this barrier, achieving the same asymptotic logical circuit depth as \emph{unencoded} circuits under state-of-the-art algorithms. Remarkably, the resulting logical circuits retain strong fault-tolerance guarantees that are reflected in deep logical simulations showing near-memory performance even under circuit-level noise.

Two critical directions demand further exploration. Developing QEC codes with analogous properties but even better rates would enable further space improvements without making a time trade-off. More significantly, extending compiling parallelism to measurement parallelism would unlock for QLDPC codes every trick the surface code has at its disposal for reducing run-times via auxiliary code blocks. Without these advantages in parallelism, the reasons to consider using the surface code over QLDPC codes reduces to two factors: connectivity and simplicity. For any architecture where the necessary connectivity is achievable, it now seems all but certain: QLDPC codes are capable of driving down physical overheads \emph{without} increasing time overheads, and, as a result of this work, appear to be the most compelling path to quantum computers that can perform commercially relevant algorithms.

\section*{End notes}

\subsection*{Acknowledgements}

We thank Polina Bychkova, Zach Schaller, Bogdan Reznychenko, and Kyle Wamer for their contributions to the development of simulation infrastructure.

\subsection*{Author contributions}

A.J.M. and A.N.G. designed the quantum codes studied in this manuscript.
The depth costing of logical operators was done by A.J.M., A.S. and A.N.G.
P.F., D.C., J.R. and A.O.Q. designed the sliding window decoder and optimized decoder parameters used in the circuit-level numerical simulations.
P.F., D.C., R.H., J.R., A.O.Q. and A.N.G. performed the numerical simulations and post-processed the resulting data. The software used for these simulations was designed and written by C.D., R.H., A.E., P.F., J.R., and D.C..
A.J.M., A.N.G., P.F., D.C., A.S., J.R., A.O.Q., R.H., S.J.B., N.R.L.-H. and S.S. contributed to writing and editing the manuscript.

\subsection*{Competing interests} 

US Patent Application 19/263,208 (filed 8 July 2025, naming A.J.M and A.N.G. as co-inventors), PCT Application PCT/IB2025/057033 (filed 11 July 2025, naming A.J.M and A.N.G. as co-inventors), US Provisional Patent Application 63/720,973 (filed on November 15, 2024, naming A.N.G. and A.S as co-inventors), and US Provisional Patent Application 63/764,648 (filed 28 February 2025, naming A.N.G as the inventor) contain technical aspects from this paper.

\subsection*{Additional information}
Supplementary Information is available for this paper. Correspondence and requests for materials should be addressed to Stephanie Simmons at \url{ssimmons@photonic.com}.

\subsection*{Data availability} 
The simulation software to generate data reported in this paper is available at: \\ 
{\footnotesize{ \href{https://github.com/PhotonicInc/ComputingEfficientlyInQLDPCCodes}{https://github.com/PhotonicInc/ComputingEfficientlyInQLDPCCodes}.
}}


\begin{center}
  {\large \bf Supplementary Material} \\
\end{center}

The supplementary material is broken into 3 parts. First, in \cref{supmat:sec:math-prelims} we survey the necessary background information to introduce our new code family, the \emph{subsystem hypergraph product simplex codes} (SHYPS) codes. In particular, we discuss the notion of code automorphisms, and how the subsystem hypergraph product construction yields highly symmetric quantum codes, from well chosen classical inputs.

Next, in Sections \ref{supmat:sec: cnot gate characterisation}-\ref{supmat:sec: Hadamard} we demonstrate how automorphisms of the SHYPS codes can be leveraged to obtain low-depth implementations of logical Clifford operators. Taking each of the CNOT, diagonal, and Hadamard-SWAP families in turn, we produce fault-tolerant generators (typically physical depth 1), and demonstrate efficient compilation of arbitrary operators. Utilising a novel decomposition of Clifford operators, these results are combined in \cref{supmat:sec:SHSCompSummary} to yield an overarching bound on the depth of Clifford implementations in SHYPS codes. We refer the reader to Tables \ref{supmat:tab:cnot_depth}, \ref{supmat:tab:diagonal_depth} and \ref{supmat:tab:had_swap_depth}, for a detailed summary of results.

Lastly, in \cref{supmat:fault-tolerant-demonstration} we discuss syndrome extraction and decoding of the SHYPS codes, and we provide details about our numerical simulations.

\section{Mathematical preliminaries and code constructions}\label{supmat:sec:math-prelims}
We begin the preliminaries section with a review of the Pauli and Clifford groups, including the binary symplectic representation and examples of key types of Clifford operations that will be the focus of later sections.

\subsection{Review of Paulis and Cliffords} \label{supmat:subsec:review-of-paulis-and-cliffords}
\label{supmat:sec:backgroundSuppMat}

The \emph{Pauli group} on $n$ qubits is defined as
\begin{eqnarray*}
    \mathcal{P}_n & := & \langle \i, X_j, Z_j \mid j \in \{1,\dots,n\}\rangle\\
    & = & \langle X_j, Y_j, Z_j \mid j \in \{1,\dots,n\}\rangle.
\end{eqnarray*}\label{supmat:eq:pauli}
For many applications, it is convenient to ignore global phases involved in Pauli operators and instead consider elements of the phaseless Pauli group, $\calP_n/\lrang{\i}$. The phaseless Pauli group is abelian and has order $4^n$. Additionally, every non-trivial element has order 2, and hence $\calP_n/\lrang{\i} \cong \gf^{2n}$. This isomorphism can be made explicit in the following manner:  as $XZ = \i Y$, any  $P \in \calP_n/\lrang{\i}$ may be written uniquely as 
\[
\prod_{i=1}^nX_i^{u_i}Z_i^{v_i}=:X^uZ^v,
\]
where $u = (u_1,\dots,u_n)\in \gf^n$, and $v$ is defined analogously. The combined vector $(u\mid v)\in \gf^{2n}$ is known as the \emph{binary symplectic representation} of $P$. We equip $\gf^{2n}$ with a symplectic form $\omega:\gf^{2n}\times \gf^{2n} \longrightarrow \gf$ such that for $(u \mid v),(s \mid t)\in\gf^{2n}$,
\begin{eqnarray*}
    (u \mid v),(s \mid t) & \mapsto & ut^T+sv^T.
\end{eqnarray*}
This form captures the (anti-)commutativity of Paulis (that is lost when global phases are ignored), as $X^uZ^v$ and $X^sZ^t$ anti-commute if and only if \[\omega((u\mid v),(s \mid t))=1.\]
For example in $\calP_2$, the element $X_1$ anti-commutes with $Y_1Z_2$, and we check that
\begin{eqnarray*}
    X_1 & \mapsto & (1,0,0,0) \\
Y_1Z_2 \sim X_1 Z_1\cdot Z_2 & \mapsto & (1,0,1,1),
\end{eqnarray*}
and
\[
\omega((1,0,0,0),(1,0,1,1)) =1.
\]

The \emph{Clifford group} on $n$ qubits, denoted $\clif$, is the normalizer of the $n$-qubit Pauli group within $U(n)$. I.e., 
\begin{eqnarray*}
        \clif &:=& N_{U(n)}(\mathcal{P}_n) \\
        & = & \{g \in U(n)\,:\, gPg^{-1}\in \calP_n \text{ for all } P\in\calP_n\} \,.  
\end{eqnarray*}
For example, the two-qubit controlled-not operator $CNOT_{i,j}$, the single-qubit phase gate $S_i$, and the single-qubit Hadamard gate $H_i$, are all Clifford operators. Further examples include all gates of the form $e^{\i\theta}I$, but as the global phase is typically unimportant we restrict our attention to \[
\calC_n :=\clif/\{e^{\i\theta}\} \times \lrang{\frac{1+\i}{\sqrt{2}}}.\]
We likewise refer to this as the Clifford group, and note that
\[
 \calC_n = \lrang{CNOT_{i,j}, S_i, H_i \,:\, 1\leq i,j \leq n}.
\]
The action of a Clifford operator via conjugation corresponds to a linear transformation of $\gf^{2n}$ in the binary symplectic representation. Moreover, as conjugation preserves the (anti-)commutation of of Pauli operators, the corresponding linear transformations preserve the symplectic form. The collection of such linear transformations is known as the \emph{symplectic group} and is denoted $Sp_{2n}(2)$. Finally, observe that $\calP_n \leq \calC_n$ but that conjugation by a Pauli operator induces at most a change of sign, which is ignored by the binary symplectic representation. Taking the quotient by this trivial action, we see that $\calC_n/\calP_n \cong Sp_{2n}(2)$ \cite[Thm. 15]{Rengaswamy2018}.

The representation of Clifford operators by $2n \times 2n$ binary matrices is key to the efficient simulation of stabilizer circuits \cite{aaronson2004}. Moreover, as we'll demonstrate, it is a useful framework for synthesising efficient implementations of logical gates in quantum codes (see also \cite{Rengaswamy2018}).

We conclude this section with an explicit construction of some symplectic representations for Clifford operators. Following \cite{aaronson2004}, we assume that matrices in $Sp_{2n}(2)$ act on row vectors from the right. I.e., given $P\in \calP_n$ with binary symplectic representation $(u\mid v)$, and $g\in \calC_n/\calP_n$ with binary symplectic representation $G$, we have $gPg^{-1} \longleftrightarrow (u\mid v)G$. In particular, the images of the Pauli basis $X_i,Z_i$, are given by the $i$th and $(i+n)$th rows of $G$, respectively.

\begin{exmp}\phantomsection\label{supmat:ex:symplectic-rep-cnots}
    \textbf{(CNOT circuits)} As CNOT circuits map $X$-type Paulis to other $X$-type Paulis (and similar for $Z$-type), in $Sp_{2n}(2)$ they have the form
\[
\{ \begin{pmatrix}C & 0 \\ 0 & C^{-T}\end{pmatrix} \, : \,C \in GL_{n}(2)\}.
\]
 For simplicity we typically describe such operators solely by the matrix $C$. 
 
 For example, $CNOT_{1,2} \in \calC_3$ has $C$ defined as follows \[\begin{pmatrix}
     1 & 1 & 0 \\ 0 & 1 & 0 \\ 0 & 0 & 1
 \end{pmatrix} \in GL_3(2).\]
 
\end{exmp}
\begin{exmp}\label{supmat:ex:symplectic-rep-diagonal}
    \textbf{(Diagonal Clifford operators)} The Clifford operators that act diagonally on the computational basis form an abelian group, generated by single-qubit phase gates $S_i$ and the two-qubit controlled-Z gate $CZ_{i,j}$. Hence, modulo Paulis, they are represented by symplectic matrices of the form
\[
\{ \begin{pmatrix}I&B\\0&I\end{pmatrix}\,:\,B\in M_k(2),  B^T=B\},
\]
where the diagonal and off-diagonal entries of the symmetric matrix $B$ determine the presence of $S$ and CZ gates, respectively.

For example, $S_1\cdot CZ_{1,2} \in \calC_2$ corresponds to
\[
\begin{pmatrix}
    1 & 0 & 1 & 1 \\ 0 & 1 & 1 & 0 \\ 0 & 0 & 1 & 0 \\ 0 & 0 & 0 & 1
\end{pmatrix} \in Sp_4(2).
\]
And the action of this example on a Pauli, $X_1$,
\[
(S_1\cdot CZ_{1,2})\cdot X_1 \cdot (S_1\cdot CZ_{1,2})^{-1} = Y_1Z_2 \]
corresponds to \[(1, 0, 0, 0)\cdot \begin{pmatrix}
    1 & 0 & 1 & 1 \\ 0 & 1 & 1 & 0 \\ 0 & 0 & 1 & 0 \\ 0 & 0 & 0 & 1
\end{pmatrix} = (1,0,1,1).
\]
\end{exmp}
\begin{exmp} \label{supmat:ex:symplectic-rep-hadamard}\textbf{(Hadamard circuits)} In $Sp_{2n}(2)$, Hadamard circuits $H^v = \prod H_i^{v_i}$ have the form
\[\{
\begin{pmatrix}
I_n + \text{diag}(v) & \text{diag}(v) \\
\text{diag}(v) & I_n+\text{diag}(v) \\
\end{pmatrix}
\, :\, v\in \gf^n \},
\]
where $\text{diag}(v)$ is the diagonal matrix with entries $v_1,\dots,v_n$.

For example, the transversal Hadamard operator $H^{\otimes 3} = H^{(1,1,1)} \in \calC_3$ corresponds to 
\[
\begin{pmatrix}
    0 & I_3 \\ I_3 & 0
\end{pmatrix} \in Sp_6(2).
\]
\end{exmp}

\subsection{Subsystem codes}
\label{supmat:sec:subsystem codes}

A quantum stabilizer code on $n$ physical qubits is the common $+1$ eigenspace of a chosen abelian subgroup $\calS \leq \calP_n$ with $-I\not\in\calS$. The subgroup $\calS$ is known as the \emph{stabilizer group}, and moreover if $\calS$ admits a set of generators that are either $X$-type or $Z$-type Pauli strings, then the code is called \emph{CSS} \cite{Calderbank1996}.
Quantum subsystem codes are the natural generalisation of stabilizer codes, in that they are defined with respect to a generic subgroup $\calG \leq \calP_n$ known as the \emph{gauge group} \cite{Poulin2005}. Moreover, subsystem codes are typically interpreted as the subsystem of a larger stabilizer code whereby a subset of the logical qubits are chosen to not store information and the action of the corresponding logical operators is ignored. 
More formally, given gauge group $\calG$, the corresponding stabilizer group $\calS$ is the centre of $\calG$ modulo phases
\[\lrang{\calS,iI} = Z(\calG):=C_{\calP_n}(\calG) \cap \calG.\]
Phases are purposefully excluded (in particular, $-I \notin \calS$) to ensure the fixed point space of $\calS$ is nontrivial, and said space decomposes into a tensor product $\calC_{\calL}\otimes \calC_{\calG}$, where elements of $\calG \backslash \calS$ fix only $\calC_{\calL}$. The subspaces $\calC_{\calL}$ and $ \calC_{\calG}$ are said to contain the logical qubits, and gauge qubits, respectively.

The logical operators of the subsystem code are differentiated into two types, based on their action on $\calC_{\calL}\otimes \calC_{\calG}$: those given by $C_{\calP_n}(\mathcal{G})\backslash \calG$ that act nontrivially only on $\calC_{\calL}$ are known as \emph{bare} logical operators. Whereas operators acting nontrivially on both $\calC_{\calL}$ and $\calC_{\calG}$ are known as \emph{dressed} logical operators, and are given by $C_{\calP_n}(\mathcal{S})\backslash \calG$. Note that a dressed logical operator is a bare logical operator multiplied by an element of $\calG \backslash \calS$. The minimum distance $d$ of the subsystem code is the minimal weight Pauli that acts nontrivially on the logical qubits $\calC_{\cal{L}}$, i.e., the minimum weight of a dressed logical operator
\[
d = \min\{\vert P \vert \, : \, P \in C_{\calP_n}(\mathcal{S})\backslash \calG\}.
\]
We say that a subsystem code is an $[n,k,d]$ code if it uses $n$ physical qubits to encode $k$ logical qubits with distance $d$. The notation $[n, k, g, d]$ that additionally indicates the number of gauge qubits $g$, is also commonplace (but unused throughout this paper).

The advantages of subsystem codes are most evident when the gauge group $\calG$ has a generating set composed of low-weight operators but the stabilizer group $\calS$ consists of high-weight Paulis. Measuring the former operators requires circuits of lower depth (therefore reducing computational overheads), and the measurement results can be aggregated to infer the stabilizer eigenvalues. In fact, for the codes considered here, the difference in the stabilizer weights and gauge operator weights is a factor of the code distance (see \cref{supmat:fault-tolerant-demonstration} for more details). We note that it is exactly the measurement of operators in $\calG \backslash \calS$ that act nontrivially on $\calC_{\calG}$, that prevents the gauge qubits from storing information during computation, as these measurements impact the state of $\calC_{\calG}$.

In this work we restrict our attention to $[n, k, d]$-subsystem codes that are also of CSS type, with $X$- and $Z$-type gauge generators determined by matrices $G_X \in \gf^{r_X \times n}$ and $G_Z \in \gf^{r_Z \times n}$, respectively. Here each vector $v \in \gf^{n}\cap \text{RowSpace}(G_X)$ denotes an $X$-type gauge operator $X^v$, with $Z$-type gauge generators similarly defined. The associated stabilizers will also be of CSS type, and denoted by $S_X, S_Z$.

\subsection{Automorphisms of codes}
\label{supmat:sec:automorphisms of codes}

Code automorphisms are a promising foundation for computing in QLDPC codes as they can provide nontrivial logical gates implementable by permuting, or in practice simply relabelling, physical qubits.
In this section, we review permutation automorphisms of classical and quantum codes, which will serve as the backbone of our logical operation constructions.

Let's first set some notation: given a permutation $\sigma \in S_n$ of the symbols $\{1,2,\dots,n\}$ in cycle notation, we identify $\sigma$ with the permutation matrix whose $(i,j)$th entry is 1 if $i=\sigma(j)$, and zero otherwise. For example
\[
(1,2,3)(4,5) \in S_5 \mapsto \begin{pmatrix}
    0 & 0 & 1 & 0 & 0\\ 1 & 0 & 0 & 0 & 0 \\ 0 & 1 & 0 & 0 & 0 \\ 0 & 0 & 0 & 0 & 1 \\ 0 & 0 & 0 & 1 & 0
\end{pmatrix}.
\]
The permutation defined above maps the initial indices $\{1,2,3,4,5\}$ to $\{2,3,1,5,4\}$. So given a standard basis vector $e_i \in \gf^n$, $\sigma$ acts on row vectors on the right as $e_i\sigma = e_{\sigma^{-1}(i)}$, and on column vectors on the left as $\sigma e_i^T = e_{\sigma(i)}^T$.

\begin{defn}
    Let $C$ be an $(n,k,d)$-classical linear code described by a generator matrix $G \in \gf^{k \times n}$. Then the \emph{(permutation) automorphism group} $\aut{C}$ is the collection of permutations $\sigma \in S_n$ that preserve the codespace. I.e., $\sigma \in \aut{C}$ if for all $c =(c_1,\dots,c_n)\in C$,
    \begin{align} \label{supmat:eq:classical code aut}
         c\sig=(c_{\sig(1)},\dots,c_{\sig(n)})\in C.
    \end{align}
\end{defn}
As $C$ is linear, it suffices to check (\ref{supmat:eq:classical code aut}) on the basis given by the rows of $G$. Hence $\sig \in \aut{C}$ if and only if there exists a corresponding $g_{\sig}\in GL_k(2)$ such that $G\sig = g_{\sig}G$ \cite[Lem. 8.12]{book:MW}. In particular, $g_{\sigma}$ represents the invertible linear transformation of the $k$ logical bits, induced by the permutation.

The parity checks of the code $C$ are similarly transformed by automorphisms: letting $C^\perp$ denote the dual classical code, we have $\aut{C}= \aut{C^\perp}$ \cite[Sec. 8.5]{book:MW}. So given a parity check matrix $H\in \gf^{(n-k)\times n}$ for $C$ we have equivalently that $\sig \in \aut{C}$ if there exists $h_{\sig} \in GL_{n-k}(2)$ such that $H\sigma = h_{\sigma}H$.

This definition extends naturally to quantum codes:
\begin{defn}
    Let $C$ be an $[n,k,d]$ (CSS) subsystem code with $X$ and $Z$ type gauge generators determined by $G_X \in \gf^{r_X \times n}$ and $G_Z \in \gf^{r_Z \times n}$, respectively. Then $\aut{C}$ consists of permutations that preserve the gauge generators, i.e., $\sig \in S_n$ such that 
\[
g_{\sig,X}G_X = G_X\sig, \ \,g_{\sig,Z}G_Z = G_Z\sig,
\]
for some $g_{\sig,X}\in GL_{r_X}(2)$ and $g_{\sig,Z}\in GL_{r_Z}(2)$.
\end{defn}

The logical gate implemented by $\sig \in \aut{C}$ is determined by the permutation action on the code's logical Pauli operators. 
In particular, $\sig$ always gives rise to a permutation of the logical computational basis states of $C$ corresponding to a logical CNOT circuit \cite[Thm. 2]{Grassl2013}. 

In later sections we outline quantum constructions utilising classical codes, and consequently how classical automorphisms may be leveraged to produce quantum logical gates. In these instances, the logical CNOT implemented by the quantum code automorphism is a function of the associated classical linear transformations. This motivates an investigation of classical codes with high degrees of symmetry.

\subsection{Classical simplex codes} \label{supmat:sec:appendix simplex codes} Let $r\geq 3$ and define $n_r=2^{r}-1$, $d_r=2^{r-1}$. The classical simplex codes, denoted $C(r)$ are a family of $(n_r,r,d_r)$-linear codes, that are dual to the well-known Hamming codes. 
More specifically, we consider $C(r)$ with respect to a particular choice of parity check matrix: for each $3 \leq r < 500$ \footnote{The upper bound of $500$ clearly encapsulates all codes that will ever be used in practice. That the result in fact holds for all $r$ is an open conjecture \cite{blake96}}, there exists a three term polynomial $h(x)=1+x^a+x^b \in \gf[x]/\lrang{x^{n_r}-1}$ such that $\gcd(h(x),x^{n_r}-1)$ is a primitive polynomial of degree $r$ \cite{blake96}. Then the $n_r\times n_r$ matrix
\[
H = \begin{pmatrix}
    h(x) \\ xh(x) \\ \vdots \\ x^{n_r-1}h(x)
\end{pmatrix}
\]
is a parity check matrix (PCM) for $C(r)$ \cite[Lem. 7.5]{book:MW}, where here we adopt the usual polynomial notation for cyclic matrices
\[
\sum_{i=0}^{n_r-1} a_ix^i\mod x^{n_r}-1 \mapsto (a_0,a_1,\dots,a_{n_r-1})\in \gf^{n_r}.
\]

Note there are many alternative choices for the PCM of $C(r)$; in fact $H$ chosen here has greater than $n_r-r$ rows and so this description contains redundancy. However what the choice above guarantees is that each row and column of $H$ has weight 3, leading to low-weight gauge generators and optimal syndrome extraction scheduling, of an associated quantum code (see \cref{supmat:fault-tolerant-demonstration}).

The simplex codes are examples of highly symmetric classical codes with large automorphism groups.
\begin{lem}\label{supmat:lem:aut group of simplex code} 
    The automorphism group of the simplex codes are as follows:
    \[
     \aut{C(r)} \cong GL_r(2).
    \]
\end{lem}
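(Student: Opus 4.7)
The plan is to exhibit an explicit isomorphism $\phi: \aut{C(r)} \to GL_r(2)$ by working with the standard combinatorial presentation of the simplex code. Although the excerpt defines $C(r)$ via the specific three-term cyclic polynomial $h(x)$, it is classical \cite[Ch. 1.9]{book:MW-1} that $C(r) = \mathrm{Ham}(r)^\perp$ admits a generator matrix $G \in \gf^{r \times n_r}$ whose $n_r = 2^r - 1$ columns are a fixed enumeration of the nonzero vectors of $\gf^r$. Since $\aut{C(r)}$ is intrinsic to the code and independent of the choice of defining matrix, I would use this cleaner description.

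First, I would set up the candidate map. Because $G$ has full row rank $r$, for each $\sigma \in \aut{C(r)}$ there is a unique $g_\sigma \in GL_r(2)$ satisfying $G\sigma = g_\sigma G$, and I define $\phi(\sigma) := g_\sigma$. Associativity of matrix multiplication yields $\phi(\sigma\tau) = \phi(\sigma)\phi(\tau)$, so $\phi$ is a group homomorphism into $GL_r(2)$.

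Next I would verify injectivity and surjectivity separately. Injectivity is immediate: if $\phi(\sigma) = I_r$, then $G\sigma = G$, so $\sigma$ fixes each column of $G$; since the columns are pairwise distinct (they enumerate distinct nonzero vectors of $\gf^r$), $\sigma$ must be the identity permutation. For surjectivity, any $g \in GL_r(2)$ restricts to a bijection of $\gf^r \setminus \{0\}$; reading this bijection through the fixed column enumeration of $G$ produces a permutation $\sigma \in S_{n_r}$ satisfying $G\sigma = gG$ by construction, hence $\sigma \in \aut{C(r)}$ with $\phi(\sigma) = g$.

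The main obstacle is essentially bookkeeping: reconciling the cyclic PCM presentation from the excerpt with the combinatorial ``columns are all nonzero vectors'' presentation of $C(r)$. The standard way to bridge the two is through the primitive polynomial $\gcd(h(x), x^{n_r} - 1)$, a root of which generates a cyclic ordering of $\gf^r \setminus \{0\}$ under which the two descriptions coincide. Because this bridge is itself a permutation of coordinates, it induces an isomorphism of the corresponding automorphism groups, so the conclusion $\aut{C(r)} \cong GL_r(2)$ transfers back to the cyclic PCM used in the paper.
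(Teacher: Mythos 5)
Your proof is correct. The paper itself offers no argument for this lemma — it simply cites MacWilliams and Sloane, Ch.~8.5, and gives a one-sentence intuition ("each invertible linear transformation $g \in GL_r(2)$ is implemented by a distinct permutation $\sigma$"). Your write-up supplies exactly the standard textbook argument that citation points to: pass to the generator matrix $G$ whose columns enumerate $\gf^r \setminus \{0\}$, define $\phi(\sigma) = g_\sigma$ from $G\sigma = g_\sigma G$ (well-defined since $G$ has full row rank), check it is a homomorphism, and show injectivity (distinct columns force $\sigma = \mathrm{id}$) and surjectivity (any $g \in GL_r(2)$ permutes the nonzero vectors, hence the columns). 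The closing remark about reconciling the cyclic parity-check presentation with the "all-nonzero-columns" presentation is also handled correctly: the two generator matrices differ by a fixed permutation of coordinates, and permutation-equivalent codes have conjugate automorphism groups in $S_{n_r}$, so the isomorphism class is unaffected. The only nit is cosmetic: you could note explicitly that the direction of composition works out under the paper's row-vector convention ($(\sigma)_{i,j} = 1$ iff $i = \sigma(j)$, so the $j$-th column of $G\sigma$ is the $\sigma(j)$-th column of $G$), but your bookkeeping is consistent with it.
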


Intuitively, \cref{supmat:lem:aut group of simplex code} means that each invertible linear transformation $g \in GL_r(2)$ of the $r$ logical bits, is implemented by a distinct permutation $\sig$ of the $n_r$ physical bits \cite[Ch. 8.5]{book:MW}.
\begin{exmp}\label{supmat:exmp: classical simplex r=3}
    Let $r=3$. The polynomial $h(x)=1+x^2+x^3$ is primitive and hence the overcomplete parity check matrix
    \[
    H = \begin{pmatrix}
        1 & 0 & 1 & 1 & 0 & 0 & 0 \\
        0 & 1 & 0 & 1 & 1 & 0 & 0 \\
        0 & 0 & 1 & 0 & 1 & 1 & 0 \\
        0 & 0 & 0 & 1 & 0 & 1 & 1 \\
        1 & 0 & 0 & 0 & 1 & 0 & 1 \\
        1 & 1 & 0 & 0 & 0 & 1 & 0 \\
        0 & 1 & 1 & 0 & 0 & 0 & 1  
    \end{pmatrix}
    \]
    defines the $(7,3,4)$-simplex code. A basis for $C(3)$ is given by the rows of generator matrix 
    \begin{align}\label{supmat:eq:r=3 gen matrix}
            G= \begin{pmatrix}
        1 & 0 & 1 & 1 & 1 & 0 & 0 \\
        0 & 1 & 0 & 1 & 1 & 1 & 0 \\
        0 & 0 & 1 & 0 & 1 & 1 & 1 
    \end{pmatrix},
    \end{align}

    and we observe that
    \[
    \begin{pmatrix}
        1 & 1 & 0 \\ 0 & 1 & 0 \\ 0 & 0 & 1
    \end{pmatrix}\cdot G = G \cdot (2,4)(5,6). 
    \]
    I.e., the bit permutation $(2,4)(5,6)\in \aut{C(3)}$ induces the linear transformation \[(v_1,v_2,v_3) \mapsto (v_1+v_2,v_2,v_3)\] on the 3 logical bits $(v_1,v_2,v_3)\in \gf^3$.
\end{exmp}
In the remainder of this work, we assume that all parity check matrices $H\in \gf^{n_r\times n_r}$ for the classical simplex code are taken as above.

\subsection{Subsystem hypergraph product simplex (SHYPS) codes}\label{supmat:sec:SHSsuppMat}
Here we describe our main quantum code construction, namely the \emph{subsystem hypergraph product simplex code}, in greater detail.
\begin{defn}
    Let $r\geq 3, n_r=2^{r}-1, d_r=2^{r-1}$, and let $H_r$ be the parity check matrix for the $(n_r,r,d_r)$-classical simplex code. Then the subsystem hypergraph product of two copies of $H_r$, denoted $SHYPS(r)$, is the subsystem CSS code with gauge generators 
\begin{align} \label{supmat:eqn:subsystem-hgp}
G_X = (H \otimes I_{n_r}),\ G_Z = (I_{n_r} \otimes H).
\end{align}
 We call $SHYPS(r)$ the \emph{subsystem hypergraph product simplex code}.
\end{defn}

It's clear by definition that the row/column weights of $G_X$ and $G_Z$ match those of $H$, and so the $SHYPS(r)$ codes are a QLDPC code family with gauge generators of weight 3. Moreover, the gauge generators have a particular geometric structure: we arrange the $n_r^2$ physical qubits of $SHYPS(r)$ in an $n_r \times n_r$ array with row major ordering, such that for standard basis vectors $\lrang{e_1,\dots,e_{n_r}}=\gf^{n_r}$, the vector $e_i \otimes e_j$ corresponds to the $(i,j)$th position of the $n_r \times n_r$ qubit array. 
For example, the first row of $H$ given in Example \ref{supmat:exmp: classical simplex r=3} is $r_1 = e_1 + e_3 + e_4$ and hence the first row $r_1 \otimes e_1 \in G_X$ indicates an $X$-type gauge generator supported on the vector
\[
r_1 \otimes e_1 = e_1\otimes e_1+e_3 \otimes e_1 + e_4 \otimes e_1,
\]
i.e., on the $(1,1),(3,1)$ and $(4,1)$ qubits of the array.
It follows that gauge generators in $G_X$ (respectively $G_Z$) are supported on single columns (respectively rows) of the qubit array. 

We follow \cite[Sec. 3B]{li2020numerical} to describe the parameters $[n,k,d]$ of $SHYPS(r)$: firstly recall from the above that $n=n_r^2$. Next, to calculate the number of encoded qubits $k$, observe that the Pauli operators that commute with all gauge generators, are generated by
\begin{align}\label{supmat:eq:logical ops}
    \mathcal{L}_X = (I_{n_r} \otimes G), \
\mathcal{L}_Z = (G \otimes I_{n_r}).
\end{align}
Here $G$ is a chosen generator matrix for the classical simplex code (so in particular $G$ is a matrix of rank $r$ such that $HG^T=0$).

The centre of the gauge group $\lrang{G_X,G_Z}$ determines the stabilizers of a subsystem code. In particular, for $SHYPS(r)$ these are generated by 
\begin{equation}\label{supmat:eq:stabilizers_from_guage}
    S_X = (H\otimes G),
\
S_Z = (G \otimes H).
\end{equation}

Finally, $k$ is calculated by comparing the ranks of $\calL_X$ and $S_X$:
\begin{eqnarray*}
    k &=& \rank \calL_X - \rank S_X \\
    & = & n_r\cdot r-(n_r-r)\cdot r \\
    & = & r^2.
\end{eqnarray*}
More specifically, $\text{RowSpan}(\mathcal{L}_X \backslash S_X)$ determines the space of logical $X$-operators, with logical $Z$-operators defined similarly. However, as indicated in \cref{supmat:sec:subsystem codes}, we consider the action of these operators up to multiplication by the gauge group. Hence the minimum distance $d$ of the code is given by the minimum weight operator in $\lrang{\mathcal{L}_X,\mathcal{L}_Z}\cdot \lrang{G_X,G_Z}-\lrang{G_X,G_Z}$. For subsystem hypergraph product codes, this is exactly the minimum distance of the involved classical codes, and hence $d(SHYPS(r))=d_r=2^{r-1}$ \cite[Sec. 3B]{li2020numerical}.

In summary, we have the following
\begin{thm} \label{supmat:thm:SHS code definition}
   Let $r\geq 3$ and let $H$ be the parity check matrix for the $(2^{r}-1,r,2^{r-1})$-classical simplex code described in \cref{supmat:sec:appendix simplex codes}. The subsystem hypergraph product simplex code $SHYPS(r)$ is an $[n,k,d]$-quantum subsystem code with gauge group generated by $3$-qubit operators and 
  \begin{eqnarray*}
       n &= & (2^r-1)^2, \\ k &=& r^2, \\ d&=&2^{r-1}.
  \end{eqnarray*}
\end{thm}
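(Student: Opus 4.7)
The plan is to verify each of the four parameter claims in turn by unpacking the construction and appealing to general results about subsystem hypergraph product codes from \cite{li2020numerical1}, together with the specific structure of the classical simplex code reviewed in \cref{supmat:sec:appendix simplex codes}. The expressions for stabilizers and bare logicals given in the paragraphs immediately preceding the theorem already do most of the algebraic work; my task is to justify why those expressions follow from the SHP construction and then to read off the claimed parameters.

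First I would observe that $n = n_r^2 = (2^r-1)^2$ is immediate from the definition of $G_X = H \otimes I_{n_r}$ and $G_Z = I_{n_r}\otimes H$: the physical qubits are indexed by pairs of classical-code bits, giving the natural $n_r \times n_r$ qubit array. For the weight-$3$ property of the gauge generators, I would appeal directly to the construction of $H$ in \cref{supmat:sec:appendix simplex codes}: since $h(x) = 1 + x^a + x^b$ has exactly three nonzero terms and $H$ is the circulant built from cyclic shifts of $h(x)$, every row and (by cyclic symmetry) every column of $H$ has weight exactly $3$. Tensoring with an identity matrix preserves row and column weights, so each row of $G_X$ and of $G_Z$ is a weight-$3$ Pauli.

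For the logical-qubit count $k = r^2$, I would first justify that $\mathcal{L}_X = I_{n_r} \otimes G$ and $\mathcal{L}_Z = G \otimes I_{n_r}$ (with $G$ a generator matrix for $C(r)$) give representatives of all bare logicals by checking $G_Z \mathcal{L}_X^T = 0$ and $G_X \mathcal{L}_Z^T = 0$ via $HG^T = 0$. Then the centre-of-the-gauge-group computation yields the stabilizers $S_X = H \otimes G$ and $S_Z = G \otimes H$. Since $\mathrm{rank}(G) = r$ and $\mathrm{rank}(H) = n_r - r$, the Kronecker structure gives $\mathrm{rank}(\mathcal{L}_X) = n_r r$ and $\mathrm{rank}(S_X) = (n_r - r)r$, and the standard counting
\[
k \;=\; \mathrm{rank}(\mathcal{L}_X) - \mathrm{rank}(S_X) \;=\; n_r r - (n_r - r)r \;=\; r^2
\]
closes out the dimension claim.

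The distance statement $d = 2^{r-1}$ is the step I expect to require the most care, and is where I would lean hardest on \cite[Sec. 3B]{li2020numerical1}. The general SHP distance formula gives $d(SHP(H_1,H_2)) = \min(d_1,d_2)$ where $d_i$ is the minimum distance of $\ker H_i$, and specialising to two copies of the classical simplex code yields $d = d_r = 2^{r-1}$. To keep the proof self-contained I would sketch why any dressed logical operator has weight at least $d_r$: restricting an $X$-type dressed logical to a single row (respectively column) of the qubit array yields, after modding by the row-supported gauge generators, either zero or a nonzero codeword of $C(r)$; a nonzero restriction on some row forces weight at least $d_r = 2^{r-1}$, while a zero restriction on every row forces the operator into the gauge group and hence outside the dressed-logical coset. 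The matching upper bound is witnessed by a single row of $\mathcal{L}_X$, which is $G$ in one tensor slot and a standard basis vector in the other, hence of weight $d_r$. Assembling the four pieces gives the theorem.
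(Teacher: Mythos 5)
Your argument follows essentially the same route as the paper: $n$ and the weight-$3$ gauge generators are read off the tensor-product structure, $k$ follows from the rank comparison $\mathrm{rank}\,\calL_X - \mathrm{rank}\,S_X = n_r r - (n_r-r)r = r^2$, and the distance is obtained by citing \cite[Sec.~3B]{li2020numerical2}. The paper does exactly the same, so your proof is valid so long as the appeal to the general SHP distance formula is retained.

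The self-contained distance sketch you add, however, is flawed and would not survive if the citation were removed. Two issues. First, the gauge generators $G_X = H\otimes I_{n_r}$ are supported on \emph{columns} of the qubit array, not rows, so ``modding an $X$-type logical's row restriction by row-supported gauge generators'' does not describe anything that can happen (the row-supported gauge generators $G_Z = I_{n_r}\otimes H$ are $Z$-type and cannot cancel an $X$-operator). Second, and more seriously, modding a single row or column of the matrix $U\in\gf^{n_r\times n_r}$ representing the logical by the relevant gauge span $C(r)^\perp$ yields a coset of $C(r)^\perp$ in $\gf^{n_r}$, \emph{not} a codeword of $C(r)$; since $C(r)^\perp$ is the distance-$3$ Hamming code (which is perfect, with covering radius $1$), every nonzero coset has minimum weight exactly $1$, not $d_r$. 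The factor $d_r$ does not arise from the weight \emph{within} a single row or column. Rather, the commutation constraint forces the gauge-invariant ``syndrome'' $UG^T$ (for $Z$-type; $GU$ for $X$-type) to have every column (resp.\ row) lying in $C(r)$; nontriviality of the logical makes this syndrome nonzero, so some column (resp.\ row) is a nonzero codeword of weight at least $d_r$; and each of those $\geq d_r$ nonzero entries certifies a distinct row (resp.\ column) of $U$ that lies outside $C(r)^\perp$, each contributing weight at least $1$ after gauge-reduction. The $d_r$ therefore counts \emph{how many} rows (resp.\ columns) must be nonzero, not the weight of any one of them. You would need this reorganization for the sketch to stand on its own.
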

It's evident from the tensor product structure of (\ref{supmat:eq:logical ops}) that like the physical qubits, the logical qubits of $SHYPS(r)$ may be arranged in an $r \times r$ array, such that logical operators have support on \emph{lines} of qubits. In fact, recent work \cite{Quintavalle2023} demonstrates that a basis of logical operators may be chosen such that pairs of logical $X/Z$ operators have supports intersecting in at most one qubit. The following result is an immediate application of \cite[Thm. 1]{Quintavalle2023}:
\begin{thm}\label{supmat:thm:logical basis for SHP}
    Let $r\geq 3$. There exists a generator matrix $G \in \gf^{r\times n_r}$ for the classical simplex code and an ordered list of $r$ bit indices $\pi(G)\subset [1..n_r]$ known as \emph{pivots}, such that $PG^T=I_r$ for the \emph{pivot matrix}  $P\in \gf^{r\times n_r}$:  $P_{i,j}=1$ if and only if $ \pi(G)[i]=j$. Moreover, the matrices
    \[\label{supmat:eq:SHS canonical basis}
    L_X = (P \otimes G), \
L_Z = (G \otimes P),
    \]
    form a symplectic basis for the logical $X/Z$ operators of $SHYPS(r)$.
\end{thm}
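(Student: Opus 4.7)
The result is flagged as an ``immediate application'' of Theorem 1 of \cite{Quintavalle2023-2}, which produces a symplectic logical basis for any subsystem hypergraph product code given a generator matrix of the input classical code together with a compatible pivot matrix. My plan is therefore to split the work into two tasks: (1) construct $G$ and $\pi(G)$ for the simplex code explicitly, and (2) verify that the resulting pair $(L_X,L_Z)$ indeed satisfies the conditions required to be a symplectic basis for $SHYPS(r)$, either by invoking the cited theorem or via a short self-contained tensor-product computation.

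For task (1), I would start from any generator matrix of $C(r)$ and row-reduce it so that $r$ of its columns equal the standard basis vectors of $\gf^r$. The set of those pivot column indices defines $\pi(G)$, and the associated $P$ satisfies $PG^T=I_r$ immediately by construction (and hence $GP^T=I_r$ by transposition). This step is pure linear algebra and requires no input from the specific structure of the simplex code.

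For task (2), the symplectic pairing is an immediate tensor-product calculation:
\[
L_X L_Z^T = (P\otimes G)(G\otimes P)^T = (PG^T)\otimes(GP^T) = I_r \otimes I_r = I_{r^2},
\]
which realizes the canonical pairing between the $r^2$ logical $X$ and $Z$ operators. Moreover, each row of $L_X$ lies in the logical $X$ space since each row of $P$ is a standard basis vector $e_{\pi(G)[i]}$, making $e_{\pi(G)[i]}\otimes G_j$ literally a row of $I_{n_r}\otimes G = \mathcal{L}_X$; the situation for $L_Z$ is symmetric.

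The only point that requires real work --- and the main obstacle I would expect --- is showing that the $r^2$ rows of $L_X$ remain independent modulo the stabilizer span $\text{RowSpan}(S_X)=C(r)^{\perp}\otimes C(r)$ (and analogously for $L_Z$). I would argue this via the intersection condition $\text{RowSpan}(P) \cap C(r)^{\perp} = \{0\}$: if $\alpha P \in C(r)^{\perp}$ for some $\alpha\in\gf^r$, then $G(\alpha P)^T = (GP^T)\alpha^T = \alpha^T$, which must vanish, forcing $\alpha=0$. Combined with the tensor-product identity $\text{RowSpan}(L_X) + \text{RowSpan}(S_X) = (\text{RowSpan}(P) + C(r)^{\perp})\otimes C(r)$, this yields an $r^2$-dimensional complement of the stabilizers inside $\mathcal{L}_X$, completing the proof. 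One subtlety to check is that the overcomplete $H$ introduced in \cref{supmat:sec:appendix simplex codes} does not break the argument; it does not, because only $\text{RowSpan}(H)=C(r)^{\perp}$ is used, not a minimal generating set.
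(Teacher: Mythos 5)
Your proof proposal is correct, and it fills in the details that the paper leaves implicit. The paper does not actually prove this result; it simply invokes Theorem~1 of Quintavalle, Webster, and Vasmer (2023), whose constructive proof likewise produces a pivoted generator matrix via a modified Gaussian elimination, and then states the symplectic-basis conclusion. You instead give a self-contained argument, and the pieces all check out: the row-reduced $G$ and its pivot matrix $P$ give $PG^T = I_r$ by construction; the pairing $L_X L_Z^T = (PG^T)\otimes(GP^T) = I_{r^2}$ is immediate; each row of $P\otimes G$ is literally a row of $\mathcal{L}_X = I_{n_r}\otimes G$ (and dually for $L_Z$), so the candidates lie in the correct operator space; and the complement argument --- that $\operatorname{RowSpan}(P)\cap C(r)^\perp = \{0\}$, forced by $\alpha(PG^T) = \alpha = 0$, combined with the factorization $\operatorname{RowSpan}(L_X)+\operatorname{RowSpan}(S_X) = (\operatorname{RowSpan}(P)+C(r)^\perp)\otimes C(r)$ --- shows the rows of $L_X$ span an $r^2$-dimensional direct complement of the stabilizer span inside $\mathcal{L}_X$, so they descend to a genuine logical basis. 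Your closing observation that only $\operatorname{RowSpan}(H) = C(r)^\perp$ matters, not a minimal $H$, correctly disposes of the overcomplete parity-check issue. The one thing your argument does not emphasize, but which the paper does use downstream, is that with $G$ in reduced row-echelon form the pivot columns of $G$ have a single nonzero entry, so $PG^T = I_r$ holds over $\mathbb{R}$ and not just over $\gf$; your construction actually delivers this property as a free by-product, so nothing is missing for the theorem as stated.
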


The proof of \cite[Thm. 1]{Quintavalle2023} is constructive and, as the name pivots suggests, relies on a modified version of the Gaussian elimination algorithm \cite[Alg. 1]{Quintavalle2023}. This yields a so-called \emph{strongly lower triangular} basis for the simplex code, represented by $G$, with rows $\{g_i\}_{i \in \pi(G)}$ indexed by the pivots.
As $P$ has row weights equal to one, we see that the matrix products $PG^T=I_r$ and $L_XL_Z^T=I_{r^2}$ hold not only over $\gf$ but over $\mathbb{R}$. Hence, given pairs of pivots $(i,j),(k,l)\in \pi(G)^2$, the associated logical Paulis $\overline{X}_{i,j}, \overline{Z}_{k,l}$ given by basis vectors $e_i\otimes g_j \in L_X$ and $g_k \otimes e_l \in L_Z$ have intersecting support if and only if $(i,j)=(l,k)$. Moreover, this intersection is on the $(i,j)$th qubit of the array.
From this point, we assume that $G$, and the logical operators of $SHYPS(r)$, are of the form above.

\subsection{Lifting classical automorphisms}\label{supmat:sec:SupMat-lifting classical auts}

The geometric structure of $SHYPS(r)$ suggests a natural way to \emph{lift} automorphisms of the simplex code by independently permuting either rows or columns of the qubit array. In this manner, we see that $SHYPS(r)$ inherits $\vert GL_r(2)\vert^2 =O(2^{2r^2})$ automorphisms from the two copies of the classical simplex code, and hence $\aut{SHYPS(r)}$ grows exponentially with the number of logical qubits $k=r^2$.

\begin{lem}
    Let $r\geq 3$ and $C$ be the $(n_r,r,d_r)$-simplex code with automorphisms $\sig_1,\sig_2\in \aut{C}$. Then $\sig_1\otimes \sig_2\in \aut{SHYPS(r)}$. Furthermore, 
    \[
    \vert \aut{SHYPS(r)}\vert \geq \vert GL_r(2) \vert ^2.
    \]  
\end{lem}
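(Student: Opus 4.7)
The plan is to directly verify the defining conditions for a quantum code automorphism on $\sig_1\otimes\sig_2$, using the classical automorphism properties of $\sig_1,\sig_2\in\aut{C}$, and then to establish injectivity of the lifting map to obtain the cardinality bound. The key tool throughout is the mixed-product identity for Kronecker products, $(A\otimes B)(C\otimes D)=AC\otimes BD$, combined with the block structure $G_X=H\otimes I_{n_r}$, $G_Z=I_{n_r}\otimes H$ of the SHP construction.

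First I would unpack what it means for $\sig_i\in\aut{C}$: by the characterization of classical permutation automorphisms in Section~\ref{supmat:sec:automorphisms of codes}, there exist $h_{\sig_i}\in GL_{n_r}(2)$ acting on the (overcomplete) rows of $H$ such that $H\sig_i=h_{\sig_i}H$. Identifying $\sig_1\otimes\sig_2$ with the permutation of the $n_r\times n_r$ qubit array that sends $(i,j)\mapsto(\sig_1(i),\sig_2(j))$, it is a legitimate element of $S_{n_r^2}$. Applying the mixed-product identity yields
\begin{align*}
    G_X(\sig_1\otimes\sig_2) &= (H\otimes I_{n_r})(\sig_1\otimes\sig_2) = (H\sig_1)\otimes(I_{n_r}\sig_2) \\
    &= (h_{\sig_1}H)\otimes(\sig_2 I_{n_r}) = (h_{\sig_1}\otimes\sig_2)(H\otimes I_{n_r}) = (h_{\sig_1}\otimes\sig_2)\,G_X,
\end{align*}
and by an entirely symmetric computation $G_Z(\sig_1\otimes\sig_2)=(\sig_1\otimes h_{\sig_2})\,G_Z$. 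Since tensor products of invertible matrices are invertible, $h_{\sig_1}\otimes\sig_2$ and $\sig_1\otimes h_{\sig_2}$ lie in the appropriate general linear groups, so $\sig_1\otimes\sig_2\in\aut{SHYPS(r)}$ by definition.

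For the cardinality statement, I would argue that the lifting map $(\sig_1,\sig_2)\mapsto \sig_1\otimes\sig_2$ from $\aut{C}\times\aut{C}$ into $\aut{SHYPS(r)}$ is injective: if $\sig_1\otimes\sig_2=\sig_1'\otimes\sig_2'$, then comparing the image of any pair $(i,j)$ forces $\sig_1(i)=\sig_1'(i)$ and $\sig_2(j)=\sig_2'(j)$ for every $i,j$, so the factors are uniquely determined. Combining this injection with $|\aut{C}|=|GL_r(2)|$ from Lemma~\ref{supmat:lem:aut group of simplex code} gives $|\aut{SHYPS(r)}|\geq|GL_r(2)|^2$.

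Honestly I do not anticipate a real obstacle in this proof; the statement is essentially a clean corollary of the mixed-product identity and the tensor structure of the SHP construction, and its value lies in producing an exponentially large automorphism group to feed into the compilation results of later sections, rather than in any technical subtlety. The only bookkeeping point worth watching is that the $h_{\sig_i}$ are defined with respect to the overcomplete $n_r\times n_r$ parity check matrix chosen in Section~\ref{supmat:sec:appendix simplex codes}, so that the tensor factors $h_{\sig_1}\otimes\sig_2$ and $\sig_1\otimes h_{\sig_2}$ have dimensions matching $G_X$ and $G_Z$ respectively.
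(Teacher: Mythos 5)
Your proposal is correct and takes essentially the same route as the paper: the mixed-product identity on $G_X=H\otimes I_{n_r}$ and $G_Z=I_{n_r}\otimes H$ with $H\sig_i=h_{\sig_i}H$, followed by injectivity of $(\sig_1,\sig_2)\mapsto\sig_1\otimes\sig_2$. The only cosmetic difference is that you argue injectivity by comparing images of index pairs directly, whereas the paper notes that $A\otimes B=I$ forces $A=B=I$; the two are interchangeable.
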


\begin{proof}
    Clearly $\sig_1 \otimes \sig_2$ is a permutation of the required number of qubits $n=n_r^2$ and so we need only check that it preserves the gauge generators $G_X$ and $G_Z$. By definition of code automorphisms, for each $\sig_i \in \aut{C}$ there exists a corresponding $h_{\sig_i}\in GL_{n_r}(2)$ such that $h_{\sig_i}H=H\sig_i$. Hence 
   \begin{eqnarray*}
       G_X\cdot (\sig_1 \otimes \sig_2) & = & (H\sig_1 \otimes \sig_2) \\
       & = & (h_{\sig_1}H\otimes \sig_2) \\
       & = & (h_{\sig_1}\otimes \sig_2)\cdot G_X,
   \end{eqnarray*}
   with $G_Z$ similarly preserved.
   For the second claim, note that a tensor product of matrices $A\otimes B$ is the identity if and only if $A$ and $B$ are also identity. Hence each pair $\sig_1,\sig_2\in \aut{C_1}\times \aut{C_2}$ produces a distinct $\sig_1\otimes \sig_2 \in \aut{SHYPS(r)}$.
\end{proof}
Note that the above result naturally generalises to all subsystem hypergraph product codes, constructed from possibly distinct classical codes, but it does not generalise to the \textit{stabilizer} hypergraph product construction (HGP). A similar prescription for lifting classical automorphisms is available for HGP codes \cite[App. A.1]{xu2024} and \cite[App. D]{hong2024}, but only in the restricted instance that each $h_{\sig_i}$ is also a permutation (see \cite[Diagram A.3]{xu2024}, where $h_{\sig_i}$ are denoted $\gamma_0^i$).
By considering the subsystem hypergraph product in this work, we are able to lift \textit{all} $\vert GL_r(2)\vert$ classical simplex code automorphisms.

To determine the logical gate induced by the above automorphisms we examine the permutation action on the logical basis given by \cref{supmat:thm:logical basis for SHP}.

\begin{lem}\label{supmat:lem:perm to logical}
Let $r\geq 3$ and $C(r)$ be the $(n_r,r,d_r)$-simplex code with generator matrix $G$. Furthermore, assume that $\sigma_1,\sigma_2 \in \aut{C}$ with corresponding linear transformations $g_{\sigma_1},g_{\sig_2} \in GL_r(2)$ such that $g_{\sigma_i} G  = G \sig_i$.
    Then $\sig_1 \otimes \sig_2 \in \aut{SHYPS(r)}$ induces the following action on the basis of logical operators
    \[
    L_Z \cdot (\sigma_1 \otimes \sigma_2) = (g_{\sig_1} \otimes g_{\sig_2}^{-T})\cdot L_Z,
    \]
    \[
    L_X \cdot (\sigma_1 \otimes \sigma_2) = (g_{\sig_1}^{-T} \otimes g_{\sig_2})\cdot L_X.
    \]
\begin{proof}
    For ease of presentation, let's consider the action of $\sigma \otimes I$ where $\sig \in \aut{C}$ (the general case follows identically). Firstly observe that 
    \[
    L_Z (\sig \otimes I) = (G\sigma \otimes P) = (g_{\sig}G \otimes P) = (g_{\sig} \otimes I)L_Z.
    \]
    The permutation action on the basis of $X$-logicals is then given by
    \[
    L_X(\sig \otimes I) = (P\sigma \otimes G),
    \]
    where 
    \begin{eqnarray*}
        I_{r^2} = L_X\cdot L_Z^T & = & L_X (\sig \otimes I) \cdot (\sig^T \otimes I)L_Z^T \\
        & = & (P\sigma \otimes G)\cdot (L_Z(\sigma \otimes I))^T \\
        & = & (P\sigma \otimes G)\cdot (G^T \otimes P^T) \cdot (g_{\sigma}^T\otimes I) .
    \end{eqnarray*}
Collecting terms and noting that $A \otimes B = I_{r^2}$ if and only if both components are identity (recall also that $PG^T = I_r$), it follows that $P\sigma G^T = g_{\sig}^{-T}.$ Now $H$ spans $\ker G$ and hence all solutions to the above are of the form 
\[
P\sig = g_{\sig}^{-T}P+AH,
\]
for some $A\in M_{n_r}$.
In summary, there exists $A$ such that
\begin{eqnarray*}
    L_X(\sig \otimes I) &=& P\sigma \otimes G \\
    & = & (g_{\sig}^{-T}P+AH) \otimes G \\
    & = & (g_{\sig}^{-T} \otimes I)(P\otimes G) + (A\otimes I_r)(H \otimes G) \\
    & = & (g_{\sig}^{-T} \otimes I)L_X + (A\otimes I_r)S_X.
\end{eqnarray*}
That is, up to stabilizers (the exact stabilizer determined by $A$), $\sigma \otimes I$ has the desired action on $L_X$.
\end{proof}
    
\end{lem}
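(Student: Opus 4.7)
The plan is to compute the action of $\sigma_1 \otimes \sigma_2$ on the canonical logical basis directly from the definitions $L_Z = (G \otimes P)$ and $L_X = (P \otimes G)$, treating the two tensor factors separately. The defining relation $g_{\sigma_i}G = G\sigma_i$ describes cleanly how $\sigma_i$ acts on the generator matrix $G$, but says nothing \emph{a priori} about how it acts on the pivot matrix $P$; the crux will therefore be a single modular identity describing how $\sigma$ acts on $P$, up to stabilizers.

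Specifically, I would aim to prove the key claim
\begin{equation*}
P\sigma \;\equiv\; g_\sigma^{-T}\, P \pmod{\operatorname{rowspan}(H)}.
\end{equation*}
Once this is in hand, combining it with the clean action $G\sigma_i = g_{\sigma_i} G$ in the other factor places the correction inside $S_Z = G \otimes H$ (for $L_Z$) or $S_X = H \otimes G$ (for $L_X$), which yields the stated logical action modulo stabilizers.

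To prove the key claim I would form the difference $D := P\sigma - g_\sigma^{-T} P$ and right-multiply by $G^T$. The second term collapses to $g_\sigma^{-T}$ by the pivot identity $PG^T = I_r$ from \cref{supmat:thm:logical basis for SHP}. For the first term, transposing $g_\sigma G = G\sigma$ and using that permutation matrices satisfy $\sigma^T = \sigma^{-1}$ yields $\sigma G^T = G^T g_\sigma^{-T}$, so $P\sigma G^T$ equals $g_\sigma^{-T}$ as well. Thus $DG^T = 0$, i.e., $D$ annihilates every row of $G$, which places it in the row span of any parity check matrix for $C(r)$ --- in particular the overcomplete $H$ fixed in \cref{supmat:sec:appendix simplex codes}. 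Tensoring back in, the $L_Z$ calculation gives $(g_{\sigma_1}\otimes g_{\sigma_2}^{-T})L_Z$ modulo $S_Z$, and the $L_X$ calculation is dual, swapping the roles of $P$ and $G$ between the two tensor factors to produce $(g_{\sigma_1}^{-T}\otimes g_{\sigma_2})L_X$ modulo $S_X$.

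The step I expect to require the most care is confirming that the correction really lands in the stabilizer group rather than merely in the bare-logical coset. This hinges on identifying the row span of the specific overcomplete $H$ with the orthogonal complement of $\operatorname{rowspan}(G)$, which is routine once one recalls that $H$ has rank $n_r - r$ and parity-checks $C(r)$, but worth flagging because the construction in \cref{supmat:sec:appendix simplex codes} deliberately retains redundant rows for gauge-weight reasons rather than taking a minimal basis.
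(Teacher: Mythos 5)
Your proposal is correct and follows the same overall structure as the paper's proof: identify the clean action on the $G$-factor, reduce the $P$-factor to the modular identity $P\sigma \equiv g_\sigma^{-T}P \pmod{\operatorname{rowspan}(H)}$, and observe that the correction lands in $S_X$ (resp.\ $S_Z$). The only real difference is how you obtain the key identity $P\sigma G^T = g_\sigma^{-T}$: you transpose $g_\sigma G = G\sigma$ directly and invoke $\sigma^T=\sigma^{-1}$, whereas the paper extracts the same identity by expanding the symplectic pairing $L_X L_Z^T = I_{r^2}$ under the permutation. Your derivation is a touch more elementary and economical, but it is not a different method at the level that matters.
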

As a logical Clifford operator is determined (up to a phase) by its action on the basis of logical Paulis, Lemma \ref{supmat:lem:perm to logical} demonstrates that the qubit permutation $\sig_1\otimes \sig_2$ induces the logical CNOT circuit corresponding to $g_{\sig_1}^{-T} \otimes g_{\sig_2} \in GL_{r^2}(2)$.

\section{CNOT operators in SHYPS codes}\label{supmat:sec: cnot gate characterisation}

In this section, the collection of automorphisms 
\[
\{\sig_1 \otimes \sig_2 \mid \sig_i \in \aut{C(r)} \}\subset\aut{SHYPS(r)}
\]
serves as the foundation for generating all logical CNOT operators in the SHYPS codes. 

First, following \cite{Grassl2013}, we show how all logical cross-block CNOT operators (with all controls in a first code block, and all targets in a second code block) may be attained by sequences of physical depth-1 circuits, that interleave the transversal CNOT operator with code automorphisms. In-block CNOT operators are then achieved by use of an auxiliary code block (see \cref{supmat:subsec:full cnot group gen}). To demonstrate efficient compilation, we develop substantial linear algebra machinery for certain matrix decomposition problems -- we expect these methods to be useful for logical gate compilation in many other code families.

A summary of depth bounds for a range of logical CNOT operators of SHYPS codes is available in \cref{supmat:tab:cnot_depth}.

\textbf{Notation:} We adopt the notation discussed in Example~\ref{supmat:ex:symplectic-rep-cnots}, where CNOT circuits on $b$ blocks of $k=r^2$ logical qubits or $n=n_r^2$ physical qubits, are described by invertible matrices in $GL_{br^2}(2)$ or $GL_{bn_r^2}(2)$, respectively. The collection of all (not necessarily invertible) $n\times n$ binary matrices is denoted by $M_n(2)$ and $E_{i,j}\in M_n(2)$ denotes the matrix of all zeros except the $(i,j)$ entry equal to 1. The collection of diagonal matrices are denoted $\text{Diag}_n(2) \subset M_n(2)$.

\begin{lem} \label{supmat:lem:depth 1 cnot circuits}
    Let $g_1,g_2 \in GL_{r}(2)$. Then the logical CNOT circuits $\left( \begin{matrix}
       I & g_1\otimes g_2 \\ 0 & I 
    \end{matrix} \right), 
    \left( 
    \begin{matrix}
       I & 0 \\ g_1\otimes g_2 & I 
    \end{matrix} 
    \right)
    \in GL_{2r^2}(2)$ on code $SHYPS(r)$ may be implemented by a depth-1 physical CNOT circuit.
\end{lem}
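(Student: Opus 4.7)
The plan is to realize each cross-block logical CNOT by applying the construction of equation~(\ref{eq:cross-block-cnot}) with a permutation $\pi$ chosen as a lift of a suitable pair of classical simplex automorphisms, and then to verify the resulting logical action with Lemma~\ref{supmat:lem:perm to logical}.

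Given $g_1, g_2 \in GL_r(2)$, I would first invoke Lemma~\ref{supmat:lem:aut group of simplex code} (which asserts $\aut{C(r)} \cong GL_r(2)$) to select classical automorphisms $\sigma_1, \sigma_2 \in \aut{C(r)}$ whose induced logical linear transformations $g_{\sigma_1}, g_{\sigma_2}$ equal the values of $g_1, g_2$ up to the transpose/inverse dictated by the convention in Lemma~\ref{supmat:lem:perm to logical}. By the discussion in \cref{supmat:sec:SupMat-lifting classical auts}, the tensor product $\sigma_1 \otimes \sigma_2$ then lifts to a qubit permutation automorphism of $SHYPS(r)$. Since this permutation can be carried out as a relabeling of the physical qubits rather than by physical SWAP gates, it contributes zero depth; the overall physical circuit -- the transversal CNOT followed by $\sigma_1 \otimes \sigma_2$ on the target block, as in equation~(\ref{eq:cross-block-cnot}) -- therefore has depth $1$.

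To verify the logical action, I would apply Lemma~\ref{supmat:lem:perm to logical} to characterize how $\sigma_1 \otimes \sigma_2$ acts on the basis $L_X, L_Z$ of the target code block's logical operators. Combined with the identity cross-block pattern contributed by the transversal CNOT (which implements the logical transversal CNOT in any subsystem CSS code), this yields a $4k \times 4k$ logical symplectic matrix whose $X_1 \to X_2$ block is exactly $g_1 \otimes g_2$. The second form of the CNOT, with the pattern in the lower-left block, is obtained by the symmetric construction: swapping the roles of control and target code blocks, so that the transversal CNOT runs from block $2$ to block $1$ and the automorphism $\sigma_1 \otimes \sigma_2$ is applied to block $1$ instead.

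The main obstacle is convention-tracking: the logical action formulas in Lemma~\ref{supmat:lem:perm to logical} involve both $g_{\sigma_i}$ and their inverse-transposes, while for a valid cross-block CNOT the lower-left block of the $4k \times 4k$ symplectic matrix must equal the transpose of the upper-right pattern. Careful bookkeeping is thus required to choose $\sigma_1, \sigma_2$ so that the resulting upper-right block is exactly $g_1 \otimes g_2$, rather than a related form such as $g_1^{-T} \otimes g_2^{-T}$, and to confirm that the corresponding lower-left block matches as required. Once the convention is fixed, the remainder of the proof is a direct block-matrix computation built entirely from the already-established machinery.
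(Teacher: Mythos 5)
Your proposal is correct and follows essentially the same approach as the paper: choose $\sigma_1,\sigma_2\in\aut{C(r)}$ via the isomorphism $\aut{C(r)}\cong GL_r(2)$, lift to $\sigma_1\otimes\sigma_2\in\aut{SHYPS(r)}$, read off the logical action from Lemma~\ref{supmat:lem:perm to logical}, and observe that the resulting cross-block circuit has physical depth 1 because $\sigma_1\otimes\sigma_2$ is a permutation. One small inaccuracy to fix when you write it up: the circuit in equation~(\ref{eq:cross-block-cnot}) is the transversal CNOT \emph{conjugated} on the target block by $\pi$ (i.e., $\pi^{-1}$, then transversal CNOT, then $\pi$), not ``transversal CNOT followed by $\pi$''---the latter has symplectic block form $\bigl(\begin{smallmatrix}I&\pi\\0&\pi\end{smallmatrix}\bigr)$ rather than the desired $\bigl(\begin{smallmatrix}I&\pi\\0&I\end{smallmatrix}\bigr)$, so it picks up an unwanted in-block CNOT on block $2$. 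The paper makes the depth-1 claim explicit by computing the product $\bigl(\begin{smallmatrix}I&0\\0&\pi^{-1}\end{smallmatrix}\bigr)\bigl(\begin{smallmatrix}I&I\\0&I\end{smallmatrix}\bigr)\bigl(\begin{smallmatrix}I&0\\0&\pi\end{smallmatrix}\bigr)=\bigl(\begin{smallmatrix}I&\pi\\0&I\end{smallmatrix}\bigr)$ and noting that since $\pi$ has unit row/column weights this represents the single CNOT round $\prod_i CNOT_{i,\pi^{-1}(i)+n_r^2}$; your ``zero-depth relabeling'' intuition lands in the same place once you carry out the block computation you already flag as the final step.
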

\begin{proof}
    First recall that (like all CSS codes) the transversal $CNOT^{\otimes n_r^2}$ implements $\overline{CNOT}^{\otimes r^2}$ between two code blocks of $SHYPS(r)$. Independently, it follows from Lemmas \ref{supmat:lem:aut group of simplex code} and \ref{supmat:lem:perm to logical} that there exist $\sig_i \in S_{n_r}$, such that the physical CNOT circuit given by $\mat{I & 0 \\ 0 & \sig_1\otimes \sig_2} \in GL_{2n_r^2}(2)$ implements $\overline{\mat{I & 0 \\ 0 & g_1\otimes g_2}}\in GL_{2r^2}(2).$ Hence
    \[
    \mat{I & 0 \\ 0 & \sig_1^{-1}\otimes \sig_2^{-1}} \cdot \mat{I & I \\ 0 & I} \cdot \mat{I & 0 \\ 0 & \sig_1\otimes \sig_2}  = \mat{I & \sig_1\otimes \sig_2 \\ 0  & I}, 
    \]
    implements
    \[
     \overline{\mat{I & 0 \\ 0 & g_1^{-1}\otimes g_2^{-1}}} \cdot \overline{\mat{I & I \\ 0 & I}} \cdot \overline{\mat{I & 0 \\ 0 & g_1\otimes g_2}}  = \overline{\mat{I & g_1\otimes g_2 \\ 0 & I}}. 
    \]
    But recall that $\pi = \sig_1\otimes \sig_2$ is a permutation matrix, and in particular has row and column weights equal to one. Hence
    \[
 \mat{I & \pi \\ 0  & I}
    \]
    represents the depth-1 physical CNOT circuit $\prod_i CNOT_{i,\pi^{-1}(i)+n_r^2}$. The circuit diagram illustrating this conjugated CNOT operator is given by \cref{supmat:fig:generalised-transversal-cnot}.
    
    \begin{figure}
\begin{tikzpicture}
\begin{yquantgroup}
  \registers{
         qubit {} q[2];
      }
   \circuit{
      slash q;
      hspace {3mm} q;
      box {$\pi$} q[1] | q[0];
      }
   \equals
   \circuit{
      slash q;
      hspace {3mm} q;
      box {$\pi^{-1}$} q[1];
      cnot q[1] | q[0];
      box {$\pi$} q[1];
   } 
\end{yquantgroup}
\end{tikzpicture}
        \caption{Physical implementation of a cross-block CNOT operator utilising $\pi \in \aut{\calC}$.}
        \label{supmat:fig:generalised-transversal-cnot}
    \end{figure}

The transposed logical CNOT circuit follows identically by exchanging the operations on the two code blocks.
\end{proof}

We now establish some notation for our generators, and drop the overline notation on the understanding that all operators are logical operators unless specified otherwise.

\begin{defn}
    Denote the above collection of logical CNOT operators induced from classical automorphisms by
    \begin{align}
    \mathcal{A}&:=\{ \mat{I & g_1\otimes g_2 \\ 0 & I} \,:\,g_i \in GL_r(2)\} \text{ and }\\ \mathcal{A}^T&:=\{ \mat{I & 0 \\ g_1\otimes g_2 & I} \,:\,g_i \in GL_r(2)\}.
    \end{align}
\end{defn}

\subsection{Generating cross-block CNOT operators }\label{supmat:subsec:full cnot group gen}
The CNOT operators $\calA$ and $\calA^T$ are natural generalisations of ${CNOT}^{\otimes n_r^2}$ in that they have a transversal implementation on a pair of code blocks, and are thus inherently fault-tolerant. It is therefore highly desirable to use these circuits as generators for a larger class of CNOT operators. In this section we will see that they in fact suffice to efficiently generate all CNOT circuits between an arbitrary number of code blocks; a result that is highly unexpected in the context of generic quantum codes, and is derived from the particularly symmetric nature of the classical simplex codes.

Before proceeding, we state our first main result, which concerns cross-block CNOT operators in SHYPS codes

\begin{thm}\label{supmat:thm:cross-block-cnot-operators}
    Let  $A \in M_{r^2}(2)$. Then arbitrary cross-block CNOT operators given by \[\left( \begin{matrix}
       I & A \\ 0 & I 
    \end{matrix} \right), 
    \left( 
    \begin{matrix}
       I & 0 \\ A & I 
    \end{matrix} 
    \right)
    \in M_{2r^2}(2)\] are implemented fault-tolerantly in $SHYPS(r)$ using at most $r^2+r+4$ generators from $\calA$ and $\calA^T$.
\end{thm}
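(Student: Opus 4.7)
Products of $\mathcal{A}$-elements combine additively,
\begin{equation*}
\prod_{k=1}^{N} \begin{pmatrix} I & g_1^{(k)}\otimes g_2^{(k)} \\ 0 & I \end{pmatrix} = \begin{pmatrix} I & \sum_{k=1}^N g_1^{(k)}\otimes g_2^{(k)} \\ 0 & I \end{pmatrix},
\end{equation*}
so realizing $\begin{pmatrix} I & A \\ 0 & I \end{pmatrix}$ with $N$ generators from $\mathcal{A}$ is equivalent to writing $A\in M_{r^2}(2)$ as $\sum_{k=1}^{N} g_1^{(k)}\otimes g_2^{(k)}$ with every factor in $GL_r(2)$. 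It therefore suffices to achieve $N\le r^2+r+4$; the lower-triangular case follows symmetrically using $\mathcal{A}^T$.

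\textbf{Step 1: invertible-basis tensor expansion.} I would first note that $GL_r(2)$ additively spans $M_r(2)$ for $r\ge 2$: every matrix unit satisfies $E_{ij} = P + (P+E_{ij})$ for any permutation matrix $P$ with $P_{ij}=0$, and both summands are invertible. Fix an invertible basis $\{B_1,\ldots,B_{r^2}\}\subset GL_r(2)$ of $M_r(2)$ and expand uniquely $A = \sum_{k=1}^{r^2} B_k \otimes C_k$ for determined $C_k\in M_r(2)$. This already supplies $r^2$ tensor products whose left factors are invertible; the only remaining obstruction is that some right factors $C_k$ may be singular.

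\textbf{Step 2: batch shifts for singular right factors.} Let $\mathcal{S} = \{k : C_k \notin GL_r(2)\}$. The technical heart is a covering claim: there exist shifts $K_1, \ldots, K_r \in GL_r(2)$ such that every $k\in \mathcal{S}$ admits an index $j$ with $C_k + K_j \in GL_r(2)$. Assign each such $k$ to one batch $\mathcal{I}_j$ (ties broken arbitrarily) and apply
\begin{equation*}
\sum_{k\in \mathcal{I}_j} B_k \otimes C_k = \sum_{k\in \mathcal{I}_j} B_k \otimes (C_k + K_j) + \Bigl(\sum_{k\in \mathcal{I}_j} B_k \Bigr) \otimes K_j.
\end{equation*}
The first sum now consists of tensors with both factors invertible, while the second contributes one additional ``batch-sum'' term per $j$ with invertible right factor $K_j$. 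Singular batch-sum left factors $\sum_{k\in\mathcal{I}_j} B_k$ are repaired by a single Step-1 split into two invertibles, and I expect a careful analysis to bound the total number of such repairs by at most four globally.

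\textbf{Count and main obstacle.} Aggregating: $|\mathcal{S}|$ repaired tensors plus $r^2-|\mathcal{S}|$ already-invertible tensors gives $r^2$; the $r$ batch-sum terms add $r$ more; and at most four correction tensors yield $N\le r^2+r+4$, as required. The main obstacle is establishing existence of the $r$ covering shifts in Step 2. The natural attack is a density / probabilistic argument: for each $C\in M_r(2)$ the set of $K$ with $C+K\in GL_r(2)$ occupies a constant fraction $|GL_r(2)|/2^{r^2}\gtrsim 0.289$ of $M_r(2)$, so a random tuple of $r$ shifts in $GL_r(2)$ covers $\mathcal{S}$ by a union bound whenever $r$ is sufficiently large. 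Small-$r$ instances will likely require either a sharper combinatorial count, or an explicit construction exploiting the group structure of $GL_r(2)$ (e.g.\ using carefully chosen cosets or the transitive action of $GL_r(2)$ on nonzero vectors) to guarantee the covering in all regimes $r\ge 3$.
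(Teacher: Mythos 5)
Your reduction in the first paragraph (generators in $\mathcal{A}$ compose additively on the off-diagonal block, so the depth is exactly the tensor-sum weight of $A$) matches the paper's, and Step 1 (extracting an invertible basis of $M_r(2)$) is sound. The argument breaks down in Step 2 and the final count.

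\textbf{The covering claim is unjustified and is also the wrong tool.} You need, for an arbitrary collection of up to $r^2$ singular matrices $C_k$, a universal family $K_1,\dots,K_r\in GL_r(2)$ such that every $C_k$ admits some $j$ with $C_k+K_j\in GL_r(2)$. Your union bound shows a random tuple works only when $r^2(1-\epsilon)^r<1$, which already fails for the code family's smallest cases ($r=3,4,5$), and you offer no construction that closes the gap. Worse, the quantity you would need to control, $\lvert GL_r(2)\cap (C+GL_r(2))\rvert$, has no useful generic lower bound: the inclusion-exclusion estimate $2\lvert GL_r(2)\rvert-2^{r^2}$ is negative for all $r\ge 3$, so it is not even clear a constant fraction of \emph{invertible} shifts fixes a given $C$. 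The paper sidesteps all of this with a structurally different move: each right factor gets its own \emph{diagonal} shift $D_j$ (Lemma~\ref{supmat:lem:diaginvert}), and because diagonals form only an $r$-dimensional subspace, the correction terms automatically collect into at most $r$ summands when expanded in a diagonal basis. No covering family is ever needed, and the bound holds for all $r\ge 1$.

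\textbf{The ``at most four globally'' repair count has no support.} After your batching, the correction terms are $\bigl(\sum_{k\in\mathcal{I}_j}B_k\bigr)\otimes K_j$ for $j=1,\dots,r$, and each singular batch-sum left factor costs a split into two invertibles. There is no reason why at most two of the $r$ batch-sums would be singular; in the worst case this is $2r$ extra terms, blowing the budget to $r^2+2r$. The paper's accounting avoids this by two intertwined tricks your scheme lacks: (i) the collected diagonal correction terms' \emph{left} factors are re-spanned by a set of at most $r+2$ invertible matrices via Lemma~\ref{supmat:lem:invertible-spanning-set1}, and (ii) the residual diagonal \emph{right} factors are all repaired by the \emph{same} fixed $r$-cycle $P$ (Lemma~\ref{supmat:lem:diagperm}), so the resulting corrections collapse into a single term $\bigl(\sum_a\rho_a F_a\bigr)\otimes P$, which costs only $+2$ after one application of Lemma~\ref{supmat:lem:sum of two}. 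In short, the fixed $r$-cycle plays the role your hoped-for covering family was supposed to play, but it works because diagonals have a special structure, not because of any density argument. Without importing that mechanism, both your $r$ term and your $+4$ term are conjectural.
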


In particular, the implementation of cross-block CNOTs scales with the number of logical qubits $k=r^2$. In fact, by considering the size of the groups involved we see that this scaling is optimal up to a constant factor: as $\vert M_{r^2}(2) \vert =2^{r^4}$ and $\vert \calA \vert = \vert GL_r(2) \vert^2 = O(2^{2r^2})$,
    \[
    M_{r^2}(2) = \calA^l \implies \vert M_{r^2}(2) \vert \leq \vert \calA \vert^l,
    \]
    and hence $l \geq r^2/2.$
    
 \cref{supmat:thm:cross-block-cnot-operators} is the foundation for much of the eficient, fault-tolerant computation in the SHYPS codes; it is from these operators that we build arbitrary CNOT operators, as well multi-block diagonal Clifford gates (in conjunction with a fold-transversal Hadmard gate -- see \cref{supmat:sec: Hadamard}).

 To prove \cref{supmat:thm:cross-block-cnot-operators}, first observe that the composition of cross-block CNOT operators in $\lrang{\calA}$ (and similarly in $\lrang{\calA^T}$) behaves like addition of the off-diagonal blocks
 \[
 \mat{I & g_1\otimes g_2 \\ 0 & I}\cdot \mat{I & h_1\otimes h_2 \\ 0 & I} = \mat{I & g_1\otimes g_2 + h_1\otimes h_2 \\ 0 & I}. 
 \]
Hence our Clifford generation problem becomes a question of showing that tensor products of the form $g_1 \otimes g_2$ efficiently generate the full matrix algebra, under addition. In summary, \cref{supmat:thm:cross-block-cnot-operators} is an immediate corollary of the following result

\begin{thm} \label{supmat:thm:worst-case-cross-block-cnot}
    Let $A\in M_{r^2}(2)$. Then there exist $w$ pairs of invertible matrices $g_{i_1},g_{i_2} \in GL_r(2)$, for some $w\leq r^2+r+4$ such that $A = \sum _{i_1,i_2}^w g_{i_1}\otimes g_{i_2}$.
\end{thm}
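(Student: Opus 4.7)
The plan is to recast the problem via the identification $M_{r^2}(2) \cong M_r(2) \otimes M_r(2)$, apply a tensor-rank decomposition to produce at most $r^2$ summands, and then repair the decomposition so that every factor is invertible, at a cost of only $O(r)$ extra terms.

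First, reshape $A$ into an $r^2 \times r^2$ matrix $\tilde{A}$ with $\tilde{A}_{(i,k),(j,l)} = A_{(i,j),(k,l)}$; the tensor rank of $A$ then equals $\mathrm{rank}(\tilde{A}) \leq r^2$, which yields a decomposition $A = \sum_{i=1}^{s} a_i \otimes b_i$ with $s \leq r^2$ and $\{a_i\}, \{b_i\}$ linearly independent in $M_r(2)$. To handle non-invertibility, I would establish an auxiliary sum-of-two-invertibles lemma: for $r \geq 2$, every $M \in M_r(2)$ can be written as $G_1 + G_2$ with $G_1, G_2 \in GL_r(2)$. This follows by a direct construction over $\mathbb{F}_2$ (for example, show that a permutation matrix $P$ can always be chosen so that $M + P$ is also invertible; a cardinality argument handles all but small $r$, which can be checked by hand).

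Second, I would repair the $a$-factors. Exploit the $GL_s(2)$-freedom in the rank decomposition, replacing $(a_i, b_i)$ by $((Ta)_i, (T^{-T}b)_i)$ for $T \in GL_s(2)$, to maximize the number of invertible $a_i$'s. A careful linear-algebra argument is needed to show that at most $r$ of the $a_i$'s must remain non-invertible under the optimal choice of $T$; the worst case arises when $\mathrm{span}\{a_i\}$ intersects the variety of non-invertibles in an unavoidable way (e.g.\ in a nilpotent-style subspace). For each such bad index, apply the sum-of-invertibles lemma to split $a_i = a_i^{(1)} + a_i^{(2)}$ with both in $GL_r(2)$; this costs one extra tensor product per bad index and yields a decomposition with at most $r^2 + r$ summands in which every $a$-factor lies in $GL_r(2)$.

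Third, I would repair the $b$-factors with only a bounded number of additional terms. The idea is to aggregate summands sharing a common $a$-factor (via $g \otimes b_1 + g \otimes b_2 = g \otimes (b_1 + b_2)$) and then exploit the universal identity
\[
E_{ij} \otimes E_{kl} = (I + E_{ij}) \otimes (I + E_{kl}) + (I + E_{ij}) \otimes I + I \otimes (I + E_{kl}) + I \otimes I, \quad i \neq j,\ k \neq l,
\]
which rewrites any "off-diagonal" matrix-unit tensor as a sum of four tensor products of invertibles. This allows one to globally absorb any residual non-invertibility of the $b$-side into at most four extra summands, completing the bound $r^2 + r + 4$. The main obstacle will be the second step: since subspaces of $M_r(2)$ can lie entirely inside the variety of non-invertibles (e.g.\ the strictly upper-triangular matrices), no generic-position argument is available over $\mathbb{F}_2$, so one must quantitatively control how subspaces of $M_r(2)$ intersect $GL_r(2)$ to guarantee the bound of at most $r$ bad indices --- this is the "substantial linear algebra machinery" alluded to in the introduction.
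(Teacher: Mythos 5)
Your proposal takes a genuinely different route than the paper, but it hinges on a false claim in the second step. You assert that after exploiting the $GL_s(2)$ change-of-basis freedom, at most $r$ of the factors $a_i$ remain non-invertible. However, the $a_i$ span a subspace $V \leq M_r(2)$ of dimension equal to the tensor rank of $A$, and if \emph{every} nonzero element of $V$ is singular, then under \emph{every} choice of $T$ all $s$ factors are bad, where $s$ can be as large as $r(r-1)/2 > r$ for $r \geq 4$. The strictly upper triangular matrices furnish exactly such a subspace: take $A = \sum_{i<j} E_{i,j}\otimes N_{ij}$ with the $N_{ij}$ linearly independent, and the $a$-side of the resulting rank decomposition spans precisely that nilpotent space, so $r(r-1)/2$ bad indices survive, not $\leq r$. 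Your closing remark acknowledges the danger, but the bound you hope to establish simply does not hold, so no amount of ``quantitative control'' of the intersection with $GL_r(2)$ can rescue the step as formulated.

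The paper's proof avoids this obstruction by never attempting to repair the $a$-factors \emph{in place}. Lemma \ref{supmat:lem:invertible-spanning-set1} instead produces at most $\dim V + 2$ invertible matrices whose span merely \emph{contains} $V$ (not a basis of $V$ itself, which may not exist when $V$ is entirely singular), re-expresses each $M_i$ in that all-invertible generating set, and pushes the resulting mess onto the $b$-side: after collecting terms, the new $b$-factors become arbitrary linear combinations $\sum_i \mu_{ji}N_i$. These are handled by Lemma \ref{supmat:lem:diaginvert}, writing each as $C_j + D_j$ with $C_j$ invertible and $D_j$ diagonal. The decisive saving is that the $D_j$ live in the $r$-dimensional space of diagonals, so the residual diagonal contribution can be re-spanned at only $O(r)$ cost, with Lemma \ref{supmat:lem:diagperm} (adding an $r$-cycle to make a non-identity diagonal invertible) and Lemma \ref{supmat:lem:sum of two} mopping up the final handful of terms. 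The move ``expand to a slightly larger all-invertible basis on one factor and absorb the defect on the other tensor factor, exploiting that the defect lands in a small structured subspace'' is the heart of the argument, and your proposal has no analogue of it.

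A secondary, smaller issue: the permutation-based justification you suggest for the sum-of-two-invertibles lemma already fails at $r=2$ with $M=E_{1,1}+E_{1,2}$, for which adding either of the two permutation matrices yields a rank-one matrix; and a cardinality argument cannot apply because $\{M+P : P\in S_r\}$ has only $r!$ elements, far too few to guarantee a member of $GL_r(2)$ by counting alone. The lemma itself is true --- the paper proves it by Gaussian elimination to a block-diagonal normal form and explicit decomposition of $I_2$ and $I_3$ --- so this is a flaw in the suggested proof rather than in the claim, but it reflects the same underestimation of how stubbornly subspaces of $M_r(2)$ can avoid $GL_r(2)$.
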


The remainder of this section is devoted to the proof of \cref{supmat:thm:worst-case-cross-block-cnot} and so it is useful to begin with some remarks on our approach: First observe that the main challenge is that the matrices $g_{i_1}$ etc. must be invertible. We first demonstrate that a decomposition of at most $r^2$ terms is easy without this invertibility condition, and then proceed to adapt this to invertible tensor products, while incurring minimal additional overhead. To this end we develop a number of matrix decomposition results that consider spanning sets in $GL_r(2)$, as well as the effect of adding specific invertible matrices such as permutations.

To track these results, we introduce the following \emph{weight function}.
\begin{defn}
    Let $A \in M_{r^2}(2)$. The \emph{weight} of $A$, denoted $w(A)$, is the minimal number of pairs $(g_{i_1}, g_{i_2}) \in GL_r(2)^2$ such that $A=\sum_{i=1}^{w(A)} g_{i_1} \otimes g_{i_2}$.
\end{defn}

First let's decompose our arbitrary matrix into a non-invertible tensor product
\begin{lem}\label{supmat:lem:non-invertible-tensor-product}
Let $A \in M_{r^2}(2)$. Then there exist $t\leq r^2$ matrix pairs $M_{i},N_{i} \in M_r(2)$ such that $A = \sum M_i \otimes N_i$.
\end{lem}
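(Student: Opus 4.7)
The plan is to exploit the natural isomorphism $M_{r^2}(2) \cong M_r(2) \otimes M_r(2)$ and produce an explicit decomposition from a block-matrix viewpoint. Since invertibility of the tensor factors is not required here, we simply need to write any $A \in M_{r^2}(2)$ as a sum of at most $r^2$ simple tensors, which matches the dimension-counting obstruction for this weaker problem.

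Concretely, I would view $A$ as an $r \times r$ block matrix whose $(i,j)$-th block is an $r \times r$ matrix $A_{i,j} \in M_r(2)$, consistent with the Kronecker-product convention used throughout the paper (row-major indexing, so that $(M \otimes N)$ places the block $M_{i,j} \cdot N$ in position $(i,j)$). Then the identity
\begin{equation*}
A \;=\; \sum_{i,j=1}^{r} E_{i,j} \otimes A_{i,j}
\end{equation*}
holds by direct inspection, where $E_{i,j} \in M_r(2)$ is the matrix unit with a single $1$ in position $(i,j)$. This expresses $A$ as a sum of exactly $r^2$ simple tensors, establishing $t \leq r^2$ with the explicit choice $M_{(i,j)} = E_{i,j}$ and $N_{(i,j)} = A_{i,j}$.

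There is essentially no obstacle here: the result is a finite-dimensional linear-algebra fact, and the matrix-unit decomposition gives a witness immediately. The content of the lemma is really that it sets the baseline bound of $r^2$ which the subsequent theorem (\ref{supmat:thm:worst-case-cross-block-cnot}) must improve upon by replacing the non-invertible matrix units $E_{i,j}$ with elements of $GL_r(2)$, at the cost of only $O(r)$ extra terms. Hence for the present lemma it suffices to record the block decomposition above; the real work is deferred to the invertibility-repair step that follows.
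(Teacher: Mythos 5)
Your proof is correct and is exactly the paper's own argument: both write $A$ in $r\times r$ block form and use the matrix-unit decomposition $A = \sum_{i,j} E_{i,j} \otimes A_{i,j}$ to obtain $t \le r^2$ simple tensors. Your additional framing (dimension counting, deferring the invertibility repair to the following theorem) is accurate but not needed for the lemma itself.
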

\begin{proof}
    
Let $A_{i,j} \in M_{r}(2)$ be the $(i,j)$th block of $A$. Then 
\begin{equation} 
    A = \sum_{i,j = 1}^r E_{i,j} \otimes A_{i,j} \label{supmat:eq:tensor-rank-decomposition}
\end{equation}
is a decomposition of the required form.
\end{proof}

Note that the minimal $t$ required can often be much lower than $r^2$, and such a minimal decomposition is typically referred to as the \emph{tensor rank decomposition}.

The task now is to convert an expression of the form (\ref{supmat:eq:tensor-rank-decomposition}) into a similar expression comprising only invertible matrices.

\begin{lem}\label{supmat:lem:sum of two}
    Let $M \in M_k(2)$, ($k\geq 2$) then $M$ is a sum of at most two elements in $GL_{k}(2)$.
\end{lem}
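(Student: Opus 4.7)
The plan is to normalise $M$ and then produce a decomposition $M = A + B$ by an explicit block construction. The first observation is that equivalence preserves the decomposition: if $M = A + B$ with $A, B \in GL_k(2)$, then for any $P, Q \in GL_k(2)$, $PMQ = (PAQ) + (PBQ)$ remains a sum of two invertibles. Since every $M \in M_k(2)$ of rank $s$ can be brought via Gaussian elimination to the canonical form $D_s$, the block diagonal matrix with an $s \times s$ identity block followed by zeros, it suffices to decompose $D_s$ for each $0 \leq s \leq k$.

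I would then split into cases on $s$. If $s = k$, $M$ is already invertible and is trivially a sum of one element of $GL_k(2)$. If $s = 0$, take $A = B = I_k$. When $2 \leq s < k$, let $C_s$ denote the $s \times s$ companion matrix of $p(x) = x^s + x + 1 \in \gf[x]$. A direct calculation gives $p(0) = p(1) = 1$, which implies that both $C_s$ and $C_s + I_s$ are invertible over $\gf$ (their determinants equal $p(0)$ and $p(1)$ respectively, up to sign). Setting $A = C_s \oplus I_{k-s}$ and $B = D_s + A = (I_s + C_s) \oplus I_{k-s}$ then gives the required decomposition into two elements of $GL_k(2)$.

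The main obstacle is the rank-$1$ case, and it is precisely this case that forces the hypothesis $k \geq 2$: for $k = 1$ and $M = [1]$ the lemma fails, since $GL_1(2) = \{1\}$ and $1 + 1 = 0 \neq 1$. Correspondingly, the block-diagonal strategy collapses when $s = 1$, as no scalar $c \in GL_1(2)$ has $c + 1$ also in $GL_1(2)$. To handle this case, I would mix the rank-$1$ block with a second coordinate by taking $A = \begin{pmatrix} 0 & 1 \\ 1 & 0 \end{pmatrix} \oplus I_{k-2}$, so that $B = A + D_1 = \begin{pmatrix} 1 & 1 \\ 1 & 0 \end{pmatrix} \oplus I_{k-2}$; both summands are then invertible, which completes the casework and establishes that $w(M) \leq 2$ for all $M \in M_k(2)$ with $k \geq 2$.
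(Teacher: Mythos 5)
Your proof is correct and follows the same overall strategy as the paper: reduce $M$ to the canonical rank-$s$ form via row and column operations (so it suffices to decompose $D_s = I_s \oplus 0_{k-s}$), handle rank $0$ and full rank trivially, and treat rank $1$ as a special case using exactly the same $2\times 2$ example the paper uses. The one genuine difference is in how you decompose the identity block for $2 \leq s < k$. The paper writes out explicit decompositions of $I_2$ and $I_3$ into two invertibles and then stitches together $2\times 2$ and $3\times 3$ blocks to cover arbitrary $I_s$, so it has an implicit parity case split. You instead take the companion matrix $C_s$ of $x^s + x + 1$ and observe that $\det C_s = p(0) = 1$ and $\det(I_s + C_s) = p(1) = 1$ over $\gf$, giving $I_s = C_s + (I_s + C_s)$ as a sum of two invertibles in a single uniform stroke. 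This is a cleaner realization of the same step: it buys you a parity-free argument at the modest cost of invoking the relation between a companion matrix's determinant/characteristic polynomial and the underlying polynomial, whereas the paper's version is fully explicit and elementary but requires assembling blocks. Both are correct; the choice is a matter of taste.
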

 \begin{proof}
     First let's check that the claim holds for the identity matrix. 
     \[
         I_2 = \left( \begin{matrix}
        1 & 1 \\ 1&0
    \end{matrix}\right) + \left( \begin{matrix}
        0 & 1 \\ 1&1
    \end{matrix}\right) \text{ and }
    I_3 = \left( \begin{matrix}
        1 & 1 & 1 \\ 0 & 1 & 1 \\ 1&0 & 1
    \end{matrix}\right) + \left( \begin{matrix}
        0 & 1 & 1 \\ 0&0&1 \\ 1&0&0
    \end{matrix}\right)
     \]
     are appropriate decompositions for $k=2,3$. Then clearly larger $I_k$ can be decomposed as blocks of 2 or 3 and treated as block sums of the above.

     Now for arbitrary $M \in M_k(2)$ of rank $l\leq k$, there exist invertible matrices $g,g' \in GL_k(2)$ corresponding to row and column operations respectively, such that 
     \[
     gMg' = \left( \begin{matrix}
         I_l & \\ & 0_{k-l}
     \end{matrix} \right),
     \]
     and zeros elsewhere via Gaussian elimination. But we know that there exist $X,Y \in GL_l(2)$ such that $X+Y = I_l$ and hence
     \[
     M = (g)^{-1}\left( \begin{matrix}
         X & \\ & I_{k-l}
     \end{matrix} \right)(g')^{-1} + (g)^{-1}\left( \begin{matrix}
         Y & \\ & I_{k-l}
     \end{matrix} \right)(g')^{-1}
     \]
     is a decomposition as a sum of two matrices in $GL_k(2).$ NB: if $M$ has rank 1 then simply take 
     \[ \left( \begin{matrix}
         1 & 0 \\ 0 & 0
     \end{matrix} \right) =
     \left( \begin{matrix}
        1 & 1 \\ 1&0
    \end{matrix}\right) + \left( \begin{matrix}
        0 & 1 \\ 1&0
    \end{matrix}\right)
     \]
     and proceed similarly.
 \end{proof}

 As the tensor product is distributive over addition, each non-invertible matrix may be split in two using \cref{supmat:lem:sum of two} to yield 
 
\begin{cor}
    Let $A\in M_{r^2}(2)$. Then $w(A) \leq 4r^2.$
\end{cor}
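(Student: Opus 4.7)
The plan is to combine the two preceding lemmas directly, using the distributivity of the tensor product over addition. The idea is that \cref{supmat:lem:non-invertible-tensor-product} already gives a decomposition of $A$ as a sum of at most $r^2$ tensor products $M_i \otimes N_i$ with $M_i, N_i \in M_r(2)$, but the factors are not guaranteed to be invertible. To convert this into a decomposition by invertible tensors, I would apply \cref{supmat:lem:sum of two} to each factor separately.

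Concretely, first invoke \cref{supmat:lem:non-invertible-tensor-product} to write
\[
A \;=\; \sum_{i=1}^{t} M_i \otimes N_i, \qquad t \leq r^2,
\]
with $M_i, N_i \in M_r(2)$. Next, for each index $i$, apply \cref{supmat:lem:sum of two} separately to $M_i$ and to $N_i$ (noting $r \geq 3 \geq 2$) to obtain invertible matrices $X_i, Y_i, X_i', Y_i' \in GL_r(2)$ with $M_i = X_i + Y_i$ and $N_i = X_i' + Y_i'$. Expanding via bilinearity,
\[
M_i \otimes N_i \;=\; X_i \otimes X_i' \,+\, X_i \otimes Y_i' \,+\, Y_i \otimes X_i' \,+\, Y_i \otimes Y_i',
\]
which expresses $M_i \otimes N_i$ as a sum of four tensor products of elements of $GL_r(2)$. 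Summing over $i$ yields a decomposition of $A$ into at most $4t \leq 4r^2$ invertible tensor products, so $w(A) \leq 4r^2$.

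There is no real obstacle here; the argument is a direct bookkeeping combination of the two preceding lemmas, with the only minor point being that \cref{supmat:lem:sum of two} requires $k \geq 2$, which is satisfied since we assume $r \geq 3$ for the SHYPS family. The resulting bound is loose — in particular, one could hope to improve the constant $4$ by being smarter about pairing the invertible summands, but tightening this constant is not needed at this stage since the next results (leading up to \cref{supmat:thm:worst-case-cross-block-cnot}) will refine the count further to $r^2 + r + 4$ using more delicate matrix-decomposition arguments.
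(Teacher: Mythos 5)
Your proof is correct and matches the paper's reasoning exactly: decompose $A$ via \cref{supmat:lem:non-invertible-tensor-product} into at most $r^2$ tensor products, then split each factor into two invertibles via \cref{supmat:lem:sum of two} and expand by bilinearity, giving $4r^2$ invertible tensor products. The paper states this corollary with essentially the same one-line justification, so there is nothing to add.
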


    Although this establishes a bound $w(A) \in O(r^2)$, we'll see that the constant factor 4 can be greatly improved.
    
\begin{prop}\label{supmat:prop:matrix-weight}
    Let $A \in M_{r^2}(2)$ have a tensor decomposition of rank $t$, i.e., there exists $M_i,N_i \in M_r(2)$ such that $A=\sum_{i=1}^tM_i \otimes N_i$. Then 
    \[
    w(A) \leq \min(4t,2t+8,t+r+6,r^2+r+4).
    \]
\end{prop}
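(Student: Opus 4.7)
My plan is to establish the four bounds in sequence, with each resting on a finer application of the sum-of-two decomposition (\cref{supmat:lem:sum of two}). The baseline $w(A) \leq 4t$ is immediate: write $M_i = g_i + h_i$ and $N_i = c_i + d_i$ with $g_i, h_i, c_i, d_i \in GL_r(2)$ and distribute over the tensor product so that every term $M_i \otimes N_i$ expands into four invertible tensor pairs, totalling $4t$.

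For $w(A) \leq 2t + 8$, I would split only one side, writing $M_i = g_i + h_i$ so that $A = \sum_i g_i \otimes N_i + \sum_i h_i \otimes N_i$ with every left factor invertible. For each of these two summands the plan is to amortize the cost of invertibilizing the right factors by introducing an invertible correction matrix $D$ and rewriting $g_i \otimes N_i = g_i \otimes (N_i + D) + g_i \otimes D$. Summing yields $\sum_i g_i \otimes (N_i + D) + \bigl(\sum_i g_i\bigr) \otimes D$, and when $D$ is chosen so that every $N_i + D$ is invertible, the first sum contributes $t$ invertible pairs and the trailing term at most $4$ (splitting $\sum_i g_i$ via \cref{supmat:lem:sum of two}). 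The main obstacle is existence of such a $D$: a single choice will not suffice when $t$ is large, so I would partition the indices among a constant-size palette of corrections $D_1, \dots, D_c$ chosen via a counting argument on the density of $GL_r(2)$ inside $M_r(2)$. Absorbing $c$ into the additive constant yields $t + 4$ per side and $2t + 8$ overall.

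For $w(A) \leq t + r + 6$, I would sharpen this amortization by writing $M_i = g_i + R_i$ with $g_i \in GL_r(2)$ and the correction $R_i$ restricted to a fixed $r$-dimensional subspace of $M_r(2)$ spanned by a pre-chosen invertible basis. The residual sum $\sum_i R_i \otimes N_i$ is then confined to a tensor-rank-$r$ subspace and decomposes into $r$ invertible tensor pairs (one per basis direction), plus an additive constant to absorb any residual singular factors. Combined with the $t$ invertible pairs from the $g_i \otimes N_i$ terms this produces $t + r + 6$.

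Finally, $w(A) \leq r^2 + r + 4$ is $t$-independent and follows from the canonical block decomposition $A = \sum_{i,j=1}^r E_{i,j} \otimes A_{i,j}$ supplied by \cref{supmat:lem:non-invertible-tensor-product}, which has tensor rank at most $r^2$. Naively plugging $t = r^2$ into the preceding bound only gives $r^2 + r + 6$, but the rigid rank-$1$ structure of the $E_{i,j}$ should permit a direct argument that trims two correction terms, yielding the tighter $r^2 + r + 4$. I expect the second bound---the $2t + 8$ amortization---to be the main technical obstacle, since it is the step that forces explicit engagement with the combinatorics of $GL_r(2)$ as a subset of $M_r(2)$; the other three bounds are essentially variations on the same amortization technique once that covering lemma is in hand.
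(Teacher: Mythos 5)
Your $4t$ bound is exactly the paper's, and your instinct that the $2t+8$ case is the crux is sound, but the amortization device you sketch for it has a real gap. You want a constant-size palette $D_1,\dots,D_c \in GL_r(2)$ such that for every $N\in M_r(2)$ some $N+D_j$ is invertible, i.e. a covering $M_r(2) = \bigcup_j (GL_r(2) - D_j)$ with $c=O(1)$. The density fact $|GL_r(2)|/|M_r(2)| > 1/4$ (\cref{supmat:lem:proportion of invertible matrices}) does \emph{not} yield this: four translates have the right total size but nothing forces them to cover, and a standard probabilistic-method calculation only gives a covering with $c = \Theta(r^2)$ translates. The guaranteed covering the paper actually has in hand is by \emph{diagonal} corrections (\cref{supmat:lem:diaginvert}), which is a set of size $2^r$, not constant. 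Moreover, even granting a palette of size $c$, your accounting is off: after partitioning indices by which $D_j$ works, the trailing term is $\sum_k \bigl(\sum_{i\in I_k} g_i\bigr)\otimes D_k$ with $c$ summands, each costing up to $2$ pairs after \cref{supmat:lem:sum of two}, so the additive cost per side is $2c$, not $4$. Your $t+r+6$ step inherits the same issue in a different guise.

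The paper closes this by amortizing on the $M_i$ \emph{side} rather than the $N_i$ side, which sidesteps the covering question entirely. Concretely: \cref{supmat:lem:invertible-spanning-set1} gives $q \leq \min(2t, r^2, t+2)$ invertibles $B_j$ spanning $\langle M_1,\dots,M_t\rangle$, and after regrouping $A = \sum_j B_j \otimes \bigl(\sum_i \mu_{ji} N_i\bigr)$ there is only one $N$-coefficient per $B_j$. Each such coefficient is then repaired by a \emph{diagonal} $D_j$ (\cref{supmat:lem:diaginvert}); the key structural gain is that the $D_j$ live in the $r$-dimensional space $\diag_r(2)$, so the residual $\sum_j B_j \otimes D_j$ can be re-expanded in a basis of at most $\min(q,r)$ diagonals, re-amortized with a second application of the spanning lemma (giving $m \leq \min(t+2,r)+2$ invertibles), and the remaining non-invertible basis diagonals are fixed up by a single $r$-cycle via \cref{supmat:lem:diagperm} at cost $2$. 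The total is $q + m + 2$, and the four bounds fall out of the four regimes $t\leq 3$, $3<t\leq r-2$, $r-2<t\leq r^2-2$, $t>r^2-2$. Your $r^2+r+4$ claim, as written, also doesn't follow from your scheme—you'd get $r^2+r+6$ and are appealing to an unspecified trimming argument—whereas the paper gets it for free by taking $q\leq r^2$ and $m\leq r+2$ in the same machine. So this is not a case of a correct alternative route: the covering-palette idea does not obviously exist with the constants you need, and without it the middle two bounds fail to close.
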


The first step in proving \cref{supmat:prop:matrix-weight} is to generalise \cref{supmat:lem:sum of two} to vector spaces and spanning sets of invertible matrices. This also requires an understanding of the proportion of binary matrices that are invertible.

\begin{lem} \label{supmat:lem:proportion of invertible matrices}
    Let $r\geq 1$. Then
    \[
    |GL_r(2)|/|M_r(2)|>\frac{1}{4}.
    \]
\end{lem}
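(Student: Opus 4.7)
The plan is to reduce the bound to a lower estimate on an infinite product. First I would recall the standard count $|GL_r(2)| = \prod_{i=0}^{r-1}(2^r - 2^i)$, obtained by choosing each successive row outside the $\mathbb{F}_2$-span of the previously chosen rows. Dividing by $|M_r(2)| = 2^{r^2}$ and factoring $2^r$ out of each numerator gives
\[
\frac{|GL_r(2)|}{|M_r(2)|} = \prod_{i=1}^{r} \left(1 - 2^{-i}\right).
\]
Since every factor lies in $(0,1)$, this ratio is strictly decreasing in $r$, so it suffices to bound the infinite product $P := \prod_{i=1}^{\infty}(1-2^{-i})$ from below by $1/4$.

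The main obstacle is that naive analytic bounds are too weak: for instance, $-\ln(1-x) \leq x/(1-x)$ yields only $\ln P \geq -\sum_{i\geq 1} 1/(2^i - 1) \approx -1.61$, giving $P \gtrsim 0.20$. The cleaner route is the sharper two-term Taylor estimate $\ln(1-x) \geq -x - x^2$ on $x \in [0,1/2]$, which I would verify by noting equality at $x = 0$ and checking that the derivative of $\ln(1-x)+x+x^2$ equals $x(1-2x)/(1-x) \geq 0$ on the interval.

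Summing this estimate over $i \geq 1$ with $x = 2^{-i} \leq 1/2$ yields
\[
\ln P \geq -\sum_{i=1}^{\infty} 2^{-i} - \sum_{i=1}^{\infty} 4^{-i} = -1 - \frac{1}{3} = -\frac{4}{3}.
\]
Finally, I would note that $e^{-4/3} > 1/4$, equivalently $4/3 < 2\ln 2 \approx 1.386$. Chaining the inequalities, $|GL_r(2)|/|M_r(2)| \geq P \geq e^{-4/3} > 1/4$, as required.
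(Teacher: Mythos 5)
Your proof is correct and takes a genuinely different route from the paper's. Both begin from the identity $|GL_r(2)|/|M_r(2)| = \prod_{i=1}^r(1-2^{-i})$, but from there the approaches diverge. The paper proceeds by induction on $r$, proving the strengthened claim $\prod_{i=1}^r(1-2^{-i}) > \frac{1}{4} + 2^{-(r+3/2)}$ for $r \geq 2$; the extra slack $2^{-(r+3/2)}$ is what makes the inductive step close, and the whole argument is purely algebraic, with no logarithms or calculus. You instead observe that the finite products decrease to the infinite product $P = \prod_{i\geq 1}(1-2^{-i})$, and bound $P$ directly: the two-term estimate $\ln(1-x)\geq -x-x^2$ on $[0,\tfrac12]$ (verified via the sign of the derivative $x(1-2x)/(1-x)$) converts the infinite product into two geometric series, giving $\ln P \geq -\tfrac{4}{3}$, and then $e^{-4/3} > \tfrac14$ because $\tfrac43 < 2\ln 2$. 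Your version is conceptually crisper --- it explains \emph{why} the uniform bound holds (the limiting product already clears $1/4$), and does not require guessing the right inductive strengthening --- at the modest cost of invoking a small amount of real analysis. One stylistic note: for finite $r$ the first inequality in your final chain is in fact strict, $P_r > P$, since the discarded factors are all $<1$; writing it as $\geq$ is harmless but slightly weaker than what you have.
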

\begin{proof}
    First observe from standard formulae that
    \[
    |GL_r(2)|/|M_r(2)| = \prod_{i=1}^r (1-2^{-i}),
    \]
     and hence the Lemma clearly holds for $r=1$.
    For $r\geq 2$, we'll prove the slightly stronger statement
    \[
    |GL_r(2)|/|M_r(2)|>\frac{1}{4}+{2^{-(r+3/2)}},
    \]
    via induction. The statement is easily checked for $r=2$ and assuming the result holds for some $l\geq 2$:
    \begin{eqnarray*}
        |GL_{l+1}(2)|/|M_{l+1}(2)| & = & |GL_{l}(2)|/|M_{l}(2)|\cdot (1-2^{-(l+1)}) \\
        & > & (\frac{1}{4}+{2^{-(l+3/2)}})\cdot (1-2^{-(l+1)}) \\
        & = & \frac{1}{4}+2^{-(l+3/2)} - 2^{-(l+3)} - 2^{-(2l+5/2)} \\
        & = & [\frac{1}{4}+2^{-(l+1+3/2)}] \\ & + & [2^{-(l+1+3/2)}- 2^{-(l+3)} - 2^{-(2l+5/2)}].
    \end{eqnarray*}
        So the result follows for $l+1$, provided 
        \[
        2^{-(l+5/2)}- 2^{-(l+3)} - 2^{-(2l+5/2)} > 0.
        \]
        But this holds if and only if
        \[
        2^l-2^{l-1/2}-1=2^{l-1}(2-\sqrt{2})-1>0,
        \]
        which is indeed true for $l\geq 2$.
\end{proof}

\begin{lem}\label{supmat:lem:invertible-spanning-set1}
    Let $V\leq M_r(2)$ be a vector subspace of dimension $d$. Then there exists $T \subset GL_r(2)$ such that $\vert T \vert \leq \min (2d,r^2,d+2)$ and $V \leq \langle T\rangle$.
\end{lem}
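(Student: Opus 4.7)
The plan is to establish each of the three bounds $|T|\le 2d$, $|T|\le r^2$, and $|T|\le d+2$ separately; the lemma then follows by taking the minimum.

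The $2d$ bound is immediate from \cref{supmat:lem:sum of two}: take any basis $v_1,\dots,v_d$ of $V$, decompose each $v_i=g_i+g_i'$ with $g_i,g_i'\in GL_r(2)$, and set $T=\{g_1,g_1',\dots,g_d,g_d'\}$. The $r^2$ bound follows from the observation that $GL_r(2)$ spans the whole matrix algebra $M_r(2)$ as a $\gf$-vector space -- a direct consequence of \cref{supmat:lem:sum of two}. An invertible basis of $M_r(2)$ of size $r^2$ may therefore be extracted, and taking $T$ to be such a basis gives $V\subseteq M_r(2)=\langle T\rangle$. (For $r=1$ this is trivial.)

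The $d+2$ bound is the main content. My plan is to exhibit a $2$-dimensional subspace $W=\langle g_1,g_2\rangle\subseteq M_r(2)$ with $g_1,g_2\in GL_r(2)$, having the \emph{covering} property that every coset $v+W$ (for $v\in M_r(2)$) meets $GL_r(2)$. Given such $W$: for each basis element $v_i$ of $V$ I choose $w_i\in W$ and $g_i'\in GL_r(2)$ with $v_i=w_i+g_i'$, and set $T=\{g_1,g_2,g_1',\dots,g_d'\}$. Then $|T|\le d+2$, and $v_i\in\langle g_1,g_2,g_i'\rangle\subseteq\langle T\rangle$, so $V\subseteq\langle T\rangle$. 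The case $r=1$, in which no $2$-dimensional $W$ exists, is already handled by the $r^2=1$ bound.

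The main obstacle is establishing the existence of such a universal covering $W$ for every $r\ge 2$. A natural candidate is $g_1=I$ together with $g_2\in GL_r(2)$ chosen so that $I+g_2\in GL_r(2)$ as well -- equivalently, so that the characteristic polynomial of $g_2$ has neither $0$ nor $1$ as a root -- which guarantees that all three nonzero elements of $W$ are invertible. Such $g_2$ exists for every $r\ge 2$ (for instance, a block-diagonal matrix built from companion matrices of irreducible polynomials over $\gf$ avoiding $0$ and $1$ as roots, such as $x^2+x+1$). It then remains to verify the covering condition $GL_r(2)+W=M_r(2)$: no $v\in M_r(2)$ should have all four of $v,\,v+g_1,\,v+g_2,\,v+g_1+g_2$ singular. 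A counting estimate from \cref{supmat:lem:proportion of invertible matrices} yields $|W|\cdot|GL_r(2)|>|M_r(2)|$, so the four translates $GL_r(2)+w$ ($w\in W$) carry enough total mass to exhaust $M_r(2)$; converting this counting slack into an actual covering -- either via a symmetry argument exploiting the two-sided action of $GL_r(2)\times GL_r(2)$ on $M_r(2)$, or by an explicit structural analysis of which cosets of $W$ can lie entirely in the singular set -- is the crux of the $d+2$ bound.
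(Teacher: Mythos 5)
Your handling of the $2d$ and $r^2$ bounds matches the paper's. For the $d+2$ bound, however, you take a genuinely different route from the paper's, and your route is incomplete in a way you yourself flag.

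Your plan is to exhibit a fixed $2$-dimensional subspace $W=\langle g_1,g_2\rangle$ of invertibles with the \emph{universal} covering property $GL_r(2)+W=M_r(2)$, and then shift each basis element of $V$ by a suitable element of $W$ into $GL_r(2)$. That structure would indeed give $|T|\le d+2$. The problem is that you never establish the covering property, and the reduction you sketch does not close the gap. Counting via \cref{supmat:lem:proportion of invertible matrices} gives $|W|\cdot|GL_r(2)|>|M_r(2)|$, but this only says the translates $w+GL_r(2)$ for $w\in W$ have total cardinality exceeding $|M_r(2)|$, which permits substantial overlap between translates; it does not rule out some coset $v+W$ consisting entirely of singular matrices. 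At best, the count shows that \emph{at least a quarter} of the cosets of $W$ meet $GL_r(2)$, which is far from all of them. So the core of the $d+2$ bound is still open in your proposal, exactly at the point you label ``the crux.'' It is also not at all clear that the $GL\times GL$ action helps: singular matrices of each fixed rank form a single orbit, so symmetry can move one bad coset to another but gives no a priori obstruction to a fully singular coset of $W$.

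The paper's proof sidesteps this entirely. Rather than seeking a universal $W$ that works against the singular set, it works with cosets of $V$ \emph{itself}: by pigeonhole and \cref{supmat:lem:proportion of invertible matrices}, some coset $M+V$ contains more than a $1/4$ fraction of invertibles, hence more than $2^{d-2}$ of them. Crucially, any even-weight sum of elements of $M+V$ lands in $V$. One then observes that those $>2^{d-2}$ invertibles in $M+V$ span a space of dimension $\ge d$, whose even-weight part is a $(d-1)$-dimensional subspace $V'<V$, so $d$ invertibles from the coset already generate $V'$; a final application of \cref{supmat:lem:sum of two} to any $A\in V\setminus V'$ supplies the last two generators, giving $d+2$ in total. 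This argument only needs a counting bound on a single coset (which pigeonhole guarantees), not a structural statement about every coset of a fixed subspace. That is the missing idea in your write-up, and it is why the paper's proof goes through where yours stalls.
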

	\begin{proof}

     The first two entries in the bound $\min (2d,r^2,d+2)$ follow immediately from \cref{supmat:lem:sum of two} and the fact that $\lrang{GL_r(2)}=M_r(2)$ which has dimension $r^2$. It therefore remains to show the final $d+2$ bound.
     Well if $V$ contains an invertible matrix $g_1$ then $V =\langle g_1, V' \rangle$ for a vector space $V'$ of strictly smaller dimension. Hence we assume without loss of generality that $V\leq M_r(2)\backslash GL_r(2)$ consists solely of non-invertible matrices. Let's denote a basis of $V$ by $A_1,\dots,A_d$.
     
     As the cosets $M_r(2)/V$ tile the space $M_r(2)$, by Lemma \ref{supmat:lem:proportion of invertible matrices} there exists $M\in M_r(2)$ such that the proportion of invertible matrices in the coset $M+V$ is greater than $1/4$. I.e., if we denote these invertible elements by $G\subseteq M+V$, then $\vert G \vert > 2^{d-2}$. Now as each element of $G$ has the form $M+\sum_i \alpha_iA_i$, it follows that even-weight linear combinations of elements in $G$ are non-invertible. Hence
     \[
     \vert \langle G \rangle \vert \geq 2\vert G \vert >2^{d-1},
     \]
     and thus $\langle G \rangle$ has dimension at at least $d$.
     So there exist linearly independent $g_1,\dots,g_d \in G$ and the subspace of even weight linear combinations $V' \leq \langle g_1,\dots,g_d \rangle$ is a $d-1$ dimensional subspace $V'< V$. Finally, taking any $A \in V\backslash V'$ and applying Lemma \ref{supmat:lem:sum of two} to yield a sum of two invertibles $A=g_{d+1}+g_{d+2}$, it follows that
     \[
     V=\langle V',A \rangle \leq \langle g_1,\dots,g_{d+2} \rangle.
     \]
     \end{proof}

     Next we prove some useful Lemmas that study the effect of adding diagonal matrices and permutation matrices.
\begin{lem}
    \label{supmat:lem:diaginvert}
    For all $A\in M_r(2)$ there exists a diagonal $D\in M_r(2)$ such that $A+D \in GL_r(2)$.
\end{lem}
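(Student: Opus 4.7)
The plan is to show that the determinant of $A+\operatorname{diag}(x_1,\ldots,x_r)$, viewed as a polynomial in formal variables $x_1,\ldots,x_r$ over $\gf$, is a nonzero multilinear polynomial. Since multilinear polynomials over $\gf$ are in bijection with functions $\gf^r \to \gf$, any such nonzero polynomial must take the value $1$ at some point of $\gf^r$, which gives the required diagonal $D$.

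Concretely, I would first write out the Leibniz expansion
\[
p(x_1,\ldots,x_r) := \det\bigl(A+\operatorname{diag}(x_1,\ldots,x_r)\bigr) = \sum_{\sig \in S_r}\prod_{i=1}^r \bigl(A+\operatorname{diag}(x_1,\ldots,x_r)\bigr)_{i,\sig(i)},
\]
noting that signs vanish over $\gf$. Each variable $x_i$ enters only through the $(i,i)$ entry, so each term $\prod_i (A+\operatorname{diag}(x))_{i,\sig(i)}$ depends on $x_i$ only when $\sig(i)=i$, and then linearly. Hence $p$ is a multilinear polynomial in $x_1,\ldots,x_r$.

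The key observation is the coefficient of the top monomial $x_1 x_2 \cdots x_r$. For a nonidentity permutation $\sig$, at least one index $i$ has $\sig(i)\neq i$, so the contribution $\prod_i (A+\operatorname{diag}(x))_{i,\sig(i)}$ is a monomial only in those $x_j$ with $\sig(j)=j$; this has total degree strictly less than $r$, and so contributes nothing to the $x_1\cdots x_r$ coefficient. Only $\sig=\mathrm{id}$ contributes, yielding $\prod_i (A_{ii}+x_i)$, whose $x_1\cdots x_r$ coefficient is $1$. Thus $p$ is a nonzero multilinear polynomial over $\gf$, and consequently there exist $(d_1,\ldots,d_r)\in\gf^r$ with $p(d_1,\ldots,d_r)=1$; taking $D=\operatorname{diag}(d_1,\ldots,d_r)$ gives $A+D\in GL_r(2)$.

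I do not expect any real obstacle here: the only point that needs a small amount of care is verifying that only the identity permutation contributes to the top-degree monomial (so that the leading coefficient is guaranteed to be $1$ regardless of $A$), and that over $\gf$ a nonzero multilinear polynomial cannot be the zero function on $\gf^r$, which follows from the standard fact that the $2^r$ squarefree monomials form a $\gf$-basis for functions $\gf^r\to\gf$.
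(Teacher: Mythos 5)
Your argument is correct, and it is a genuinely different route from the paper's. The paper proves the lemma by induction on $r$: given $A\in M_{l+1}(2)$ in $2\times 2$ block form with top-left $A'\in M_l(2)$, it invokes the inductive hypothesis to make $A'+D'$ invertible, and then chooses the last diagonal entry explicitly as $a+u^T(A'+D')^{-1}v+1$ so that the Schur complement is $1$; the resulting block-$LDU$ factorization exhibits $A+D$ as a product of invertibles. Your proof instead treats $\det\bigl(A+\operatorname{diag}(x_1,\dots,x_r)\bigr)$ as a multilinear polynomial over $\gf$, observes that only $\sigma=\mathrm{id}$ in the Leibniz expansion can contribute to the top monomial $x_1\cdots x_r$ (so the leading coefficient is $1$ regardless of $A$), and then uses the fact that the $2^r$ squarefree monomials form a $\gf$-basis of the function space $\gf^r\to\gf$ to conclude the polynomial is not identically zero as a function. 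Both proofs are sound. The paper's induction is constructive: it produces the diagonal $D$ greedily, one entry at a time, and would be easy to turn into an algorithm. Your argument is shorter and more conceptual, and it is essentially a special case of the combinatorial nullstellensatz philosophy, made sharper here because over $\gf$ the coefficient-to-function map on multilinear polynomials is a bijection (so you get existence without needing $|\gf|$ to exceed any degree bound). The one step worth being careful about, and you flag it, is that the multilinear reduction is immediate here because each $x_i$ appears in exactly one matrix entry, so no reduction modulo $x_i^2-x_i$ is ever needed; the Leibniz expansion is already multilinear on the nose.
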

\begin{proof}
    We proceed by induction on $r$. The case $r=1$ is clear, and suppose the Lemma holds holds for $l$. Then for $A \in M_{l+1}(2)$ we may write
    \[
        A = \left(\begin{array}{c|c}
            A' & v \\
            \hline
            u^T & a
        \end{array}\right)
    \]
    for some $A'\in M_{l}(2)$, $u,v\in\gf^l$, and $a\in\gf$. By the induction hypothesis there exists diagonal $D'\in M_l(2)$ such that $A'+D' \in GL_l(2)$. Then for diagonal
    \[
        D = \left(\begin{array}{c|c}
            D' & 0 \\
            \hline
            0 & a + u^T (A' + D')^{-1} v + 1
        \end{array}\right)
    \]
    we have
    \begin{eqnarray*}
        A + D &= &\left(\begin{array}{c|c}
            A' + D' & v \\
            \hline
            u^T & u^T (A' + D')^{-1} v + 1
        \end{array}\right) \\
        &= &\left(\begin{array}{c|c}
            I & 0 \\
            \hline
            u^T (A' + D')^{-1} & 1
        \end{array}\right)\cdot\left(\begin{array}{c|c}
            A'+D' & 0 \\
            \hline
            0 & 1
        \end{array}\right) \\ & \cdot &\left(\begin{array}{c|c}
            I & (A' + D')^{-1}v \\
            \hline
            0 & 1
        \end{array}\right).
    \end{eqnarray*}
    As each matrix in the product decomposition above is invertible, $A+D \in GL_{l+1}(2)$.
\end{proof}

\begin{cor}
    \label{supmat:cor:perminvert}
    Let $A\in M_r(2)$ and $P \in GL_r(2)$ be a permutation matrix. Then there exists $O\in M_r(2)$ with nonzero entries supported on the nonzero entries of $P$ such that $A+O \in GL_r(2)$.
\end{cor}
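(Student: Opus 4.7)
The plan is to reduce Corollary~\ref{supmat:cor:perminvert} directly to the diagonal case already handled by Lemma~\ref{supmat:lem:diaginvert}, by absorbing the permutation structure into a change of basis. Concretely, any matrix $O \in M_r(2)$ whose nonzero entries are supported on the nonzero entries of $P$ can be written uniquely in the form $O = DP$ for some diagonal $D \in M_r(2)$: if $P$ is the permutation matrix associated with $\sigma \in S_r$, so that $P_{i,j} = \delta_{i,\sigma(j)}$, then $(DP)_{i,j} = D_{i,i}\,\delta_{i,\sigma(j)}$, which vanishes off the support of $P$, and every such $O$ arises this way.

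Given this observation, the first step is to multiply on the right by $P^{-1}$, which preserves invertibility since $P \in GL_r(2)$. The condition $A + O \in GL_r(2)$ becomes $(A + DP)P^{-1} = AP^{-1} + D \in GL_r(2)$. Applying Lemma~\ref{supmat:lem:diaginvert} to the matrix $AP^{-1} \in M_r(2)$ produces a diagonal $D \in M_r(2)$ with $AP^{-1} + D \in GL_r(2)$. Setting $O := DP$ then yields $A + O = (AP^{-1} + D)P \in GL_r(2)$, and by construction $O$ is supported on the nonzero entries of $P$, completing the argument.

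There is no real obstacle here; the only thing to be careful about is the bookkeeping that identifies ``$O$ supported on the nonzero entries of $P$'' with ``$O = DP$ for some diagonal $D$,'' which is immediate from the definition of a permutation matrix. The result is essentially a coordinate-relabeling version of Lemma~\ref{supmat:lem:diaginvert}: the permutation $P$ permutes columns, and the entries of $O$ along the support of $P$ play exactly the role that diagonal entries played in the previous lemma.
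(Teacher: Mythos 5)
Your proof is correct and follows essentially the same route as the paper's: reduce to Lemma~\ref{supmat:lem:diaginvert} by conjugating away $P$, find a diagonal $D$, and reabsorb $P$ to produce $O$. The paper works with $P^{-1}A$ and sets $O = PD$ rather than $AP^{-1}$ and $O = DP$, but this left/right mirror image is immaterial.
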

\begin{proof}
    Apply \cref{supmat:lem:diaginvert} to $P ^{-1}A$ to find a diagonal $D$ such that $P^{-1}A + D$ is invertible and then define $O = P D$.
\end{proof}

\begin{lem}
    \label{supmat:lem:diagperm}
    Let $D\in M_r(2)$ be a diagonal matrix. Then either $D\in GL_r(2)$, or $ D +P\in GL_r(2)$ for all $r$-cycle permutations $P$.
\end{lem}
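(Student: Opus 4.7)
The plan is to establish a dichotomy based on whether $D$ has any zero diagonal entries. Since $D$ is diagonal over $\gf$, it lies in $GL_r(2)$ exactly when every diagonal entry equals $1$, so I may assume at least one entry $d_i = 0$ and aim to show $D+P \in GL_r(2)$ for every $r$-cycle permutation $P$.

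First I would reduce to a canonical $r$-cycle. Any two $r$-cycles in $S_r$ are conjugate, so there is a permutation matrix $\sigma$ with $\sigma^{-1} P \sigma = C$, the standard cycle sending $i \mapsto i+1 \pmod r$. Conjugating gives $\sigma^{-1}(D+P)\sigma = \sigma^{-1}D\sigma + C$, and $\sigma^{-1}D\sigma$ is again a diagonal matrix (its entries are merely permuted), with at least one zero entry. Since conjugation preserves invertibility, it suffices to prove that $D' + C \in GL_r(2)$ for every diagonal $D' \in M_r(2)$ with at least one zero diagonal entry.

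Next I would compute $\det(D' + C)$ directly. The matrix $D'+C$ has support only on the diagonal $\{(i,i)\}$ and the cyclic superdiagonal $\{(i, i+1 \bmod r)\}$. A permutation $\pi \in S_r$ contributes to the permanent (equal to the determinant in characteristic two) precisely when $\pi(i) \in \{i, i+1 \bmod r\}$ for every $i$. A short combinatorial argument shows that the two choices at any single index propagate uniquely around the cycle: if $\pi(i_0)=i_0$, then $i_0+1$ must come from its own self-loop, forcing $\pi(i_0+1)=i_0+1$, and so on around; if instead $\pi(i_0)=i_0+1$, a symmetric argument forces $\pi$ to be $C$ throughout. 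Thus only the identity and $C$ contribute, with values $\prod_i d'_i$ and $1$ respectively, yielding
\[
\det(D' + C) = d'_1 d'_2 \cdots d'_r + 1.
\]

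Finally, since $D'$ has a zero on its diagonal the product vanishes and $\det(D'+C)=1\neq 0$, so $D'+C \in GL_r(2)$ as required. The only subtle step is the combinatorial observation restricting the contributing permutations to the identity and $C$; once this is verified, the rest reduces to a one-line determinant evaluation and the dichotomy drops out immediately.
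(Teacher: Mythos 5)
Your proof is correct, but it takes a genuinely different route from the paper's. The paper works directly with the arbitrary $r$-cycle $P$: picking a starting index $i$ with $d_i=0$, it reorders the columns of $D+P$ along the single orbit $i, P(i), P^2(i), \dots$ and shows by induction that the first $m$ columns span $\langle e^T_{P(i)},\dots,e^T_{P^m(i)}\rangle$, concluding that the column space is all of $\gf^r$. You instead first conjugate to the standard cycle (which is harmless but, as you can check, unnecessary — the Leibniz-formula argument works verbatim for any $r$-cycle) and then evaluate the determinant over $\gf$ as a permanent. The key combinatorial point, that the only permutations $\pi$ with $\pi(i)\in\{i, f(i)\}$ for all $i$ (where $f$ is the $r$-cycle) are $\mathrm{id}$ and $f$ itself, is correct: fixing the behaviour at one index propagates uniquely around the cycle. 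This yields the closed form $\det(D+P)=\prod_i d_i + 1$, which immediately gives the lemma and is slightly more informative than the span argument, since it also shows the converse direction that $D+P$ is singular exactly when $D=I$. Both arguments are driven by the same fact — that an $r$-cycle has a single orbit — but you express it through the Leibniz expansion while the paper expresses it through an inductive column-span computation.
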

\begin{proof}
    Clearly $D$ is invertible if and only if $D = I$, so we restrict to the case where $D\neq I$. First note that as $P$ is an $r$-cycle, the full set of indices $1,\dots, r$ is contained in the single orbit of $P$. I.e., given any starting point $i$, the list $i, ~P(i), ~P^2(i), \dots, ~P^{r-1}(i)$, is a re-ordering of $1,\dots, r$.
    Consequently, we may index columns of $D+P$ by the numbers $P^l(i)$.
    Now since $D\neq I$, there exists some $i$ such that the $i$-th diagonal entry of $D$, is zero. Choosing this as the starting point of our column indexing, we see that the columns of $D+P$ are given by
    \[
        (D e_i^T + e^T_{P(i)} = e^T_{P(i)}, De^T_{P(i)} + e^T_{P^2(i)}, \dots, D e^T_{P^{r-1}(i)} + e^T_{P^{r}(i)}).
    \]
    By induction, we show the span of the first $m$ elements of this sequence is given by
    \[
        \langle e^T_{P(i)},e^T_{P^2(i)}, \dots, e^T_{P^m(i)} \rangle.
    \]
    The base case is trivial by assumption. For the inductive step, observe that the $l+1$ entry of the sequence is either $e^T_{P^{l+1}(i)}$ or $e^T_{P^{l}(i)} + e^T_{P^{l+1}(i)}$, neither of which are in $\langle e^T_{P(i)},e^T_{P^2(i)}, \dots, e^T_{P^l(i)}\rangle$ since $P$ is an $r$-cycle, and either of which when added to the generating set yields a vector space given by $\langle e^T_{P(i)},e^T_{P^2(i)}, \dots, e^T_{P^{l+1}(i)}\rangle$. Thus the span of the columns is
    \[
        \lrang{e^T_{P(i)},e^T_{P^2(i)}, \dots, e^T_{P^r(i)}} = \lrang{e^T_{1},e^T_{2}, \dots, e^T_{r}} = \gf^r.
    \]
    As the columns of $D+P$ span $\gf^r$, $D+P$ is invertible as desired.
\end{proof}

\begin{cor}
    \label{supmat:cor:permperm}
    Let $P \in GL_r(2)$ be a permutation, and $D \in M_r(2)$ be diagonal. Then either $PD \in GL_r(2)$ or $PD + PQ \in GL_r(2)$ for all $r$-cycle permutations $Q\in GL_r(2)$.
\end{cor}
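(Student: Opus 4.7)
The plan is to reduce this corollary directly to Lemma \ref{supmat:lem:diagperm} by factoring out the invertible matrix $P$. Concretely, I would observe that $PD + PQ = P(D+Q)$, and since $P \in GL_r(2)$, invertibility of $PD + PQ$ is equivalent to invertibility of $D + Q$. Similarly, $PD \in GL_r(2)$ iff $D \in GL_r(2)$, which over $\gf$ (for a diagonal $D$) holds iff $D = I$.

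With that reduction in place, the argument splits into two cases. First, if $D = I$, then $PD = P$ is a permutation matrix, hence invertible, and we are in the first branch of the disjunction. Second, if $D \neq I$, then by Lemma \ref{supmat:lem:diagperm} we have $D + Q \in GL_r(2)$ for every $r$-cycle permutation $Q$, and therefore $PD + PQ = P(D+Q) \in GL_r(2)$ for every such $Q$, placing us in the second branch.

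There is no real obstacle here; the content has already been done in Lemma \ref{supmat:lem:diagperm}, and the corollary is essentially a left-multiplication-by-$P$ restatement. The only thing to be careful about is observing that left multiplication by the invertible $P$ preserves the invertibility dichotomy, and that diagonal matrices over $\gf$ are invertible only when they equal $I$, so the two cases in the conclusion line up correctly with the two cases of the lemma.
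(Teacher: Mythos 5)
Your proof is correct and matches the paper's approach exactly: factor out the invertible $P$ via $PD + PQ = P(D+Q)$, apply Lemma~\ref{supmat:lem:diagperm} to the diagonal $D$, and note that left-multiplication by $P$ preserves invertibility. The paper states this more tersely, but the argument is the same.
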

\begin{proof}
    Apply \cref{supmat:lem:diagperm} to $D$, then multiply by $P$, which preserves invertibility.
\end{proof}

With these results in place, we are ready to move onto the proof of \cref{supmat:prop:matrix-weight}
\begin{proof} (Proof of \cref{supmat:prop:matrix-weight}).
    By assumption, we can write
    \[
        A = \sum_{i=1}^t M_i \otimes N_i
    \]
    for some $M_i, N_i\in M_r(2)$. By \cref{supmat:lem:invertible-spanning-set1}, we can find a set of $q$ elements $B \subseteq GL_r(2)$ that generate all $M_i$, for $q \leq \min(2t, r^2,t+2)$. Expressing each $M_i$ as a linear combination of elements of $B$, we have
    \[
        A = \sum_{i=1}^t\left(\sum_{j=1}^q\mu_{ji}B_j \right)\otimes N_i = \sum_{j=1}^q B_j\otimes\left(\sum_{i=1}^t\mu_{ji} N_i\right).
    \]
    By \cref{supmat:lem:diaginvert}, there exists some $C_j\in GL_r(2)$ invertible and some $D_j$ diagonal so that
    \[
        C_j + D_j = \sum_{i=1}^t\mu_{ji} N_i
    \]
    for all $1\leq j \leq q$. The vector space spanned by the set $\{D_j\}_{j=1}^q$ has dimension $p\leq \min(q,r)$, and after expanding each $D_j$ in a basis $E$ for $\lrang{D_j}$, we can rewrite $A$ as
    \begin{align*}
        A &= \sum_{j=1}^q B_j\otimes C_j + \sum_{j=1}^q B_j\otimes D_j\\
        &= \sum_{j=1}^q B_j\otimes C_j +  \sum_{j=1}^q B_j\otimes \left(\sum_{\ell=1}^p \delta_{\ell j}E_\ell\right)\\
        &= \sum_{j=1}^q B_j\otimes C_j + \sum_{\ell=1}^p \left(\sum_{j=1}^q \delta_{\ell j} B_j\right) \otimes E_\ell.
    \end{align*}
    Again, by \cref{supmat:lem:invertible-spanning-set1} we can compute a set of $m$ elements $F$ from $ GL_r(2)$ that contains $\lrang{\sum_{j=1}^q \delta_{\ell j} B_j~:~1\leq \ell\leq p}$. Expressing elements from the span as $F$-linear combinations, we have
    \begin{align*}
        A &= \sum_{j=1}^q B_j\otimes C_j + \sum_{\ell=1}^p \left(\sum_{a=1}^m \beta_{a\ell} F_a\right) \otimes E_\ell\\
        &= \sum_{j=1}^q B_j\otimes C_j + \sum_{a=1}^m  F_a\otimes \left(\sum_{\ell=1}^p \beta_{a\ell} E_\ell\right).
    \end{align*}
    Next, by \cref{supmat:lem:diagperm}, every $\sum_{\ell=1}^p \beta_{a\ell} E_\ell$ is such that adding $\rho_a P$ for some $\rho_a\in\gf$ and $P$ a fixed $r$-cycle makes the matrix invertible, and so we have
    \begin{eqnarray*}
        A &= & \sum_{j=1}^q B_j\otimes C_j + \sum_{a=1}^m  F_a\otimes \left(\rho_a P + \sum_{\ell=1}^p \beta_{a\ell} E_\ell\right) \\ &+& \sum_{a=1}^m  F_a\otimes(\rho_a P)\\
        &= &\sum_{j=1}^q B_j\otimes C_j + \sum_{a=1}^m  F_a\otimes \left(\rho_a P +\sum_{\ell=1}^p \beta_{a\ell} E_\ell\right) \\ & + & \left(\sum_{a=1}^m  \rho_a F_a\right)\otimes P
    \end{eqnarray*}
    Finally, by \cref{supmat:lem:sum of two} we know that $\sum_{a=1}^m  \rho_a F_a$ is the sum of at most two invertibles. This permits us to write for $G_1,G_2\in\{0\}\cup GL_r(2)$
    \begin{eqnarray*}
        A &= & \sum_{j=1}^q B_j\otimes C_j + \sum_{a=1}^m  F_a\otimes \left(\rho_a P +\sum_{\ell=1}^p \beta_{a\ell} E_\ell\right) \\
        &+& G_1\otimes P + G_2\otimes P
    \end{eqnarray*}
    where by construction every lone or bracketed term is an element of $ GL_r(2)$. Therefore, we conclude $w(A)\leq q + m + 2$, where we recall
    \begin{align*}
        q \leq \min(2t,r^2,t+2) \\
        m \leq \min(t+2,r)+2.
    \end{align*}
    
    There are now four relevant regimes to bound $w(A)$:

    $\mathbf{(t \leq 3)}$: rather than follow the process above, each $M_i,N_i$ can be decomposed as a sum of two invertible matrices using \cref{supmat:lem:sum of two} to yield $w(A)\leq 4t$.
    
    $\mathbf{(3 < t \leq r-2)}$: here we take $q$ to be bounded by $t+2$ and $m$ bounded by $t+4$ to give $w(A)\leq 2t+8$.
    
    $\mathbf{(r-2 < t \leq r^2 - 2)}$: here we take $q$ to be bounded by $t+2$ and $m$ bounded by $r+2$ to give $w(A)\leq t+r+6$.
    
    $\mathbf{(t > r^2 - 2)}$: finally here we take $q$ bounded by $r^2$ and $m\leq r+2$ to yield $w(A)\leq r^2+r+4$.
\end{proof}
Summarising the worst case of \cref{supmat:prop:matrix-weight}, \cref{supmat:thm:worst-case-cross-block-cnot} is now immediate, and we repeat it below for convenience
\begin{thm} 
    Let $A\in M_{r^2}(2)$. Then $w(A)\leq r^2 + r + 4$.
\end{thm}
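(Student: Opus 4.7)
The plan is to derive this Theorem as an immediate corollary of \cref{supmat:prop:matrix-weight}, which already supplies the required bound as the worst-case branch of its minimum.

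First, I would invoke \cref{supmat:lem:non-invertible-tensor-product} to guarantee that every $A \in M_{r^2}(2)$ admits a (not necessarily invertible) tensor decomposition $A = \sum_{i=1}^{t} M_i \otimes N_i$ with $t \leq r^2$. The canonical such decomposition writes $A$ as an $r \times r$ block matrix and takes $A = \sum_{i,j=1}^{r} E_{i,j} \otimes A_{i,j}$, where $A_{i,j} \in M_r(2)$ is the $(i,j)$th block of $A$; this shows $t \leq r^2$ in every case, so no auxiliary input is needed.

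Next, I would apply \cref{supmat:prop:matrix-weight} to this decomposition, obtaining
\[
w(A) \leq \min\bigl(4t,\ 2t+8,\ t+r+6,\ r^2+r+4\bigr).
\]
Since the last entry $r^2+r+4$ is independent of $t$, the conclusion $w(A) \leq r^2+r+4$ follows uniformly, regardless of which of the four regimes the tensor rank $t$ happens to lie in. In particular, even when the other three bounds degenerate (e.g.\ when $t$ is close to $r^2$ and $4t$ is large), the fourth bound remains in force.

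There is no genuine obstacle to overcome at this stage: the hard work of converting a possibly non-invertible tensor decomposition into one using only invertible factors has already been carried out in the proof of \cref{supmat:prop:matrix-weight}, through the structural lemmas on invertible spanning sets of $M_r(2)$ (\cref{supmat:lem:invertible-spanning-set1}), diagonal completions to invertibility (\cref{supmat:lem:diaginvert}), and the correction by an $r$-cycle permutation (\cref{supmat:lem:diagperm}). The present Theorem is simply the clean, tensor-rank-independent take-away of that analysis, and it is exactly the statement needed to feed into \cref{supmat:thm:cross-block-cnot-operators} for bounding the number of $\calA$ and $\calA^T$ generators required to realise any cross-block CNOT operator in the SHYPS family.
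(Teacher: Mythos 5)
Your proposal is correct and matches the paper's argument exactly: the paper likewise obtains this theorem as the worst-case branch of \cref{supmat:prop:matrix-weight}, noting that the block decomposition of \cref{supmat:lem:non-invertible-tensor-product} always yields $t\leq r^2$ and that the fourth entry $r^2+r+4$ of the minimum then applies uniformly.
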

For applications in decomposing arbitary Clifford operators it is useful to consider the more specific case of upper-triangular matrices:
\begin{lem}
Let $A \in M_{r^2}(2)$ be any invertible upper-triangular matrix. Then $w(A) \leq r(r+1)/2+6$. 
\end{lem}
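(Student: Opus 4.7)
The plan is to mirror the proof of Proposition~\ref{supmat:prop:matrix-weight} with the block tensor decomposition $A = \sum_{i\leq j} E_{i,j} \otimes A_{i,j}$ of rank $t = r(r+1)/2$, while exploiting the upper-triangular structure to shave off the extra $+r$ overhead appearing in the generic bound.

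Upper-triangularity forces $A_{i,j} = 0$ for $i > j$, and invertibility of $A$ forces each $A_{i,i}$ to be upper unitriangular (hence invertible). Substituting $t = r(r+1)/2$ directly into Proposition~\ref{supmat:prop:matrix-weight} yields $w(A) \leq r(r+1)/2 + r + 6$, so the task reduces to saving an additional $r$ invertible tensor pairs by using the diagonal-block invertibility.

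The key algebraic move is to rewrite $E_{i,j} = I + U_{i,j}$ for $i<j$, which collapses the off-diagonal block contributions into $I \otimes \bigl(\sum_{i<j} A_{i,j}\bigr) + \sum_{i<j} U_{i,j} \otimes A_{i,j}$ with invertible left factors. Writing $A_{i,i} = I + N_i$ with $N_i$ strictly upper and absorbing the constant piece into the $I\otimes(\cdot)$ term yields
\begin{align*}
A = I \otimes \Bigl(I + {\textstyle\sum_{i<j}} A_{i,j}\Bigr) + \sum_{i<j} U_{i,j} \otimes A_{i,j} + \sum_i E_{i,i} \otimes N_i.
\end{align*}
The first $1 + r(r-1)/2$ tensor pairs already have invertible left factors. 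I would process them as in Proposition~\ref{supmat:prop:matrix-weight}: split each right factor using \cref{supmat:lem:diaginvert} into invertible plus diagonal, contributing $1 + r(r-1)/2$ fully invertible pairs and leaving a diagonal residue that consolidates into at most $r$ tensor terms $L_\ell \otimes E_{\ell,\ell}$, where each $L_\ell$ lies in the subspace $\langle I, U_{i,j}\rangle$ of upper unitriangular matrices.

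For the remaining $\sum_i E_{i,i} \otimes N_i$, I would exploit the key fact that $N_i + I = A_{i,i}$ is invertible, writing $\sum_i E_{i,i} \otimes N_i = \sum_i E_{i,i} \otimes (N_i + I) + (\sum_i E_{i,i})\otimes I = \sum_i E_{i,i} \otimes A_{i,i} + I\otimes I$; covering $\{E_{i,i}\}$ by invertibles via \cref{supmat:lem:invertible-spanning-set1} and cleaning up leftovers with \cref{supmat:lem:sum of two} and the permutation trick of \cref{supmat:lem:diagperm}, \cref{supmat:cor:perminvert}.

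The main obstacle is a joint accounting of the two ``bad'' contributions --- the $L_\ell \otimes E_{\ell,\ell}$ residues from the invertible-left split and the $E_{i,i} \otimes (\cdot)$ terms from the diagonal-block handling --- to show that their combined overhead is at most $6$ invertible tensor pairs rather than $O(r)$. The saving hinges on the fact that both families share the same diagonal-correction structure: a single application of \cref{supmat:lem:diagperm} with one fixed $r$-cycle $P$ converts all the non-invertible right factors to invertibles simultaneously, and the leftover single-matrix corrections of the form $(\sum \rho F) \otimes P$ are amalgamated and handled by one invocation of \cref{supmat:lem:sum of two}, giving a constant overhead. Carrying out the bookkeeping gives $w(A) \leq r(r+1)/2 + 6$ as claimed.
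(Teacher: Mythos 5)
Your decomposition $A = I\otimes\bigl(\sum_{i<j}A_{i,j}\bigr) + \sum_{i<j}U_{i,j}\otimes A_{i,j} + \sum_i E_{i,i}\otimes A_{i,i}$ (after simplifying out the $N_i$ detour) is a good start, and the identity $E_{i,j}=I+U_{i,j}$ is in fact the right mechanism --- it is doing exactly the same job as the paper's $I + \sum_i\nu_{ji}S_i$ trick. The gap is in what you do next: you split the off-diagonal right factors $A_{i,j}$ via \cref{supmat:lem:diaginvert} into invertible plus diagonal, rather than applying \cref{supmat:lem:invertible-spanning-set1} to the subspace $\langle A_{i,j}\rangle$. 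The spanning-set lemma gives $q\leq r(r-1)/2+2$ invertible matrices $G_j$ and, after expressing each $A_{i,j}$ as a linear combination and collecting, produces $q$ tensor pairs with \emph{no} diagonal residue at all: the left factors $\sum_i\nu_{ji}S_i$ are always strictly upper triangular, so $I+\sum\nu_{ji}S_i$ is always invertible, and the single leftover $I\otimes\bigl(\sum_jG_j\bigr)$ costs $2$ via \cref{supmat:lem:sum of two}. With the diagonal-block sum contributing $r+2$ via the $r$-cycle trick, the total is $q+2+r+2\leq r(r+1)/2+6$.

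Your route instead leaves a residue $\sum_\ell L_\ell\otimes E_{\ell,\ell}$ of up to $r$ terms. Even granting every $L_\ell$ invertible --- which fails when the parity of the contributing $D$-entries is even, making $L_\ell$ strictly upper triangular --- this residue alone costs $r+2$ after an $r$-cycle correction on the right factors. Added to the $r+2$ from $\sum_i E_{i,i}\otimes A_{i,i}$ and the $1+r(r-1)/2$ main pairs, the count is already at least $r(r+1)/2 + r + 5$, missing the target by $\Theta(r)$. The ``joint accounting'' you invoke cannot close this: the two residue families have their singular $E$-factor on \emph{opposite} tensor slots ($E_{\ell,\ell}$ on the right versus $E_{i,i}$ on the left), so the two $r$-cycle corrections act independently and their $r$-sized bulks do not merge. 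The non-invertibility of some $L_\ell$ is an additional, unaddressed correction on top. Replacing \cref{supmat:lem:diaginvert} by \cref{supmat:lem:invertible-spanning-set1} in the off-diagonal step, as the paper does, eliminates the residue entirely and recovers the stated bound.
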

\begin{proof}
    Following similarly to the proof of \cref{supmat:prop:matrix-weight}, we can always write any such $A$ as the sum
    \[
        A = \sum_{i = 1}^{\frac{r(r-1)}{2}}S_i\otimes N_i + \sum_{i = 1}^{r}E_{i,i}\otimes U_i
    \]
    for $S_i$ strictly upper triangular matrices, $N_i$ arbitrary elements of $ M_r(2)$, $E_{i,i}$ the usual weight one diagonal matrices, and $U_i$ an invertible upper triangular matrix. Computing a generating set $G$ of invertible matrices for the space spanned by $N_i$ with at most $q\leq\frac{r(r-1)}{2}+2$ elements and collecting terms, we have
    \begin{eqnarray*}
        A &=& \sum_{j = 1}^{q}\left(\sum_{i=1}^{\frac{r(r-1)}{2}}\nu_{ji} S_i\right)\otimes G_j + \sum_{i = 1}^{r}E_{i,i}\otimes U_i\\
        &=& I\otimes\left(\sum_{j = 1}^{q} G_j\right) + \sum_{j = 1}^{q}\left(I + \sum_{i=1}^{\frac{r(r+1)}{2}}\nu_{ji} S_i\right)\otimes G_j \\ &+& \sum_{i = 1}^{r}E_{i,i}\otimes U_i.
    \end{eqnarray*}
    Applying \cref{supmat:lem:sum of two} to the first term, and noting that each $I + \sum_{i=1}^{\frac{r(r+1)}{2}}\nu_{ji} S_i \in GL_r(2)$ yields
    \[
    w(A)\leq q+2+w(\sum_{i = 1}^{r}E_{i,i}\otimes U_i).
    \]
    But then $E_{i,i}+C$ is invertible for any $r$-cycle $C$ and hence the  final term is bounded by
    \[
\sum_{i=1}^rw(E_{i,i}+C,U_i)+w(C,\sum_{i=1}^r U_i) \leq r+2.
    \]
    The result now follows.
\end{proof}

\subsection{Arbitrary CNOT operators}\label{supmat:subsec:arbitrary-cnot-operators}
\cref{supmat:thm:cross-block-cnot-operators} establishes efficient fault-tolerant implementations of cross-block CNOT circuits. It thus remains to show how our lifted automorphisms may implement CNOT circuits within a single code block. We remark that this does follow from \cite[Thm. 4]{Grassl2013}, however by relying not on generating transvections, but rather an auxiliary-block based trick, we minimise additional overhead.

\begin{lem} \label{supmat:lem:in-block CNOT}
    Any logical CNOT circuit on a code block of $SHYPS(r)$ (i.e., any element of $GL_{r^2}(2)$) may be generated by $r^2+r+4$ depth-1 CNOT circuits from ${ \calA \cup \calA^T}$, using a single auxiliary code block.
\end{lem}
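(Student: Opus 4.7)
The plan is to realise any in-block CNOT $M\in GL_{r^2}(2)$ on the data block (call it block~1) by a gate-teleportation protocol that calls a single cross-block operation of the type bounded by \cref{supmat:thm:cross-block-cnot-operators}. First I would prepare the auxiliary block (block~2) in the logical $\ket{\overline 0}^{\otimes r^2}$ state, so the joint state has the form $\ket{\psi}_1\ket{\overline 0}_2$. Next I would apply the cross-block CNOT with symplectic representation $\mat{I & M^T \\ 0 & I}$, whose computational-basis action is $\ket{x}_1\ket{\overline 0}_2 \mapsto \ket{x}_1\ket{Mx}_2$; by \cref{supmat:thm:cross-block-cnot-operators} this decomposes fault-tolerantly into at most $r^2+r+4$ depth-1 circuits drawn from $\calA\cup\calA^T$.

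To collapse the resulting entangled state $\sum_x\psi_x\ket{x}_1\ket{Mx}_2$ back onto a single block, I would transversally measure block~1 in the logical $X$-basis, obtaining an outcome $a\in\gf^{r^2}$ and projecting block~2 onto $\sum_y \psi_{M^{-1}y}(-1)^{a\cdot M^{-1}y}\ket{y}$. Applying the transversal logical Pauli correction $\overline Z^{M^{-T}a}$ on block~2 removes the outcome-dependent phase, so block~2 now holds $M\ket{\psi}$ exactly. Relabelling block~2 as the new data block (and block~1 as the fresh auxiliary, to be reset) completes the implementation of the in-block CNOT $M$ on the original data.

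The only operations from $\calA\cup\calA^T$ used appear in the cross-block CNOT of the second step, so the total count is at most $r^2+r+4$ depth-1 CNOT circuits, as claimed; neither the transversal $X$-measurement nor the transversal Pauli correction contributes to the CNOT count. The main technical point to verify is that the by-product operator truly lies in the logical Pauli group of block~2 (rather than introducing gauge or stabiliser corrections). This follows from the fact that $\mat{I & M^T \\ 0 & I}$ maps the logical $X$/$Z$ operators of block~1 cleanly to logical Paulis of block~2 with the required matrices $M$ and $M^{-T}$, so the correction lies in the codespace-preserving Pauli group and can be applied transversally via the standard sign-fixing argument reviewed in \cref{subsec:code-automorphisms} (see also \cite{Sayginel2024-1}).
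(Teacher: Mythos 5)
Your proposal is correct and takes essentially the same route as the paper: both implement the in-block CNOT $C$ (your $M$) via gate teleportation through an auxiliary block prepared in $\ket{\overline 0}$, using a single cross-block CNOT operator — which by \cref{supmat:thm:cross-block-cnot-operators} costs at most $r^2+r+4$ generators from $\calA\cup\calA^T$ — followed by a transversal logical-$X$ measurement and a $Z$-type Pauli correction on the receiving block. The paper presents this more tersely as a standard teleportation circuit and cites the theorem, whereas you spell out the basis-state bookkeeping and the sign-fixing argument for the by-product Pauli; both are the same proof.
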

\begin{proof}
    Take an arbitrary $C\in GL_{r^2}(2)$. The CNOT circuit $C$
     can be executed using a scheme based on the well-known quantum teleportation circuit:
    \[
        \begin{tikzpicture}
        \begin{yquantgroup}
        \registers{
             qubit {$\ket\psi$} a;
             qubit {$\ket{0}$} b;
        }
        \circuit{
            box {$C^{-1}$} b | a;
            [direct control]
            measure a;
            z b | a;
            discard a;
            output  {$C\ket\psi$} b;
        }
        \end{yquantgroup}
        \end{tikzpicture}
    \]
    So at the cost of introducing an additional auxiliary code block in the $\ket{0}$ state (which may be prepared offline in constant depth) and teleporting our state $\ket\psi$, $C$ is applied within the code block for the cost of a single element from $\lrang{\calA \cup \calA^T}$. The lemma then follows directly from \cref{supmat:thm:cross-block-cnot-operators}.

\end{proof}

Note that in the remainder of this section, we consider arbitrary CNOT operators, modulo a possible permutation of the logical qubits. These will be accounted for later when compiling an arbitrary Clifford operator in \cref{supmat:sec:SHSCompSummary}, and in fact SWAP circuits may be implemented more efficiently than generic CNOT operators (see \cref{supmat:sec: Hadamard} and \cref{supmat:tab:had_swap_depth}). In particular, logical permutations on $b$ code blocks are implementable in $O(r^2)$ depth, a cost that crucially doesn't scale with the number of code blocks (see \cref{supmat:thm:cross-block-permutations}).

    \begin{table*}[h!]
    \centering
    \setlength\tabcolsep{5mm}
    {\renewcommand{\arraystretch}{1.2}

    \begin{tabular}{c | c}
        \textbf{Logical Gate} & \textbf{Depth bound} \\ \hline 
        In-block CNOT operator $g_1 \otimes g_2$ for $g_1, g_2 \in GL_r(2)$ & 0 (qubit relabelling) \\
        Cross-block CNOT operator $\begin{pmatrix}
            I & g_1 \otimes g_2 \\ 0 & I
        \end{pmatrix} \in GL_{2r^2}(2)$ for $g_1, g_2 \in GL_r(2)$ & 1 \\
        Arbitrary cross-block CNOT operator $\begin{pmatrix}
            I & M \\ 0 & I
        \end{pmatrix}  \in GL_{2r^2}(2)$ for $M \in M_{r^2}(2)$ & $r^2+r+4$ \\
        Depth-1 cross-block CNOT operator on two code blocks & $7r+15$ \\
        Arbitrary in-block CNOT operator $A \in GL_{r^2}(2)$ & $r^2+r+4$\\
        $CNOT_{i,j}$ for qubits $i,j$ in distinct or non-distinct code blocks & $4$ \\
        Arbitrary $2$-block CNOT operator (modulo logical permutation) & $3(r^2+r+4)$ \\
        Arbitrary $b$-block upper-triangular CNOT operator & $(b-1)(r^2+r+4) + r(r + 1)/2 + 6$ \\
        Arbitrary $b$-block CNOT operator (modulo logical permutation) & $(2b-1)(r^2+r+4)$ 
    \end{tabular}}
    \caption{Depth bounds for logical CNOT operators in $SHYPS(r)$. Recall that a CNOT operator on $b$ blocks of $r^2$ logical qubits is determined by matrices in $GL_{br^2}(2)$ - we give these matrices above where appropriate.}
    \label{supmat:tab:cnot_depth}
\end{table*}

\begin{lem} \label{supmat:lem: 2 codeblocks}
    Up to logical permutation, any logical CNOT circuit across two code blocks of $SHYPS(r)$ (i.e., any element of $GL_{2r^2}(2)$) may be generated by $4r^2+4r+16$ depth-1 CNOT circuits from ${ \calA \cup \calA^T}$, and executed in depth $3r^2+3r+12$, using at most 2 additional auxiliary code blocks.
\end{lem}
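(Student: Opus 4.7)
The plan is to reduce the problem to a block LDU decomposition of elements of $GL_{2r^2}(2)$, then apply the bounds already established for arbitrary cross-block and in-block CNOT operators.

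First, take $C \in GL_{2r^2}(2)$ and write it in $r^2 \times r^2$ block form
\[
C = \begin{pmatrix} A & B \\ E & D \end{pmatrix}.
\]
Since $C$ is invertible, there is a logical qubit permutation $\sigma$ acting across the two code blocks such that $C\sigma$ has an invertible top-left block; this $\sigma$ is the ``logical permutation'' appearing in the statement. With the block $A$ invertible, the Schur-complement factorisation gives
\[
C\sigma \;=\; \begin{pmatrix} I & 0 \\ EA^{-1} & I \end{pmatrix} \begin{pmatrix} A & 0 \\ 0 & S \end{pmatrix} \begin{pmatrix} I & A^{-1}B \\ 0 & I \end{pmatrix},
\]
where $S = D - EA^{-1}B \in GL_{r^2}(2)$ is automatically invertible.

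Next I would apply the earlier bounds piece by piece. Each of the two outer triangular factors is an arbitrary cross-block CNOT, so by \cref{supmat:thm:cross-block-cnot-operators} each admits an implementation using $r^2+r+4$ depth-1 generators from $\calA \cup \calA^T$. The middle block-diagonal factor consists of two arbitrary in-block CNOTs (one per code block), and each requires $r^2+r+4$ generators and a single auxiliary code block by \cref{supmat:lem:in-block CNOT}. Summing the four pieces yields the claimed count of $4(r^2+r+4) = 4r^2+4r+16$ depth-1 generators.

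The depth bound comes from a parallelism observation. The two in-block CNOTs act on disjoint primary code blocks and draw on independent auxiliary code blocks, so they may be executed simultaneously, contributing only $r^2+r+4$ rounds to the overall depth rather than twice that amount. Composing with the two cross-block factors, which both touch the pair of primary blocks and must therefore be applied sequentially, gives a total depth of $3(r^2+r+4) = 3r^2+3r+12$, with at most 2 auxiliary blocks simultaneously in use.

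The main obstacle I anticipate is being careful about the permutation step: one must verify that for every $C \in GL_{2r^2}(2)$ there exists a logical qubit permutation $\sigma$ making the top-left block of $C\sigma$ invertible (a standard rank argument, since the first $r^2$ rows of $C$ must contain an $r^2 \times r^2$ full-rank submatrix) and that such permutations can legitimately be absorbed into the ``modulo logical permutation'' qualifier without being counted in the depth, as is the convention used throughout the paper. The remainder of the argument is a direct assembly of results already proven.
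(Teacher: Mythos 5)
Your proposal is correct and follows essentially the same route as the paper's proof: the paper also factors $X$, up to a logical permutation, into a unipotent block-lower-triangular factor, a block-diagonal factor realizing two in-block CNOTs, and a unipotent block-upper-triangular factor (there via a block $PLU$ decomposition peeled into this form, rather than your column-permutation-plus-Schur-complement variant), and then costs the pieces identically, parallelizing the two in-block CNOTs to reach depth $3(r^2+r+4)$ with $4(r^2+r+4)$ generators and two auxiliary blocks. The two realizations of the permuted block-LDU step are interchangeable, and your rank argument for the existence of the required permutation is the standard one the paper implicitly relies on.
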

\begin{proof}
    Let $X \in GL_{2r^2}(2)$ and assume that $X$ has $PLU$ decomposition
    \[
    X = P \mat{C_L & 0 \\ B_L & C'_L} \mat{C_U & A_U \\ 0 & C'_U}.
    \]
    As all diagonal blocks in $L$ and $U$ are invertible, there exist $A'_L,B'_U\in M_{r^2}(2)$ such that
    \[
    X = P \mat{I & 0 \\ B'_L & I} \mat{C_L C_U & 0 \\ 0 & C'_L C'_U} \mat{I & A'_U \\ 0 & I}.
    \]
    Here $P$ is a cross block permutation that as stated in the Lemma, we need not consider (in practice it will be accounted for as part of a holistic Clifford decomposition, or simply tracked in software).
    Of the remaining three factors, the outer two matrices are clearly in $\lrang{\calA^T}$ and $\lrang{\calA}$, respectively. Hence it remains to implement 
    \begin{align}
    \mat{C_L C_U & 0 \\ 0 & C'_L C'_U},\notag
    \end{align}
    i.e., arbitrary invertible CNOT circuits $C = C_LC_U$ and $C' = C'_LC'_U$, within the individual code blocks.
    It follows from \cref{supmat:lem:in-block CNOT} that these can be implemented for the cost of a single element from $\lrang{\calA \cup \calA^T}$, each using a single auxiliary code block.

We conclude that $X$ is implemented by a circuit consisting of at most 4 logical cross-block CNOT circuits, and hence $4(r^2+r+4)$ depth-1 circuits from $\calA \cup \calA^T$. Furthermore, $C$ and $C'$ can be performed in parallel since they are applied to different code blocks. Hence, $X$ can be implemented in a depth no greater than $3(r^2+r+4)$.
\end{proof}

We conclude this section by generalising the above to an arbitrary number of blocks. For ease of presentation in the proof, and to avoid overly complicated compiling formulae we restrict our attention to $b=2^a$ blocks - the general case follows similarly. We state the following result in terms of rounds of cross-block CNOT circuits in $\lrang{\calA \cup \calA^T}$

\begin{lem}\label{supmat:lem:b blocks}
    Let $b=2^a$ and $X = GL_{br^2}(2)$ be an arbitary CNOT circuit across $b$ code blocks. Up to possible logical permutation, $X$ is implemented by $b(b+1)$ cross-block CNOT circuits in $\lrang{\calA \cup  \calA^T}$, performed over $2b - 1$ rounds.
\end{lem}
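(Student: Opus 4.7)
The plan is to induct on $a$, where $b = 2^a$. The base case $b = 1$ follows from Lemma~\ref{supmat:lem:in-block CNOT}: an arbitrary in-block CNOT $X \in GL_{r^2}(2)$ is implemented by one cross-block CNOT circuit in $\lrang{\calA \cup \calA^T}$ (via teleportation with an auxiliary block), in a single round. This satisfies $1 \leq 1 \cdot 2$ and $1 = 2 \cdot 1 - 1$.

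For the inductive step, I first absorb a logical permutation so that the top-left $(b/2)r^2 \times (b/2)r^2$ sub-block of the permuted matrix is invertible. Since $X \in GL_{br^2}(2)$, the first $(b/2)r^2$ rows are linearly independent, and so the corresponding $(b/2)r^2 \times br^2$ submatrix contains $(b/2)r^2$ linearly independent columns; moving those columns to the front is a qubit permutation that may be absorbed into the final logical permutation. Calling the permuted matrix $X'$, apply the block $LDU$ decomposition
\[
X' = \begin{pmatrix} I & 0 \\ \Lambda & I \end{pmatrix} \begin{pmatrix} D_1 & 0 \\ 0 & D_2 \end{pmatrix} \begin{pmatrix} I & \Upsilon \\ 0 & I \end{pmatrix},
\]
with $\Lambda, \Upsilon \in M_{(b/2)r^2}(2)$ and $D_1, D_2 \in GL_{(b/2)r^2}(2)$.

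The upper factor is a cross-half CNOT circuit: viewing $\Upsilon$ as a $(b/2) \times (b/2)$ grid of $M_{r^2}(2)$ sub-blocks, each sub-block is an arbitrary cross-block CNOT operator from a specific first-half block (control) to a specific second-half block (target). Since controls lie strictly in the first half and targets strictly in the second half, no block ever acts as both control and target across the different CNOTs, and so they all mutually commute. Consequently, the $(b/2)^2$ operators can be scheduled using the $b/2$ perfect matchings of $K_{b/2, b/2}$ (e.g.~cyclic shifts), giving $(b/2)^2$ cross-block CNOT circuits over $b/2$ rounds. The lower factor is handled identically in another $b/2$ rounds. The diagonal factor $\mathrm{diag}(D_1, D_2)$ consists of two independent arbitrary CNOT circuits on disjoint sets of $b/2$ blocks; applying the inductive hypothesis to each in parallel yields at most $(b/2)(b/2+1)$ cross-block CNOT circuits over $2(b/2) - 1 = b - 1$ rounds per sub-problem.

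Summing contributions gives a total of $b/2 + (b-1) + b/2 = 2b - 1$ rounds and
\[
2(b/2)^2 + 2(b/2)(b/2+1) = b^2/2 + b^2/2 + b = b(b+1)
\]
cross-block CNOT circuits, matching the claim. I expect the main obstacle will be the bookkeeping of absorbed logical permutations across recursion levels: each level introduces its own absorbed permutation, and these must compose into a single outer logical permutation at the end. This is straightforward in principle because permutations form a group, but requires carefully conjugating inner-level permutations through the surrounding matrix factors before combining them with outer-level permutations, so that the final claim ``up to logical permutation'' remains valid.
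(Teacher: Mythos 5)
Your proof is correct and arrives at exactly the paper's bounds, $b(b+1)$ cross-block CNOT circuits over $2b-1$ rounds, by essentially the same underlying idea — a nested block-triangular decomposition together with cyclic-matching scheduling of the strictly-off-diagonal cross-block factors — but with a cleaner recursive structure. The paper performs a single scalar $PLU$ decomposition up front and then recurses separately on the block-upper-triangular factor $U$ and the block-lower-triangular factor $L$, using the invariant that $C_U, C'_U$ (resp.~$C_L, C'_L$) are themselves upper (resp.~lower) triangular, so no re-permutation is ever needed inside the recursion; it then saves one round at the end by merging the in-block diagonal operators of $L$ with those of $U$. You instead apply a block $LDU$ decomposition at each level and recurse directly on the Schur-complement diagonal blocks $D_1, D_2 \in GL_{(b/2)r^2}(2)$, making the recursion self-contained (the inductive hypothesis is exactly the lemma at parameter $b/2$) at the cost of absorbing a fresh column permutation at every level. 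The permutation bookkeeping you flag does go through: if $\Pi' = \Pi_1 \oplus \Pi_2$ is the residual block permutation after implementing $D_1 \oplus D_2$, then $\Pi' \bigl(\begin{smallmatrix} I & \Upsilon \\ 0 & I\end{smallmatrix}\bigr) \Pi'^{-1} = \bigl(\begin{smallmatrix} I & \Pi_1 \Upsilon \Pi_2^{-1} \\ 0 & I\end{smallmatrix}\bigr)$ is again a $(b/2)\times(b/2)$ grid of $r^2 \times r^2$ sub-blocks with controls strictly in the first half, so its cost is unchanged, and the permutations compose into a single outer $\Pi' P$. One minor remark: your round recursion $t(b) = t(b/2) + b$ with $t(1)=1$ gives $2b-1$ directly, whereas the paper reaches $2b-1$ only after the explicit merge-one-round step; and if you solve your generator recursion $g(b) = 2g(b/2) + 2(b/2)^2$ with the true base $g(1)=1$ rather than the face-value bound, you get the slightly sharper $g(b) = b^2 \le b(b+1)$.
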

\begin{proof}
    We start again with a $PLU$ decomposition for $X$,
    \[
    X = P \mat{C_L & 0 \\ B_L & C'_L} \mat{C_U & A_U \\ 0 & C'_U},
    \]
    where each block in $L,U$ themselves consist of $b/2$ sub-blocks, and the permutation $P$ may be ignored.
    Isolating $U$, we have that
    \[
    U = \mat{C_U & A_U \\ 0 & C'_U} = \mat{C_U & 0 \\ 0 & C'_U} \mat{I & C^{-1}_U A_U \\ 0 & I}.
    \]
    But $C^{-1}A_U$ consists of $b/2 \times b/2$ blocks $M_{i,j}\in M_{r^2}(2)$ and hence
    \begin{align}
    \mat{I & C^{-1}_U A_U \\ 0 & I} &= \prod_{j=1}^{b/2} \mat{I & \sum_{i}^{b/2}E_{i,\sigma^j(i)}\otimes M_{i,\sigma^j(i)} \\ 0 & I}  \notag\\
    &= \prod_{i,j=1}^{b/2} \mat{I & E_{i,\sigma^j(i)}\otimes M_{i,\sigma^j(i)} \\ 0 & I},\notag
    \end{align}
    where $\sigma = (1,2,\dots,b/2)$ is the length $b/2$ cyclic shift of indices. For example if $b=4$, $\sig = (1,2)$ and
    \begin{eqnarray*}
            \mat{I & C^{-1}_U A_U \\ 0 & I} & = & \left[\mat{I & E_{1,1}\otimes M_{1,1} \\ 0 & I} \cdot \mat{I & E_{2,2}\otimes M_{2,2} \\ 0 & I} \right] \\
            &\cdot & \left[\mat{I & E_{1,2}\otimes M_{1,2} \\ 0 & I} \cdot \mat{I & E_{2,1}\otimes M_{2,1} \\ 0 & I} \right].
    \end{eqnarray*}

    Observe that each square bracketed term, corresponding to a fixed power $\sigma^j$, consists of $b/2$ block to block operators in $\lrang{\calA}$ that may be performed in parallel. Hence the off-diagonal matrix in the decomposition of $U$ requires $b^2/4$ operators in $\lrang{\calA}$, that with parallelisation can be achieved in $b/2$ rounds.
    
    To cost the diagonal term in $U$ we proceed by induction: assume that each (upper triangular) block $C_U,C'_U \in GL_{(b/2)r^2}(2)$ may be implemented by $f(b/2)$ generators in $\calA$, over $t(b/2)$ rounds (both are performed in parallel). Then
    \[
    t(b) = t(b/2) + b/2 \text{ and } f(b)=2f(b/2)+b^2/4.
    \]
    Solving these recurrence relations using initial conditions $t(2)=2$ and $f(2) = 3$ yields
    \[
    t(b) = b \text{ and } f(b)=\frac{b}{2}(b+1).
    \]
    Recall that these costings are for $U$ only, however the costings for $L$ are identical.
    Lastly, we note that the $b$ in-block CNOT operators appearing in $L$ can be combined with those in $U$ and executed together in a single round. 
    Thus given cross-block operators in $\lrang{\calA \cup \calA^T}$, any $b$-block CNOT circuit $X \in GL_{br^2}(2)$ can be implemented in $2b-1$ rounds, for a total $b(b+1)$ cross-block operators.
\end{proof}
\begin{cor}\label{supmat:cor:b-block-cnot}
    Let $b=2^a$ and $X = GL_{br^2}(2)$ be an arbitrary logical CNOT circuit across $b$ code blocks of $SHYPS(r)$. Then up to a possible logical permutation, $X$ is implementable in depth at most $(2b-1)(r^2+r+4)$, using depth-1 CNOT circuits in $ \calA \cup \calA^T$.
\end{cor}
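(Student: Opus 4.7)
The plan is to obtain this corollary by composing Lemma~\ref{supmat:lem:b blocks} with the cross-block depth bound established in Theorem~\ref{supmat:thm:worst-case-cross-block-cnot} (via Theorem~\ref{supmat:thm:cross-block-cnot-operators}). First I would invoke Lemma~\ref{supmat:lem:b blocks} to reduce the problem from a monolithic $br^2 \times br^2$ invertible matrix to a sequence of elementary cross-block CNOT circuits: up to a logical permutation (which need not be implemented here), any $X \in GL_{br^2}(2)$ is expressible as a product of $b(b+1)$ cross-block CNOTs drawn from $\lrang{\calA \cup \calA^T}$, organised into $2b-1$ rounds in which the at most $b/2$ operators in each round act on disjoint pairs of code blocks.

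Next I would bound the cost of a single round. Within a round the operators act on disjoint pairs of blocks, so they can be scheduled in parallel; it therefore suffices to bound the depth of a single cross-block CNOT of the form $\left(\begin{smallmatrix} I & M \\ 0 & I \end{smallmatrix}\right)$ or its transpose, for arbitrary $M \in M_{r^2}(2)$. By Theorem~\ref{supmat:thm:worst-case-cross-block-cnot} we have $w(M) \leq r^2+r+4$, so $M$ can be written as a sum of at most $r^2+r+4$ invertible tensor products $g_1 \otimes g_2$; each such summand corresponds to a generator in $\calA$ (respectively $\calA^T$), which by Lemma~\ref{supmat:lem:depth 1 cnot circuits} is implemented by a depth-1 physical CNOT circuit. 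Since composing these generators sequentially adds their off-diagonal blocks, the single cross-block CNOT is realised in depth at most $r^2+r+4$.

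Combining the two ingredients, each of the $2b-1$ rounds has depth at most $r^2+r+4$, so the total depth is $(2b-1)(r^2+r+4)$, as claimed. The only subtle point, and the thing I would want to state explicitly, is that the parallelism across disjoint block pairs within a round is preserved when we expand each round operator into its $r^2+r+4$ depth-1 pieces: since the depth-1 pieces for distinct block pairs still act on disjoint qubit sets, they can be interleaved synchronously round-by-round without any depth blow-up. Beyond that, the statement is essentially a book-keeping corollary; there is no genuinely new obstacle, only the need to confirm that the lemma's "round" structure composes cleanly with the per-operator decomposition of Theorem~\ref{supmat:thm:worst-case-cross-block-cnot}.
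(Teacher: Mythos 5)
Your proof is correct and follows essentially the same route as the paper: the paper's proof is the one-liner "This is immediate from Theorem~\ref{supmat:thm:cross-block-cnot-operators} and Lemma~\ref{supmat:lem:b blocks}," and your proposal simply unpacks that, including the (correct, and worth noting) observation that the per-round parallelism across disjoint block pairs survives when each cross-block operator is expanded into its $r^2+r+4$ depth-1 generators.
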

\begin{proof}
    This is immediate from \cref{supmat:thm:cross-block-cnot-operators} and \cref{supmat:lem:b blocks}.
\end{proof}

Lastly, let's consider the case of single CNOT operators
\begin{lem}\label{supmat:lem:single-cnot}
    Any single logical CNOT operator between or within code blocks is implementable fault-tolerantly in depth at most 4. 
\end{lem}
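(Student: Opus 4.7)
The plan is to handle the cross-block and in-block cases separately, in each case reducing the problem to computing the weight $w(\cdot)$ defined in \cref{supmat:sec: cnot gate characterisation} on matrices of very low tensor rank, and exploiting cancellations in characteristic two to keep the number of invertible tensor products to at most four.

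For a cross-block single CNOT between logical qubits with coordinates $(a,b)$ and $(c,d)$ in two distinct code blocks, the symplectic representation is $\begin{pmatrix} I & M \\ 0 & I \end{pmatrix}$ (or its transpose) with $M = (e_a e_c^T) \otimes (e_b e_d^T)$. Each rank-one $r \times r$ factor decomposes as a sum of two invertibles by \cref{supmat:lem:sum of two}, and distributing the tensor product yields $M$ as a sum of $2 \times 2 = 4$ tensor products in $GL_r(2) \times GL_r(2)$, each implementable in physical depth $1$ via \cref{supmat:lem:depth 1 cnot circuits}. This gives depth at most $4$.

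For an in-block single CNOT with matrix $C = I_{r^2} + E_{(a,b),(c,d)}$, I would apply the teleportation scheme used in the proof of \cref{supmat:lem:in-block CNOT} to convert the task to a single cross-block operation with matrix $C^{-1} = C$ (the equality holds because $E_{(a,b),(c,d)}^2 = 0$, so $C^2 = I$ in characteristic two). It then suffices to show $w(C) \leq 4$. The key trick is to absorb the $I_{r^2} = I_r \otimes I_r$ summand into the decomposition of $E_{(a,b),(c,d)} = (e_a e_c^T) \otimes (e_b e_d^T)$ by using $I_r$ itself as one of the invertible summands in the rank-one factors whenever possible. When $a \neq c$ and $b \neq d$, writing $e_a e_c^T = I_r + (I_r + E_{a,c})$ and $e_b e_d^T = I_r + (I_r + E_{b,d})$ (both pairs of summands invertible), the two $I \otimes I$ terms in the expansion cancel, leaving only three tensor products of invertibles.

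The main obstacle is the edge case in which exactly one of $a = c$ or $b = d$ holds; both cannot hold since $(a,b) \neq (c,d)$. Taking $a = c$ without loss of generality, $e_a e_c^T = E_{a,a}$ is a non-identity diagonal matrix, so $I_r$ is no longer available as an invertible summand. I would instead fix an $r$-cycle permutation $P$ and decompose $E_{a,a} = P + (E_{a,a} + P)$, with both summands invertible by \cref{supmat:lem:diagperm}. The naive expansion now yields five tensor-product terms, but one can consolidate $I \otimes I$ with $(E_{a,a} + P) \otimes I$ into a single tensor product $(I + E_{a,a} + P) \otimes I$, and the factor $I + E_{a,a} + P = (I + E_{a,a}) + P$ is invertible by a second application of \cref{supmat:lem:diagperm} (applied to the diagonal matrix $I + E_{a,a} \neq I$). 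This drops the count back to four, establishing $w(C) \leq 4$ in all cases and completing the depth bound.
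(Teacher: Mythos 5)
Your proposal is correct. For the cross-block case it coincides with the paper's proof: express the off-diagonal block as $E_{a,c}\otimes E_{b,d}$, split each rank-one factor into a sum of two invertibles via \cref{supmat:lem:sum of two}, and distribute to obtain four invertible tensor products. The in-block case is where you diverge. The paper works from the pair of identities
\begin{align*}
I+E_{a,b}\otimes E_{c,d} &= I\otimes(I+E_{c,d}) + (I+E_{a,b})\otimes E_{c,d}\\
&= (I+E_{a,b})\otimes I + E_{a,b}\otimes(I+E_{c,d}),
\end{align*}
choosing whichever form places its two non-invertible factors in distinct tensor-product summands so that two applications of \cref{supmat:lem:sum of two} yield at most $2+2=4$ terms. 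You instead factor each rank-one piece directly as a sum of two invertibles---using $I$ together with a transvection $I+E_{a,c}$ when the index pair is off-diagonal, or an $r$-cycle $P$ together with $E_{a,a}+P$ (via \cref{supmat:lem:diagperm}) in the diagonal case---then distribute and let the $I\otimes I$ terms cancel in characteristic two. The generic case drops to three terms automatically, and in the degenerate case your consolidation of $I\otimes I$ with $(E_{a,a}+P)\otimes I$ into $(I+E_{a,a}+P)\otimes I$, justified by a second application of \cref{supmat:lem:diagperm} to the non-invertible diagonal $I+E_{a,a}$, brings the count from five back down to four. Both routes land on the same depth bounds ($3$ generic, $4$ degenerate); yours avoids the ``pick the right identity'' step entirely, and it also makes explicit a point the paper leaves implicit, namely that the teleportation gadget of \cref{supmat:lem:in-block CNOT} requires bounding $w(C^{-1})$ rather than $w(C)$, which is harmless here precisely because $C^{-1}=C$.
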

\begin{proof}
    A single CNOT gate across code blocks that connects the $(i,j)$th and $(l,k)$th qubits corresponds to the matrix in $GL_{2r^2}(2)$ with off diagonal block $E_{i,k}\otimes E_{j,l}$. The result then follows by decomposing each term in the tensor product into a sum of two invertible matrices using \cref{supmat:lem:sum of two}. 
    
    For an in-block gate, we instead decompose the matrix $I+E_{a,b}\otimes E_{c,d} \in GL_{r^2}(2)$ where at least one of $a \neq b$, or $c \neq d$ holds, and then apply the scheme introduced in the proof of \cref{supmat:lem:in-block CNOT}.
    In particular, observe that
    \begin{eqnarray*}
        I+E_{a,b}\otimes E_{c,d} &=& I\otimes (I+E_{c,d}) + (I+E_{a,b})\otimes E_{c,d} \\
        & = & (I+E_{a,b})\otimes I + E_{a,b}\otimes (I+E_{c,d}) 
    \end{eqnarray*}
One of the two expressions above will always contain at most two non-invertible terms, and hence two applications of \cref{supmat:lem:sum of two} yields $w(I+E_{a,b}\otimes E_{c,d}) \leq 4$. Thus the corresponding in-block CNOT operator has depth at most 4.

We remark that the most common instance will be that both $a \neq b$ \emph{and} $c \neq d$, in which case there is only one non-invertible term, yielding depth 3.
    \end{proof}
    
\begin{rem*}
    \textbf{(Space Costs)} Throughout this section we have been primarily concerned with minimising the depth of CNOT operator implementations. This directly translates to minimising the number of logical cycles required in SHYPS codes and (assuming a fixed time cost for syndrome extraction), the total time cost. Of lesser importance has been tracking the space overhead of these operator implementations but this can be easily derived: cross-block logical generators in $\calA \cup \calA^T$ require 0 additional qubits, whereas \cref{supmat:lem:in-block CNOT} demonstrates that applying an in-block CNOT circuit requires a single auxiliary code block. Hence, assuming that in-block operators are performed in parallel (as in the proof of \cref{supmat:lem:b blocks}), a logical CNOT operator on $b$ code blocks incurs a space overhead of at most $bn$ qubits. Additionally, in the large $b$ limit, \cite[Thm. 4]{Grassl2013} implies that we can get away with 0 additional code blocks without changing the leading order terms for time overheads.
\end{rem*}

\section{Diagonal operators in SHYPS codes}\label{supmat:sec: diagonal gate characterisation}
The previous section characterizes the depth-1 CNOT circuits that arise as lifts of classical automorphisms, and furthermore, how such circuits generate the full group of CNOT operators across multiple blocks of the SHYPS codes. We now turn our attention to diagonal Clifford gates, i.e., those that correspond to diagonal unitary matrices with respect to the computational basis. In particular we describe a collection of depth-1 logical diagonal Clifford operators that leverage automorphisms of the classical simplex code $C(r)$, and demonstrate how these efficiently generate all diagonal Cliffords.
Recall from \cref{supmat:subsec:review-of-paulis-and-cliffords} that up to phase, such Clifford operators are generated by $S$ and CZ gates and form an abelian subgroup of $\calC_n$. 

Before going through the details on diagonal operators, we note that all results below can be trivially converted to results on $X$-diagonal operators, i.e., Clifford operators that are diagonal unitary matrices when considered in the \emph{Hadamard-rotated} computational basis. This subgroup of Clifford operators is obtained by conjugating diagonal operators by the all-qubit Hadamard operator. 
The fact that results on diagonal operators carry over to $X$-diagonal operators can be understood by considering that conjugating a diagonal operator by the all-qubit Hadamard operator transforms its symplectic representation by moving the off-diagonal block from the top-right quadrant to the bottom-left quadrant. 

\subsection{Lifting classical automorphisms}
Recall that diagonal Clifford operators are represented modulo Paulis by symplectic matrices
\begin{align}\label{supmat:eq:diagCliffSymplecticForm}
    \{ \mat{I & B \\ 0 & I} \in Sp_{2k}(2) \mid B \in SYM_k(2) \},
\end{align}
where $SYM_k(2)\subset M_k(2)$ denotes the subspace of symmetric matrices.

As the diagonal subgroup is abelian, efficiently generating all operators thus becomes a question of decomposing arbitrary symmetric matrices as short sums from a particular collection of generators. We construct these generating symmetric matrices from classical automorphisms such that the corresponding diagonal operators have favorable depth and fault-tolerance properties.

Before proceeding we first define a distinguished permutation on our qubit array; in the language of \cite{Breuckmann2024}, this permutation is a \emph{$ZX$-duality}. Intuitively, the map $\tau$ exchanges the vector spaces defining $X$- and $Z$-gauge operators, and will be used to account for the fact that conjugating $X$-Paulis by diagonal Cliffords may produce $Z$-Paulis.
\begin{defn}
    Given $a^2$ qubits (physical or logical) arranged in an $a \times a$ array, let $\tau_a \in S_{a^2}$ be the permutation that exchanges qubits across the diagonal, i.e., $\tau_a$ is an involution exchanging qubits $(i,j) \leftrightarrow (j,i)$. In particular,  $\tau_a = \tau_a^{-1} = \tau_a^T$. 

    As we typically index qubits by the tensor product basis $e_i^T\otimes e_j^T$, we have equivalently in vector form that $\tau_a(e_i^T \otimes e_j^T)=e_j^T \otimes e_i^T$. This extends naturally to matrices acting on this basis:
    \[
    \tau_a (A \otimes B) \tau_a = B\otimes A.
    \]
\end{defn}

We're now ready to introduce our generating set of logical diagonal Clifford operators in the SHYPS codes: a collection of so-called \emph{phase-type} fold-transversal operators \cite{Breuckmann2024}.
\begin{lem} \label{supmat:lem:depth 1 diagonal lifts}
Let $r\geq 3$ and $C(r)$ be the $(n_r,r,2^{r-1})$-simplex code with generator matrix $G$. Furthermore, assume that $\sigma \in \aut{C(r)}$ with corresponding linear transformation $g_{\sigma} \in GL_r(2)$ such that $g_{\sigma} G = G \sig$.

    Then $\sig$ lifts to a physical diagonal Clifford operator
    
    \begin{align}
    \label{supmat:eq:physicalDiag}
    U(\sig):= \mat{I & (\sig \otimes \sig^T)\cdot \tau_{n_r} \\ 0  & I}
    \end{align}
    that preserves $SHYPS(r)$. Moreover, this depth-1 circuit of diagonal gates has logical action given by 
    \[
    \overline{U}(g_\sig):= \mat{I & (g_{\sig}^{-T} \otimes g_\sig^{-1}) \cdot \tau_r \\ 0 & I}.
    \]
\end{lem}
\begin{proof}
    Firstly we need to establish that $B = (\sig \otimes \sig^T)\cdot \tau_{n_r}$ is symmetric for $U(\sig)$ to define a valid physical diagonal Clifford. As the transpose and inverse of a permutation matrix are equal, it suffices to check that $B$ is self-inverse:
    \begin{eqnarray*}
        B^2 & = & (\sig \otimes \sig^T)\tau_{n_r} (\sig \otimes \sig^T)\tau_{n_r} \\
        & = & (\sig \otimes \sig^T)(\sig^T \otimes \sig) \\
        & = & (\sig\sig^T \otimes \sig^T\sig) \\
        & = & I
    \end{eqnarray*}
    Next observe that as a product of permutation matrices, $B$ has row/column weights equal to one and thus corresponds to a depth-1 physical circuit.
    As $U$ commutes with $Z$-gauge operators, to confirm that $U$ acts as a logical gate of $SHYPS(r)$ we need only check the action on $X$-gauge operators, described by $G_X$. Well
    \begin{eqnarray*}
     G_X \cdot B & = & (H \otimes I_{n_r})\cdot (\sig \otimes \sig^T)\cdot \tau_{n_r}\\
     & = & (H\sig \otimes \sig^T) \cdot \tau_{n_r} \\
     & = & (h_{\sig} \otimes \sig^T)(H \otimes I) \cdot \tau_{n_r} \\
     & = & (h_{\sig} \otimes \sig^T)\cdot \tau_{n_r} \cdot (I \otimes H), 
    \end{eqnarray*}
where we've used that $\aut{C(r)} =\aut{C(r)^\perp}$  and hence there exists $h_{\sig}$ such that $H\sig = h_{\sig}H$.
    But 
    \[
    (h_{\sig} \otimes \sig^T)\cdot \tau_{n_r} \cdot (I \otimes H) = (h_{\sig} \otimes \sig^T)\cdot \tau_{n_r} \cdot G_Z
    \]
    and hence the action of $U(\sig)$ is given by
    \[
    \left[G_X,0\right]\cdot U = \left[G_X, (h_{\sig} \otimes \sig)\cdot \tau_{n_r} \cdot G_Z \right].
    \]
    We conclude that conjugation by the physical Clifford $U$ preserves stabilizers and hence performs a valid logical Clifford operator on $SHYPS(r)$. Note that by the nature of the symplectic group, this logical Clifford is implemented up to a Pauli correction. As there is always a Pauli operator with the appropriate (anti)commutation relations with the stabilizers and logical operators of the code to fix any logical/stabilizer action sign issues, we can ignore this subtlety \cite{Sayginel2024}.
    The logical action of 
    $\overline{U}$ is then determined akin to the proof of Lemma \ref{supmat:lem:perm to logical}:
    \begin{align*}
     L_X \cdot B & = (P \otimes G)\cdot (\sig \otimes \sig^T)\cdot \tau_{n_r}\\
     & = (P\sig \otimes G\sig^T) \cdot \tau_{n_r} \\
     & = (g^{-T} \otimes g^{-1})(P \otimes G) \cdot \tau_{n_r} \\
     & = (g^{-T} \otimes g^{-1})\cdot \tau_r \cdot (G \otimes P) \\
     & = (g^{-T} \otimes g^{-1})\cdot \tau_r \cdot L_Z.\qedhere
    \end{align*}
\end{proof}

As we are focused primarily on the logical action and not the particular automorphism being used, we shall typically relabel $g^{-T} \mapsto g$ for ease of presentation. Summarising the above we have
\begin{cor} \label{supmat:cor:diagonal generators}
    Let $g \in GL_r(2).$ The logical Clifford operator $\overline{U} = \mat{I & (g\otimes g^T)\cdot \tau_r \\ 0 & I}$ on code $SHYPS(r)$ is implementable by a depth-1 physical diagonal Clifford circuit.
\end{cor}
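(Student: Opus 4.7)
My plan is to observe that this statement is an essentially immediate corollary of \cref{supmat:lem:depth 1 diagonal lifts} combined with \cref{supmat:lem:aut group of simplex code}. The latter gives the isomorphism $\aut{C(r)} \cong GL_r(2)$, so every element $h \in GL_r(2)$ arises as the logical linear transformation $g_\sigma$ of some classical automorphism $\sigma \in \aut{C(r)}$. The former then guarantees that, for each such $\sigma$, the physical circuit $U(\sigma) = \mat{I & (\sigma \otimes \sigma^T)\cdot \tau_{n_r} \\ 0 & I}$ is a depth-1 diagonal Clifford whose logical action is $\overline{U}(g_\sigma) = \mat{I & (g_\sigma^{-T}\otimes g_\sigma^{-1})\cdot \tau_r \\ 0 & I}$.

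To extract from this the stated form $(g\otimes g^T)\cdot \tau_r$, I would reparametrize: given the target $g \in GL_r(2)$, set $g_\sigma := g^{-T}$. This is a legitimate choice because $h \mapsto h^{-T}$ is a self-bijection of $GL_r(2)$, so a corresponding $\sigma \in \aut{C(r)}$ is guaranteed by \cref{supmat:lem:aut group of simplex code}. Under this substitution one has $g_\sigma^{-T} = g$ and $g_\sigma^{-1} = g^{T}$, and therefore the logical off-diagonal block of $\overline{U}(g_\sigma)$ becomes exactly $(g\otimes g^T)\cdot \tau_r$.

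The only items to double-check are that (i) $U(\sigma)$ is genuinely a depth-1 \emph{diagonal} physical circuit, which is already verified in the proof of \cref{supmat:lem:depth 1 diagonal lifts} via the observation that $(\sigma\otimes \sigma^T)\cdot \tau_{n_r}$ is a symmetric permutation matrix (so its nonzero structure encodes a parallelizable product of single-qubit $S$ gates and two-qubit $CZ$ gates), and (ii) the Pauli correction that may appear in $\overline{U}$ can be absorbed without additional depth, as already noted in the cited lemma. There is no real obstacle: the result is a bookkeeping consequence of the lemma, with the only subtlety being the transpose-inverse relabelling needed to reconcile the natural parameter $g_\sigma$ coming from the automorphism with the target parameter $g$ appearing in the statement.
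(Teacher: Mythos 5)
Your proposal is correct and matches the paper's own reasoning: the paper derives the corollary from Lemma~\ref{supmat:lem:depth 1 diagonal lifts} by exactly the relabelling $g^{-T}\mapsto g$ (explicitly noted in the text immediately preceding the corollary), with the surjectivity of $\sigma \mapsto g_\sigma$ supplied by Lemma~\ref{supmat:lem:aut group of simplex code}. The supporting observations you flag — symmetry and row/column weight one of $(\sigma\otimes\sigma^T)\tau_{n_r}$, and absorbing the Pauli correction — are indeed established in the cited lemma, so nothing is missing.
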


\begin{exmp}
    Let $r=3$ and $G$ be the simplex code generator matrix from (\ref{supmat:eq:r=3 gen matrix}). Observe that
    \begin{eqnarray*}
         \mat{1 & 1 & 0 \\
    0 & 1 & 0 \\
    0 & 0 & 1
    }G &
    = &
    \mat{1 & 1 & 0 \\
    0 & 1 & 0 \\
    0 & 0 & 1
    }
    \mat{
        1 & 0 & 1 & 1 & 1 & 0 & 0 \\
        0 & 1 & 0 & 1 & 1 & 1 & 0 \\
        0 & 0 & 1 & 0 & 1 & 1 & 1 
    } \\
    & = & 
     \mat{
        1 & 0 & 1 & 1 & 1 & 0 & 0 \\
        1 & 1 & 1 & 0 & 0 & 1 & 0 \\
        0 & 0 & 1 & 0 & 1 & 1 & 1 
    } \\
    & = & G\cdot(1,4)(3,5).
    \end{eqnarray*}
   Hence 
   \begin{eqnarray*}
       U & = &  \mat{ I & \left[(1,4)(3,5) \otimes (1,4)(3,5)\right]\cdot \tau_7 \\ 0 & I} \\
   \end{eqnarray*}
   implements
   \[
   \overline{U} = \mat{I &  \mat{1 & 0 & 0 \\
    1 & 1 & 0 \\
    0 & 0 & 1
    } \otimes  \mat{1 & 1 & 0 \\
    0 & 1 & 0 \\
    0 & 0 & 1
    }\tau_3 \\ 0 & I }.
   \]

\end{exmp}

\begin{rem*}
Recall that the presence of $S$ and CZ gates in diagonal Clifford operators (\ref{supmat:eq:diagCliffSymplecticForm}) is determined by the diagonal and off-diagonal entries of $B$, respectively. Then specifically for our generators, diagonal entries are determined by the fixed points of the permutation matrix $\sig \otimes \sig^T \cdot \tau_{n_r}$. Here we see that
    \begin{eqnarray*}
        \sig \otimes \sig^T \cdot \tau_{n_r}(e_i^T\otimes e_j^T) & = & \sig \otimes \sig^T(e_j^T \otimes e_i^T) \\
        & = & e^T_{\sig(j)}\otimes e^T_{\sig^T(i)}
    \end{eqnarray*}
    equals $e^T_i \otimes e^T_j$ if and only if $\sig(j)=i$.
    Hence $\sig \otimes \sig^T \cdot \tau_{n_r}$ has exactly $n_r=\sqrt{n}$ fixed points $\{e^T_{\sig(j)} \otimes e^T_{j}\}_{j=1}^{n_r}$, and the corresponding physical diagonal Clifford generator contains $n_r$ $S$ gates, and $(n_r^2-n_r)/2$ CZ gates.
\end{rem*}

Corollary \ref{supmat:cor:diagonal generators} establishes a collection of logical diagonal Cliffords that are implemented in SHYPS codes by depth-1 physical circuits and we denote these by
\[
\calB:=\{\mat{I & (g \otimes g^T) \cdot \tau_r \\ 0 & I}\,:\,g\in GL_r(2) \}.
\]

Unlike the CNOT circuits of \cref{supmat:lem:depth 1 cnot circuits} these operators are not implemented strictly transversally. Furthermore, as they contain entangling CZ gates within a code block, one may be concerned that $X$-faults occurring mid-circuit spread to a high-weight $Z$-error pattern on different qubits. However from a fault-tolerance perspective, this error spread is only of concern if the additional Paulis increase the support of a nontrivial logical operator. Otherwise the final error pattern, and the collection of circuit faults that caused it, are equally correctable by the code's error correction protocol. 

This notion of fault-tolerance is quantified by the \emph{circuit distance} $d_{circ}$ which is the minimum number of faults required in a noisy quantum circuit to produce a logical error, without triggering any syndromes. It's clear that $d_{circ}$ is bounded above by the code distance, and if $d_{circ}=d$ we call the circuit \emph{distance preserving}.

As the following Lemma demonstrates, our diagonal operators $\calB$ meet this distance preserving criterion.

\begin{lem} \label{supmat:lem:DiagGenType1Distance}
    Let $\overline{U}\in \calB$. Then there exists a physical implementation with circuit distance $d_{circ}=d$.
\end{lem}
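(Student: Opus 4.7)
The plan is to analyze the propagation of faults through the depth-1 physical circuit $U(\sigma) = \mat{I & B \\ 0 & I}$ with $B = (\sigma \otimes \sigma^T)\tau_{n_r}$, and apply a CSS-style weight argument to bound the number of faults required to produce an undetectable logical error. The circuit decomposes as a parallel product of $n_r$ single-qubit $S$ gates on the fixed points of $B$ and $(n_r^2 - n_r)/2$ disjoint $CZ$ gates on its two-cycles; because $B$ is a permutation, every physical qubit participates in exactly one such gate. Pushing any collection of $f$ single-qubit Pauli faults to the output via the propagation rules $CZ_{q,q'} X_q CZ_{q,q'}^{-1} = X_q Z_{q'}$, $S_q X_q S_q^{-1} = Y_q$, and $Z \mapsto Z$, yields an error $E = E_X E_Z$ whose $X$-support coincides with the $X$-support of the original faults, and whose $Z$-support is contained in the union of the fault $Z$-support with the $B$-partners of the $X$-fault qubits.

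Next I would exploit the CSS structure of $SHYPS(r)$: if $E$ is an undetectable logical error, then either $E_X$ represents a nontrivial $X$-dressed logical (with minimum-representative weight at least $d$ by \cref{supmat:thm:SHS code definition}) or $E_Z$ represents a nontrivial $Z$-dressed logical. The $X$-logical case directly yields $f \geq d$ via the $X$-support bound. For the $Z$-logical case, I would use the following structural observation: for a minimum-weight representative $\bar{Z}_{k,\ell}$ supported on column $\ell$ at the rows indexed by $\mathrm{supp}(g_k)$, any qubit $q = (i,\ell) \in \mathrm{supp}(\bar{Z}_{k,\ell})$ has $B$-partner $q' = (\sigma(\ell), \sigma^{-1}(i))$ lying outside $\mathrm{supp}(\bar{Z}_{k,\ell})$, since otherwise $\sigma^{-1}(i) = \ell$ would force $i = \sigma(\ell)$, giving $q' = q$ and hence $q$ a fixed point of $B$ rather than lying in a two-cycle. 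Consequently each fault contributes at most one qubit to $\mathrm{supp}(E_Z) \cap \mathrm{supp}(\bar{Z}_{k,\ell})$, so covering the weight-$d$ support of the target logical again requires $f \geq d$. The matching upper bound $d_{circ} \leq d$ is immediate by applying a weight-$d$ dressed logical as single-qubit Pauli faults at the circuit output.

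The principal obstacle will be handling pre-gate $Y$-faults and two-qubit faults at $CZ$ gates uniformly, since their pushed-through forms carry $Z$-support on both a fault qubit and its $B$-partner and can in isolation produce a weight-three output Pauli. The disjointness observation above is precisely what rules out such faults from contributing more than one qubit to the support of a chosen minimum-weight logical representative, so the distance bound survives even in these worst-case fault configurations and $d_{circ} = d$ follows.
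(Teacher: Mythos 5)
Your proposal takes essentially the same approach as the paper. The core observation in both is identical: the $CZ$ gates in the fold-transversal diagonal circuit never connect two qubits lying in the same row or column of the qubit array, which you establish via the partner computation $(i,\ell)\mapsto(\sigma(\ell),\sigma^{-1}(i))$ and the paper establishes by showing $B_{(i,j),(i,k)}=0$ for $j\neq k$ — these are the same calculation viewed from two angles, and both proofs then invoke the line structure of the logical operators from the same theorem to conclude $d_{\mathrm{circ}}=d$.
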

\begin{proof}
By Lemma \ref{supmat:lem:depth 1 diagonal lifts}, there exists $\sig\in S_{n_r}$ such that $\overline{U}$ is implemented by $U(\sig \otimes \sig^T)$ where the off-diagonal entries of $\sig \otimes \sig^T$ determine the presence of physical CZ gates. As discussed above, CZ gates spread $Y/Z$ faults but by the structure of the logical Paulis in $SHYPS(r)$ (see \cref{supmat:thm:logical basis for SHP}), a single $X$-fault spreads to a weight two $Z$-Pauli in the support of a logical operator if and only if the CZ connects qubits in the same row or column. Hence to show that $d_{circ}=d$ it suffices to show that this is not the case.

Well firstly consider two distinct qubits $e_i\otimes e_j$ and $e_i \otimes e_k$, $j \neq k$ in some fixed row $i$. Then
\begin{eqnarray*}
    (\sig \otimes \sig^T)\tau_{{n_r}_{(i,j),(i,k)}} & 
    = & (e_i \otimes e_j) \cdot (\sig \otimes \sig^T)\tau_{n_r}\cdot (e_i \otimes e_k)^T \\
    & = & (e_i \sig e^T_k) \otimes (e_j \sig^T e^T_i) \\
    & = & (e_i \sig e^T_k) \otimes (e_i \sig e^T_j)^T.
\end{eqnarray*}
But $(e_i \sig e^T_k)=1$ if and only if $\sig(k)=i$ and this cannot also hold for $k\neq j$. Hence $(\sig \otimes \sig^T)\tau_{{n_r}_{(i,j),(i,k)}}=0$ for $j\neq k$, confirming that $U(\sig \otimes \sig^T)$ contains no CZ gates joining qubits within a row. The result for columns follows identically.
\end{proof}

In summary, the diagonal operators $\overline{U}(g\otimes g^T)$ are depth-1, distance-preserving (i.e., fault-tolerant) logical gates of $SHYPS(r)$. It's natural therefore to proceed as in Section \ref{supmat:subsec:full cnot group gen} and consider the subgroup of logical diagonal gates they generate.

\subsection{Generating in-block operators }\label{supmat:subsec:DiagGensType1Proof}
In this section we show that the fault-tolerant logical gates $\calB$ generate all diagonal Clifford operators of the SHYPS codes. Furthermore, the compilation of operators using this generating set scales asymptotically optimally. For exact costings of the overhead required to implement a range of diagonal operators we refer the reader to \cref{supmat:tab:diagonal_depth}, but first we focus on proving the following result.

\begin{thm}\label{supmat:thm:DiagGensType1Generate}
    Every in-block logical diagonal Clifford operator (modulo Paulis) in $SHYPS(r)$ is implementable by a sequence of at most $r^2+5r+2$ generators from $\calB$ for $r\geq 4$ and at most $r^2+8r+2$ generators for $r=3$.
\end{thm}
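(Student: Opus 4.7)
The plan is to convert the theorem into an equivalent linear-algebraic decomposition problem and attack it with the machinery developed for \cref{supmat:thm:worst-case-cross-block-cnot}, adapted to a symmetric setting. Modulo Paulis, every logical in-block diagonal Clifford is specified by a symmetric $B \in M_{r^2}(2)$ via the symplectic form $\mat{I & B \\ 0 & I}$, and composition of generators from $\calB$ corresponds to addition of their off-diagonal blocks $(g \otimes g^T)\tau_r$. Hence the theorem is equivalent to showing that every symmetric $B$ admits a decomposition
\[
B = \sum_{i=1}^{w}(g_i \otimes g_i^T)\,\tau_r, \qquad g_i \in GL_r(2),
\]
with $w$ meeting the stated bound. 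Right-multiplying by $\tau_r$ and setting $M := B\tau_r$, this becomes the task of expressing $M$ (which satisfies $M^T = \tau_r M \tau_r$, i.e.\ is ``swap-symmetric'') as a sum $\sum_i g_i \otimes g_i^T$ of self-tensors of invertibles.

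The key algebraic tool is the characteristic-2 identity
\[
A \otimes B^T + B \otimes A^T \;=\; (A+B) \otimes (A+B)^T \,+\, A \otimes A^T \,+\, B \otimes B^T,
\]
which trades a swap-symmetric pair of generic tensor products for three self-tensor products and preserves invertibility whenever $A$, $B$, and $A+B$ all lie in $GL_r(2)$. The problem thus reduces to finding an efficient decomposition of $M$ into generic tensor products $\sum_j A_j \otimes C_j$ whose terms pair naturally under the involution $A \otimes C \mapsto C^T \otimes A^T$ (which fixes $M$), so that the identity can be applied pairwise.

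To carry this out I would follow the blueprint of \cref{supmat:prop:matrix-weight}. Starting from the tensor-rank decomposition $M = \sum_{i,j} E_{i,j} \otimes M_{i,j}$, I would select an invertible spanning set for the first factors using \cref{supmat:lem:invertible-spanning-set1}, invertibilize the second factors via the diagonal and permutation corrections of \cref{supmat:lem:diaginvert,supmat:lem:diagperm,supmat:cor:perminvert}, and finally group the resulting generic tensor products into swap-symmetric pairs and apply the identity above to each pair. The ``swap-diagonal'' contributions $\sum c_{ij} E_{ij} \otimes E_{ji}$ cannot be pair-symmetrized this way (as $E_{ij}$ is singular for $i \neq j$) and must be handled directly, for instance via $(I+E_{ij}) \otimes (I+E_{ji})$ together with a few correction self-tensors, which costs $O(r)$ extra terms. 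Careful accounting then yields $w \leq r^2 + 5r + 2$ for $r \geq 4$; the slightly larger constant $r^2 + 8r + 2$ for $r = 3$ reflects tighter corner cases in the invertibility lemmas when the matrix size is too small to absorb corrections efficiently.

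The main obstacle is avoiding a multiplicative blow-up during pair-symmetrization. Naively feeding $M$ into \cref{supmat:thm:worst-case-cross-block-cnot} and then symmetrizing each summand individually turns one term into three, inflating the bound to $\sim 3(r^2 + r + 4)$, far above the target $r^2 + O(r)$. To prevent this, the swap-symmetry of $M$ must be exploited \emph{inside} the decomposition steps: the invertible spanning sets, diagonal adjustments, and permutation corrections should be chosen compatibly with the $g \mapsto g^T$ involution, so that the resulting generic tensors come in matched pairs which the identity absorbs three-for-two rather than three-for-one. Engineering this compatibility, and showing that it can be done with only an additive $O(r)$ overhead beyond the $r^2$ backbone, is the principal technical hurdle.
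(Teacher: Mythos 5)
There is a genuine gap, and you have actually put your finger on it yourself: your plan never overcomes the multiplicative blow-up from pair-symmetrization. Starting from a generic decomposition $M = \sum_j A_j \otimes C_j$ and grouping terms into orbits of the involution $A \otimes C \mapsto C^T \otimes A^T$, each applied instance of the identity
\[
A \otimes B^T + B \otimes A^T = (A+B)\otimes(A+B)^T + A \otimes A^T + B \otimes B^T
\]
converts two generic terms into three self-tensor terms, so $\sim r^2$ generic terms become $\sim \tfrac{3}{2}r^2$ self-tensors \emph{before} any invertibilization is done. You state that one should "engineer compatibility" with the transpose involution so the corrections collapse, and call this the principal technical hurdle --- but that is precisely the unproven step the theorem requires. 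Over $\gf$ you cannot average a decomposition with its image under the swap (there is no $\tfrac{1}{2}$), so there is no cheap way to symmetrize an arbitrary decomposition of the kind produced by \cref{supmat:thm:worst-case-cross-block-cnot} or its proof machinery, and no argument is given that a compatible choice of spanning sets and correction terms exists with only additive $O(r)$ overhead.

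The idea you are missing is the one the paper uses to sidestep this entirely: \cref{supmat:lem:diagonal decomp}. By reshaping $S$ to the auxiliary symmetric matrix $S' = re(S\cdot\tau)\cdot\tau$ and applying the binary Cholesky factorization $S' = LL^T$, one obtains a decomposition of $S$ \emph{directly} as $\sum_{k=1}^{a}(M_k \otimes M_k^T)\tau$ with $a \leq r^2 + 1$, where each column $l_k$ of $L$ reshapes to $M_k$. These are already self-tensors, so there is never a pairing step and never a $\tfrac{3}{2}$ factor. All that remains is to promote the (possibly singular) $M_k$ to invertibles, which the paper does with the diagonal-correction lemma \cref{supmat:lem:diaginvert}, the $r$-cycle trick \cref{supmat:lem:diagperm}, and the $D(\cdot,\cdot)$ bookkeeping lemmas \cref{supmat:lem:trick1,supmat:lem:trick2,supmat:lem:trick3}; this is where the additive $5r+2$ (and the larger $8r+2$ for $r=3$, which is a small computationally-checked exception) comes from. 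Your proposal correctly identifies the reduction to a sum-of-self-tensors problem and correctly chooses the toolbox for the invertibilization phase, but without the Cholesky/reshape step the $r^2$ leading term is unreachable by the route you sketch.
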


As aforementioned, the diagonal Clifford operators (modulo Paulis) of the $SHYPS(r)$ code are in one-to-one correspondence with symmetric matrices $SYM_{r^2}(2)$. Hence \cref{supmat:thm:DiagGensType1Generate} is an immediate corollary of the following result.

\begin{thm} \label{supmat:thm:diagonal-decomp-invertible}
        Let $S \in SYM_{r^2}(2)$. Then there exist $A_1,\dots, A_p \in GL_r(2)$ for some $p\leq r^2+5r+2$ ($r^2+8r+2$ when $r=3$) such that
    \[
    S = \sum_{i=1}^p (A_i\otimes A_i^T)\tau.
    \]
\end{thm}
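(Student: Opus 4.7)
The plan is to adapt the strategy of \cref{supmat:thm:worst-case-cross-block-cnot} to the present symmetric setting. The central object to decompose is
\[
T \;:=\; S\tau \;\in\; M_{r^2}(2),
\]
which, because $S$ is symmetric and $\tau(X\otimes Y)\tau = Y\otimes X$, satisfies the ``reversal-symmetry'' $\tau T^T\tau = T$. The target decomposition then reads $T = \sum_i A_i\otimes A_i^T$ with $A_i\in GL_r(2)$, and we define a weight $w_\tau(S)$ as the minimum number of such summands. The goal is the bound $w_\tau(S)\leq r^2+5r+2$ (resp.\ $r^2+8r+2$ when $r=3$).

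\textbf{Step 1 (non-invertible decomposition).} Setting $\phi(C) := C\otimes C^T$, the key algebraic identity over $\gf$ is
\[
\phi(A+B) + \phi(A) + \phi(B) \;=\; A\otimes B^T + B\otimes A^T,
\]
which realises any ``$\tau$-symmetrised'' cross term as three $\phi$-evaluations, while $\phi(E_{ii}) = E_{ii}\otimes E_{ii}$ handles the fixed-orbit (diagonal) part directly. Using this, I would decompose $T$ blockwise into its ``row-block'' form $T=\sum_j M_j\otimes E_{jj}$ plus symmetric off-block pieces $M_{ij}\otimes E_{ij}+M_{ij}^T\otimes E_{ji}$, expand each $M_j$ against an invertible spanning set obtained from \cref{supmat:lem:invertible-spanning-set1}, and fold the cross terms into $\phi$-evaluations via the identity above. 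The outcome is a decomposition $T=\sum_{i=1}^t M_i\otimes M_i^T$ with $M_i\in M_r(2)$ and $t\leq r^2 + O(r)$; grouping is what keeps $t$ linear in $r^2$ rather than quartic.

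\textbf{Step 2 (invertibility correction).} To replace each non-invertible $M_i$ by invertible matrices, I would not use \cref{supmat:lem:sum of two} directly, because splitting $M_i = g_1+g_2$ produces a cross term $g_1\otimes g_2^T + g_2\otimes g_1^T$ outside the generator family. Instead, I would apply \cref{supmat:lem:diaginvert} and \cref{supmat:cor:perminvert} to choose, for each offending $M_i$, a structured corrector $C_i$ (diagonal or permutation-supported) such that $M_i+C_i\in GL_r(2)$, so that
\[
M_i\otimes M_i^T \;=\; \phi(M_i+C_i) + \phi(C_i) + \bigl(M_i\otimes C_i^T + C_i\otimes M_i^T\bigr).
\]
The accumulated residual cross terms across all $i$ are themselves $\tau$-symmetric, and by orchestrating the $C_i$ to lie in a low-dimensional space (mimicking the diagonal/permutation tricks in the proof of \cref{supmat:prop:matrix-weight}), these residuals can be absorbed into $O(r)$ further $\phi$-evaluations. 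A case analysis on the size of the initial non-invertible spanning set, paralleling the four regimes in the CNOT argument, yields $w_\tau(S)\leq r^2+5r+2$ for $r\geq 4$; for $r=3$, \cref{supmat:lem:invertible-spanning-set1} gives weaker bounds and the residual cleanup costs more, producing the $r^2+8r+2$ variant.

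\textbf{Main obstacle.} The crux is Step 2. Unlike the free tensor products $g\otimes h$ appearing in Theorem \ref{supmat:thm:worst-case-cross-block-cnot}, where invertibility of the two factors can be fixed independently, here the two slots are tied to the same matrix $A$, so every invertibility correction introduces a $\tau$-symmetric cross term $M\otimes C^T + C\otimes M^T$ that is not of generator form. Keeping the total overhead linear in $r$ requires picking the correctors $C_i$ within a carefully chosen low-dimensional subspace (diagonal matrices plus a single cyclic permutation, as in \cref{supmat:lem:diagperm}) so that the aggregated cross terms collapse to a handful of further $\phi$-evaluations; this is what drives the $+5r+2$ tail and is also where the $r=3$ case behaves qualitatively worse.
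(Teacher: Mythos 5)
Your Step~2 is on the right track and captures the essential difficulty of the paper's proof: because the two tensor slots are tied to the same matrix~$A$, the usual ``split into two invertibles'' trick (\cref{supmat:lem:sum of two}) produces cross terms outside the generator family, so you instead correct each $M_i$ by a low-dimensional (diagonal/permutation) corrector $C_i$ via \cref{supmat:lem:diaginvert} and then absorb the resulting $\tau$-symmetric residuals $M_i\otimes C_i^T + C_i\otimes M_i^T$. That is precisely the paper's use of $S(M_i)=S(A_i)+S(D_i)+D(A_i,D_i)$ followed by expansion of the $D_i$ in the diagonal basis, reduction to a single $D(E_{1,1},B_1')$ via \cref{supmat:lem:trick3}, and an application of \cref{supmat:lem:diagperm} and \cref{supmat:lem:trick2}. (A minor mismatch: the tail $+5r+2$ comes from bounding each $w(D(E_{j,j},B_j))\le 4$ via \cref{supmat:lem:trick2} plus a case analysis on $r$, not from a ``four regimes'' analysis parallel to the CNOT argument; and the $r=3$ worsening is due to two exceptional matrices $B_1'$ for which no suitable $C$ in \cref{supmat:lem:trick2} exists, pushing the weight to~$7$.)

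The genuine gap is your Step~1. You assert an initial non-invertible decomposition $S\tau=\sum_{i=1}^t M_i\otimes M_i^T$ with $t\le r^2+O(r)$ but do not produce one. The naive blockwise splitting $S\tau=\sum_{i,j}E_{ij}\otimes T_{ij}$ does not do the job: the $\tau$-symmetry of $S\tau$ only gives $T_{ij}=T_{ji}^T$, and the pair $E_{ij}\otimes T_{ij}+E_{ji}\otimes T_{ij}^T$ is \emph{not} of the form $A\otimes B^T+B\otimes A^T$ (one would need $T_{ij}^T\otimes E_{ji}$, not $E_{ji}\otimes T_{ij}^T$). So the cross-term identity cannot ``fold'' these blocks into $\phi$-evaluations as claimed, and invoking \cref{supmat:lem:invertible-spanning-set1} at this stage conflates the two phases of the argument. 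The paper resolves exactly this point with \cref{supmat:lem:diagonal decomp}: one transforms $S\tau$ by a reshape map $re$ into a new symmetric matrix $S'=re(S\tau)\tau$, applies the binary Cholesky factorization $S'=LL^T$ of Lempel (giving at most $\mathrm{rank}(S')+1\le r^2+1$ columns), and undoes the reshape column by column, each outer product $l_kl_k^T$ becoming a tensor square $M_k\otimes M_k^T$. Without this reshaping-plus-Cholesky step (or some replacement with the same $O(r^2)$ guarantee), the remainder of your argument has nothing to start from.
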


Similar to the work in \cref{supmat:sec: cnot gate characterisation}, the strategy for proving Theorem \ref{supmat:thm:diagonal-decomp-invertible} is to first show an analogous (and typically stronger) result over (possibly) non-invertible matrices. We then adapt such a decomposition to invertible matrices, while minimising any additional overhead incurred.

\begin{lem}\label{supmat:lem:diagonal decomp}
    Let $S \in SYM_{r^2}(2)$. Then there exist $M_1,\dots, M_a \in M_r(2)$ for some $a\leq r^2+1$, such that
    \[
    S = \sum_{i=1}^a (M_i\otimes M_i^T)\tau.
    \]
\end{lem}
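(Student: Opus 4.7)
The plan is to recast the identity $S = \sum_i(M_i\otimes M_i^T)\tau$ as a standard decomposition $T = VV^T$ for an associated symmetric matrix $T$, and then control the required number of terms using the congruence classification of symmetric bilinear forms over $\gf$. A direct entry-wise computation gives
\[
\bigl[(M\otimes M^T)\tau\bigr]_{(a,b),(c,d)} = M_{a,d}\,M_{c,b}.
\]
Letting $v(M) \in \gf^{r^2}$ denote the vectorization of $M$ (indexed by pairs $(p,q)$), and defining $T\in SYM_{r^2}(2)$ by the reindexing $T_{(a,d),(c,b)} := S_{(a,b),(c,d)}$, the target decomposition becomes $T = \sum_{i=1}^a v(M_i)v(M_i)^T = VV^T$, where the columns of $V$ are the $v(M_i)$. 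Symmetry of $S$ transfers to $T$, so the lemma reduces to the purely linear-algebraic claim that every $T \in SYM_n(\gf)$ with $n = r^2$ admits a decomposition $T = VV^T$ with $V$ having at most $n+1$ columns.

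By the classification of symmetric bilinear forms over $\gf$, every such $T$ is congruent to either $I_\rho \oplus 0_{n-\rho}$ (non-alternating, rank $\rho$) or $H^c \oplus 0_{n-2c}$ (alternating, rank $2c$), where $H = \mat{0 & 1 \\ 1 & 0}$. In the non-alternating case $T = P(I_\rho \oplus 0)P^T$, taking $V$ to be the first $\rho \leq n$ columns of $P$ immediately gives $T = VV^T$ with at most $n$ columns.

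The main obstacle is the alternating case, where the naive three-dyadics-per-hyperbolic-plane decomposition $H = e_1 e_1^T + e_2 e_2^T + (e_1+e_2)(e_1+e_2)^T$ yields $3c$ columns, overshooting the target $2c+1$ once $c \geq 2$. The idea is to share a single column across all $H$-blocks via
\[
H^c = QQ^T + \mathbf{1}\mathbf{1}^T, \qquad W := [Q\,|\,\mathbf{1}],
\]
which gives $H^c = WW^T$ in $2c+1$ columns provided $Q \in GL_{2c}(\gf)$ satisfies $QQ^T = H^c + J_{2c}$ (with $J_{2c}$ the all-ones matrix). The delicate step is showing that such an invertible $Q$ exists, equivalently that $H^c + J_{2c}$ is congruent to $I_{2c}$. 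Non-alternation is immediate from the all-ones diagonal; for full rank I would compute the kernel directly, using the observation that $(H^c + J_{2c})v = 0$ forces $H^c v = (\sum_i v_i)\mathbf{1}$, and since the $k$-th entry of $H^c v$ equals $v_{k'}$ (with $k'$ the $H$-block partner of $k$), the vector $v$ must be constant-valued; the parity check $\sum_i 1 = 2c = 0 \neq 1$ in $\gf$ then rules out $v = \mathbf{1}$, leaving only $v = 0$. Lifting $W$ to size $n \times (2c+1)$ by zero-padding and conjugating by $P$ yields the sought decomposition of $T$, giving $a \leq 2c+1 \leq n+1 = r^2 + 1$ as required.
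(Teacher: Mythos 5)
Your proposal is correct, and it follows the paper's overall strategy: both arguments reshape $S$ into a symmetric matrix indexed so that the tensor-product structure becomes an outer-product structure, then apply a binary Cholesky-style decomposition and translate back. Your $T$ with $T_{(a,d),(c,b)} := S_{(a,b),(c,d)}$ is exactly the paper's $S' = re(S\cdot\tau)\cdot\tau$, and the entry-wise identity $[(M\otimes M^T)\tau]_{(a,b),(c,d)} = M_{a,d}M_{c,b}$ is the content of the paper's $re^{-1}(l l^T\tau) = M\otimes M^T$.

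Where you genuinely diverge is in how the key linear-algebra lemma is discharged. The paper simply cites Lempel (1975) for the fact that a symmetric $T \in SYM_n(\gf)$ factors as $T = LL^T$ with $L \in \gf^{n\times a}$ and $a = \rank T$ (non-alternating) or $\rank T + 1$ (alternating). You reprove this from first principles via the congruence classification of symmetric bilinear forms over $\gf$: the non-alternating case $T \cong I_\rho \oplus 0$ is immediate, and for the alternating case $T \cong H^c \oplus 0$ you avoid the naive three-dyadics-per-hyperbolic-plane decomposition (which would cost $3c$) by showing $H^c + J_{2c}$ is invertible and non-alternating, hence $= QQ^T$, giving $H^c = [Q\,|\,\mathbf{1}][Q\,|\,\mathbf{1}]^T$ with $2c+1$ columns. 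Your kernel computation for $H^c + J_{2c}$ is correct (constancy forced by the block structure, then $2c \equiv 0$ kills the all-ones vector). This buys you a self-contained proof that exposes exactly why the $+1$ overhead occurs (only in the alternating case, to absorb the shared all-ones column), whereas the paper keeps the argument shorter by outsourcing this to Lempel's constructive algorithm, which the paper also reuses later for explicit examples.
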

The proof of \cref{supmat:lem:diagonal decomp} uses the following notions of matrix reshaping
\begin{defn}
    \textbf{(Flatten)} Let $M = (m_{i,j})\in M_{r^2}(2)$. Then the \emph{flatten function} $fl: M_r(2) \longrightarrow \gf^{1\times r^2}$ is defined 
    \[
    fl: \begin{pmatrix}
        m_{1,1} & \dots & m_{1,r} \\

        \vdots & & \vdots \\
        m_{r,1} & \dots & m_{r,r} \\
    \end{pmatrix} \mapsto 
    \begin{pmatrix}
        m_{1,1} & \dots & m_{1,r} & m_{2,1} & \dots & m_{r,r}
    \end{pmatrix}.
    \]
    By considering the action of $fl(M)$ on basis vectors $e_i^T \otimes e_j^T$ and $e^T_j \otimes e^T_i$, we see immediately that $fl(M^T)=fl(M)\cdot \tau$.
    
    \textbf{(Reshape)}
    Given $M \in M_{r^2}(2)$, there exist block matrices $M_{i,j}$ such that 
    \[
    M = \sum_{1\leq i,j\leq r} E_{i,j}\otimes M_{i,j}.
    \]
    Then the re-shape function $re: M_{r^2}(2) \longrightarrow M_{r^2}(2)$ is defined as follows 
    \[
     re: M \mapsto  \sum_{1\leq i,j\leq r} fl(E_{i,j})^T\cdot fl(M_{i,j})
    \]
    I.e., the successive rows of $re(M)$ are formed by flattening successive blocks of $M$
\end{defn}
For example
\[
re: \begin{pmatrix}
    1 & 0 & 1 & 1 \\
    1 & 0 & 1 & 0 \\
    1 & 1 & 1 & 1 \\
    0 & 0 & 0 & 0 \\
\end{pmatrix} \mapsto
\begin{pmatrix}
    1 & 0 & 1 & 0 \\
    1 & 1 & 1 & 0 \\
    1 & 1 & 0 & 0 \\
    1 & 1 & 0 & 0 \\
\end{pmatrix}.
\]
The effect of the  reshape map is more evident on how it acts on basis vectors
\[
(e_{i_1} \otimes e_{i_2})re(M)(e^T_{j_1} \otimes e^T_{j_2}) = (e_{i_1} \otimes e_{j_1})M(e_{i_2}^T \otimes e_{j_2}^T).
\]
I.e., it effectively swaps the indices $i_2$ and $j_1$. This is thus linear and also self-inverse.
\begin{proof}
    (Proof of Lemma \ref{supmat:lem:diagonal decomp}) The proof strategy is to use a binary Cholesky decomposition \cite{Lempel1975} that decomposes symmetric matrices as sums of outer products of vectors with themselves. To account for $\tau$ we don't apply this directly to the desired $S$ but rather a transformed matrix $S' = re(S\cdot\tau)\cdot\tau$ - undoing this reshaping later correspondingly transforms outer products into summands of the desired tensor product form.

    First observe that
    \begin{eqnarray*}
        (e_{i_1} \otimes e_{i_2})S'(e_{j_1}^T \otimes e_{j_2}^T) & = & (e_{i_1} \otimes e_{i_2})re(S\cdot \tau)\cdot \tau(e_{j_1}^T \otimes e_{j_2}^T) \\
        & =  & (e_{i_1} \otimes e_{j_2})S(e_{j_1}^T \otimes e_{i_2}^T) \\
        & = & (e_{j_1} \otimes e_{i_2})S(e_{i_1}^T \otimes e_{j_2}^T) \\
        & = & (e_{j_1} \otimes e_{j_2})S'(e_{i_1}^T \otimes e_{i_2}^T), \\
    \end{eqnarray*}
    where we have unpacked the definitions of $\tau, re$ and used that $S$ is symmetric. Hence, by definition $S'$ is also symmetric.

    Now we apply the work of \cite{Lempel1975} to decompose $S'$ via a binary analogue of the well-known Cholesky decomposition. Let $a=\rank(S')+(\prod (S_{i,i}+1) \mod 2)$ (i.e., the rank of $S'$, with an additional plus one if all diagonal entries are zero, so in particular $a \leq r^2+1$). Then there exists $L \in \gf^{r^2 \times a}$ such that $S' = LL^T$ \cite[Thm. 1]{Lempel1975}.  

    Next observe that the matrix product $LL^T$ may be rewritten as a sum of outerproducts of the columns $l_k$ of $L$ with themselves
    \[
    S' = LL^T = (l_1 \dots l_a)(l_1 \dots l_a)^T = \sum_{k=1}^a l_kl_k^T.
    \]
    Considering a single summand of the form $ll^T$, there exists $M = (m_{i,j})\in M_r(2)$ such that $l^T = fl(M)$ and hence
    \[
    ll^T = \begin{pmatrix}
        m_{1,1} \\ \vdots \\ m_{r,r}
    \end{pmatrix}(m_{1,1},\dots,m_{r,r}) = \sum_{i,j}m_{i,j} fl(E_{i,j})^T fl(M).
    \]
    It then follows that 
    \begin{eqnarray*}
        re^{-1}(ll^T\cdot \tau) & = & \sum_{i,j}m_{i,j} \cdot re^{-1} \left(fl(E_{i,j})^T fl(M)\cdot \tau \right) \\ 
        & = & \sum_{i,j}m_{i,j} \cdot re^{-1} \left(fl(E_{i,j})^T fl(M^T) \right) \\ 
         & = & \sum_{i,j}m_{i,j} E_{i,j} \otimes M^T \\ 
         & = & M \otimes M^T.
    \end{eqnarray*}
    Finally, letting $fl(M_k)=l_k^T$ for each column $l_k$, we have by linearity that
    \begin{eqnarray*}
        S & = & re^{-1}(S'\cdot \tau) \cdot \tau \\
        & = & \sum_{k=1}^a (M_k \otimes M_k^T)\tau.
    \end{eqnarray*}
\end{proof}

\begin{exmp}
    Consider
    \[
    S = \begin{pmatrix}
        0 & 0 & 0 & 0 \\
        0 & 0 & 1 & 1 \\
        0 & 1 & 0 & 0 \\
        0 & 1 & 0 & 1
    \end{pmatrix} \mapsto S' = re(S\cdot \tau)\cdot \tau = \begin{pmatrix}
        0 & 0 & 0 & 1 \\
        0 & 0 & 0 & 1 \\
        0 & 0 & 0 & 0 \\
        1 & 1 & 0 & 1
    \end{pmatrix}.
    \]
    Applying the constructive algorithm in \cite[Sec. 3]{Lempel1975} we produce a minimal binary Cholesky decomposition
    \[
    S' = \begin{pmatrix}
        1 & 1 \\
        1 & 1 \\
        0 & 0 \\
        0 & 1 \\
    \end{pmatrix}
    \begin{pmatrix}
        1 & 1 & 0 & 0 \\
        1 & 1 & 0 & 1 \\
    \end{pmatrix}.
    \]
    Isolating the summand corresponding to the second column, we have
    \begin{eqnarray*}
      re^{-1}(l_2l_2^T \tau) & = & re^{-1} \left(\begin{pmatrix}
        1 & 1 & 0 & 1 \\
        1 & 1 & 0 & 1 \\
        0 & 0 & 0 & 0 \\
        1 & 1 & 0 & 1 \\
    \end{pmatrix} \tau \right) \\ &
     = & 
    \begin{pmatrix}
        1 & 0 & 1& 0 \\
        1 & 1 & 1 & 1 \\
        0 & 0 & 1 & 0 \\
        0 & 0 & 1 & 1 \\
    \end{pmatrix} \\ & = & \begin{pmatrix}
        1 & 1 \\ 0 & 1
    \end{pmatrix} 
    \otimes
    \begin{pmatrix}
        1 & 1 \\ 0 & 1
    \end{pmatrix}^T.  
    \end{eqnarray*}
    
    Handling the first column $l_1$ similarly then yields
    \[
    S = \begin{pmatrix}
        1 & 1 \\ 0 & 0
    \end{pmatrix} 
    \otimes
    \begin{pmatrix}
        1 & 1 \\ 0 & 0
    \end{pmatrix}^T \cdot \tau+ \begin{pmatrix}
        1 & 1 \\ 0 & 1
    \end{pmatrix} 
    \otimes
    \begin{pmatrix}
        1 & 1 \\ 0 & 1
    \end{pmatrix}^T \cdot \tau.
    \]
\end{exmp}

With Lemma \ref{supmat:lem:diagonal decomp} established, we now investigate methods for converting a sum of non-invertible $M_i \otimes M_i^T$ into a similar decomposition comprising only invertible matrices.

To this end, it is useful to set up some notation and collect a few preliminary lemmas.
\begin{defn}
    Let $A,B \in M_r(2)$. We define
    \[
    S(A): = (A\otimes A^T)\tau
    \]
    \begin{eqnarray*}
            D(A,B) &:=& S(A+B) +S(A)+S(B) \\ &=& (A \otimes B^T + B \otimes A^T) \tau
    \end{eqnarray*}
\end{defn}
In this notation, the goal is to find a minimal number of $A_i \in GL_r(2)$ such that $S = \sum_{i} S(A_i)$. Similarly to \cref{supmat:sec: cnot gate characterisation}, it useful to introduce a weight function to track this
 \begin{defn}
       Let $S \in SYM_{r^2}(2)$. The \emph{weight} of $S$, denoted $w(S)$, is the minimal number of $A_i \in GL_r(2)$ such that $S = \sum_{i}^{w(S)} S(A_i)$.
 \end{defn}
 We remark that terms of the form $D(A,B)$ are the natural ``cross-terms'' incurred when splitting non-invertible matrices into sums of invertibles - dealing with these is the main remaining challenge.
 
 The following facts are immediate from the definitions
 
 \begin{lem} \label{supmat:lem:trick1}
     Let $A,B,C_i \in M_r(2)$. Then the following hold
  \begin{enumerate}[label=(\alph*)]
  \item $D(A,B) = D(B,A)$
  \item $D(A,B) = D(A,A+B)$
  \item $D(A,\sum_i C_i) =  \sum_i D(A,C_i). $
  \end{enumerate}
 \end{lem}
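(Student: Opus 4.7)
The plan is to verify each of (a), (b), (c) directly from the closed-form expression $D(A,B) = (A \otimes B^T + B \otimes A^T)\tau$ given in the definition; this bilinear form makes all three identities transparent, so there is no need to route through the original defining formula $D(A,B) = S(A+B) + S(A) + S(B)$.

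Part (a) is immediate from commutativity of $+$: swapping the two summands in $(A \otimes B^T + B \otimes A^T)\tau$ yields the defining expression for $D(B,A)$. For part (b), I would expand $D(A, A+B)$ using $(A+B)^T = A^T + B^T$ together with bilinearity of the tensor product. This produces four summands, two of which are copies of $A \otimes A^T$; these cancel in characteristic $2$, leaving exactly $(A \otimes B^T + B \otimes A^T)\tau = D(A,B)$.

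For part (c), bilinearity of $\otimes$ together with linearity of transpose gives $A \otimes \left(\sum_i C_i\right)^T = \sum_i A \otimes C_i^T$ and $\left(\sum_i C_i\right) \otimes A^T = \sum_i C_i \otimes A^T$, and right-multiplication by $\tau$ distributes over sums, so $D$ is linear in its second argument (and by (a), in its first). I do not anticipate any obstacle here; the only subtlety worth flagging is the reliance on working over $\gf$ in part (b), which is a feature exploited throughout this section. If desired, one could alternatively verify each identity directly from $D(A,B) = S(A+B) + S(A) + S(B)$, but this would amount to reproving bilinearity of $S(\cdot+\cdot) + S(\cdot) + S(\cdot)$ from scratch, whereas the closed form reduces everything to standard properties of $\otimes$.
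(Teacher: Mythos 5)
Your proof is correct and matches the paper's intent: the lemma is stated there as ``immediate from the definitions'' with no written proof, and what you have done is simply spell out the obvious computation from the closed form $D(A,B) = (A\otimes B^T + B\otimes A^T)\tau$, using bilinearity of $\otimes$ and cancellation in characteristic~$2$ for part~(b). No gaps; the route is the same.
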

 
\begin{lem} \label{supmat:lem:trick2}
    Let $A,B \in M_r(2)$ and assume there exists $C\in GL_r(2)$ such that \[A+C,B+C, A+B+C \in GL_r(2).\] Then
    \[
    D(A,B)= S(A+C)+S(B+C)+S(A+B+C)+S(C).
    \]
    In particular $w(D(A,B)) \leq 4.$
\end{lem}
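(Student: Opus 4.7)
The plan is to prove the identity by direct expansion, exploiting characteristic two so that repeated terms cancel. All the relevant algebraic structure is already encoded in the bilinearity of the tensor product on the left factor of $S(X) = (X \otimes X^T)\tau$, so the argument should amount to careful bookkeeping.

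First I would expand $S(A+C)$, $S(B+C)$, $S(A+B+C)$, and $S(C)$ using bilinearity of the map $X \mapsto X \otimes X^T$ (noting that transpose is linear over $\gf$). Each of the first three produces a sum of tensors of the form $P \otimes Q^T$ where $P, Q \in \{A,B,C\}$, postmultiplied by $\tau$. For instance,
\begin{align*}
S(A+C) &= (A \otimes A^T + A \otimes C^T + C \otimes A^T + C \otimes C^T)\tau, \\
S(B+C) &= (B \otimes B^T + B \otimes C^T + C \otimes B^T + C \otimes C^T)\tau,
\end{align*}
and $S(A+B+C)$ expands into the nine terms $P \otimes Q^T$ with $P, Q \in \{A,B,C\}$.

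Next I would sum the four terms and tally the multiplicity of each tensor $P \otimes Q^T$ modulo 2. The diagonal terms $A \otimes A^T$ and $B \otimes B^T$ each appear exactly twice (once in one of the ``two-letter'' $S$'s, once in $S(A+B+C)$) and thus vanish. The term $C \otimes C^T$ appears once in each of the four summands, so it also vanishes. The mixed terms $A \otimes C^T$, $C \otimes A^T$, $B \otimes C^T$, $C \otimes B^T$ each appear exactly twice and vanish as well. Only $A \otimes B^T$ and $B \otimes A^T$ survive, each with multiplicity one, giving
\[
S(A+C) + S(B+C) + S(A+B+C) + S(C) = (A \otimes B^T + B \otimes A^T)\tau,
\]
which is exactly $D(A,B)$ by definition. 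The hypothesis that $A+C$, $B+C$, $A+B+C$, and $C$ all lie in $GL_r(2)$ then guarantees that each of the four summands is a valid generator of the form $S(M)$ with $M$ invertible, so $w(D(A,B)) \leq 4$.

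There is no real obstacle here: the calculation is forced once one writes out the expansions, and the only thing to watch is the mod-2 accounting. The lemma is essentially a combinatorial identity dressed in the notation $S$ and $D$, with the invertibility hypothesis inserted solely so that the four resulting pieces qualify as terms counted by $w(\cdot)$.
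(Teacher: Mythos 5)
Your proof is correct, and it is essentially the argument the paper has in mind, since the paper states Lemmas \ref{supmat:lem:trick1}--\ref{supmat:lem:trick3} without proof as immediate consequences of the definitions. Expanding each $S(\cdot)$ by bilinearity of $X \mapsto X \otimes X^T$ over $\gf$ and tallying multiplicities modulo $2$ is exactly the intended computation: the diagonal terms $A \otimes A^T$, $B \otimes B^T$ each occur twice, $C \otimes C^T$ occurs four times, the $A$--$C$ and $B$--$C$ cross terms each occur twice, and only $A \otimes B^T + B \otimes A^T$ survives, giving $D(A,B)$. One marginally slicker alternative worth noting: by \cref{supmat:lem:trick1}(c), $D(A,C)+D(B,C)+D(A+B,C)=0$, and expanding each $D$ via its definition $D(X,Y)=S(X+Y)+S(X)+S(Y)$ and cancelling yields the identity in one line; but this is the same bilinearity bookkeeping in different packaging, so the two routes are interchangeable.
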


\begin{lem}\label{supmat:lem:trick3}
    Let $A,B \in M_r(2)$ and $P,Q \in GL_r(2)$. Then 
    \[
    (P \otimes Q)S(A)(P^T \otimes Q^T) = S(PAQ^T)
    \]
    \[
    (P\otimes Q)D(A,B)(P^T \otimes Q^T) = D(PAQ^T, PBQ^T)
    \]
    Hence for $M \in M_{r^2}(2)$, we have that \[w(M) = w( (P \otimes Q)M(P^T \otimes Q^T)).\] In particular, $w(D(A,B))=w(D(PAQ^T,PBQ^T))$.
\end{lem}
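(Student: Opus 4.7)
My plan is to prove both identities by direct algebraic computation using two basic tools: the defining property of $\tau$, namely $\tau(X \otimes Y)\tau = Y \otimes X$, which combined with $\tau^2 = I$ yields the commutation rule $\tau(X \otimes Y) = (Y \otimes X)\tau$; and the mixed-product property of tensors, $(A\otimes B)(C\otimes D) = AC \otimes BD$. The weight statement will then follow as a corollary via a standard bijection argument.

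For the first identity, I would substitute $S(A) = (A \otimes A^T)\tau$ and push $\tau$ to the right using the commutation rule:
\begin{align*}
(P\otimes Q)S(A)(Q^T\otimes P^T) &= (P\otimes Q)(A\otimes A^T)\tau(Q^T\otimes P^T)\\
&= (P\otimes Q)(A\otimes A^T)(P^T\otimes Q^T)\tau.
\end{align*}
Applying the mixed-product property consolidates the tensor factors into $(PAQ^T) \otimes (PAQ^T)^T$ postmultiplied by $\tau$, which is precisely $S(PAQ^T)$. The identity for $D$ then follows by $\gf$-linearity: using $D(A,B) = S(A+B) + S(A) + S(B)$ and $P(A+B)Q^T = PAQ^T + PBQ^T$, applying the first identity termwise yields $D(PAQ^T, PBQ^T)$.

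For the weight statement, define the linear map $\Phi_{P,Q}: M_{r^2}(2) \to M_{r^2}(2)$ by $\Phi_{P,Q}(M) := (P\otimes Q)M(Q^T\otimes P^T)$. A direct check shows its inverse is $\Phi_{P^{-1},Q^{-1}}$, so $\Phi_{P,Q}$ is a bijection. Any minimal decomposition $M = \sum_{i=1}^{w(M)} S(A_i)$ with $A_i \in GL_r(2)$ transports under $\Phi_{P,Q}$ to $\Phi_{P,Q}(M) = \sum_i S(PA_iQ^T)$, and since $P,Q \in GL_r(2)$ the matrices $PA_iQ^T$ remain invertible. This yields $w(\Phi_{P,Q}(M)) \leq w(M)$, and applying the same reasoning to $\Phi_{P^{-1},Q^{-1}}$ on $\Phi_{P,Q}(M)$ gives the reverse inequality. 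The particular case $M = D(A,B)$ then recovers the final assertion.

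I expect no substantive obstacle here; the proof is essentially bookkeeping with $\tau$'s commutation relation. The only point requiring care is tracking which tensor factor ends up in which slot as $\tau$ is pushed through the $(Q^T \otimes P^T)$ factor, and verifying that the permuted factors recombine cleanly into the transpose structure of $S(PAQ^T)$. The utility of the lemma is that it lets us freely perform ``change-of-basis'' operations on both sides of a symmetric matrix without changing its weight, which is exploited heavily in the subsequent invertible decomposition arguments.
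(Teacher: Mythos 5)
Your plan is the right one---push $\tau$ through, collapse with the mixed-product rule, then transport decompositions by a linear bijection---but the central computational step is written down incorrectly, and the error is not cosmetic. After your (correct) commutation $\tau(Q^T\otimes P^T)=(P^T\otimes Q^T)\tau$, you are left with $(P\otimes Q)(A\otimes A^T)(P^T\otimes Q^T)$. The mixed-product rule gives
\[
(P\otimes Q)(A\otimes A^T)(P^T\otimes Q^T)=(PAP^T)\otimes(QA^TQ^T),
\]
which is \emph{not} $(PAQ^T)\otimes(PAQ^T)^T=(PAQ^T)\otimes(QA^TP^T)$ unless $P=Q$. A quick check with $r=2$, $Q=I$, $P=I+E_{1,2}$, $A=E_{2,1}$ shows $(PAP^T\otimes QA^TQ^T)\tau$ is not even a symmetric matrix, so it cannot equal $S(\cdot)$ of anything, since $S(B)$ is always symmetric. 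You flagged ``verifying that the permuted factors recombine cleanly'' as the only delicate point, and indeed they do not recombine cleanly as the statement is written.

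What this exposes is that the lemma's stated right conjugation factor $(Q^T\otimes P^T)$ is almost certainly a typo for $(P^T\otimes Q^T)=(P\otimes Q)^T$: only with that choice is $M\mapsto(P\otimes Q)M(P\otimes Q)^T$ a congruence, which is what is needed for it to preserve the set $\mathrm{SYM}_{r^2}(2)$ on which the weight $w$ is defined, and which is exactly how the lemma is used downstream ($w(D(A,B))=w(D(PAQ^T,PBQ^T))$). With the correction, your chain works: commuting $\tau$ through $(P^T\otimes Q^T)$ produces $(Q^T\otimes P^T)$ inside, and the mixed product then yields $(PAQ^T)\otimes(QA^TP^T)\,\tau=S(PAQ^T)$. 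Your $\gf$-linearity argument for $D$ and the transport-of-decomposition argument for $w$ (bijectivity of $\Phi_{P,Q}$ with inverse $\Phi_{P^{-1},Q^{-1}}$, plus $PA_iQ^T\in GL_r(2)$) are both fine once $\Phi_{P,Q}$ is defined with the corrected right factor. So: the gap is the unverified tensor recombination; carrying it out would have caught both your misstep and the upstream typo.
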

\begin{proof} (Proof of Theorem \ref{supmat:thm:diagonal-decomp-invertible})
By Lemma \ref{supmat:lem:diagonal decomp} there exists a decomposition

\begin{equation}\label{supmat:eq:singular-decomp}
    S = \sum^{a}_{i=1} (M_i \otimes M_i^T) \tau,
\end{equation}  
for some $a \leq r^2+1$.
We first consider the addition of diagonal matrices $D_i$ to each component $M_i$ - as seen in \cref{supmat:sec: cnot gate characterisation}, this can be advantageous as the space spanned by the $D_i$ has dimension at most $r$. So, applying \cref{supmat:lem:diaginvert}, there exist invertible matrices $A_i$ and diagonal $D_i$ such that $M_i = A_i + D_i$. It follows that
\begin{eqnarray}
    S & = & \sum_{i=1}^a S(M_i) \\
      & = & \sum_{i=1}^a S(A_i) + S(D_i) + D(A_i, D_i). \label{supmat:eq:expression1}
\end{eqnarray}
Note that each diagonal $D_i$ may itself be written in the standard diagonal basis $E_{j,j} \in M_r(2)$ to yield
\begin{equation}
    S(D_i) = \sum_{j=1}^r d_{i,j} \left(S(E_{j,j})+ D(E_{j,j},\sum_{k>j}d_{i,k}E_{k,k})\right), \label{supmat:eq:expression2}
\end{equation}

for some $d_{i,j}\in \gf$.
Then substituting each decomposition (\ref{supmat:eq:expression2}) into (\ref{supmat:eq:expression1}), expanding each $D(A_i,D_i)$ in the digaonal basis, and collecting terms by repeated use of Lemma \ref{supmat:lem:trick1} (a) and (c) yields
\begin{equation} \label{supmat:eq:expression3}
    S = \sum_{i=1}^a S(A_i) + \sum_{j=1}^r e_{j}S(E_{j,j}) + \sum_{j=1}^r f_{j}D(E_{j,j},B_j),
\end{equation}
for some $e_{i},f_i\in \gf$ and $B_i \in M_r(2)$. Note in particular that the latter two sums contain at most $r$ terms each, and that the $B_j$ are not necessarily invertible. Of course each $E_{j,j} \in M_r(2) \backslash GL_r(2)$ also, but applying Lemma \ref{supmat:lem:diagperm}, there exists a single $r$-cycle permutation matrix $\sigma$ such that each $E_{i,i}+\sigma \in GL_r(2)$. By substituting 
\[
S(E_{i,i}) = S(E_{i,i}+\sigma)+S(\sigma)+D(E_{i,i},\sigma)
\] into (\ref{supmat:eq:expression3}) and again collecting like terms by applying Lemma \ref{supmat:lem:trick1} yields
\begin{eqnarray}\label{supmat:eq:expression4}
    S & = & \sum_{i=1}^a S(A_i)+ \sum_{j=1}^re_jS(E_{j,j}+\sigma) \\
    & + & S(\sigma)\cdot(\sum_j e_j)+\sum_{j=1}^r f_j'D(E_{j,j},B'_j).
\end{eqnarray}

To summarise, up to the addition of $\sum_i^r f_i'D(E_{i,i},B'_i)$, we have that $w(S) \leq r^2+r+2$. It thus remains to handle this final term. Moreover, as there exist permutation matrices $P,Q \in GL_{r}(2)$ such that $E_{j,j} = PE_{1,1,}Q^T$, we may restrict our attention to a single summand $D(E_{1,1},B_1)$ by Lemma \ref{supmat:lem:trick3}.

Now again using Lemma \ref{supmat:lem:trick3}, observe that if $P,Q\in GL_r(2)$ are such that $PE_{1,1}Q^T = E_{1,1}$, then $w(D(E_{1,1},B_1))= w(D(E_{1,1},PB_1Q^T))$. But this holds for a large range of $P$ and $Q$:
\[
\begin{pmatrix}
    1 & p^T \\
    \mathbf{0} & P'
\end{pmatrix} \cdot E_{1,1} \cdot  
\begin{pmatrix}
    1 & \mathbf{0}^T \\
    q & Q'
\end{pmatrix} = E_{1,1},
\]
for all vectors $p,q\in \gf^{r-1}$ and $P',Q' \in GL_{r-1}(2)$.
Left and right multiplication by such $P$ and $Q^T$  allows us to perform a modified Gaussian elimination on $B_1$, reducing it to the form
\[
B_1' = \begin{pmatrix}
    X & \\
    & D 
\end{pmatrix} = 
\begin{pmatrix}
    a & b & 0 & 0 & \\
    c & d & 0 & 0 &\\
    0 & 0 & e & 0 & \\
    0 & 0 & 0 & f & \\
    & & & & D \\
\end{pmatrix},
\]
where $D \in \diag_{r-4}(2)$ is diagonal.

Now if $r=4,5$ we check computationally that for all $B_1'$ there exists $C \in GL_r(2)$ such that $C,C+B_1', C+E_{1,1}, C+B_1'+E_{1,1}$ are invertible.
Thus by Lemma \ref{supmat:lem:trick2}, $w(D(E_{1,1},B_1'))\leq 4$, and by the arguments above, all $w(D(E_{j,j},B_j))\leq 4$.

If $r\geq 6$ then first note by the $r=4$ computations, there exists an invertible $C \in GL_4(2)$ such that $X+C,E_{1,1}+C$ and $X+E_{1,1}+C$ are all invertible. To handle the lower block of dimension $r-4\geq 2$, if $D$ is not full rank, then we choose $C'$ to be any $(r-4)$-cycle permutation matrix and $D+C'\in GL_{r-4}(2)$ by Lemma \ref{supmat:lem:diagperm}. If $D=I_{r-4}$, then it suffices to choose any invertible $C'$ that fixes only the zero vector (and such a matrix always exists for $r-4 \geq 2$). 

Hence letting $C'' = \begin{pmatrix}
    C & \\ & C'
\end{pmatrix}$, we have that $C''$
\[
E_{1,1}+C'' = \begin{pmatrix}
    E_{1,1}+C & \\ & C'
\end{pmatrix}, \] \[
B_1'+C'' = \begin{pmatrix}
    X+C & \\ & D+C' \\
    \end{pmatrix},\]\[
    B_1'+E_{1,1}+C'' = \begin{pmatrix}
    X+E_{1,1}+C & \\ & D+C' \\
    \end{pmatrix}
\]
are all invertible. Thus as above, we have by Lemma \ref{supmat:lem:trick2}, $w(D(E_{1,1},B_1'))\leq 4$, and similarly, all $w(D(E_{j,j},B_j))\leq 4$.
In conclusion, when $r\geq 4$ we collect terms in (\ref{supmat:eq:expression4}), to see that there exist at most $r^2+5r+2$ matrices $A_i \in GL_r(2)$ such that $S = \sum_{i} S(A_i)$.

It remains to consider the case $r=3$. Here we note that provided 
\begin{equation}\label{supmat:eq:counterexamples}
    B_1' \neq \begin{pmatrix}
    0 & 1 & 0 \\ 1 & 0 & 0 \\ 0 & 0 & 1
\end{pmatrix},
\,\, \begin{pmatrix}
    1 & 1 & 0 \\ 1 & 0 & 0 \\ 0 & 0 & 1
\end{pmatrix},
\end{equation}
we may proceed as above, as there exists a choice of $C \in GL_3(2)$ that satisfies the conditions of \cref{supmat:lem:trick2}. However if $B_1'$ is one of the two exclusions in (\ref{supmat:eq:counterexamples}) then no such $C$ exists, and we in fact check computationally that $w(D(E_{1,1},B_1')) = 7$. The $r$ possible terms of this form thus incurs an additional cost of up to $7r$, giving an overall bound of $r^2+8r+2$.
\end{proof}

In summary, we have demonstrated that all logical diagonal Clifford operators of the $SHYPS(r)$ codes may be implemented fault-tolerantly, using generators of (physical) depth 1. Moreover, the total depth required for an arbitrary diagonal operator is bounded above by $r^2+5r+2$ (for $r\geq 4$). Comparing the number of generators \[
\vert \calB \vert = \vert GL_r(2) \vert \sim 2^{r^2} \] to the total number of diagonal operators 
\[\vert SYM_{r^2}(2) \vert = 2^{r^2(r^2+1)/2}, \] we see that our achieved compilation in $O(r^2)$ steps is optimal up to a constant factor. Moreover as $k=r^2$ is the number of logical qubits, this mirrors the case of $k$ un-encoded qubits: given a symmetric matrix $S \in SYM_k(2)$, the depth of the (un-encoded) diagonal Clifford operator is determined by solving an associated graph colouring problem (ignoring diagonal entries, treat $S$ as an adjacency matrix). Colouring the complete graph on $k$ vertices requires $k+1$ colours, and correspondingly, there exist diagonal operators that require depth $k+1$. Hence encoding using the SHYPS codes incurs a negligible depth penalty.

\subsection{Compiling specific operators}
\cref{supmat:thm:DiagGensType1Generate} demonstrates efficient compiling of arbitrary (in-block) diagonal Clifford operators in the SHYPS codes. However, costings of single one- and two-qubit logical gates are also of interest, and we examine these below. For a full breakdown of logical diagonal Clifford costs, we refer the reader to \cref{supmat:tab:diagonal_depth}

Let's first consider $S$ circuits.
\begin{lem}\label{supmat:lem:single S}
    Let $S_{(i,j)}$ be a single qubit logical $S$ operator corresponding to the $(i,j)$-position in the $r \times r$ logical qubit array of $SHYPS(r)$. Then the following holds
    \begin{enumerate}
        \item There exists an implementation of $S_{(i,j)}$ using $6$ ($9$ when $r=3$) generators from $\calB$.
        \item There exists a depth-1 operator in $\calB$ that performs $S_{(i,j)}$ (plus additional diagonal gates, in general)
    \end{enumerate}
\end{lem}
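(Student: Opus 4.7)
The approach is to translate the single-qubit logical $S$ gate into the symmetric-matrix language of (\ref{supmat:eq:diagCliffSymplecticForm}), and then to apply the same tools (\cref{supmat:lem:sum of two}, \cref{supmat:lem:trick1}, \cref{supmat:lem:trick2}) used in the proof of \cref{supmat:thm:diagonal-decomp-invertible}. Under the logical basis of \cref{supmat:thm:logical basis for SHP}, the gate $S_{(i,j)}$ corresponds to the single-entry symmetric matrix $E_{i,i}\otimes E_{j,j}\in SYM_{r^2}(2)$. A direct entrywise computation using the formula $\bigl[(M\otimes N)\tau_r\bigr]_{(a,b),(c,d)}=M_{a,d}N_{c,b}$ yields
\begin{equation*}
S(E_{i,j}) \;=\; (E_{i,j}\otimes E_{i,j}^T)\,\tau_r \;=\; E_{i,i}\otimes E_{j,j},
\end{equation*}
so the task reduces to writing $S(E_{i,j})$ as a short sum of $S(\cdot)$'s evaluated at invertible matrices. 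The obstruction is that $E_{i,j}$ itself is rank one and therefore cannot be used as a generator; the plan is to dissolve this obstruction via a controlled two-step splitting.

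For part 1 the first move is to invoke \cref{supmat:lem:sum of two} and split $E_{i,j}=A+B$ with $A,B\in GL_r(2)$; this mirrors the opening step of the proof of \cref{supmat:thm:diagonal-decomp-invertible}. Expanding via \cref{supmat:lem:trick1} gives the skeleton
\begin{equation*}
S(E_{i,j}) \;=\; S(A)+S(B)+D(A,B),
\end{equation*}
whose first two terms already lie in $\calB$. To decompose $D(A,B)$, I would hunt for a matrix $C\in GL_r(2)$ satisfying the hypothesis of \cref{supmat:lem:trick2}, namely that $C$, $A+C$, $B+C$, and $A+B+C = E_{i,j}+C$ are \emph{all} invertible; that lemma then re-expresses $D(A,B)$ as four further $S(\cdot)$-terms, for a total of six generators as claimed when $r\geq 4$. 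Verifying the existence of such a $C$ is the main obstacle: for $r\geq 4$ there is enough freedom inside $GL_r(2)$ (a rank-one perturbation $A\mapsto A+E_{i,j}$ can drop the rank by at most one, and the block arguments underlying \cref{supmat:lem:diaginvert} and \cref{supmat:lem:diagperm} adapt to guarantee a solution), while for $r=3$ the group $GL_3(2)$ is too small and a brief finite-field enumeration — analogous to the two exceptional configurations (\ref{supmat:eq:counterexamples}) encountered in the proof of \cref{supmat:thm:diagonal-decomp-invertible} — raises the cost of $D(A,B)$ to at most $7$, yielding the weaker bound $2+7=9$.

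For part 2 I would exploit the diagonal-entry formula $\bigl[(g\otimes g^T)\tau_r\bigr]_{(a,b),(a,b)} = g_{a,b}\,g_{a,b} = g_{a,b}$, which shows that a single element $\overline{U}(g)\in\calB$ places a logical $S$ on qubit $(a,b)$ precisely when $g_{a,b}=1$, together with some fixed pattern of additional $S$ and $CZ$ gates on other qubits determined by the off-diagonal entries of $(g\otimes g^T)\tau_r$. It therefore suffices to exhibit a single $g\in GL_r(2)$ with $g_{i,j}=1$: take $g=I_r$ if $i=j$, and $g=I_r+E_{i,j}$ otherwise, which is an involution (since $E_{i,j}^2=0$ when $i\neq j$) and hence invertible. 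The resulting depth-1 generator in $\calB$ realizes $S_{(i,j)}$ plus the residual diagonal gates, which is exactly the content of the second claim.
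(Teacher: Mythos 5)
Your proof is correct and follows essentially the same route as the paper: split $E_{i,j}$ into two invertibles via \cref{supmat:lem:sum of two}, then pay $4$ (or $7$ when $r=3$) for the cross term $D$ via \cref{supmat:lem:trick2} and the permutation-reduction to $E_{1,1}$ used in the proof of \cref{supmat:thm:diagonal-decomp-invertible}; your $D(A,B)$ and the paper's $D(E_{i,j},X)$ are the same object by \cref{supmat:lem:trick1}(b). For part 2 the paper takes $g$ to be the transposition $(i,j)$ rather than your transvection $I+E_{i,j}$, but both have $g_{i,j}=1$ and either works.
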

\begin{proof}
    Recall  that the operator $S_{(i,j)}$ corresponds to the symmetric matrix $B$ with single nonzero entry in the $(i,j)$th diagonal position. In fact, $B=(E_{i,j}\otimes E_{i,j}^T)\tau_r=:S(E_{i,j})$. By \cref{supmat:lem:sum of two} there exists $X \in GL_r(2)$ such that $E_{i,j}+X \in GL_r(2)$ and this yields 
    \[
    S(E_{i,j}) = S(E_{i,j}+X)+S(X)+D(E_{i,j},X).
    \]
    Observing that there exist permutations matrices $P,Q$ such that $PE_{i,j}Q^T=E_{1,1}$, we follow the proof of \cref{supmat:thm:diagonal-decomp-invertible}) to show that $w(D(E_{1,1},X)) \leq 4$ and hence $w(B)\leq 6$, provided $r\geq 4$. If $r=3$ then the associated $X$ above may produce one of the exceptions listed in (\ref{supmat:eq:counterexamples}), yielding $w(B)\leq 7+2 =9$ instead.

    For the second statement, it suffices to choose any $g \in GL_r(2)$ such that $(g\otimes g^T)\cdot \tau_r$ has a nonzero entry in the $(i,j)$th diagonal position. Well
    \begin{eqnarray*}
        (e_i \otimes e_j)\cdot ((g\otimes g^T)\cdot \tau_r)\cdot (e_i \otimes e_j)^T & = & e_ige_j^T\otimes e_jg^Te_i^T \\
        & = & e_ige_j^T\otimes (e_ige_j^T)^T \\
        & = &  (g)_{i,j}.
    \end{eqnarray*}
    So it suffices to choose any invertible $g \in GL_r(2)$ with nonzero $(i,j)$th entry. A simple choice is the permutation matrix corresponding to the 2 cycle $(i,j)$ - note that such a choice minimises the weight of $(g\otimes g^T)\cdot \tau_r$ and thus has a `minimal' logical action (containing $S_{(i,j)}$).
\end{proof}

\begin{cor}\label{supmat:cor:invertible S circuit}
    Let $S^V$ be a logical $S$ circuit, where the $(a,b)$th entry of $V\in M_r(2)$ indicates an $S$ gate performed on the $(a,b)$th qubit. If $V$ is invertible then there exists a depth-1 generator that implements a logical diagonal Clifford operator containing $S^V$.
\end{cor}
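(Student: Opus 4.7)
The plan is to exploit the depth-1 diagonal generators $\calB = \{\mat{I & (g\otimes g^T)\tau_r \\ 0 & I} : g \in GL_r(2)\}$ from \cref{supmat:cor:diagonal generators} and simply choose $g = V$. Since $V \in GL_r(2)$ by hypothesis, this is a valid element of $\calB$, hence implementable by a depth-1 physical circuit preserving $SHYPS(r)$.

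It then remains to verify that the resulting operator ``contains'' the desired $S$ circuit $S^V$. The diagonal Clifford corresponding to symmetric matrix $B = (V\otimes V^T)\tau_r$ places an $S$ gate on logical qubit $(a,b)$ exactly when the $((a,b),(a,b))$-diagonal entry of $B$ is nonzero. But the computation in the proof of \cref{supmat:lem:single S} already establishes
\[
(e_a \otimes e_b)\cdot (V\otimes V^T)\tau_r \cdot (e_a \otimes e_b)^T = V_{a,b},
\]
so the $S$-gate pattern of this generator, read off the diagonal of $B$, is precisely $V$. Therefore $S^V$ appears as the diagonal (phase-gate) component of the logical operator, with additional $CZ$ gates appearing wherever $B$ has off-diagonal support.

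Thus the only substantive input needed is the diagonal-entry identity from the proof of \cref{supmat:lem:single S}, and no obstacle arises because invertibility of $V$ is exactly what makes $g = V$ an admissible choice in $\calB$; no adaptation to handle non-invertible summands (and none of the cross-term manipulations from \cref{supmat:thm:DiagGensType1Generate}) is required. The whole argument should fit in a few lines.
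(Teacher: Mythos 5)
Your proof is correct and is exactly the argument the paper intends: the corollary is stated immediately after \cref{supmat:lem:single S} precisely because the identity $(e_a\otimes e_b)\cdot(g\otimes g^T)\tau_r\cdot(e_a\otimes e_b)^T = g_{a,b}$ proved there, combined with taking $g=V\in GL_r(2)$, reads off the $S$-gate pattern of the depth-1 generator as exactly $V$.
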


\begin{table*}[h!]
    \centering
    \setlength\tabcolsep{5mm}
    {\renewcommand{\arraystretch}{1.2}

    \begin{tabular}{c|c}
        \textbf{Logical Gate} & \textbf{Depth bound} \\ \hline
        In-block diagonal generator $B = (g \otimes g^T)\tau_r$ for $g \in GL_r(2)$ & 1 \\
        Single logical $S$ gate $r \geq 4$ ($r=3$) & 6, $(9)$ \\
        $CZ_{i,j}$ for qubits $i,j$ in distinct or non-distinct code blocks & 4 \\
        Arbitrary in-block diagonal operator $r \geq 4$ ($r=3)$ & $r^2+5r+2$, $(r^2+8r+2)$\\
        Arbitrary cross-block CZ circuit on $b$ blocks & $(b-1)(r^2+r+4)$ \\
        Depth-1 cross-block CZ circuit on two code blocks & $7r+15$ \\
        Arbitrary $b$-block diagonal circuit $r\geq 4$ ($r=3$) & $br^2+r(b+4)+(4b-2)$, ($16b+25$) \\
    \end{tabular}}
    \caption{Depth bounds for logical diagonal operators in $SHYPS(r)$. Recall that a diagonal operator on $b$ blocks of $r^2$ logical qubits is determined (modulo Pauli) by a symmetric matrix $B \in SYM_{br^2}(2)$ - we give these matrices where appropriate. Note that $\tau_r$ is the permutation matrix exchanging qubit labels $(i,j) \longleftrightarrow (j,i)$.}
    \label{supmat:tab:diagonal_depth}
\end{table*}

\begin{table*}[h!]
    \centering
    \setlength\tabcolsep{5mm}
    {\renewcommand{\arraystretch}{1.2}
    \begin{tabular}{c|c}
        \textbf{Logical Gate} & \textbf{Depth bound} \\ \hline
       All qubit Hadamard (modulo logical SWAP) & 4 \\
       All qubit Hadamard & $3r+10$ \\
       Single qubit Hadamard gate $r \geq 4$ ($r=3$) & 8, (11) \\
       Arbitrary Hadamard circuit & $11r + 15$\\
       Arbitrary in-block permutation & $3r+6$ \\
       Arbitrary permutation & $36r^2 + 3r + 6$ 
    \end{tabular}}
    \caption{Depth of logical Hadamard-SWAP operators in $SHYPS(r)$}
    \label{supmat:tab:had_swap_depth}
\end{table*}

We are similarly able to produce low-depth implementations for isolated CZ gates in the SHYPS codes.
\begin{lem}
    Let $(a,b) \neq (c,d)$ be qubit positions in the logical qubit array of $SHYPS(r)$. Then there exists an implementation of $CZ_{(a,b),(c,d)}$ in physical depth 4, using 4 generators from $\calB$.
\end{lem}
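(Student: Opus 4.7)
The plan is to recast $CZ_{(a,b),(c,d)}$ in the matrix-algebra framework of \cref{supmat:subsec:DiagGensType1Proof} and then invoke \cref{supmat:lem:trick2}. I first identify $CZ_{(a,b),(c,d)}$ (modulo Pauli) with the symmetric matrix $B\in SYM_{r^2}(2)$ whose only nonzero entries sit at positions $((a,b),(c,d))$ and $((c,d),(a,b))$. Writing basis row-vectors as $e_{(a,b)}=e_a\otimes e_b$ and expanding
\[
B \;=\; e_{(a,b)}^T e_{(c,d)} + e_{(c,d)}^T e_{(a,b)} \;=\; E_{a,c}\otimes E_{b,d}\; +\; E_{c,a}\otimes E_{d,b},
\]
a direct entry-wise check matches this with $D(E_{a,d},E_{c,b})$. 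By \cref{supmat:lem:trick2}, it then suffices to exhibit a single $C\in GL_r(2)$ for which $C+E_{a,d}$, $C+E_{c,b}$ and $C+E_{a,d}+E_{c,b}$ are also invertible; the resulting four $S(\cdot)$ summands are each realized by a depth-1 generator from $\calB$ via \cref{supmat:cor:diagonal generators}.

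To produce such a $C$, I use \cref{supmat:lem:trick3} to simultaneously conjugate by pairs of permutation matrices. This action permutes the ``row indices'' $(a,c)$ and the ``column indices'' $(d,b)$ independently, allowing a reduction to one of three canonical pairs depending on whether $(a,b)$ and $(c,d)$ share a row, share a column, or neither: namely $(E_{1,1},E_{2,2})$, $(E_{1,1},E_{1,2})$, or $(E_{1,1},E_{2,1})$.

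In each canonical case I exhibit an $r$-cycle $C$ that works. For $(E_{1,1},E_{2,2})$, any $r$-cycle suffices because all three perturbations are diagonal and distinct from $I_r$ (using $r\geq 3$), so \cref{supmat:lem:diagperm} applies. For $(E_{1,1},E_{1,2})$, the natural choice $C=(1,2,\ldots,r)$ works: its first row is $e_r^T$, its remaining rows are $e_1^T,\ldots,e_{r-1}^T$, and adding any combination of $E_{1,1},E_{1,2}$ preserves the nonzero $r$th coordinate of the first row (using $r\geq 3$), so each perturbed matrix keeps full row rank. For $(E_{1,1},E_{2,1})$, where the perturbations now live in the first column, I instead take $C=(1,2,\ldots,r)^{-1}$: columns $2$ through $r$ now span $\langle e_1,\ldots,e_{r-1}\rangle$, the first column is $e_r$, and adding any combination of $E_{1,1},E_{2,1}$ preserves its nonzero $r$th coordinate.

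The main (small) subtlety is that no single $r$-cycle handles all three canonical cases --- the forward cycle zeros out its second row when perturbed by $E_{2,1}$, forcing the column-space argument to be used in place of the row-space one. Once the appropriate $C$ is chosen per case, the four invertible matrices $C,\,C+E_{a,d},\,C+E_{c,b},\,C+E_{a,d}+E_{c,b}$ give four generators in $\calB$ whose product implements $CZ_{(a,b),(c,d)}$ in physical depth $4$, as claimed.
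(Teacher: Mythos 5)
Your proof is correct and takes essentially the same approach as the paper: identify $CZ_{(a,b),(c,d)}$ with $D(E_{a,d},E_{c,b})$ and reduce to \cref{supmat:lem:trick2} via \cref{supmat:lem:trick3}. The only difference is in constructing the witness $C$ for \cref{supmat:lem:trick2} --- the paper conjugates so that both $E_{a,d}$ and $E_{c,b}$ land strictly below the diagonal and then takes $C=I_r$ (so all three perturbations are unitriangular, hence invertible), which avoids the three-way case analysis over canonical pairs and $r$-cycles that you carry out.
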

\begin{proof}
    Let $S = D(E_{a,d},E_{c,b})$. Clearly S is symmetric and contains a single pair of off-diagonal entries. Furthermore, $(a,b) \neq (c,d)$ implies that
    \begin{eqnarray*}
        (e_a\otimes e_b)S(e_c \otimes e_d)^T & = & e_aE_{a,d}e_d^T \otimes e_bE_{b,c}e_c^T \\
        & = & 1.
    \end{eqnarray*}
    Hence the nonzero entries in $S$ correspond to the logical gate $CZ_{(a,b),(c,d)}$.
    Next observe that there exist row and column permutations $P$ and $Q^T$ such that the nonzero entries of $PE_{a,d}Q^T, PE_{c,b}Q^T$ lie below the diagonal. So in particular, adding the identity $I_r$ to these lower diagonal matrices is guaranteed to be invertible. The result then follows by \cref{supmat:lem:trick2} and \cref{supmat:lem:trick3}.
    \end{proof}
    Further results for low-depth implementations of specific diagonal operators are given in \cref{supmat:sec: Hadamard}. There we shall see that for a fixed set of logical qubits, the ability to perform \emph{any} logical diagonal on said qubits, relates to low-depth implementations of arbitrary Hadamard circuits.

\subsection{Multi-block diagonal Cliffords}
To generate logical diagonal Clifford circuits across multiple blocks of our chosen SHYPS code, we rely on the results of Section \ref{supmat:subsec:full cnot group gen} by Hadamard transforming cross-block CNOT operators. Furthermore, due to the tensor product form of the gauge generators in the SHYPS codes, an all-qubit logical Hadamard is particularly easy to implement.

\begin{lem}\label{supmat:lem:all qubit hadamard}
     In $SHYPS(r)$, the physical Hadamard-SWAP operator
     $H_1\cdots H_{n^2}\tau_{n_r}$ implements the logical gate
     $H_1\cdots H_{r^2}\tau_r$.
\end{lem}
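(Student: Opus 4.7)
The plan is to verify this by tracking how the physical Hadamard-SWAP operator acts on the symplectic representations of the gauge generators (to confirm it is a valid logical operator) and on the logical basis (to identify the induced logical action). The key point is that Hadamard exchanges $X$ and $Z$ types while preserving supports, and $\tau_{n_r}$ swaps the two tensor factors of every support vector, so that the composition interacts nicely with the tensor-product structure of $G_X,G_Z,L_X,L_Z$.

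First I would write down the induced action of $H_1\cdots H_{n^2}\tau_{n_r}$ on symplectic vectors: a physical Pauli with symplectic representation $(u\mid v)\in\gf^{n_r^2}\oplus\gf^{n_r^2}$ is sent to $(\tau_{n_r}v\mid \tau_{n_r}u)$, since the transversal Hadamard swaps the $X$ and $Z$ halves and $\tau_{n_r}$ permutes the underlying qubits by $(i,j)\leftrightarrow(j,i)$, which on tensor factors satisfies $\tau_{n_r}(a\otimes b)=b\otimes a$.

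Next I would apply this to the gauge generators. The rows of $G_X=H\otimes I_{n_r}$ have the form $(h_i\otimes e_j\mid 0)$; under the map they go to $(0\mid e_j\otimes h_i)$, which are precisely rows of $G_Z=I_{n_r}\otimes H$. By symmetry, rows of $G_Z$ are sent to rows of $G_X$, so the gauge group is preserved and $H_1\cdots H_{n^2}\tau_{n_r}$ is a valid logical operator of $SHYPS(r)$.

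Then I would apply the same map to the canonical logical basis from Thm.~\ref{supmat:thm:logical basis for SHP}. The $(i,j)$-th row of $L_X=P\otimes G$ is $(P_i\otimes G_j\mid 0)$; under the Hadamard-SWAP it becomes $(0\mid G_j\otimes P_i)$, which is exactly the $(j,i)$-th row of $L_Z=G\otimes P$. Symmetrically, the $(i,j)$-th row of $L_Z$ is sent to the $(j,i)$-th row of $L_X$. Hence on logical Paulis, $\overline{X}_{(i,j)}\mapsto\overline{Z}_{(j,i)}$ and $\overline{Z}_{(i,j)}\mapsto\overline{X}_{(j,i)}$, which is precisely the action of the logical operator $H_1\cdots H_{r^2}\tau_r$ (Hadamard on each logical qubit, followed by the logical coordinate swap). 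Since a Clifford operator is determined up to Pauli by its action on logical Paulis, and Pauli factors can always be absorbed as discussed after Lem.~\ref{supmat:lem:depth 1 diagonal lifts}, this completes the argument.

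There is no real obstacle: once the symplectic form $(u\mid v)\mapsto(\tau_{n_r}v\mid\tau_{n_r}u)$ is identified, everything is a direct computation exploiting the identity $\tau_a(A\otimes B)=(B\otimes A)\tau_a$ applied to the tensor-product descriptions of $G_X,G_Z,L_X,L_Z$. The only care needed is keeping the swap of indices $(i,j)\leftrightarrow(j,i)$ straight to confirm that the physical $\tau_{n_r}$ descends to the logical $\tau_r$ on the $r\times r$ logical qubit array.
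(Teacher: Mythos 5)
Your proof is correct and takes essentially the same approach as the paper's: the paper's proof is just the one-line observation that the claim ``is clear from the definitions of $G_X,L_X$ etc, and the fact that $(P\otimes G)\tau_{n_r}=\tau_r(G\otimes P)$,'' and your argument is precisely that computation carried out in full on the gauge generators and the canonical logical basis.
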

\begin{proof}
    This is clear from the definitions of $G_X,L_X$ etc, and the fact that $(P\otimes G) \tau_{n_r}=\tau_r(G \otimes P).$
\end{proof}

In essence, the all-qubit logical Hadamard is implemented by an all-qubit physical Hadamard. As the following Lemma for cross-block CZ operators demonstrates, the additional SWAP circuits $\tau_n$ and $\tau_r$ can be easily accounted for by adjusting the automorphisms in our depth-1 CNOT generators. Thus cross-block CZ circuits may be implemented analogously to cross-block CNOT operators, by a low-depth sequence of transversal CZ circuits.

\begin{lem} \label{supmat:lem:2-block-cz-cost} Let $A \in M_{r^2}(2)$ and 
    \[
    U_{1,2}(A) = \prod_{i,j=1}^r CZ_{i,r^2+j}^{a_{i,j}}.
    \]
    be the corresponding cross-block circuit of logical CZ gates. Then $U_{1,2}(A)$ is implemented by a sequence of at most $r^2+r+4$ transversal physical CZ circuits.
    \end{lem}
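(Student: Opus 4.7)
My plan is to reduce the claim to the cross-block CNOT case already handled by Theorem~\ref{supmat:thm:cross-block-cnot-operators}, by conjugating with the all-qubit logical Hadamard on block 2. Since $H_l \cdot CZ_{k,l} \cdot H_l = CNOT_{k,l}$, applying the logical all-qubit Hadamard $H_2$ to block 2 transforms $U_{1,2}(A)$ into the cross-block CNOT operator $V_{1,2}(A)$ with off-diagonal block $A$ (control on block 1, target on block 2). In particular $U_{1,2}(A) = H_2 \cdot V_{1,2}(A) \cdot H_2$, and so bounding the implementation cost of $U_{1,2}(A)$ reduces to handling each generator-conjugate individually.

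Next I would invoke Theorem~\ref{supmat:thm:cross-block-cnot-operators} to write $V_{1,2}(A) = \prod_{i=1}^m G_i$ with $m \leq r^2 + r + 4$ and each $G_i \in \calA$ a depth-1 transversal CNOT of the form $\prod_k CNOT_{k,\,\pi_i(k)+n_r^2}$, where $\pi_i = \sigma_{i,1} \otimes \sigma_{i,2}$ is a tensor of classical simplex-code automorphisms. Substituting gives
\[
U_{1,2}(A) \;=\; \prod_{i=1}^m \bigl(H_2\, G_i\, H_2\bigr),
\]
so it suffices to verify that each conjugate $H_2 G_i H_2$ is itself a single transversal physical CZ circuit. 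By Lemma~\ref{supmat:lem:all qubit hadamard}, $H_2$ is implemented physically by $H^{\otimes n_r^2}\,\tau_{n_r}$ acting on block 2. Conjugation by $\tau_{n_r}$ permutes the target-side qubit labels of $G_i$, which preserves the depth-1 transversal CNOT pattern while replacing $\pi_i$ by $\tau_{n_r}\pi_i\tau_{n_r}$. Subsequent conjugation by $H^{\otimes n_r^2}$ on block 2 then converts each $CNOT_{k,\ell+n_r^2}$ into $CZ_{k,\ell+n_r^2}$, yielding a depth-1 transversal CZ circuit.

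The main subtlety, and arguably the only real obstacle, is accounting for the SWAP factor $\tau_{n_r}$ inside the physical implementation of the logical all-qubit Hadamard: one must check that this SWAP does not destroy the transversal structure of the conjugated CNOT generator. This is mechanical once one observes that conjugation by a single-block permutation only relabels the target qubits, so the pattern $\prod_k CNOT_{k,\ell(k)+n_r^2}$ remains a disjoint-pair CNOT (hence depth-1) for the new matching $\ell = \tau_{n_r}\circ\pi_i$, and likewise for the resulting CZ. Chaining the $m \leq r^2+r+4$ transversal CZ circuits then gives the claimed implementation of $U_{1,2}(A)$.
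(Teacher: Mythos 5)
Your approach is the same as the paper's, but you have misread \cref{supmat:lem:all qubit hadamard}: the fold-transversal gate $H^{\otimes n_r^2}\tau_{n_r}$ on a block does \emph{not} implement the pure logical all-qubit Hadamard $H^{\otimes r^2}$; it implements the Hadamard-SWAP $H^{\otimes r^2}\tau_r$. This matters because the extra logical SWAP $\tau_r$ changes which operator you end up building. Tracking it through: conjugating a physical realization of the logical cross-block CNOT with off-diagonal block $A$ by the physical $H^{\otimes n_r^2}\tau_{n_r}$ on the target block yields a transversal CZ circuit that implements the logical operator
\[
\bigl(H^{\otimes r^2}\tau_r\bigr)\;\mathrm{CNOT}(A)\;\bigl(\tau_r H^{\otimes r^2}\bigr) \;=\; H^{\otimes r^2}\,\mathrm{CNOT}(A\tau_r)\,H^{\otimes r^2} \;=\; U_{1,2}(A\tau_r),
\]
not $U_{1,2}(A)$. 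So your circuit, as assembled, implements the wrong CZ operator. Your final paragraph flags the SWAP $\tau_{n_r}$ only as a threat to the \emph{transversal structure} of each conjugated generator (which is indeed fine), but the real issue is its effect on the \emph{logical action}, which silently replaces $A$ by $A\tau_r$.

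The repair is small and is exactly what the paper does: since $\tau_r$ is a fixed invertible (permutation) matrix, $A \mapsto A\tau_r$ is a bijection of $M_{r^2}(2)$, so you should apply \cref{supmat:thm:cross-block-cnot-operators} to the cross-block CNOT with off-diagonal block $A\tau_r$ rather than $A$. The bound is unchanged at $r^2+r+4$, and conjugating the resulting $\leq r^2+r+4$ depth-1 transversal CNOTs by the physical Hadamard-SWAP on block 2 produces transversal CZ circuits whose product now implements $U_{1,2}(A\tau_r\cdot\tau_r)=U_{1,2}(A)$ as desired. (The paper makes this explicit at the symplectic level by inserting $\tau_r^2 = I$ on either side of the CNOT factor, which absorbs the logical SWAP into the matrix $A$.) With that correction, your argument is complete and matches the paper's.
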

\begin{proof}
First observe that  $U$ has symplectic matrix representation 
    \begin{eqnarray*}
          \left(
    \begin{array}{cc|cc}
        I & 0  & 0 & A  \\
        0 & I  & A^T & 0  \\ \hline
        0 & 0  & I & 0  \\
        0 & 0  & 0 & I  \\
            \end{array}
    \right)
    &=&    \left(
    \begin{array}{cc|cc}
        I & 0 & 0 & 0  \\
        0 & 0  & 0 & I  \\   \hline
        0 & 0  & I & 0  \\
        0 & I  & 0 & 0  \\
    \end{array}
    \right)
        \left(
    \begin{array}{cc|cc}
        I & A  & 0 & 0  \\
        0 & I  & 0 & 0  \\ \hline
        0 & 0  & I & 0  \\
        0 & 0  & A^T & I  \\
    \end{array}
    \right)\\ & \cdot &
    \left(
    \begin{array}{cc|cc}
        I & 0 & 0 & 0  \\
        0 & 0  & 0 & I  \\   \hline
        0 & 0  & I & 0  \\
        0 & I  & 0 & 0  \\
    \end{array}
    \right).
    \end{eqnarray*}
    Inserting two instances of $\tau_r^2=1$ then yields   \begin{multline}
    U_{1,2}(A) =
    \left(
    \begin{array}{cc|cc}
        I & 0 & 0 & 0  \\
        0 & 0  & 0 & I  \\   \hline
        0 & 0  & I & 0  \\
        0 & I  & 0 & 0  \\
    \end{array}
    \right)
    \left(
    \begin{array}{cc|cc}
        I & 0 & 0 & 0  \\
        0 & \tau_r  & 0 & 0  \\   \hline
        0 & 0  & I & 0  \\
        0 & 0  & 0 & \tau_r  \\
    \end{array}
    \right)
    \\ \cdot
    \left(
    \begin{array}{cc|cc}
        I &  A\cdot \tau_r  & 0 & 0  \\
        0 & I  & 0 & 0  \\ \hline
        0 & 0  & I & 0  \\
        0 & 0  & (A\cdot \tau_r)^T & I  \\
    \end{array}
    \right)
    \left(
    \begin{array}{cc|cc}
        I & 0 & 0 & 0  \\
        0 & \tau_r  & 0 & 0  \\   \hline
        0 & 0  & I & 0  \\
        0 & 0  & 0 & \tau_r  \\
    \end{array}
    \right)
    \\ \cdot
    \left(
    \begin{array}{cc|cc}
        I & 0 & 0 & 0  \\
        0 & 0  & 0 & I  \\   \hline
        0 & 0  & I & 0  \\
        0 & I  & 0 & 0  \\
    \end{array}
    \right).
    \end{multline}
So this is exactly a cross-block CNOT circuit given by off-diagonal matrix $A\cdot \tau_r$, conjugated by the logical Hadamard-SWAP operator $H^{\otimes r^2}\tau_r$ on the second code block. In circuit form:
    \[
\begin{tikzpicture}
\begin{yquantgroup}
  \registers{
         qubit {} q[2];
      }
      \circuit{
      slash q[0];
      slash q[1];
      h q[1];
      box {$\tau_r$} q[1];
      box {$A \tau_r$} q[1] | q[0];
      box {$\tau_r$} q[1];
      h q[1];
   }
\end{yquantgroup}
\end{tikzpicture}
\]
Combining \cref{supmat:thm:worst-case-cross-block-cnot} and
\cref{supmat:lem:all qubit hadamard}, there exists $a\leq r^2+r+4$ automorphisms $\pi_i = \sig_{i_1}\otimes \sig_{i_2}\in \aut{SHYPS(r)}$ such that $U_{1,2}(A)$ is implemented by the sequence of conjugated transversal CNOT operators 
\begin{eqnarray}
     (I_{n_r^2} \otimes H^{\otimes n_r^2}\tau_{n_r}) 
 & \cdot & \prod_{i=1}^a \prod_{j=1}^{n_r^2} CNOT_{j,\pi_i^{-1}(j)+n_r^2} \\ &\cdot & ( I_{n_r^2} \otimes \tau_{n_r} H^{\otimes n_r^2}), \label{supmat:eq:conjugated-cnot}
\end{eqnarray}

Now, conjugating by $\tau_{n_r}$ permutes the targets of the CNOT gates, while the Hadamard action transforms every physical CNOT into a physical CZ gate. Hence letting $\rho_i= (\sig_{i_1} \otimes \sig_{i_2})\tau_{n_r}$ we have that (\ref{supmat:eq:conjugated-cnot}) equals
\[
\prod_{i=1}^a \left( \prod_{j=1}^{n_r^2} CZ_{j,\rho_i^{-1}(j)+n_r^2} \right).
\]
In particular, each bracketed term above is a logical gate of $SHS(r)$ that is evidently depth-1 and transversal (and therefore fault-tolerant).
\end{proof} 
The following corollary is immediate by following the proof above, but inserting the improved depth bound \cref{supmat:lem:single-cnot} for isolated CNOT gates.
\begin{cor}
    Any single logical CZ operator between two SHYPS code blocks is implementable fault-tolerantly in depth at most 4.
\end{cor}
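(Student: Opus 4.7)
The plan is to directly specialize the proof of Lemma~\ref{supmat:lem:2-block-cz-cost} to the single-gate case. A single logical cross-block CZ gate $CZ_{(a,b),(c,d)}$ corresponds to $U_{1,2}(A)$ where $A \in M_{r^2}(2)$ has exactly one nonzero entry, namely $A = E_{a,c}\otimes E_{b,d}$. Following the symplectic manipulation in the proof of Lemma~\ref{supmat:lem:2-block-cz-cost}, this CZ circuit decomposes as a cross-block CNOT circuit with off-diagonal block $A\cdot \tau_r$, conjugated on the second code block by the all-qubit logical Hadamard--SWAP operator, which is itself implemented at the physical level by Lemma~\ref{supmat:lem:all qubit hadamard}.

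Since $\tau_r$ is a permutation, $A\cdot \tau_r$ still has a single nonzero entry and therefore represents a single logical cross-block CNOT gate. By Lemma~\ref{supmat:lem:single-cnot}, this CNOT is implementable fault-tolerantly in depth at most $4$ using generators drawn from $\calA \cup \calA^T$. Conjugating each depth-1 transversal physical CNOT layer of that implementation by the all-qubit Hadamard--SWAP turns each CNOT into a CZ on the corresponding pair of qubits, yielding a depth-1 transversal physical CZ layer. Thus the resulting CZ circuit has depth at most $4$, as claimed.

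The only subtle step in this plan is ensuring that the outer Hadamard--SWAP conjugation layers do not inflate the depth count. In Lemma~\ref{supmat:lem:2-block-cz-cost} these layers frame the whole CNOT-to-CZ translation rather than being interleaved between individual layers, so they contribute only once at the start and end and are absorbed into the same bookkeeping as before; the logical-cycle depth therefore equals the CNOT depth, namely $4$. No further obstacle is expected, as all other ingredients (single-CNOT depth, transversal CNOT/CZ structure, and existence of the fold-transversal Hadamard--SWAP) are already established in the cited lemmas.
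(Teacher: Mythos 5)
Your proof is correct and is essentially the paper's intended argument, which is left implicit as an ``immediate corollary'' of Lemma~\ref{supmat:lem:2-block-cz-cost} once the single-CNOT bound of Lemma~\ref{supmat:lem:single-cnot} is substituted. You correctly note that $A\tau_r$ remains an elementary tensor, that the four $\calA$-generators from the rank-one tensor decomposition become four transversal CZ layers after conjugation, and that the outer Hadamard--SWAP conjugation is absorbed into the layers themselves rather than contributing extra depth, exactly as in the proof of Lemma~\ref{supmat:lem:2-block-cz-cost}.
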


Generalising the above operator to $U_{i,j}(A_{i,j})$, it follows that an arbitrary $b$-block diagonal Clifford consisting of only cross-block CZ gates may be written
\[
\prod_{1\leq i < j \leq b} U_{i,j}(A_{i,j})= 
\left(
    \begin{array}{cccc|cccc}
        I &  &  &  & 0 & A_{1,1} & \cdots & A_{1,b} \\
         & I &  &   & A_{1,1}^T & 0 & \cdots & A_{2,b} \\
       &  & \ddots &   & \vdots & \vdots & \ddots & \vdots \\
         &  &  & I  & A_{1,b}^T & A_{2,b}^T & \cdots & 0 \\ \hline
         &  &  &  &I &  &  &  \\
         &  &  &   &  & I &  &  \\
         &  &  &   &  &  & \ddots & \\
         &  &  &  &  & &  & I \\
    \end{array}
    \right).
\]
We now cost this operator in a similar manner to \cref{supmat:sec: cnot gate characterisation}.

\begin{lem} \label{supmat:lem:multi-block-cz}
	For an even positive integer $b$, the diagonal operator $U = \prod_{1\leq i < j \leq b} U_{i,j}(A_{i,j})$ on $b$ blocks requires at most $b/2\cdot(b-1)$ block-to-block operators $U_{i,j}(A)$, implemented over at most $b-1$ rounds.
\end{lem}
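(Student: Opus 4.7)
The plan is to reduce this to a classical edge-coloring problem on the complete graph $K_b$. First, observe that by definition of $U_{i,j}(A_{i,j})$, each such operator acts nontrivially only on the qubits belonging to code blocks $i$ and $j$. Hence two operators $U_{i,j}(A_{i,j})$ and $U_{k,l}(A_{k,l})$ with $\{i,j\} \cap \{k,l\} = \emptyset$ touch disjoint sets of physical qubits and can therefore be executed in the same round at the logical level (and, using \cref{supmat:lem:2-block-cz-cost}, also at the physical level, since the underlying transversal CZ rounds act on disjoint qubits).

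Second, the total number of operators appearing in $U$ is exactly the number of unordered pairs $\{i,j\} \subseteq \{1,\dots,b\}$, i.e. $\binom{b}{2} = b(b-1)/2$, which is the claimed count.

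Third, I would schedule these $\binom{b}{2}$ operators by associating each operator $U_{i,j}$ with the edge $\{i,j\}$ of $K_b$. A round of simultaneously executable operators then corresponds to a matching in $K_b$, and the minimal number of rounds is the chromatic index $\chi'(K_b)$. Since $b = 2^l$ is even, the standard round-robin $1$-factorization of $K_b$ partitions its edge set into exactly $b-1$ perfect matchings, each of size $b/2$. Explicitly, fix block $b$ at the centre of a cycle through blocks $1,\dots,b-1$; round $t \in \{0,\dots,b-2\}$ pairs block $b$ with block $t+1$ and pairs blocks $((t+i) \bmod (b-1))+1$ with $((t-i) \bmod (b-1))+1$ for $i = 1,\dots,b/2 - 1$.

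Executing the matchings sequentially therefore implements $U$ in $b-1$ rounds, each consisting of $b/2$ parallel operators of the form $U_{i,j}(A_{i,j})$, for a total of $(b-1)\cdot b/2$ block-to-block operators. The main (only) obstacle is justifying that operators on disjoint block pairs really can be applied in a single round; but since each $U_{i,j}(A)$ is realised in \cref{supmat:lem:2-block-cz-cost} by transversal physical CZ circuits supported on the two involved blocks, disjointness at the block level translates to disjointness at the physical qubit level, so the matchings induce genuinely parallel physical rounds.
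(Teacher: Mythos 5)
Your proof is correct, but it takes a genuinely different route from the paper's. The paper proceeds by induction on the (power-of-two) number of blocks: it splits $U$ into operators acting within the first $b/2$ blocks, operators within the last $b/2$ blocks, and the $(b/2)^2$ operators that cross between the two halves; the first two families are handled in parallel by the inductive hypothesis in $b/2-1$ rounds, and the cross terms are scheduled in $b/2$ further rounds via a cyclic shift. You instead observe directly that the block-to-block operators correspond to the edges of $K_b$, that operators on disjoint block pairs parallelize (because the physical transversal CZ circuits from \cref{supmat:lem:2-block-cz-cost} act on disjoint physical qubits), and that for $b$ even the chromatic index $\chi'(K_b)=b-1$, realized explicitly by the round-robin $1$-factorization. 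Your argument is cleaner, gives the explicit parallel schedule in one shot, and works for any even $b$ rather than just $b=2^l$ (the paper restricts to $b=2^l$ only for presentational convenience). The paper's recursive decomposition is, however, what meshes naturally with the analogous recursive depth accounting in \cref{supmat:lem:b blocks} for CNOT operators, and its cyclic-shift sub-scheduling of the cross terms is essentially a degenerate form of your round-robin. Both are valid; your key step of justifying block-level parallelism at the physical level is exactly the subtlety that needed addressing, and you handle it correctly.
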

\begin{proof}
	The number of operators required is clear and so we need only prove the time constraint. 
	Consider a graph with $b$ vertices and an edge between any pair of vertices $i,j$ (with $i<j$) for which $A_{i,j} \neq 0$. The chromatic index of this graph then corresponds to the number of rounds diagonal block-to-block operators required to implement $U$.
	Note that the chromatic index of this graph is upper bounded by that of a complete graph with $b$ vertices. The result then follows immediately from the fact that a complete graph with $b$ vertices has chromatic index $b-1$ for even $b$.
\end{proof}
Note that for odd $b$, one needs $b$ rounds rather than $b-1$, as the chromatic index of the complete graph with an odd number of vertices is $b$. The number of required rounds of odd $b$ is thus as if we were considering the next even number of even code blocks (i.e., $b+1$), but then leaving one unused.

The following Corollary is immediate by combining \cref{supmat:lem:multi-block-cz} and \cref{supmat:lem:2-block-cz-cost}
\begin{cor}\label{supmat:cor:cross-block-diagonal-depth}
For $b$ even, the diagonal operator $U = \prod_{1\leq i < j \leq b} U_{i,j}(A_{i,j})$ on $b$ blocks is implemented in depth at most $(b-1)(r^2+r+4)$.
\end{cor}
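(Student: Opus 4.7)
The plan is to combine the two preceding results directly: \cref{supmat:lem:multi-block-cz} controls the number of rounds of cross-block operations required, while \cref{supmat:lem:2-block-cz-cost} controls the physical depth of each such round. Concretely, I would begin by invoking \cref{supmat:lem:multi-block-cz} to express the diagonal operator $U = \prod_{1 \leq i < j \leq b} U_{i,j}(A_{i,j})$ as a schedule of at most $b-1$ rounds, where the operators appearing within a single round act on pairwise disjoint pairs of code blocks.

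The next step is to observe that since distinct $U_{i,j}(A)$ within a round act on disjoint pairs of code blocks, their physical implementations from \cref{supmat:lem:2-block-cz-cost} — which consist of transversal physical CZ circuits conjugated by a Hadamard-SWAP on one of the two blocks — do not share any physical qubits. Consequently they can be executed simultaneously, and the physical depth of an entire round is governed by the worst individual block-to-block CZ operator, which is at most $r^2 + r + 4$ by \cref{supmat:lem:2-block-cz-cost}.

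Multiplying the $b-1$ rounds by the per-round depth bound of $r^2 + r + 4$ then yields the stated overall depth of $(b-1)(r^2+r+4)$. The only subtle point — hardly an obstacle, but worth flagging — is to ensure that the conjugating Hadamard-SWAP layers appearing in \cref{supmat:lem:2-block-cz-cost} for different operators in the same round do not collide. This is immediate from the disjointness of the block pairs guaranteed by the scheduling in \cref{supmat:lem:multi-block-cz}: each code block in a given round is touched by at most one $U_{i,j}$ operator, so its Hadamard-SWAP prelude and postlude are unambiguous and can be absorbed into that operator's physical depth of $r^2+r+4$ without inflating the round depth.
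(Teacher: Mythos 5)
Your proof is correct and matches the paper's approach exactly: the paper simply states that the corollary is immediate by combining \cref{supmat:lem:multi-block-cz} and \cref{supmat:lem:2-block-cz-cost}, which is precisely what you do. One small clarification worth making: the ``subtle point'' about colliding Hadamard-SWAP preludes is actually a non-issue in a stronger sense than you state, because the proof of \cref{supmat:lem:2-block-cz-cost} shows the conjugating Hadamard-SWAP layers cancel entirely, leaving a pure sequence of $r^2+r+4$ transversal physical CZ circuits; so each $U_{i,j}(A_{i,j})$ contributes no Hadamard layers at all to the final circuit, and the disjointness of the block pairs within a round (guaranteed by the cyclic-shift schedule in \cref{supmat:lem:multi-block-cz}) is all that is needed to run them simultaneously.
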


To complete this section on diagonal Clifford operators, we combine our work on in-block and cross-block operators in \cref{supmat:thm:overall-diagonal-cost}. Note that the results of this section are also presented in \cref{supmat:tab:diagonal_depth}, while \cref{supmat:sec: Hadamard} contains additional work on diagonal operators that are useful for compiling Hadamard circuits.

\begin{thm}\label{supmat:thm:overall-diagonal-cost}
   Any logical diagonal Clifford operator (modulo Paulis) on $b$ blocks of $SHYPS(r)$, is implemented in depth at most 
   \[br^2+(b+4)r+(4b-2), \,\, r\geq 4, \]
   \[16b+25 = br^2+(b+7)r+(4b-2), \,\, r=3,\]
   
   for $b$ even, and
   \[(b+1)r^2+(b+5)r+(4b+2), \,\,r\geq 4, \]
   \[16b+41 = (b+1)r^2+(b+8)r+(4b+2), \,\,r=3, \]
   for $b$ odd.
   
   Furthermore, such implementations require no auxiliary qubits, and thus have zero space cost.
\end{thm}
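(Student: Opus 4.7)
The plan is to split the symplectic representation of the diagonal Clifford into the $b$ in-block diagonal parts and the $\binom{b}{2}$ cross-block CZ circuits, and then independently invoke the in-block and cross-block bounds already proved.

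More precisely, any $b$-block diagonal operator corresponds (modulo Paulis) to a symmetric matrix $B\in SYM_{br^2}(2)$, which I would block-decompose as
\[
B=\begin{pmatrix} B_{1,1} & B_{1,2} & \cdots & B_{1,b}\\ B_{1,2}^T & B_{2,2} & \cdots & B_{2,b}\\ \vdots & & \ddots & \vdots \\ B_{1,b}^T & B_{2,b}^T & \cdots & B_{b,b}\end{pmatrix},
\]
with $B_{i,i}\in SYM_{r^2}(2)$ and $B_{i,j}\in M_{r^2}(2)$ for $i<j$. Since the full diagonal Clifford subgroup is abelian, the corresponding operator factors as $U = \bigl(\prod_{i} U_i(B_{i,i})\bigr)\cdot\bigl(\prod_{i<j} U_{i,j}(B_{i,j})\bigr)$, where $U_i$ is the in-block diagonal operator on block $i$ and $U_{i,j}$ is the cross-block CZ circuit introduced in \cref{supmat:lem:2-block-cz-cost}. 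I would then schedule the in-block factors first, all in parallel, followed by the cross-block factors.

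For the in-block part, the $U_i(B_{i,i})$ act on pairwise disjoint sets of physical qubits, so they can be executed simultaneously. By \cref{supmat:thm:DiagGensType1Generate}, the depth of any single $U_i(B_{i,i})$ is at most $r^2+5r+2$ for $r\ge 4$, and at most $r^2+8r+2$ for $r=3$, so the full in-block layer achieves the same bound independently of $b$. For the cross-block part, \cref{supmat:cor:cross-block-diagonal-depth} provides a depth of at most $(b-1)(r^2+r+4)$. Summing yields
\[
(r^2+5r+2)+(b-1)(r^2+r+4)=br^2+r(b+4)+(4b-2)
\]
in the $r\ge 4$ case, with the $r=3$ case obtained identically by substituting the larger in-block bound. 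Finally, since neither the in-block generators from $\mathcal{B}$ nor the cross-block transversal CZ circuits used in \cref{supmat:lem:2-block-cz-cost} require any auxiliary qubits, the overall space cost is zero.

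The routine obstacle is a bookkeeping one, namely checking that the block-decomposition of $B$ really does correspond to the claimed product factorisation over the symplectic group and that the parallel in-block layer is physically realisable on disjoint code blocks; both are immediate from the tensor-product structure of the code blocks and the commuting nature of diagonal Cliffords. The only conceptually nontrivial input is the in-block bound of \cref{supmat:thm:DiagGensType1Generate} together with \cref{supmat:cor:cross-block-diagonal-depth}, which between them do all of the real work.
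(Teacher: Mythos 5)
Your proposal is correct and follows essentially the same approach as the paper: decompose the symmetric matrix into in-block and cross-block parts using the abelian structure of the diagonal Clifford subgroup, bound the parallel in-block layer via \cref{supmat:thm:DiagGensType1Generate}, bound the cross-block layer via \cref{supmat:cor:cross-block-diagonal-depth}, and sum. Your explicit block-matrix decomposition of $B$ is a slightly more detailed presentation of what the paper states in prose, but the argument, key lemmas invoked, and bookkeeping are identical.
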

\begin{proof}
    As the subgroup of diagonal Clifford operators is abelian, we may implement an arbitrary operator as a sequence of parallel in-block gates, followed by any necessary cross-block gates. We therefore accrue an initial circuit of depth $r^2+5r+2$ (or $r^2+8r+2$ when $r=3)$ for in-block gates by \cref{supmat:thm:DiagGensType1Generate}, regardless of block count. This is then followed by additional depth of at most $(b-1)(r^2+r+4)$ for even $b$ and $b(r^2+r+4)$ for odd $b$ by \cref{supmat:cor:cross-block-diagonal-depth}.
    
    It's clear that implementing diagonal operators in SHYPS codes incurs zero space overhead: in-block diagonal operators are implemented directly using logical generators from $\calB$, whereas cross-block logical CZ gates mirrors the case of cross-block CNOT operators seen in \cref{supmat:sec: cnot gate characterisation}.
\end{proof}

\section{Hadamard-SWAP operators in SHYPS codes}\label{supmat:sec: Hadamard}
To complete the analysis of logical Clifford implementations in the SHYPS codes, we need lastly to characterize the Hadamard gates. However as \cref{supmat:lem:all qubit hadamard} suggests, it is natural to consider Hadamard circuits in conjunction with SWAP circuits as together these operators generate a subgroup of $Sp_{2k}(2)$ called the \emph{signed-symmetric group} or \emph{hyperoctahedral group}.

We find that logical permutations within a single SHYPS code block present significant depth savings compared to generic CNOT circuits ($O(r)$ versus $O(r^2)$). Moreover, in \cref{supmat:subsec:cross-block-perms} we show that the depth of multi-block permutations does not grow with the number of blocks $b$. Lastly, we examine arbitrary Hadamard circuits in \cref{supmat:subsec:hadamard-costs}. Here we combine our work on in-block permutations with certain depth-1 diagonal circuits, to implement Hadamard circuits in depth $O(r)$.

\subsection{In-block permutations}

First recall \cref{supmat:lem:depth 1 cnot circuits} that for any choice of $g_1,g_2 \in GL_{r}(2)$, the logical CNOT circuit $\left( \begin{matrix}
       I & g_1\otimes g_2 \\ 0 & I 
    \end{matrix} \right) 
    \in GL_{2r^2}(2)$ may be implemented by a depth-1 physical CNOT circuit. Moreover, these depth-1 circuits generate the full algebra of upper-right cross block CNOT operations
    \[ \{ \mat{I & A \\ 0 & I} \,:\,A \in M_{r^2}(2)\}.\]
To achieve the full algebra, we generically require $O(r^2)$ depth-1 generators but let's restrict our attention to certain permutations in $GL_{r^2}(2)$: For $\sig_1,\dots,\sig_r \in S_r$ we define
\begin{eqnarray*}
    A(\sig_i) :=\mat{ 
\sigma_1 & & & \\
 & \sigma_2 & &\\
 & & \ddots & \\
 & & & \sigma_r \\
}&=&
\sum_{i=1}^r E_{i,i} \otimes \sig_i,
\end{eqnarray*}
and then for an additional $\upsilon \in S_r,$ consider the product 

\begin{equation} \label{supmat:eq:generating-in-block-perm}
    A(\sig_i)\cdot (\upsilon \otimes I) \in GL_{r^2}(2).
\end{equation}

Now clearly these elements are permutations of the $r^2$ logical qubits, but moreover they form a subgroup isomorphic to $S_r^r \rtimes S_r \leq S_{r^2}$ - this can be observed by checking that
\begin{eqnarray*}
(\upsilon^{-1} \otimes I)(\sum_i E_{i,i}\otimes \sig_i)(\upsilon \otimes I) &=& \sum_i E_{\upsilon^{-1}(i),\upsilon^{-1}(i)}\otimes \sig_i \\
&=& \sum_i E_{i,i}\otimes \sig_{\upsilon(i)}.
\end{eqnarray*}
To understand exactly how this subgroup acts, first consider the action of $A(\sig_i)$:

\begin{eqnarray*}
    (\sum_{j,k} \alpha_{j,k} e_j \otimes e_k)A(\sig_i) & = & (\sum_{j,k} \alpha_{j,k} e_j \otimes e_k)\sum_{i=1}^r E_{i,i} \otimes \sig_i \\
    & = & \sum_{i=1}^r \sum_{k} \alpha_{i,k} e_i \otimes e_k\sig_i \\
    & = & \sum_{i=1}^r \sum_{k} \alpha_{i,k} e_i \otimes e_{\sig^{-1}_i(k)}.
\end{eqnarray*}

In particular for each chosen $\sig_i$ in $A$, there is a corresponding permutation action on the basis vectors $e_i \otimes e_1,\dots,e_i\otimes e_r$. In qubit terms, $A(\sigma_i)$ performs a permutation $\sig_i$ along each row of the $r\times r$ grid. Similarly, it's easy to check that the action of $(\upsilon \otimes I)$ on basis vectors induces a permutation of the rows of qubits.

Expanding the product (\ref{supmat:eq:generating-in-block-perm}) we see that
\begin{equation}\label{supmat:eq:generating-in-block-perm2}
    A(\sig_i)\cdot (\upsilon \otimes I) = \sum_{i=1}^r E_{i,\upsilon^{-1}(i)} \otimes \sig_i.
\end{equation}
Since $E_{i,\upsilon^{-1}(i)} \cdot \upsilon^{-1} = E_{i,i}$, it follows from \cref{supmat:lem:diagperm} that $E_{i,\upsilon^{-1}(i)} \cdot \upsilon^{-1} + \rho \in GL_r(2)$ for any $r$-cycle $\rho$ and hence $E_{i,\upsilon^{-1}(i)} + \rho\upsilon \in GL_r(2)$.
Collecting terms in \cref{supmat:eq:generating-in-block-perm2} then yields
\begin{eqnarray*}
    A(\sig_i)\cdot (\upsilon \otimes I) & = & \sum_{i} (E_{i,\upsilon^{-1}(i)}+\rho\upsilon) \otimes \sig_i \\
    & + &\rho\upsilon \otimes (\sum_{i}\sig_i).
\end{eqnarray*}
Finally, applying \cref{supmat:lem:sum of two} to the (possibly) non-invertible permutation sum in the second term provides a decomposition of $A(\sig_i)\cdot (\upsilon \otimes I)$ into at most $r+2$ matrices $g_1 \otimes g_2\in Gl_r(2)^2$. It follows that the cross-block CNOT circuit $\mat{I & A(\sig_i)\cdot (\upsilon \otimes I) \\ 0 & I}$ and the in-block CNOT circuit $\mat{A(\sig_i)\cdot (\upsilon \otimes I) & 0 \\ 0 & I}$ are implemented by $r+2$ depth-1 generators, respectively. Here for the in-block operator we have used the cross-block trick from \cref{supmat:lem:in-block CNOT}. 

Now in the analysis above, we could have instead chosen matrices that act on columns rather than rows, i.e., of the form: 
\[
B(\sig_i)\cdot (I \otimes \upsilon) :=
(\sum_{i=1}^r \sig_i \otimes E_{i,i})(I \otimes \upsilon).
\]
By the same arguments, these generate a subgroup isomorphic to $S_r^r \rtimes S_r$, implementable using at most $r+2$ depth-1 physical generators. Observe however that these two subgroups, called say $K_{\mathrm{row}}$ and $K_{\mathrm{col}}$, are distinct. They are also maximal subgroups of $S_{r^2}$ by the famous Onan-Scott Theorem \cite{aschbacher1985maximal}, and hence $S_{r^2}=\lrang{K_{\mathrm{row}},K_{\mathrm{col}}}$. I.e., these two particular collection of logical SWAP circuits, generate all permutations on a code block of $r^2$ logical qubits. In fact, one can show that $S_{r^2} = K_{\mathrm{row}} \cdot K_{\mathrm{col}} \cdot K_{\mathrm{row}}$ \footnote{A full discussion of this question is available on Math Stackexchange: Steve D (https://math.stackexchange.com/users/265452/steve-d), Diameter of $S_{n^2}$ with respect to two copies of $S_{n} \wr S_{n}$, URL (version: 2024-06-02): https://math.stackexchange.com/q/4926419}, yielding the following result:
\begin{prop}\label{supmat:prop:swaps in O(r)}
     Any logical permutation on a single $SHYPS(r)$ code block is implemented fault-tolerantly, using CNOT generators $\calA \cup \calA^T$, in depth at most $3(r+2)$.
\end{prop}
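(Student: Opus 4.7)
The plan is to assemble the proposition from ingredients already developed in the preceding discussion, namely the two wreath-product subgroups $K_{\mathrm{row}}, K_{\mathrm{col}} \leq S_{r^2}$ and the fact that each of their elements admits a short decomposition into depth-1 generators from $\calA \cup \calA^T$.

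First I would observe that an arbitrary in-block logical permutation corresponds to an element of $S_{r^2}$, viewed as a permutation matrix in $GL_{r^2}(2)$, and then invoke the (cited) consequence of the Onan-Scott classification that $S_{r^2}=K_{\mathrm{row}} \cdot K_{\mathrm{col}} \cdot K_{\mathrm{row}}$. Thus any target logical permutation $\pi$ factors as $\pi = \pi_1 \pi_2 \pi_3$ with $\pi_1,\pi_3 \in K_{\mathrm{row}}$ and $\pi_2 \in K_{\mathrm{col}}$ (or with the roles swapped). It therefore suffices to bound the cost of realising a single element of $K_{\mathrm{row}}$ or $K_{\mathrm{col}}$ as an in-block CNOT operator.

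Next I would package the computation already carried out just above the proposition: for any element of $K_{\mathrm{row}}$, written as $A(\sig_i)\cdot(\upsilon \otimes I)=\sum_{i=1}^{r}E_{i,\upsilon^{-1}(i)}\otimes \sig_i$, I would use the identity $E_{i,\upsilon^{-1}(i)}+\rho\upsilon \in GL_r(2)$ (for any fixed $r$-cycle $\rho$, by Lemma~\ref{supmat:lem:diagperm}) to rewrite it as
\[
A(\sig_i)\cdot(\upsilon \otimes I) \;=\; \sum_{i=1}^{r}\bigl(E_{i,\upsilon^{-1}(i)}+\rho\upsilon\bigr)\otimes \sig_i \;+\; \rho\upsilon \otimes \Bigl(\sum_{i=1}^{r}\sig_i\Bigr),
\]
and then apply Lemma~\ref{supmat:lem:sum of two} to split the possibly-non-invertible $\sum_i \sig_i$ into at most two invertibles. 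This yields a tensor-sum decomposition of length $r+2$, each summand of the form $g\otimes g'$ with $g,g' \in GL_r(2)$, which by Lemma~\ref{supmat:lem:depth 1 cnot circuits} corresponds to $r+2$ cross-block depth-1 generators in $\calA$; the in-block version follows by the auxiliary-block teleportation scheme of Lemma~\ref{supmat:lem:in-block CNOT}, keeping the depth bound $r+2$. The same argument, with rows and columns exchanged, handles $K_{\mathrm{col}}$.

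Finally, concatenating the three factors from the $K_{\mathrm{row}}\cdot K_{\mathrm{col}}\cdot K_{\mathrm{row}}$ decomposition produces a fault-tolerant implementation of $\pi$ of depth at most $3(r+2)$, as claimed. There is no real obstacle here beyond bookkeeping; the only delicate point is making sure the factorisation step $S_{r^2}=K_{\mathrm{row}}\cdot K_{\mathrm{col}}\cdot K_{\mathrm{row}}$ is cited (or verified) rather than merely the generation statement $S_{r^2}=\langle K_{\mathrm{row}},K_{\mathrm{col}}\rangle$, since only the former yields a constant-length product decomposition and hence an $O(r)$ depth bound independent of which permutation is targeted.
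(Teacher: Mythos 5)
Your proof is correct and follows the paper's argument exactly: the paper develops precisely this factorisation ($S_{r^2} = K_{\mathrm{row}} \cdot K_{\mathrm{col}} \cdot K_{\mathrm{row}}$ via Onan--Scott plus the cited Math Stackexchange result) and the $(r+2)$-generator decomposition of each wreath-product factor (via \cref{supmat:lem:diagperm}, \cref{supmat:lem:sum of two}, and the auxiliary-block trick of \cref{supmat:lem:in-block CNOT}) in the paragraphs immediately preceding the proposition, so your write-up is essentially a faithful assembly of those ingredients. You also correctly flag the subtle point that the product decomposition, not merely $\langle K_{\mathrm{row}}, K_{\mathrm{col}}\rangle = S_{r^2}$, is what is needed for a constant-length factorisation.
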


In-block permutations can be used to derive a useful result on cross-block CNOT circuits: \emph{depth-1} cross-block CNOT circuits can be implemented in $O(r)$ depth, rather than $O(r^2)$ as required for general cross-block CNOT circuits. 
\begin{cor}\label{supmat:lem:depth one cross-block CNOT}
    Any logical depth-1 cross-block CNOT circuit across two $SHS(r)$ code blocks can be implemented in depth no greater than $7r+15$, using CNOT generators $\calA \cup \calA^T$.
\end{cor}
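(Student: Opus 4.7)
The plan is to exploit the depth-1 structure of the circuit: the upper-right pattern matrix $A \in M_{r^2}(2)$ must be a sub-permutation matrix, since each physical qubit can be used at most once in a depth-1 circuit. This means the generic $w(A) \leq r^2 + r + 4$ bound from \cref{supmat:thm:worst-case-cross-block-cnot} is far too loose for our purposes, and we must exploit the structure of $A$ rather than apply it directly.

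First, I would extend the partial matching encoded by $A$ to a perfect matching on $r^2$ elements. This is always possible because the unused rows and columns of $A$ have equal cardinality. Let $\Pi \in GL_{r^2}(2)$ be the resulting permutation matrix and let $D \in M_{r^2}(2)$ be the diagonal $0/1$ matrix picking out exactly the columns of $A$ that are nonzero. Then $A = \Pi D$.

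Second, by tracking how physical cross-block CNOTs are relabelled by an in-block permutation on the control block (equivalently, a routine symplectic conjugation calculation, since an in-block operation on block 1 acts trivially on block 2), one verifies the identity
\[
\begin{pmatrix} I & A \\ 0 & I \end{pmatrix} \;=\; \mathrm{In}_1(\Pi^{-1}) \cdot \begin{pmatrix} I & D \\ 0 & I \end{pmatrix} \cdot \mathrm{In}_1(\Pi),
\]
where $\mathrm{In}_1(Q)$ denotes the in-block CNOT circuit on block 1 with $GL_{r^2}(2)$-representative $Q$. By \cref{supmat:prop:swaps in O(r)}, each outer in-block permutation is implementable in depth at most $3(r+2) = 3r+6$ using cross-block generators from $\calA \cup \calA^T$ (via the teleportation trick of \cref{supmat:lem:in-block CNOT}).

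For the central factor, I would observe that $D$ admits the tensor decomposition $D = \sum_{i=1}^r E_{i,i} \otimes d_i$, where $d_i$ is the diagonal of the $i$-th $r \times r$ block of $D$. This expresses $D$ as a rank-$r$ tensor sum, so \cref{supmat:prop:matrix-weight} applied at $t = r$ yields $w(D) \leq t + r + 6 = 2r+6$ for $r \geq 3$ (this term is the binding one in that range). Summing the three pieces gives total depth $(3r+6) + (2r+6) + (3r+6) = 8r+18$. The main obstacle is merely keeping the direction of the symplectic conjugation straight; no new matrix-decomposition work beyond the earlier propositions is required.
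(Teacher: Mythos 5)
Your proposal is correct and follows essentially the same route as the paper's proof: factor the sub-permutation pattern matrix $A$ into a permutation times a diagonal, conjugate the depth-1 cross-block CNOT by an in-block permutation, bound the diagonal piece via \cref{supmat:prop:matrix-weight} at tensor rank $t=r$ (giving $2r+6$), and pay $3r+6$ for each in-block permutation via \cref{supmat:prop:swaps in O(r)}. The paper happens to write $A = (AP)P^{-1}$ with $AP$ diagonal and conjugate on block $2$, whereas you write $A = \Pi D$ and conjugate on block $1$; these are entirely equivalent and incur identical costs. The one slip (which you preemptively flag) is directional: with $A = \Pi D$ the identity should read $\bigl(\begin{smallmatrix} I & A \\ 0 & I \end{smallmatrix}\bigr) = \mathrm{In}_1(\Pi)\cdot\bigl(\begin{smallmatrix} I & D \\ 0 & I \end{smallmatrix}\bigr)\cdot\mathrm{In}_1(\Pi^{-1})$, since $\bigl(\begin{smallmatrix} \Pi & 0 \\ 0 & I \end{smallmatrix}\bigr)\bigl(\begin{smallmatrix} I & D \\ 0 & I \end{smallmatrix}\bigr)\bigl(\begin{smallmatrix} \Pi^{-1} & 0 \\ 0 & I \end{smallmatrix}\bigr) = \bigl(\begin{smallmatrix} I & \Pi D \\ 0 & I \end{smallmatrix}\bigr)$; swapping $\Pi \leftrightarrow \Pi^{-1}$ as you wrote yields $\Pi^{-1}D$ in the corner instead. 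This has no effect on the depth count, so the $8r+18$ bound goes through as stated.
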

\begin{proof}
    Take an arbitrary logical depth on cross-block CNOT circuit 
    \[ X = 
    \begin{pmatrix}
        I & A \\ 0 & I
    \end{pmatrix} \in GL_{2r^2}(2).
    \]
    Since $X$ is depth-1, the matrix $A \in M_{r^2}(2)$ has at most a single nonzero entry in each row and column. Therefore, there exists a permutation $P\in S_{r^2}$ such that $A P \in \diag_{r^2}(2)$. 
    We rewrite $X$ as
    \begin{equation}\label{supmat:eq:conjugate by perms}
        X = 
        \begin{pmatrix}
        I & 0 \\ 0 & P
        \end{pmatrix}
        \begin{pmatrix}
            I & AP \\ 0 & I
        \end{pmatrix}
        \begin{pmatrix}
            I & 0 \\ 0 & P^{-1}
        \end{pmatrix},
    \end{equation}
    and bound its depth by bounding the cost of each term.
    
    By \cref{supmat:prop:swaps in O(r)}, it follows that the in-block permutations $P$ and $P^{-1}$ can each be implemented in depth no greater than $3r+6$. 

    It remains to cost the cross-block CNOT operator given by the second term.
    We decompose the diagonal matrix $D:=AP$ as $D = \sum_{i=1}^r E_{i,i} \otimes D_i$ for some diagonal matrices $D_i \in \diag_r(2)$.
    For an $r$-cycle permutation $\sigma$, we define matrices $\rho_i = \sigma$ iff $D_i \neq I$ and $\rho_i = 0$ otherwise.
    \begin{align*}
        AP &= \sum_{i=1}^r E_{i,i} \otimes D_i \\
        &= \sum_{i=1}^r (E_{i,i} + \sigma) \otimes (D_i + \rho_i) 
        + \sum_{i=1}^r E_{i,i} \otimes \rho_i \\
        & \quad + \sigma \otimes \sum_{i=1}^r D_i
         + \sigma \otimes \sum_{i=1}^r \rho_i\\
        &= \sum_{i=1}^r (E_{i,i} + \sigma) \otimes (D_i + \rho_i)\\
        &\quad  + \big(\sum_{i \mid D_i \neq I} E_{i,i} + \sigma \big) \otimes \sigma
          + \sigma \otimes \big(\sum_{i=1}^r D_i + \sigma \big)\\
         & \quad + \sum_{i \mid D_i \neq I} \sigma \otimes \sigma
    \end{align*}
    By \cref{supmat:lem:diagperm}, it then immediately follows that $w(D) \leq r+3$ for any $D \in \diag_r(2)$. The cross-block CNOT operator in \eqref{supmat:eq:conjugate by perms} can thus be implemented using no more than $r+3$ depth-1 generators from $\calA$.
        
    We conclude that $X$ has a physical implementation with a depth no greater  $7r+15$. 
\end{proof}

A similar result exists for depth-1 cross-block CZ circuits.
\begin{cor}\label{supmat:lem:depth one cross-block CZ}
    Any logical depth-1 cross-block CZ circuit across two $SHYPS(r)$ code blocks can be implemented in depth no greater than $7r+15$.
\end{cor}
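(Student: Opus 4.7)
My plan is to mirror the proof of \cref{supmat:lem:depth one cross-block CNOT}, replacing cross-block CNOT generators by the cross-block CZ generators of \cref{supmat:lem:2-block-cz-cost}. Let $X = U_{1,2}(A)$ be an arbitrary logical depth-1 cross-block CZ across two $SHYPS(r)$ code blocks. Because $X$ has depth 1, each physical qubit is involved in at most one CZ, so $A \in M_{r^2}(2)$ is a partial permutation matrix — at most one nonzero entry per row and per column.

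First, I will exploit in-block permutations on the second code block to diagonalize $A$ up to the action of $\tau_r$. An in-block permutation $P$ on block 2 has symplectic representation that acts identically on the $X$- and $Z$-Pauli sectors of that block (as $P^{-T}=P$), and a direct calculation shows that conjugation by the corresponding SWAP circuit $V$ satisfies $V\,U_{1,2}(A')\,V^{-1} = U_{1,2}(A'P^{-1})$ for any $A' \in M_{r^2}(2)$. Setting $A' = AP$ yields the factorization
\[
X = V\cdot U_{1,2}(AP)\cdot V^{-1}.
\]
I will choose $P$ so that $P(\tau_r(i)) = \pi(i)$ for every $i$ in the support of $A$ (where $\pi$ records the unique nonzero column of each nonzero row of $A$), extending arbitrarily to a full permutation; then $AP\tau_r = D$, the diagonal matrix with $D_{ii}=1$ iff row $i$ of $A$ is nonzero.

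Second, I will bound each factor. Following the proof of \cref{supmat:lem:2-block-cz-cost}, $U_{1,2}(AP)$ is implementable in depth $w((AP)\tau_r) = w(D)$ using depth-1 fault-tolerant cross-block CZ generators. Since $D$ decomposes as $D = \sum_{i=1}^r E_{i,i}\otimes D_i$ with each $D_i$ diagonal, its tensor rank is at most $r$, so \cref{supmat:prop:matrix-weight} gives $w(D)\leq 2r+6$. Each of the outer permutations $V, V^{-1}$ is an in-block logical permutation of $r^2$ qubits, and by \cref{supmat:prop:swaps in O(r)} has depth at most $3r+6$. Summing gives total depth at most $2(3r+6) + (2r+6) = 8r+18$.

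The main technical obstacle will be the bookkeeping for the conjugation identity in Step 1: the cross-block CZ symplectic representation carries $A$ in the upper-right block and $A^T$ in the lower-left block, so I must verify that an in-block permutation on block 2 transforms both consistently back into cross-block CZ form with the same new off-diagonal matrix. This reduces to the identity $P^{-T}=P$ for permutation matrices and is largely routine, but it is the step most vulnerable to convention errors.
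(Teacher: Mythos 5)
Your proof is correct and follows the same strategy as the paper's: conjugate by an in-block permutation on block 2 to reduce to a diagonal off-diagonal block of tensor rank at most $r$, bound its weight by $2r+6$ via Proposition~\ref{supmat:prop:matrix-weight}, and add the two $3r+6$ in-block permutation costs from Proposition~\ref{supmat:prop:swaps in O(r)} to reach $8r+18$. The only structural difference is cosmetic: you derive the conjugation identity directly at the CZ level (exploiting $P^{-T}=P$ for permutation matrices), whereas the paper first passes to the Hadamard-sandwiched CNOT form of Lemma~\ref{supmat:lem:2-block-cz-cost} and then commutes the permutation $P$ past $H^{\otimes r^2}\tau_r$ to obtain $P'$. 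One small convention hazard worth re-checking: with the paper's right-action symplectic convention, conjugating $U_{1,2}(A')$ by the in-block permutation on block 2 represented by $P$ yields $U_{1,2}(A'P)$ rather than $U_{1,2}(A'P^{-1})$; this merely relabels $P\mapsto P^{-1}$ and does not affect your argument.
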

\begin{proof}
    Take any logical depth-1 cross-block CZ circuit $U$.
    First observe that $U$ has symplectic matrix representation 
    \begin{equation}
     \left(
    \begin{array}{cc|cc}
        I & 0  & 0 & A  \\
        0 & I  & A^T & 0  \\ \hline
        0 & 0  & I & 0  \\
        0 & 0  & 0 & I  \\
            \end{array}
    \right),
    \end{equation}
    where the matrix $A$ has at most one nonzero entry in each row or column.
    Similar to the strategy used in the proof of \cref{supmat:lem:2-block-cz-cost}, we first rewrite $U$ as a cross-block CNOT operator, conjugated by $H^{\otimes r^2}\tau_r$ on the second code block:
     \[
        \begin{tikzpicture}
        \begin{yquantgroup}
          \registers{
                 qubit {} q[2];
              }
              \circuit{
              slash q[0];
              slash q[1];
              h q[1];
              box {$\tau_r$} q[1];
              box {$A \tau_r$} q[1] | q[0];
              box {$\tau_r$} q[1];
              h q[1];
           }
        \end{yquantgroup}
        \end{tikzpicture}
    \]
    Note that since $A$ has rows and columns of weight at most one, so has $A\tau_r$. 
    We decompose the cross-block CNOT operator in the circuit above as we did in the proof of \cref{supmat:lem:depth one cross-block CNOT}:
    \[
        \begin{tikzpicture}
        \begin{yquantgroup}
          \registers{
                 qubit {} q[2];
              }
              \circuit{
              slash q[0];
              slash q[1];
              h q[1];
              box {$\tau_r$} q[1];
              box {$P$} q[1];
              box {$D$} q[1] | q[0];
              box {$P^T$} q[1];
              box {$\tau_r$} q[1];
              h q[1];
           }
        \end{yquantgroup}
        \end{tikzpicture}
    \]
    where $D := A\tau_r P$ and $P\in S_{r^2}$ is chosen such that $D \in \diag_{r^2}(2)$.
    Conjugating $P$ by $H^{\otimes r^2}\tau_r$ yields a different permutation $P'$, and hence we can rewrite the previous circuit as
    \[
        \begin{tikzpicture}
        \begin{yquantgroup}
          \registers{
                 qubit {} q[2];
              }
              \circuit{
              slash q[0];
              slash q[1];
              box {$P'$} q[1];
              h q[1];
              box {$\tau_r$} q[1];
              box {$D$} q[1] | q[0];
              box {$\tau_r$} q[1];
              h q[1];
              box {$P'^T$} q[1];
           }
        \end{yquantgroup}
        \end{tikzpicture}
    \]
    As shown in the proof of \cref{supmat:lem:depth one cross-block CNOT}, the cross-block CNOT operator in the circuit above can be implemented using at most $r+3$ depth-1 generators from $\calA$. 

    Following the same reasoning as in the proof of \cref{supmat:lem:2-block-cz-cost}, we then find that $U$ can be implemented using two logical in-block permutations and up to $r+3$ transversal physical CZ circuits. 
    Since, by \cref{supmat:prop:swaps in O(r)}, the in-block permutations $P'$ and $P'^T$ each require depth at most $2\cdot(3r+6)$.    
    We thus find that $U$ can be implemented in depth no greater than $7r+15$. 
\end{proof}

\subsection{Multi-block permutations}\label{supmat:subsec:cross-block-perms}
Next, we turn to logical multi-block permutations. 
One might initially expect that the depth of their physical implementation would scale with the total number of logical qubits (i.e., $br^2$ for $b$ code blocks of $SHYPS(r)$) like it does for, e.g., multi-block CZ circuits.
However, we will show below that general permutations can be done much more efficiently. In particular, they are implementable in a depth that does not scale with the number of code blocks, but simply with the number of logical qubits per block. 

The strategy to efficiently implement a general permutation is to decompose it as the product of two involutions (i.e., two depth-1 SWAP circuits). The logical SWAP gates in these involutions may then be scheduled efficiently using an edge-coloring algorithm, and the resulting number of required rounds does not scale with the number of code blocks but rather with the number of qubits per block. Since individual SWAP gates can be performed in depth $O(1)$, it then follows that multi-block permutations can be performed in detph $O(r^2)$. 

\begin{thm} \label{supmat:thm:cross-block-permutations}
		 A logical permutation operator on $b$ code blocks of $SHYPS(r)$ can be implemented in depth $36r^2 + 3r + 6$.
	\end{thm}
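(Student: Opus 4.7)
The plan is to exploit the classical fact that every permutation $\pi \in S_{br^2}$ of the logical qubits admits a factorization $\pi = \iota_1 \iota_2$ as a product of two involutions, each being a product of disjoint transpositions. It therefore suffices to bound the depth of an arbitrary involution and then combine.

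For an involution $\iota$, I would partition its transpositions into an in-block part $\iota^{\mathrm{in}}$ (both endpoints in the same code block) and a cross-block part $\iota^{\mathrm{cross}}$. These act on disjoint qubits and so commute. The in-block piece decomposes as $b$ independent permutations supported on disjoint code blocks, so by \cref{supmat:prop:swaps in O(r)} it runs in parallel with depth at most $3r+6$, crucially independent of $b$.

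For the cross-block piece, I would form an auxiliary multigraph $M$ on the $b$ code blocks, with one edge for each transposition in $\iota^{\mathrm{cross}}$. Because $\iota^{\mathrm{cross}}$ is a matching on the $br^2$ qubits, every vertex of $M$ has degree at most $r^2$, so $\Delta(M) \leq r^2$. Shannon's edge-coloring theorem for multigraphs then supplies a proper edge coloring with at most $\lfloor 3r^2/2 \rfloor$ colors---the crucial bound that depends on $r$ alone and not on $b$. Within a single color class the block pairs are pairwise disjoint, so the corresponding cross-block SWAPs execute in parallel; each such SWAP is three logical cross-block CNOTs, each of depth $4$ per \cref{supmat:tab:cnot_depth}, giving depth $12$ per color class and hence depth at most $12 \lfloor 3r^2/2 \rfloor \leq 18 r^2$ for $\iota^{\mathrm{cross}}$.

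Summing across both involutions, the cross-block contributions total at most $36 r^2$. To obtain the stated bound $36 r^2 + 3r + 6$ rather than the naive $36 r^2 + 6r + 12$, I would consolidate the two in-block layers into a single application of \cref{supmat:prop:swaps in O(r)} by choosing the involution factorization so that, for example, only one of $\iota_1^{\mathrm{in}}, \iota_2^{\mathrm{in}}$ is nontrivial (absorbing the other into the cross-block routing by relabelling). The main obstacle I anticipate is exactly this merging step: since conjugation by $\iota^{\mathrm{cross}}$ does not in general preserve in-block structure, the factorization of $\pi$ must be engineered so that the two in-block pieces consolidate into a single permutation of the $b$ code blocks. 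The remaining ingredients---Shannon's theorem, the constant-depth single-SWAP cost, and the in-block permutation bound---are direct applications of results already established in the excerpt.
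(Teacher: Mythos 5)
Your proposal matches the paper's proof in all but the final step: the same factorization into two involutions, the same in-block/cross-block split, the same multigraph and Shannon edge-coloring argument with maximum degree $\Delta \le r^2$ giving $\le 3r^2/2$ color classes, each class executing disjoint cross-block SWAPs in depth $12$ (via \cref{supmat:lem:single-cnot}), and the same $3r+6$ bound for in-block permutations from \cref{supmat:prop:swaps in O(r)}.

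Where you go astray is in the consolidation step. You anticipate needing to ``engineer'' the involution factorization so that one of $\iota_1^{\mathrm{in}}, \iota_2^{\mathrm{in}}$ vanishes, worrying that conjugation through $\iota^{\mathrm{cross}}$ would wreck in-block structure. But neither engineering nor conjugation is required. Within a \emph{single} involution $A$, the in-block transpositions and cross-block transpositions are pairwise disjoint (that is what it means for $A$ to be a product of disjoint transpositions), so $A_{\mathrm{IB}}$ and $A_{\mathrm{CB}}$ commute, and likewise for $B$. One may therefore write
\[
P = AB = (A_{\mathrm{CB}} A_{\mathrm{IB}})(B_{\mathrm{IB}} B_{\mathrm{CB}}) = A_{\mathrm{CB}} \cdot (A_{\mathrm{IB}} B_{\mathrm{IB}}) \cdot B_{\mathrm{CB}},
\]
and the middle factor is a \emph{single} in-block permutation (the product of two such is still one; it need not be an involution, but \cref{supmat:prop:swaps in O(r)} applies to arbitrary in-block permutations and runs in parallel across the $b$ blocks in depth $\le 3r+6$). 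The two cross-block factors contribute $18r^2$ each exactly as you computed, giving the stated total $36r^2 + 3r + 6$. So your obstacle is illusory: the merge is automatic from disjointness of the transpositions, and you never move an in-block piece across a cross-block piece.
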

	\begin{proof}
		Take any permutation operator $P$ acting on $br^2$ qubits. We first write $P$ as the product of two involutions: $P = A\cdot B$. Each of these involutions can in turn be written as a product of an involution consisting of in-block SWAP operators, and one consisting of cross-block SWAP operators, denoted by $A_{\mathrm{IB}}$ and $A_{\mathrm{CB}}$, respectively.
		Note that in-block permutations normalize cross-block involutions. Hence, we find $P=A_{\mathrm{CB}}A_{\mathrm{IB}}B_{\mathrm{IB}}B_{\mathrm{CB}}$.
		
		It follows form \ref{supmat:prop:swaps in O(r)} that $A_{\mathrm{IB}}B_{\mathrm{IB}}$ can be implemented in depth $3r+6$
		It remains to determine the depth required to implement the cross-block components $A_{\mathrm{CB}}$ and $B_{\mathrm{CB}}$.
		
		Consider a multigraph $\mathcal{G}_A$ with $b$ vertices, corresponding to the $b$ code blocks, and an edge between two vertices for each two-cycle in $A_{\mathrm{CB}}$ swapping qubits between the corresponding code blocks (i.e., if $m$ qubits are swapped between a pair of code blocks, the corresponding vertices are connected by $m$ edges). 
		A theorem by Shannon \cite{shannon1949theorem} states that a proper edge coloring of a multigraph $\mathcal{G}$ with maximum vertex degree $\Delta(\mathcal{G})$ requires at most $3/2 \Delta(\mathcal{G})$ colors.
		Since $A$ is an involution, it immediately follows that $\Delta(\mathcal{G}_A) \leq r^2$. Therefore, $A_{\mathrm{CB}}$ can be implemented in at most $3/2 r^2$ rounds, each containing at most a single SWAP operator per code block. By \cref{supmat:lem:single-cnot} a single logical CNOT operator between code blocks is implementable in depth at most 4, and hence a single SWAP operator between code blocks is implementable in depth at most $12$. It follows that $A_{\mathrm{CB}}$ is implementable in depth at most $18 r^2$, and the same applies to $B_{\mathrm{CB}}$.
		
		We conclude that $P$ is implementable in depth at most $36r^2 + 3r + 6$.
	\end{proof}

\subsection{Hadamard circuits}\label{supmat:subsec:hadamard-costs}
We now turn to implementing arbitrary Hadamard operators in the SHYPS codes. 
Recall that SHYPS codes have a fold-transversal Hadamard operator $H^{\otimes n}\tau_{n_r}$, which implements the logical Hadamard-SWAP operator $H^{\otimes r^2} \tau_r$ (see \cref{supmat:lem:all qubit hadamard}).
We may then apply the result on in-block permutations from \cref{supmat:prop:swaps in O(r)} to $\tau_r$, to cost the isolated all-qubit logical Hadamard.
\begin{cor} \label{supmat:cor:transversal hadamard O(r) depth}
    The all-qubit logical Hadamard operator $H^{r^2}$ on $SHYPS(r)$ is implemented in depth at most $3r+10$.
\end{cor}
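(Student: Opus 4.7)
My plan is to obtain $H^{\otimes r^2}$ as a product of two ingredients already proved in the supplement: the fold-transversal Hadamard, which realizes a logical Hadamard together with a spurious logical SWAP, and a follow-up in-block permutation that undoes that SWAP. Concretely, by \cref{supmat:lem:all qubit hadamard}, the physical circuit $H^{\otimes n}\tau_{n_r}$ implements the logical operator $H^{\otimes r^2}\tau_r$. Since $\tau_r$ is an involution on the $r\times r$ array of logical qubits, we have
\[
H^{\otimes r^2} \;=\; \bigl(H^{\otimes r^2}\tau_r\bigr)\cdot \tau_r,
\]
so it suffices to compose the fold-transversal Hadamard with an additional logical in-block permutation $\tau_r$.

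For the first factor I would invoke the fold-transversal implementation of $H^{\otimes r^2}\tau_r$; as recorded in \cref{supmat:tab:had_swap_depth}, this can be realized in depth at most $4$. (Intuitively, this is the physical Hadamard layer together with whatever overhead is needed to realise the physical $ZX$-duality $\tau_{n_r}$; the important point is that this cost is $O(1)$ and independent of $r$.) For the second factor, $\tau_r$ is a single-block logical permutation, so by \cref{supmat:prop:swaps in O(r)} it admits a fault-tolerant implementation of depth at most $3(r+2)=3r+6$.

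Adding the two contributions yields a total depth of at most $4+(3r+6)=3r+10$, which is the stated bound. The main (and essentially only) technical subtlety is bookkeeping: checking that the physical circuit for $H^{\otimes r^2}\tau_r$ and the circuit realizing $\tau_r$ act on the same logical qubit labelling, so that their composition genuinely gives $H^{\otimes r^2}$ rather than $H^{\otimes r^2}$ conjugated by some relabelling. This is immediate from the fact that $\tau_r$ is the canonical row--column exchange on the $r\times r$ logical array used throughout \cref{supmat:sec:SHSsuppMat}, which is exactly the permutation appearing in \cref{supmat:lem:all qubit hadamard}. No further work is required.
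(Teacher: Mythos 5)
Your proposal is correct and follows essentially the same route as the paper: decompose $H^{\otimes r^2}$ as the fold-transversal Hadamard-SWAP $H^{\otimes r^2}\tau_r$ (from \cref{supmat:lem:all qubit hadamard}, physically a depth-$1$ transversal $H$ layer plus the depth-$3$ SWAP circuit $\tau_{n_r}$, i.e.\ depth $4$) followed by the in-block logical permutation $\tau_r$ at depth $3r+6$ from \cref{supmat:prop:swaps in O(r)}, giving $3r+10$. The only cosmetic difference is that you cite \cref{supmat:tab:had_swap_depth} for the depth-$4$ first factor rather than itemizing the depth-$1$ Hadamard layer and depth-$3$ physical SWAP directly, which is what the paper does.
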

\begin{proof}
    By Lemma \ref{supmat:lem:all qubit hadamard}, $H^{\otimes r^2}$ is implemented by depth-1 all-qubit physical Hadamard $H^{\otimes n}$ gates, composed with $\tau_{n_r}$ and $\tau_r$. As $\tau_{n_r}$ is a physical SWAP circuit, it is implementable in a depth-3 CNOT circuit. Whereas $\tau_r \in S_{r^2}$ requires depth at most $3r+6$ by \cref{supmat:prop:swaps in O(r)}.
\end{proof}

As for the fault-tolerance of this implementation, we recall from the discussion in \cref{supmat:lem:DiagGenType1Distance} that $\tau_n$ only exchanges qubits lying in different rows and columns of the physical qubit array. Thus a single gate failure during implementation will not lead to a weight two Pauli in the support of a single logical operator, and the circuit is therefore distance preserving. This guarantees the fault-tolerance of $H^{\otimes r^2}$. We note that for qubit architectures with high-connectivity, such a swap may be implemented in practice by simple qubit relabelling. This however leads to only modest constant savings in depth.

Combined with CNOT operators and diagonal gates, the all-qubit Hadamard is sufficient to generate all Clifford operators. However for practical applications it is desirable to more accurately cost $H^V$ for $V\in M_r(2)$, where as before, $V_{i,j}=1$ indicates a Hadamard gate on qubit $(i,j)$. 

To implement such arbitrary Hadamard circuits, the following circuit identities are useful
\begin{lem}\label{supmat:lem:single block H}
    Let $V\in M_r(2)$. Then the arbitrary Hadamard circuit $H^V$ is implemented by repeated application of $H^{\otimes r^2}$ and $S^{V}$.
\end{lem}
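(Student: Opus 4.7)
The plan is to exhibit an explicit five-factor circuit identity, modeled on the well-known single-qubit Euler decomposition, that produces $H^V$ from alternating applications of $S^V$ and $H^{\otimes r^2}$. Specifically, I would show that
\[
H^V \;=\; S^V \cdot H^{\otimes r^2} \cdot S^V \cdot H^{\otimes r^2} \cdot S^V,
\]
up to an overall Pauli correction (which, per the convention established in \cref{subsec:code-automorphisms} and reiterated in \cref{supmat:lem:depth 1 diagonal lifts}, we freely absorb at the end).

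The verification is easiest qubit-by-qubit, since both $S^V$ and $H^{\otimes r^2}$ are tensor products of single-qubit operators. On each qubit $(i,j)$ with $V_{i,j} = 1$, the product becomes $S H S H S$, which one checks is exactly $H$ up to phase; I would make this concrete by working in the symplectic representations of Examples \ref{supmat:ex:symplectic-rep-diagonal} and \ref{supmat:ex:symplectic-rep-hadamard}, where $S \leftrightarrow \bigl(\begin{smallmatrix}1 & 1\\0 & 1\end{smallmatrix}\bigr)$ and $H \leftrightarrow \bigl(\begin{smallmatrix}0 & 1\\1 & 0\end{smallmatrix}\bigr)$, and multiplying the five $2\times 2$ binary matrices recovers $H$. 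On the complementary qubits with $V_{i,j} = 0$, the factor $S^V$ acts as the identity, so the product collapses to $H \cdot H = I$, matching the trivial action of $H^V$ on those qubits.

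Equivalently, one can prove the identity in one shot at the level of $Sp_{2r^2}(2)$: let $D = \mathrm{diag}(v)$ where $v \in \gf^{r^2}$ encodes $V$, so that $S^V$ is represented by $\bigl(\begin{smallmatrix}I & D\\0 & I\end{smallmatrix}\bigr)$ and $H^{\otimes r^2}$ by $\bigl(\begin{smallmatrix}0 & I\\I & 0\end{smallmatrix}\bigr)$. A straightforward block multiplication, using $D^2 = D$ over $\gf$, yields the symplectic matrix $\bigl(\begin{smallmatrix}I+D & D\\ D & I+D\end{smallmatrix}\bigr)$, which is precisely the representation of $H^V$.

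There is no real obstacle here — the entire content is the identity $SHSHS = H$ lifted tensorially. The only subtlety worth flagging is the tracking of the Pauli correction (each factor $SHSHS$ introduces a scalar phase at the physical-unitary level that combines into a Pauli on the affected qubits); this is handled uniformly by the same Pauli-correction argument used for the diagonal generators, so the identity delivers a fault-tolerant realization of $H^V$ at a cost of three $S^V$ layers and two all-qubit Hadamard layers, which is exactly the input needed to bound the depth of arbitrary Hadamard circuits in \cref{supmat:tab:had_swap_depth}.
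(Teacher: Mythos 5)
Your argument is correct and matches the paper's proof: both rest on the single-qubit identity $(SH)^3 = e^{\i\pi/4}I_2$ (equivalently $SHSHS = H$ up to phase), applied tensorially on the qubits with $V_{i,j}=1$ and collapsed via $H^2 = I$ on the rest, yielding $(S^V H^{\otimes r^2})^2 S^V = H^V$. One small clarification: the leftover factor from $(SH)^3$ is a global scalar phase, not a Pauli, so it is absorbed by the quotient $\calC_n = \clif/\{e^{\i\theta}\}\times\lrang{\tfrac{1+\i}{\sqrt{2}}}$ rather than needing a Pauli correction.
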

\begin{proof}
    It's easy to check that $(S\cdot H)^3 = (1+\i)/\sqrt{2}\cdot I_2$, i.e., the identity up to global phase. Hence because $H$ has order 2,
    \[
    (S^{V}\cdot H^{\otimes r^2})^2S^{V} = H^V.
    \]
\end{proof}
\begin{lem}
    Let $V \in M_r(2)$ with $V_{i,j}=1$ for an even number of entries. Then the arbitrary Hadamard circuit $H^V$ is implemented by repeated application of $H^{\otimes r^2}$, any depth-1 circuit of CZ gates with support strictly on qubits $\{(i,j) \mid V_{i,j}=1 \}$, and a qubit permutation.
\end{lem}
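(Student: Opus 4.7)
The plan is to reduce the problem to a clean two-qubit identity and then lift it to arbitrary even-weight $V$ by pairing up the support. First, set $S = \{(i,j) : V_{i,j} = 1\}$, which has even cardinality by hypothesis, and partition $S$ into $m$ disjoint pairs $\{(p_t, q_t)\}_{t=1}^{m}$. This pairing naturally produces both a depth-1 CZ circuit $C = \prod_{t=1}^m CZ_{p_t, q_t}$ (supported exactly in $S$, as the lemma requires) and a qubit permutation $\pi = \prod_{t=1}^m \mathrm{SWAP}_{p_t, q_t}$ composed of disjoint transpositions.

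The core observation is the two-qubit identity $(CZ_{ab} \cdot H_a H_b)^3 = \mathrm{SWAP}_{ab}$ (up to Pauli), which rearranges to
\[
H_a H_b \;=\; CZ_{ab} \cdot (H_a H_b) \cdot CZ_{ab} \cdot (H_a H_b) \cdot CZ_{ab} \cdot \mathrm{SWAP}_{ab}.
\]
I would verify this directly via the symplectic representation: writing $B = \bigl(\begin{smallmatrix}0 & 1\\1 & 0\end{smallmatrix}\bigr)$, one has $M_{H_a H_b} M_{CZ_{ab}} = \bigl(\begin{smallmatrix}0 & I\\ I & B\end{smallmatrix}\bigr)$, and cubing this matrix (using $B^2 = I$) yields $\bigl(\begin{smallmatrix}B & 0\\ 0 & B\end{smallmatrix}\bigr) = M_{\mathrm{SWAP}_{ab}}$.

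With the identity in hand, I assemble the global circuit
\[
U \;=\; C \cdot H^{\otimes r^2} \cdot C \cdot H^{\otimes r^2} \cdot C \cdot \pi
\]
and check $U = H^V$ qubit-by-qubit. On each pair $(p_t, q_t) \in S$, the operators $C$, $H^{\otimes r^2}$, and $\pi$ restrict to $CZ_{p_t q_t}$, $H_{p_t} H_{q_t}$, and $\mathrm{SWAP}_{p_t q_t}$ respectively, so the two-qubit identity immediately yields $H_{p_t} H_{q_t}$ on that pair. On each qubit $k \notin S$, both $C$ and $\pi$ act trivially while $H^{\otimes r^2}$ appears twice, contributing $H_k^2 = I$. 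Hence $U = H^V$. The only genuine obstacle is discovering the two-qubit relation; once found, the even-weight hypothesis enters in exactly one place, namely to guarantee that $S$ admits a perfect matching into pairs.
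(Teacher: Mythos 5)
Your proof is correct and follows essentially the same route as the paper: both hinge on the identity $(CZ_{a,b}\cdot H_a H_b)^3 = \mathrm{SWAP}_{a,b}$, both pair off the (even-sized) support of $V$ to define the CZ layer and SWAP permutation, and both assemble the identical circuit $C\cdot H^{\otimes r^2}\cdot C\cdot H^{\otimes r^2}\cdot C\cdot \pi$. The only addition on your part is making the symplectic verification of the two-qubit identity explicit, where the paper simply asserts it as readily checked.
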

\begin{proof}
    It is readily verified that the following circuit identity holds:
    \begin{equation}\label{supmat:eq:CZ H SWAP}
        (CZ_{1,2} \cdot (H \otimes H))^3 = SWAP_{1,2}.
    \end{equation}
    Since $H$ has order two, one then finds
    \[
        \bigg( \big( \prod_{\{a,b\} \in \mathcal{P}} CZ_{a,b} \big) \cdot H^{\otimes r^2}  \bigg)^2 \cdot \big( \prod_{\{a,b\} \in \mathcal{P}} CZ_{a,b} \cdot SWAP_{a,b}\big) = H^V,
    \]
    where $\mathcal{P}$ is any partition of the set $\{(i,j) \mid V_{i,j}=1 \}$ into subsets of size 2. The lemma now follows from the fact that $ \prod_{\{a,b\} \in \mathcal{P}} CZ_{a,b} $ is a depth-1 circuit.
\end{proof}
These two lemmas can be combined into the following statement:
\begin{cor}\label{supmat:cor:Hadamard using CZ}
    Let $V \in M_r(2)$. Then $H^V$ is implemented by repeated application of $H^{\otimes r^2}$, any depth-1 circuit of diagonal gates with support strictly on qubits $\{(i,j) \mid V_{i,j}=1 \}$, and a qubit permutation.
\end{cor}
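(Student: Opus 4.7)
The plan is to prove this corollary by a straightforward case split on the parity of the weight $|V| = \sum_{i,j}V_{i,j}$, invoking one of the two preceding lemmas in each case.

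In the even-weight case, I would apply the preceding lemma directly. The depth-1 circuit $\prod_{\{a,b\}\in\mathcal{P}} CZ_{a,b}$ furnished by that lemma consists of diagonal gates supported on $\{(i,j) : V_{i,j}=1\}$, and the accompanying $\prod_{\{a,b\}\in\mathcal{P}} SWAP_{a,b}$ is the required qubit permutation, so the statement holds verbatim.

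In the odd-weight case, the simplest route is to appeal to \cref{supmat:lem:single block H}, which expresses $H^V = (S^V \cdot H^{\otimes r^2})^2 \cdot S^V$. Here $S^V = \prod_{V_{i,j}=1} S_{(i,j)}$ is a depth-1 circuit of single-qubit diagonal gates supported exactly on $\{(i,j) : V_{i,j}=1\}$, and the qubit permutation is trivial. Alternatively, for a more uniform statement one can choose a distinguished qubit $a_0$ in the support of $V$, partition the remaining (even number of) support qubits into pairs $\mathcal{P}$, and take
\[
D = S_{a_0} \cdot \prod_{\{a,b\}\in\mathcal{P}} CZ_{a,b}.
\]
Using the already-exploited identities $(CZ_{a,b}\cdot H\otimes H)^3 = SWAP_{a,b}$ together with $(S\cdot H)^3 = \mathrm{e}^{\i\pi/4}\, I$, a direct computation shows $(D\cdot H^{\otimes r^2})^2 \cdot D$ equals $H^V$ up to a global phase and a product of SWAPs on the pairs in $\mathcal{P}$, giving a decomposition of the required form.

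There is no serious obstacle: both the algebraic identities and the pairing argument are already in place from the two lemmas immediately preceding the corollary, so the proof amounts to assembling them with a parity-based case distinction. The only minor care needed is to note that $S$, $CZ$, and their products are all diagonal in the computational basis, so any mixture remains a depth-1 diagonal circuit, and that the support of $D$ is contained in $\{(i,j) : V_{i,j}=1\}$ by construction.
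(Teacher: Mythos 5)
Your parity-based case split proves something weaker than the corollary as stated and as used. Read carefully, the corollary asserts that \emph{for every} depth-1 diagonal circuit $D$ supported exactly on $\{(i,j) \mid V_{i,j}=1\}$ — an arbitrary mix of $S$ and $CZ$ gates covering that qubit set — the operator $H^V$ arises from repeated application of $H^{\otimes r^2}$, $D$, and a permutation. This universal reading is what the paper actually uses in Theorem~\ref{supmat:thm:depth arbitrary Hadamard}, where the circuit $S_V$ is chosen freely (specifically from the set $\Xi$); a mere existence claim, which is what your even/odd argument produces (all-$CZ$ when $|V|$ is even, all-$S$ or one-$S$-plus-$CZ$s when $|V|$ is odd), would not license that choice.

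The fix is close to what you already wrote in your ``alternative'' route and it removes the parity case split entirely. Any depth-1 diagonal circuit $D$ supported exactly on $\mathcal{Q}=\{(i,j) \mid V_{i,j}=1\}$ decomposes as $D = S^W \cdot \prod_{\{a,b\}\in \mathcal{P}} CZ_{a,b}$, where $W$ and the pairs in $\mathcal{P}$ partition $\mathcal{Q}$. Let $P=\prod_{\{a,b\}\in\mathcal{P}} SWAP_{a,b}$. Because the gates act on disjoint qubit subsets, the product $\bigl(D\cdot H^{\otimes r^2}\bigr)^2 \cdot D \cdot P$ factorizes qubitwise: on qubits of $W$ it is $(S H)^2 S = e^{\i\pi/4}H$; on each pair $\{a,b\}\in\mathcal{P}$ it is $\bigl(CZ_{a,b}(H_a\otimes H_b)\bigr)^2\, CZ_{a,b}\, SWAP_{a,b} = H_a\otimes H_b$; on qubits outside $\mathcal{Q}$ it is $H^2 = I$. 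So $\bigl(D H^{\otimes r^2}\bigr)^2 D P = H^V$ up to global phase, for every admissible $D$, which is exactly the corollary. (Also note $D$ and $P$ commute since $CZ_{a,b}$ commutes with $SWAP_{a,b}$ and $S$ gates live on different qubits, so the ordering of the trailing $D$ and the permutation is immaterial.)
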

	
We now introduce a particular set of depth-1 (logical) diagonal Clifford circuits on $r^2$ qubits:

\[
    \Xi := \{ \begin{pmatrix} I & D \tau_r \\ 0 & I	\end{pmatrix} \in Sp_{2r^2} \mid D \in \diag_{r^2}(2),\, D \cdot \tau_r \in SYM_{r^2}(2)\}.
\]
For convenience, we denote the set of diagonal matrices appearing in the definition above by 
\[ 
    \xi := \{ D \in \diag_{r^2}(2) \mid D \cdot \tau_r \in SYM_{r^2}(2)\}.
\]

\begin{lem}\label{supmat:lem:support Xi}
    For integer $0\leq s \leq r^2$, the set $\Xi$ contains an operator with support on exactly $s$ qubits.
\end{lem}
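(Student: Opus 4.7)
The plan is to first translate the symmetry constraint $D\tau_r \in SYM_{r^2}(2)$ into a concrete condition on the diagonal entries of $D$, and then count qubit support combinatorially. Indexing the qubits by pairs $(i,j) \in \{1,\dots,r\}^2$, the matrix $D\tau_r$ has a single nonzero entry in row $(i,j)$ at column $\tau_r(i,j) = (j,i)$, equal to $d_{(i,j)}$. Hence $D\tau_r$ is symmetric iff $d_{(i,j)} = d_{(j,i)}$ for all $i,j$. Consequently $D \in \xi$ is fully determined by a subset $A \subseteq \{1,\dots,r\}$ (the diagonal indices with $d_{(i,i)}=1$) together with a subset $B$ of unordered pairs $\{i,j\}$ with $i \neq j$ (those with $d_{(i,j)} = d_{(j,i)} = 1$).

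Next I would identify the support of the corresponding diagonal Clifford operator using Example~\ref{supmat:ex:symplectic-rep-diagonal}: diagonal entries of $D\tau_r$ correspond to $S$ gates and off-diagonal pairs to $CZ$ gates. Each $i \in A$ contributes an $S$ gate on the single qubit $(i,i)$, while each $\{i,j\} \in B$ contributes a $CZ$ between the two distinct qubits $(i,j)$ and $(j,i)$. The total physical qubit support of the operator is therefore exactly $|A| + 2|B|$, which ranges over all values of the form $a + 2b$ with $0 \leq a \leq r$ and $0 \leq b \leq \binom{r}{2}$.

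What remains is the elementary claim that every integer $s \in \{0,1,\dots,r^2\}$ can be written in this form. A short case split handles this: when $s \leq r(r-1)$ take $a = s \bmod 2$ and $b = (s-a)/2 \leq \binom{r}{2}$; when $s > r(r-1)$ take $a = s - r(r-1) \in [1,r]$ and $b = \binom{r}{2}$. In both cases the parity condition $a \equiv s \pmod 2$ follows automatically because $r(r-1)$ is even. Choosing $A$ and $B$ of the prescribed sizes then produces a $D \in \xi$ whose operator has support exactly $s$.

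I do not anticipate any serious obstacle in the proof: the symmetry translation is mechanical, the support count follows directly from the standard $S/CZ$ reading of the symplectic upper-right block, and the final arithmetic claim is trivial once one notices that $r(r-1)$ is even. The only minor subtlety is carefully matching the parity of $a$ to that of $s$ in the case analysis, which is resolved by choosing $a \in \{0,1\}$ or $a = s - r(r-1)$ as appropriate.
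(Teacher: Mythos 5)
Your proof is correct and takes essentially the same route as the paper: both translate the symmetry condition on $D\tau_r$ into the statement that $\xi$ is spanned by the $r$ diagonal contributions (support $1$) and the $\binom{r}{2}$ paired off-diagonal contributions (support $2$), and then finish by noting that every $s\in\{0,\dots,r^2\}$ is expressible as $a+2b$ with $0\leq a\leq r$, $0\leq b\leq\binom{r}{2}$. You spell out the final arithmetic step more explicitly than the paper does, but the substance is identical.
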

\begin{proof}
    Recall that the right action of $\tau_r$ on any matrix in $M_{r^2}(2)$ is to swap the columns $(i,j)$ and $(j,i)$ for $1\leq i \neq j \leq r$, while leaving columns $(i,i)$ unchanged. 
    Hence, the following set forms a basis of $\xi$:
    \begin{multline*}
        \{E_{(i,i),(i,i)} \mid 1\leq i \leq r\} \\
        \cup \{E_{(i,j),(i,j)} + E_{(j,i),(j,i)} \mid 1\leq i\neq j \leq r \}.
    \end{multline*}
    These basis elements of $\xi$ correspond to diagonal operators $S_{(i,i)}$ and $CZ_{(i,j),(j,i)}$ in $\Xi$, respectively. The two subsets contain $r$ and $(r^2-r)/2$ elements, respectively.
    The lemma now follows by observing that any integer $0\leq s \leq r^2$ can be written as a sum $ a + 2b$ for some $0\leq a \leq r$ and $0 \leq b \leq (r^2-r)/2 $.
\end{proof}

Using circuits from $\Xi$ in conjunction with Corollary \ref{supmat:cor:Hadamard using CZ} allows us to execute a Hadamard circuit on any number of qubits inside a $SHYPS(r)$ code block, but we are not free to choose \emph{which} qubits. Conjugating these circuits by a permutation operator, however, allows for the implementation of any arbitrary Hadamard circuit $H^V$ for $V\in M_r(2)$.
Given that both the all-qubit logical Hadamard operator and all logical permutations within a $SHYPS(r)$ code block can be implemented in depth $O(r)$, a depth-$O(r)$ implementation of all logical depth-1 diagonal operators in the aforementioned set would guarantee that all logical Hadamard circuits are implementable in depth $O(r)$ as well. 

We first prove a small lemma on a generating set for $\xi$.
\begin{lem}\label{supmat:lem:generating xi with diagonals}
    The set of matrices $\xi$ is generated (under addition) by matrices of the form
    \[
        D \otimes D \quad \text{ where } D \in \diag_r(2).
    \]
\end{lem}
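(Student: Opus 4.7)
The plan is to prove the lemma by establishing a basis of $\xi$ and then realizing each basis element as a sum of matrices of the form $D \otimes D$. First I would verify that every matrix of the prescribed generating form indeed lies in $\xi$: writing $D = \diag(d_1,\dots,d_r)$, the tensor product $D \otimes D$ is diagonal with $((i,j),(i,j))$-entry $d_i d_j$, which is symmetric under the exchange $(i,j) \leftrightarrow (j,i)$, so $(D\otimes D)\tau_r$ is symmetric as required.

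Next I would recall from the proof of \cref{supmat:lem:support Xi} that $\xi$ has a natural basis consisting of the singleton diagonals $E_{(i,i),(i,i)}$ for $1 \leq i \leq r$, and the paired diagonals $E_{(i,j),(i,j)} + E_{(j,i),(j,i)}$ for $1 \leq i \neq j \leq r$. For the singletons, choose $D = E_{i,i} \in \diag_r(2)$; then $D \otimes D = E_{(i,i),(i,i)}$ directly.

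For the paired basis elements, I would take $D = E_{i,i} + E_{j,j}$ for the relevant $i \neq j$. A direct computation gives
\[
D \otimes D = E_{(i,i),(i,i)} + E_{(i,j),(i,j)} + E_{(j,i),(j,i)} + E_{(j,j),(j,j)}.
\]
Adding to this the two singleton generators $E_{(i,i),(i,i)}$ and $E_{(j,j),(j,j)}$ (already shown to lie in the generated subgroup), the corner terms cancel over $\gf$, leaving $E_{(i,j),(i,j)} + E_{(j,i),(j,i)}$ as desired. Since every basis element of $\xi$ is thereby realized as an $\gf$-sum of matrices of the form $D \otimes D$ with $D \in \diag_r(2)$, these matrices generate $\xi$.

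There is no serious obstacle here; the argument is a direct check against the explicit basis of $\xi$ exhibited in the proof of the previous lemma, and the only small subtlety is handling the off-diagonal pair using the inclusion-exclusion style cancellation described above.
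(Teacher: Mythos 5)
Your proof is correct and takes essentially the same route as the paper's: both identify the same basis of $\xi$ (singleton diagonals $E_{(i,i),(i,i)}$ and paired diagonals $E_{(i,j),(i,j)}+E_{(j,i),(j,i)}$), realize the singletons as $E_{i,i}\otimes E_{i,i}$, and obtain the pairs by expanding $(E_{i,i}+E_{j,j})\otimes(E_{i,i}+E_{j,j})$ and cancelling the two corner terms over $\gf$. There is no meaningful difference from the paper's argument.
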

\begin{proof}
    One can readily verify that $(D \otimes D)\tau_r$ is symmetric for any diagonal matrix $D \in  M_r(2) $. 
    Since the set $\xi$ is closed under addition, it suffices to show that we can generate all elements of some basis.
    Recall that the following set forms a basis of $\xi$:
    \begin{multline*}
        \{E_{(i,i),(i,i)} \mid 1\leq i \leq r\} \\
        \cup \{E_{(i,j),(i,j)} + E_{(j,i),(j,i)} \mid 1\leq i\neq j \leq r \}.
    \end{multline*}
    The lemma follows by observing that $E_{(i,i),(i,i)} = E_{i,i} \otimes E_{i,i}$, 
    and $E_{(i,j),(i,j)} + E_{(j,i),(j,i)}  = (E_{i,i} + E_{j,j}) \otimes (E_{i,i} + E_{j,j}) + E_{i,i} \otimes E_{i,i} +  E_{j,j} \otimes  E_{j,j}$. 
\end{proof}

\begin{thm}\label{supmat:thm:weight xi}
    Let $D \in \xi$. Then there exist $A_1,\dots, A_p \in GL_r(2)$ for some $p\leq 5r+1$, such that
    \[
        D = \sum_{i=1}^p (A_i\otimes A_i^T).
    \]
\end{thm}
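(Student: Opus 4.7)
The plan is to mirror the strategy behind Theorem \ref{supmat:thm:diagonal-decomp-invertible}, exploiting the constrained structure of $\xi$ to sharpen the leading-order cost from $r^2$ down to $O(r)$. First I would pass to the linear isomorphism $\xi \cong SYM_r(2)$ sending $D$ to $M$ with $M_{ij} = D_{(i,j),(i,j)}$; under this isomorphism each generator $D_i \otimes D_i$ with $D_i = \diag(v_i) \in \diag_r(2)$ corresponds to the rank-one symmetric outer product $v_i v_i^T \in SYM_r(2)$. Applying a binary Cholesky-type decomposition to $M$, as in the proof of Lemma \ref{supmat:lem:diagonal decomp}, yields $M = \sum_{i=1}^{t} v_i v_i^T$ with $t \leq \operatorname{rank}(M) + 1 \leq r + 1$, which lifts to
\[
D \;=\; \sum_{i=1}^{t} D_i \otimes D_i, \qquad D_i \in \diag_r(2).
\]
This reduces the problem to converting each (generically non-invertible) summand $D_i \otimes D_i$ into a short sum of invertible products $A \otimes A^T$.

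Next, fix an $r$-cycle $P \in GL_r(2)$. For each $D_i \neq I$, Lemma \ref{supmat:lem:diagperm} gives $A_i := D_i + P \in GL_r(2)$, and expanding produces
\[
D_i \otimes D_i \;=\; A_i \otimes A_i^T \;+\; P \otimes P^T \;+\; \bigl(A_i \otimes P^T + P \otimes A_i^T\bigr).
\]
The first two summands already have the desired form, and two structural features keep the overall count linear in $r$. The repeated summand $P \otimes P^T$ collapses by $\mathbb{F}_2$-parity across the $t$ terms. More importantly, summing the bracketed cross expressions over $i$ collapses by bilinearity in the first argument into a single $B \otimes P^T + P \otimes B^T$ with $B := \sum_i A_i$, rather than $\Theta(r)$ separate cross terms. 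The degenerate cases $D_i \in \{0, I\}$ are absorbed trivially.

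Finally, the residual $B \otimes P^T + P \otimes B^T$ is decomposed using the tensor-transpose analog of Lemma \ref{supmat:lem:trick2} (obtained by stripping the trailing $\tau$ from both sides of the identity there): whenever $C \in GL_r(2)$ can be chosen so that $C, B+C, P+C, B+P+C$ all lie in $GL_r(2)$, the cross expression admits a $4$-term decomposition $C \otimes C^T + (B+C) \otimes (B+C)^T + (P+C) \otimes (P+C)^T + (B+P+C) \otimes (B+P+C)^T$. \textbf{The main obstacle} is that, for the $B$ arising here (a sum of diagonals, plus possibly $P$ itself when $t$ is odd), such a $C$ need not exist---the same obstruction that forces case analysis in the proof of Theorem \ref{supmat:thm:diagonal-decomp-invertible}. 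I would resolve this by conjugating $B$ into a canonical form via Lemma \ref{supmat:lem:trick3}, splitting along the $r$-dimensional diagonal basis, and handling each obstructing component individually using Lemmas \ref{supmat:lem:diaginvert} and \ref{supmat:lem:diagperm}, augmented by additional $r$-cycle permutations as in the closing steps of Theorem \ref{supmat:thm:diagonal-decomp-invertible}. A careful worst-case accounting---linear in $r$ because the obstructing components live in an $O(r)$-dimensional space, mirroring the $O(r)$ correction appearing in that earlier theorem---then yields the stated bound $p \leq 5r + 1$.
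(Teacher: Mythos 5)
Your first move is genuinely different from the paper's and is a nice observation: realizing $\xi \cong SYM_r(2)$ via diagonal extraction $M_{ij} = D_{(i,j),(i,j)}$, so that $D_i \otimes D_i$ with $D_i = \diag(v_i)$ corresponds to the rank-one outer product $v_iv_i^T$, lets you invoke the binary Cholesky bound to get $t\leq r+1$ summands up front. The paper instead leaves the crude bound $a\leq 2^r-1$ alone, expands each $D_i$ in the basis $\{E_{j,j}\}$, and collects terms directly into $r$ diagonal pieces $e_j\,E_{j,j}\otimes E_{j,j}$ plus $r$ cross pieces $D(E_{j,j},B_j)$ (with $B_j\in\diag_r(2)$), making the initial count irrelevant. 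Your idea of shifting every $D_i$ by the \emph{same} $r$-cycle $P$ so that the cross terms collapse by bilinearity into a single $B\otimes P^T + P\otimes B^T$ is clever, but that is exactly where the argument is not finished.

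The gap: handling that single residual is the whole theorem, and you defer it to a ``careful worst-case accounting'' without carrying it out. Once you split $B$ (which is diagonal up to an absorbed copy of $P$) along the basis $\{E_{j,j}\}$ to deal with the trick2 obstruction --- as you propose --- you are right back to $r$ cross terms $D(E_{j,j},\cdot)$ each costing at most $4$, i.e., $\leq 4r$ for the residual. Tallying your terms then gives at best $t + 1 + 4r \leq (r+1) + 1 + 4r = 5r+2$, with possibly another $+1$ if some $D_i = I$ must be carried separately; this overshoots the claimed $5r+1$, so the Cholesky step, while elegant, does not obviously buy you the constant. You would also need the paper's separate computational check for $r=3$, which you don't mention. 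In short: a correct and interesting alternate entry, but it converges to the same final handling as the paper, and the accounting that determines the constant is precisely the part left as a sketch; as written the route lands at $5r + O(1)$, not at $5r+1$.
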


\begin{proof}
    Using \cref{supmat:lem:generating xi with diagonals}, we first decompose $D\in\xi$ as
    \begin{equation} \label{supmat:eq:diagonal-tensor-product}
        D = \sum_{i=1}^a D_i \otimes D_i
    \end{equation}
        
    for $D_i \in \diag_r(2)$ and $a \leq 2^r-1$.
    One can then proceed in a manner similar to the proof of \cref{supmat:thm:diagonal-decomp-invertible}, starting from equation \ref{supmat:eq:expression1} (note that we would only retain the second term in that expression).
    In particular, we decompose each diagonal matrix $D_i$ in the basis $\{E_{j,j}\mid 1\leq j \leq r\}$:
    \[
        D = \sum_{i=1}^a \sum_{j=1}^r d_{i,j} E_{j,j} \otimes \sum_{k=1}^r d_{i,k} E_{k,k},
    \]
    and then collect the terms as follows
    \[
        D = \sum_{j=1}^r e_j E_{j,j} \otimes E_{j,j} + \sum_{j=1}^r f_j (E_{j,j} \otimes B_j + B_j \otimes E_{j,j}),
    \]
    for some $e_j, f_j\in \gf$ and $B_j \in \diag_r(2)$. Provided $r > 3$, one can then invoke \cref{supmat:lem:diagperm}, \cref{supmat:lem:trick1} and \cref{supmat:lem:trick3} as in the proof of \cref{supmat:thm:diagonal-decomp-invertible} to obtain that $w(D \tau)\leq 5r+1$. When $r=3$ we check computationally that $w(D\tau) \leq 12$ for all $D$ of the form (\ref{supmat:eq:diagonal-tensor-product}), completing the proof.
\end{proof}
The bound on the weight of the diagonal matrices in $\xi$ specified in \cref{supmat:thm:weight xi} implies the following statement when combined with \cref{supmat:cor:diagonal generators}:
\begin{cor}\label{supmat:cor:depth Xi}
    The logical depth-1 diagonal circuits in $\Xi$ on code $SHYPS(r)$ may be implemented by a physical circuit of depth at most $5r+1$.
\end{cor}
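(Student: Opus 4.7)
The plan is to combine the two cited results (Theorem \ref{supmat:thm:weight xi} and Corollary \ref{supmat:cor:diagonal generators}) directly, since the statement is essentially a packaging of the weight bound on $\xi$ into a depth statement for the associated Clifford operators.

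First I would take an arbitrary element $\overline{U} = \begin{pmatrix} I & D\tau_r \\ 0 & I \end{pmatrix} \in \Xi$, so that $D \in \xi$. Applying Theorem \ref{supmat:thm:weight xi}, I obtain invertible matrices $A_1, \dots, A_p \in GL_r(2)$ with $p \leq 5r+1$ such that $D = \sum_{i=1}^p A_i \otimes A_i^T$. Multiplying on the right by $\tau_r$ yields
\[
    D\tau_r = \sum_{i=1}^p (A_i \otimes A_i^T)\tau_r,
\]
which expresses the symmetric off-diagonal block of $\overline{U}$ as a sum of the building blocks appearing in Corollary \ref{supmat:cor:diagonal generators}.

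Second, since the group of logical diagonal Clifford operators is abelian and their symplectic representations compose via addition of the upper-right blocks, this sum decomposition of $D\tau_r$ lifts to a product decomposition
\[
    \overline{U} = \prod_{i=1}^p \begin{pmatrix} I & (A_i \otimes A_i^T)\tau_r \\ 0 & I \end{pmatrix},
\]
where each factor lies in the generating set $\calB$. By Corollary \ref{supmat:cor:diagonal generators}, each such factor is implementable by a depth-1 physical diagonal Clifford circuit. Concatenating these $p$ depth-1 physical circuits yields a physical implementation of $\overline{U}$ of depth at most $p \leq 5r+1$, as claimed.

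There is no substantial obstacle here; the nontrivial content has already been absorbed into Theorem \ref{supmat:thm:weight xi} (the matrix-decomposition result) and Corollary \ref{supmat:cor:diagonal generators} (the depth-1 physical realisation of each tensor-square generator). The only minor care needed is to verify that composing the depth-1 generators in sequence genuinely yields a circuit of additive depth (which is immediate, since each is a single round of diagonal gates), and to confirm that the Pauli-correction subtlety flagged in Lemma \ref{supmat:lem:depth 1 diagonal lifts} does not inflate the depth — any accumulated Pauli can be tracked in software or applied transversally at the end, so it does not contribute to the circuit depth of the Clifford component.
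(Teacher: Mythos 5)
Your proof is correct and matches the paper's reasoning exactly: the paper simply states that the corollary follows by combining \cref{supmat:thm:weight xi} with \cref{supmat:cor:diagonal generators}, which is precisely the two-step argument you spell out — decompose $D \in \xi$ as a sum of at most $5r+1$ tensor squares of invertible matrices, then convert the block-additive decomposition into a product of the corresponding depth-1 generators in $\calB$. The observations about additivity of depth and Pauli corrections are also consistent with how the paper handles these details elsewhere.
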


Finally, we can combine the result above with those on in-block logical permutations and the circuit identity in \cref{supmat:cor:Hadamard using CZ} to obtain a $O(r)$ depth scaling for arbitrary Hadamard circuits.
\begin{thm}\label{supmat:thm:depth arbitrary Hadamard}
    Let $V \in M_r(2)$. Then the arbitrary logical Hadamard circuit $H^V$ on code $SHYPS(r)$ may be implemented, up to a logical Pauli correction, by a physical circuit of depth no greater than $11r + 15$.
\end{thm}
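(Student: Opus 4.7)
The plan is to reduce the Hadamard circuit $H^V$ to a short sequence consisting only of depth-$1$ diagonal factors and in-block logical permutations, using the observation that a Hadamard-conjugated element of $\Xi$ is itself a fold-transversal $X$-diagonal generator whose physical cost matches that of its $Z$-type sibling.

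First, I would invoke Corollary~\ref{supmat:cor:Hadamard using CZ} to express
\[
H^V \;=\; \bigl(D\,H^{\otimes r^2}\bigr)^2\,D\,\Sigma,
\]
where $D$ is any depth-$1$ diagonal Clifford supported on $\{(i,j):V_{i,j}=1\}$ and $\Sigma$ is a qubit permutation. Lemma~\ref{supmat:lem:support Xi} supplies a generator $D_\Xi \in \Xi$ whose support has exactly $|\mathrm{supp}(V)|$ qubits, and I would pick a permutation $\pi\in S_{r^2}$ mapping $\mathrm{supp}(D_\Xi)$ onto $\mathrm{supp}(V)$, setting $D = \pi D_\Xi \pi^{-1}$. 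Because $H^{\otimes r^2}$ commutes with every qubit permutation, all interior copies of $\pi^{\pm 1}$ cancel, yielding
\[
H^V \;=\; \pi\cdot D_\Xi\,H^{\otimes r^2}\,D_\Xi\,H^{\otimes r^2}\,D_\Xi \cdot P,
\]
with $P := \pi^{-1}\Sigma$ a single logical in-block permutation.

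The central step is to recognise that the inner Hadamard sandwich $H^{\otimes r^2}\,D_\Xi\,H^{\otimes r^2}$ is itself a fold-transversal logical $X$-diagonal generator. Indeed, conjugating the symplectic form of $D_\Xi$ by that of $H^{\otimes r^2}$ gives
\[
\begin{pmatrix}0 & I\\ I & 0\end{pmatrix}\begin{pmatrix}I & D\tau_r\\ 0 & I\end{pmatrix}\begin{pmatrix}0 & I\\ I & 0\end{pmatrix} \;=\; \begin{pmatrix}I & 0\\ D\tau_r & I\end{pmatrix},
\]
so the sandwich belongs to the $X$-analogue $\Xi_X$ of $\Xi$. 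As the introduction to Section~\ref{supmat:sec: diagonal gate characterisation} explicitly notes, every result proved for $Z$-diagonal operators carries over to $X$-diagonal operators via Hadamard conjugation; applied to Corollary~\ref{supmat:cor:depth Xi} this shows the sandwich admits a direct physical implementation of depth at most $5r+1$, paying the cost of a single $\Xi$-style generator rather than the two $H^{\otimes r^2}$ layers separately. After this replacement,
\[
H^V \;=\; \pi\cdot D_\Xi\cdot D_{\Xi_X}\cdot D_\Xi\cdot P,
\]
a circuit of two in-block permutations and three depth-$1$ logical diagonal factors.

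Summing depths using Proposition~\ref{supmat:prop:swaps in O(r)} ($3r+6$ per in-block permutation) together with Corollary~\ref{supmat:cor:depth Xi} and its $X$-counterpart ($5r+1$ per diagonal factor) yields
\[
2(3r+6)+3(5r+1) \;=\; 21r+15,
\]
as desired. The main item to spell out carefully is the $X$-diagonal analogue of Corollary~\ref{supmat:cor:depth Xi} itself: one must rerun the proof of Theorem~\ref{supmat:thm:weight xi} with the off-diagonal block of the symplectic matrix moved to the bottom-left, replacing the $Z$-type fold-transversal generators $U(\sigma)$ with their $X$-type counterparts obtained by exchanging the roles of $G_X$ and $G_Z$. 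Because the SHYPS construction is symmetric in $X$ and $Z$, this is a routine adaptation rather than a conceptual obstacle, but it is the one place where the bookkeeping deserves explicit verification to confirm that no hidden Hadamard-layer overhead creeps back into the depth count.
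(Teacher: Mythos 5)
Your proposal is correct and follows essentially the same strategy as the paper: both start from \cref{supmat:cor:Hadamard using CZ}, use \cref{supmat:lem:support Xi} to locate a $\Xi$-generator of the right support size, observe that the Hadamard sandwich $H^{\otimes r^2}D_\Xi H^{\otimes r^2}$ is an $X$-diagonal operator implementable at the same $5r+1$ cost, and pay for two in-block permutations (\cref{supmat:prop:swaps in O(r)}), giving $2(3r+6)+3(5r+1)=21r+15$. The only cosmetic difference is that you conjugate the $\Xi$-generator by a relabeling permutation $\pi$ and telescope it through $H^{\otimes r^2}$ directly, whereas the paper first proves the symmetric-$V$ case at depth $18r+9$ and then conjugates by a permutation to reach general $V$ — the same permutation cost, just organized as a case split.
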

\begin{proof}
    Using \cref{supmat:cor:Hadamard using CZ}, one can write 
	\[
		H^V = S_V \cdot H^{\otimes r^2} \cdot S_V \cdot H^{\otimes r^2} \cdot S_V \cdot \Lambda_V\,,
	\]
	were $S_V$ is a logical depth-1 diagonal circuit with support strictly on qubits $\{(i,j) \mid V_{i,j}=1 \}$, and $\Lambda_V$ is a depth-1 logical SWAP circuit (with support only within $V$).

	We first focus on the special case $V \in SYM_r(2)$, where we can choose $S_V\in \Xi$. Furthermore, up to a possible Pauli-$Z$ correction, we may replace the first and last instance of $S_V$ by $\tau_r$.
    In particular, one has 
    \[
		H^V = \tau_r \cdot H^{\otimes r^2} \cdot S_V \cdot H^{\otimes r^2} \cdot \tau_r \cdot P_Z \cdot \Lambda_V\,,
	\]
    where $P_Z = \prod_{i\mid V_{i,i}=0} Z_{(i,i)}$.
    The depth of $S_V$ and $\Lambda_V$ follows from \cref{supmat:cor:depth Xi}, and \cref{supmat:prop:swaps in O(r)}, respectively. Note that $H^{\otimes r^2} \cdot S_V \cdot H^{\otimes r^2}$ is an $X$-diagonal operator which can be executed in the same depth as $S_V$. Hence, we find that up to a a logical Pauli correction the total depth is upper bounded by $1 + (5r+1) + 1 + 3r+6 = 8r + 9 $.

    To obtain the cost for $V \in M_r(2) \backslash SYM_r(2)$, all we need to do is find a $V'\in SYM_r(2)$ with the same number of nonzero entries. $H^{V}$ and $H^{V'}$ then have the same weight, and can be mapped onto one another by conjugation with an appropriate logical qubit permutation, this is indeed always possible by \cref{supmat:lem:support Xi}.
    In particular, for any such $V$, there exists a $V'\in SYM_r(2)$ and a permutation $P\in S_{r^2}$ such that $fl(V) = fl(V')\cdot P$. One then has $H^{V} = P H^{V'} P^{-1}$.
    We may combine the $\Lambda_{V'}$ and $P^{-1}$ into a single permutation, which results in total depth (up to a logical Pauli correctoin) of $11r + 15$.
\end{proof}

We end this section by noting that a single logical Hadamard operator can be executed at a constant cost. To see this, recall that (up to a global phase) $ S_i\cdot ( H_i \cdot S_i \cdot H_i) \cdot S_i = H_i$. Note that we can replace the initial and final $S_i$ operators by some logical depth-1 diagonal operator $D_i$ that contains $S_i$\footnote{Note that this may require a Pauli $Z$ correction.}. 
We may then use \cref{supmat:lem:single S} to determine the total cost, noting that $H_i \cdot S_i \cdot H_i$, as an $X$-diagonal operator, can be implemented with the same depth as $S_i$.
\begin{cor}
	A single logical qubit Hadamard is implementable in depth $8$ ($11$ when $r = 3$).
\end{cor}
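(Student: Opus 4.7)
The plan is to prove the corollary by invoking the Clifford identity $(SH)^3 = e^{i\pi/4} I$, which rearranged (up to global phase) gives $H_i = S_i \cdot (H_i S_i H_i) \cdot S_i$. Since the middle factor $H_iS_iH_i$ is $X$-diagonal on qubit $i$, it can be implemented as a logical $X$-diagonal Clifford operator with the same depth cost as a single logical $S_i$, namely $6$ for $r\geq 4$ and $9$ for $r=3$ by \cref{supmat:lem:single S}(1). The two outer $S_i$ factors can each be implemented at depth $1$ by using \cref{supmat:lem:single S}(2), which guarantees a depth-$1$ logical operator in $\calB$ whose action contains $S_i$ (alongside possibly some additional diagonal gates). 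Totaling these costs gives $1+6+1=8$ when $r\geq 4$ and $1+9+1=11$ when $r=3$, matching the stated bounds.

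The delicate point is handling the extra diagonal gates introduced by each depth-$1$ replacement of $S_i$. I would address this by observing that, in the symplectic representation, a depth-$1$ element of $\calB$ corresponds to a symmetric matrix $B$ with $B_{(i,i)}=1$, and we are free to choose $g \in GL_r(2)$ in the construction of \cref{supmat:cor:diagonal generators} so that the $i$-th row and column of $B$ equal $e_i^T$ and $e_i$, respectively (i.e., no CZ gate in $D_i$ touches qubit $i$). A direct computation in $Sp_{2k}(2)$ then shows
\begin{equation*}
\begin{pmatrix} I & B \\ 0 & I \end{pmatrix}\begin{pmatrix} I & 0 \\ E_{i,i} & I \end{pmatrix}\begin{pmatrix} I & B \\ 0 & I \end{pmatrix} = \begin{pmatrix} I+E_{i,i} & E_{i,i} \\ E_{i,i} & I+E_{i,i} \end{pmatrix},
\end{equation*}
which is exactly the symplectic representation of the logical $H_i$, since the contributions from the extra CZ gates in $B$ square to zero modulo $2$.

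The main obstacle I anticipate is verifying that a suitable $g \in GL_r(2)$ indeed exists such that the associated $D_i \in \calB$ has the required $e_i^T, e_i$ structure along row/column $i$ of the corresponding symmetric matrix. This amounts to a small linear-algebra check on the structure of $(g\otimes g^T)\tau_r$ and should follow from choosing $g$ to be the transposition $(i,j)$ in $S_r$ as in the proof of \cref{supmat:lem:single S}(2), after conjugating by an appropriate logical permutation to align the target qubit. Any residual non-qubit-$i$ diagonal action can then be absorbed as a Pauli correction trackable in software, consistent with the sign-ambiguity convention used throughout the paper, completing the proof.
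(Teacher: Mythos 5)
Your proposal is correct and follows essentially the same route as the paper: the identity $S_i\,(H_iS_iH_i)\,S_i = H_i$ (up to phase), with the middle $X$-diagonal factor costed like a single $S$ gate ($6$, or $9$ for $r=3$) via \cref{supmat:lem:single S}(1) and the two outer $S_i$'s replaced by depth-$1$ operators from $\calB$ containing $S_i$ via \cref{supmat:lem:single S}(2), giving $8$ (resp.\ $11$). Your symplectic check that one may choose $g$ (e.g.\ the transposition used in the paper's Lemma) so that no CZ in the depth-$1$ replacement touches qubit $i$, making the extra gates cancel, just makes explicit a detail the paper leaves implicit; it does not change the argument.
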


\section{SHYPS compiling summary}\label{supmat:sec:SHSCompSummary}
In this section we present two novel Clifford decompositions, which we then use to synthesize arbitrary Clifford operators in terms of the logical generators presented in this paper. 
Contrary to other known Clifford decompositions, those introduced below do not contain any Hadamard gates. Instead, the first one contains exclusively $X-$ and $Z$- diagonal gates, while the second one contains $X-$ and $Z$- diagonal gates and a CNOT circuit.
For codes where arbitrary Hadamard circuits are more expensive than depth-1 diagonal gates, as is the case for SHYPS codes, these decompositions can be advantageous for minimizing the total depth when synthesizing a Clifford circuit. To highlight the significance of the worst-case logical Clifford depth in SHYPS obtained with these decompositions, we conclude this section with a brief comparison of the space-time volume of SHYPS against that of surface codes implementing transversal gates or lattice surgery.

\subsection{Clifford decompositions}

\begin{thm}\label{supmat:thm:clifford decomposition_1}
	Any Clifford operator $C \in \mathcal{C}_n / \mathcal{P}_n$, can be written as the product
	\begin{equation}\label{supmat:eq:decomp_1}
		C = DZ \cdot DX \cdot DZ' \cdot DX' \cdot DZ(1)\,,
	\end{equation}
	where 
	\begin{itemize}
		\item $DZ$ and $DZ'$ are $Z$-diagonal operators,
		\item $DX$ and $DX'$ are $X$-diagonal operators, 
		\item $DZ(1)$ is a depth-1 $Z$-diagonal operator.
	\end{itemize}
\end{thm}
\begin{rem*}
	One can obtain three related decompositions by applying \eqref{supmat:eq:decomp_1} to $C^{-1}$, $H^{\otimes n} C H^{\otimes n}$, and $H^{\otimes n} C^{-1} H^{\otimes n}$, resulting in decompositions of the form $C = DZ(1) \cdot DX \cdot DZ' \cdot DX' \cdot DZ $, $C = DX \cdot DZ \cdot DX' \cdot DZ' \cdot DX(1) $, and $C = DX(1) \cdot DZ \cdot DX' \cdot DZ' \cdot DX $, respectively.
\end{rem*}

\begin{thm}\label{supmat:thm:clifford decomposition_2}
	Any Clifford operator $C \in \mathcal{C}_n / \mathcal{P}_n$, can be written as the product
	\begin{equation}\label{supmat:eq:decomp_2}
		C = DZ \cdot CX \cdot DX \cdot DZ(1)\,,
	\end{equation}
	where 
	\begin{itemize}
		\item $DZ$ is a $Z$-diagonal operator,
		\item $CX$ is a CNOT operator,
		\item $DX$ is an $X$-diagonal operator, 
		\item $DZ(1)$ is a depth-1 $Z$-diagonal operator.
	\end{itemize}
\end{thm}
\begin{rem*}
	We can again obtain additional related decompositions by applying \eqref{supmat:eq:decomp_2} to  $C^{-1}$, $H^{\otimes n} C H^{\otimes n}$, and $H^{\otimes n} C^{-1} H^{\otimes n}$. In addition, since CNOT normalizes both the group of $Z$-diagonal operators and that of $X$-diagonal operators, one can change the order of the consituents by moving $CX$ above through either $DZ$ or $DX$. 
	In total 12 different decompositions of this kind can be obtained by combining these two methods.
\end{rem*}

Before proving both theorems above, we must prove a number of supporting lemmas. 
We first define the following subgroups of $Sp_{2n}(2)$:
\begin{align*}
	\mathcal{Z}_n = \left\{\begin{bmatrix} I_n & M \\ 0 & I_n\end{bmatrix}~;~ M\in M_n(2) \,,\, M=M^T\right\},\\
	\mathcal{X}_n = \left\{\begin{bmatrix} I_n & 0 \\ M & I_n\end{bmatrix}~;~ M\in M_n(2) \,,\, M=M^T\right\}.
\end{align*}
Note that these are precisely the symplectic representations of $Z$- and $X$-diagonal Clifford operators, respectively. 
We also define the following subsets:
\begin{align*}
	\mathcal{Z}_n(1) &= \left\{\begin{bmatrix} I_n & M \\ 0 & I_n\end{bmatrix}~;~M=M^T~\text{with}~\text{weight}(M)\leq 1\right\},\\
	\mathcal{X}_n(1) &= \left\{\begin{bmatrix} I_n & 0 \\ M & I_n\end{bmatrix}~;~M=M^T~\text{with}~\text{weight}(M)\leq 1\right\};
\end{align*}
which correspond to \emph{depth-1} $Z$- and $X$-diagonal Clifford operators, respectively.

\begin{lem}\label{supmat:lem:make fist quadrant invertible}
	Let $\chi = \begin{pmatrix} A & B\\C & D\end{pmatrix} \in Sp_{2n}(2)$. Then there exists $\zeta \in \mathcal{Z}_n(1)$ such that $\zeta \cdot \chi$ has invertible top-left quadrant.
\end{lem}
\begin{proof}
	Denoting the rank of $A\in M_n(2)$ by $k$,
	we may right-multiply $\chi$ with a symplectic matrix of the form $\begin{pmatrix} U & 0\\0 & U^{-T}\end{pmatrix}$ to perform Gaussian elimination on the first $n$ columns, yielding 
	\[
	\chi \cdot \begin{pmatrix} U & 0 \\0 & U^{-T}\end{pmatrix}
	= 
	\begin{pmatrix} 
		\begin{matrix} A_1 & 0\; \\ A_2 & 0\;  \end{matrix} & B'\; \\
		\begin{matrix} C_1 & C_3 \\ C_2 & C_4  \end{matrix}  & D'\;
	\end{pmatrix},
	\]
	where $A_1 \in M_k(2)$ and $\begin{pmatrix} A_1 \\ A_2 \end{pmatrix}$ has rank $k$.
	Note that since $\begin{pmatrix} A \\ C \end{pmatrix}$ has rank $n$, $\begin{pmatrix} C_3 \\ C_4 \end{pmatrix}$ necessarily has rank $n-k$.
	We can thus perform further Gaussian elimination on the latter $n-k$ columns. In particular, there exists a $V\in GL_{n-k}$ and a permutation $P\in S_n$ such that
	\[
	\begin{pmatrix} C_3 \\ C_4 \end{pmatrix} \cdot V =: P \cdot \begin{pmatrix} C'_3 \\ I_{n-k} \end{pmatrix}.
	\]
	For $U'= \begin{pmatrix} I_k & 0 \\0 & V \end{pmatrix} U$, we then have
	\[
	\begin{pmatrix} P^{-1} & 0 \\0 & P^{-1}\end{pmatrix} \cdot \chi \cdot \begin{pmatrix} U' & 0 \\0 & U'^{-T}\end{pmatrix}
	= 
	\begin{pmatrix} 
		\begin{matrix} A'_1 & 0\; \\ A'_2 & 0\;  \end{matrix} & B''\; \\
		\begin{matrix} C'_1 & C'_3 \\ C'_2 & I_{n-k}  \end{matrix}  & D''\;
	\end{pmatrix}.
	\]
	The symplectic condition for the matrix above implies that $C'^T_3 A'_1 = A'_2$, which in turn implies that $A'_1$ has rank $k$ because $\rank \begin{pmatrix} A'_1 \\ A'_2 \end{pmatrix} = k$.
	It follows that the matrix $\begin{pmatrix} A'_1 & 0 \\ A'_2 + C'_2 & I_{n-k} \end{pmatrix}$ is full rank, and therefore the matrix
	\[
	\begin{pmatrix} I & \diag(0,I_{n-k})\\ 0 & I\end{pmatrix} \begin{pmatrix} P^{-1} & 0 \\0 & P^{-1}\end{pmatrix} \cdot \chi \cdot \begin{pmatrix} U' & 0 \\0 & U'^{-T}\end{pmatrix}
	\]
	has an invertible top-left quadrant. 
	Since multiplying this matrix from the left with $\diag(P,P)$ and from the right with $\diag(U'^{-1}, U'^T)$ does not change that, we conclude that 
	\[ 
	\begin{pmatrix} I & K\\0 & I\end{pmatrix} \cdot \chi
	\]
	with $K = P \cdot \diag(0,I_{n-k}) \cdot P^{-1}$ has an invertible top-left quadrant. $K$ is symmetric and has columns with at most a single nonzero entry, hence $\zeta := \begin{pmatrix} I & K\\0 & I\end{pmatrix} \in \mathcal{Z}_n(1) $. 
\end{proof}
Note that the symmetric matrix $K$ found in the proof above is always diagonal and therefore the symplectic matrix $\zeta$ in the lemma corresponds to a circuit of S gates. However, for any involution $J \in S_n$ obeying $J K J = K$, the matrix $K'= K J$ is a valid solution too. 
With this choice for the off-diagonal block, $\zeta$ is the symplectic representation of a depth-1 circuit containing CZ gates. 
Specifically, replacing $K$ by $K J$ corresponds to taking pairs (determined by the two-cycles that constitute the involution permutation $J$) of S gates appearing in $\zeta$ and replace those by a CZ gate between the affected qubits.  
Before moving on to the second lemma, we first prove the following technical result:
\begin{prop}
	\label{supmat:prop:prodsym}
	Every square matrix over a field $K$ is the product of two symmetric matrices over $K$.
\end{prop}
The proof of this proposition relies on the fact that every square matrix $M$ over a field $K$ admits and is similar to its rational canonical form $\Lambda$. Both $\Lambda$ and the similarity transformation are themselves matrices over $K$. Additionally, two matrices are similar if and only if they admit the same rational canonical form.
\begin{proof}
	Let $M$ be a square matrix over $K$. $M$ admits a rational canonical form $\Lambda$ such that
	\[
	M = S \Lambda S^{-1}
	\]
	with both $S$ and $\Lambda$ square matrices over $K$ and $S$ invertible. We write $\Lambda$ as
	\[
	\Lambda = \begin{bmatrix}
		C_1 & & &\\
		& C_2 & &\\
		& & \ddots &\\
		& & & C_k
	\end{bmatrix}
	\]
	for each $C_i$ a companion matrix for the invariant factor $f_i$ of $M$. Explicitly, for
	\[
	f_i = c_0 + c_1 X + \cdots + c_{d-1} X^{d-1} + X^{d}
	\]
	we have
	\[
	C_i = \begin{bmatrix}
		& & & -c_0\\
		1 & & & -c_1\\
		& \ddots & & \vdots\\
		& & 1 & -c_{d-1}
	\end{bmatrix}.
	\]
	Note that by explicit computation,
	\[
	C_i \cdot \begin{bmatrix}
		c_1 & c_2 & \dots & c_{d-1} & 1\\
		c_2 & c_3 & \iddots & 1 &\\
		\vdots & \iddots & \iddots  &&\\
		c_{d-1} & 1 & && \\
		1 & & & &
	\end{bmatrix} = \begin{bmatrix}
		-c_0 &  & & & \\
		& c_2 & \dots & c_{d-1} & 1\\
		& \vdots & \iddots  & \iddots &\\
		& c_{d-1} & \iddots && \\
		& 1 & & &
	\end{bmatrix}.
	\]
	As the matrix right-multiplying $C_i$ is always invertible, we must have
	\begin{align}
		C_i = \begin{bmatrix}
			-c_0 &  & & & \\
			& c_2 & \dots & c_{d-1} & 1\\
			& \vdots & \iddots  & \iddots &\\
			& c_{d-1} & \iddots && \\
			& 1 & & &
		\end{bmatrix}\cdot \begin{bmatrix}
			c_1 & c_2 & \dots & c_{d-1} & 1\\
			c_2 & c_3 & \iddots & 1 &\\
			\vdots & \iddots & \iddots  &&\\
			c_{d-1} & 1 & && \\
			1 & & & &
		\end{bmatrix}^{-1}.
		\label{eq:prodoftwo}
	\end{align}
	For any symmetric invertible matrix $A$
	\[
	A = A^T \iff A^{-1} = (A^T)^{-1} \iff A^{-1} = (A^{-1})^T
	\]
	so that $A^{-1}$ is also symmetric. Therefore, \cref{eq:prodoftwo} is a decomposition of $C_i$ as a product of two symmetric matrices. Labelling these as $U_i,V_i$ so that $C_i = U_i V_i$, we have a decomposition of $\Lambda$ as a product of two symmetric matrices by
	\[
	\Lambda = U V=  \begin{bmatrix}
		U_1 & & \\
		& \ddots & \\
		& & U_k
	\end{bmatrix}\begin{bmatrix}
		V_1 & & \\
		& \ddots & \\
		& & V_k
	\end{bmatrix}.
	\]
	Finally, note that
	\begin{multline*}
	    M = S \Lambda S^{-1} = S U V S^{-1} = S U S^T (S^T)^{-1} V S^{-1} \\
     = (S U S^T)((S^{-1})^T V S^{-1})
	\end{multline*}
	where the two terms in parentheses are explicitly symmetric matrices.
\end{proof}

We are now equipped to prove the following lemma:
\begin{lem}
	\label{supmat:lem:XZXZ}
	Let $\chi = \begin{pmatrix} A & B\\C & D\end{pmatrix} \in Sp_{2n}(2)$ with invertible top-left quadrant $A\in GL_n(2)$. Then there exist symplectic matrices $\alpha, \gamma \in \mathcal{X}_n$ and $\beta, \delta \in \mathcal{Z}_n$ such that $\chi = \alpha \cdot \beta \cdot \gamma \cdot \delta$.
\end{lem}

\begin{proof}
	Substituting the explicit definitions for $ \mathcal{X}_n$ and $ \mathcal{Z}_n$, one finds that  $\alpha \cdot \beta \cdot \gamma \cdot \delta$ takes the form
	\[
	\begin{bmatrix}
		I_n & 0\\
		L & I_n
	\end{bmatrix}\begin{bmatrix}
		I_n & M\\
		0 & I_n
	\end{bmatrix}\begin{bmatrix}
		I_n & 0\\
		N & I_n
	\end{bmatrix}\begin{bmatrix}
		I_n & P\\
		0 & I_n
	\end{bmatrix} 
	= \begin{bmatrix}
		A' & B'\\
		C' & D'
	\end{bmatrix}
	\]
	where
	\begin{align*}
		A' &= I_n + M N\\
		B' &= M + P + MNP\\
		C' &= L + N + LMN\\
		D' &= I_n + LM + NP + LP + LMNP
	\end{align*}
	and $L,M,N,P$ are symmetric $n\times n$ binary matrices. 
	Take any $\chi$ with a block form as stated in the theorem, we must then proof that $L,M,N,P$ exist such that $A=A'$, $B=B'$, $C=C'$, and $D=D'$.

	By \cref{supmat:prop:prodsym}, we can always find symmetric matrices $M,N  \in M_n(2)$ such that
	\[
	M N = A + I_n
	\]
	and so we can always satisfy the upper left quadrant constraint. Since $A$ is invertible, we can solve for $L$ and $P$ as
	\begin{align*}
		L (I_n + M N) + N = C & \implies L A + N = C\\
		&\implies L = (C + N)A^{-1}\\
		(I_n + M N)P + M = B & \implies A P + M = B \\
		&\implies P = A^{-1}(B + M).
	\end{align*}
	Since $M,N$ are currently fixed, we do not necessarily satisfy our condition that $L,P$ are symmetric. Explicitly checking $L$, we have
	\[
	L^T = A^{-T}(C+N)^T.
	\]
	Since $\chi$ is a symplectic matrix, we have
	\[
	\begin{bmatrix}
		A & B\\
		C & D
	\end{bmatrix}^T \begin{bmatrix}
		0 & I_n\\
		I_n & 0
	\end{bmatrix}\begin{bmatrix}
		A & B\\
		C & D
	\end{bmatrix}=
	\begin{bmatrix}
		0 & I_n\\
		I_n & 0
	\end{bmatrix}
	\]
	which in turn implies
	\[
	A^T C + C^T A = 0 \implies A^{-T} C^T = C A^{-1}.
	\]
	Expanding the expression for $L^T$ and noting that $N$ has already been fixed as symmetric, we thus have
	\begin{align*}
		L^T &= A^{-T} C^T + A^{-T} N^T\\
		&= C A^{-1} + A^{-T} N.
	\end{align*}
	Comparing against the expanded expression for $L$, we see that $L = L^T$ if and only if $N A^{-1} = A^{-T} N$. But $N A^{-1} = A^{-T} N$ if and only if $A^T N = N A$. Explicitly checking, we have
	\begin{multline*}
	    A^T N = (I_n + M N)^T N = (I_n + N^T M^T)N \\
     = N + N M N = N(I_n + M N) = N A
	\end{multline*}	
	and we conclude that $L$ is symmetric as desired given any $M,N$ pair that satisfied the first constraint. An analogous sequence of arguments proves $P$ is also explicitly symmetric when computed with the given expression. Finally, the expressions for $L,P$ can be substituted into our check for the final quadrant to yield
	\begin{align*}
		&I_n + LM + NP + LP + LMNP\\
		&= I_n + (C + N)A^{-1} M + N A^{-1}(B + M)\\
		&\qquad+ (C + N)A^{-1}(I_n + M N)A^{-1}(B + M)\\
		&=I_n + C A^{-1}M + N A^{-1} B + (C + N)A^{-1}(A)A^{-1}(B + M)\\
		&=I_n + C A^{-1}M + N A^{-1} B + (C + N)A^{-1}(B + M)\\
		&=I_n + C A^{-1} B + N A^{-1} M.
	\end{align*}
	Again, due to the symplectic structure of $\chi$, we remark that $C^T B = A^T D + I_n$. Recalling that the symplectic structure implies $C A^{-1} = A^{-T} C^T$, we can simplify further:
	\begin{align*}
		I_n + C A^{-1} B + N A^{-1} M &= I_n + A^{-T} C^T B + N A^{-1} M\\
		&= I_n + A^{-T}(A^T D + I_n) + N A^{-1} M\\
		&= I_n + A^{-T} + D + N A^{-1} M.
	\end{align*}
	As we saw earlier, $N A = A^T N$. But this implies $A^{-T} N = N A^{-1}$ so that we have
	\begin{align*}
		I_n + A^{-T} + D + N A^{-1} M &= I_n + A^{-T} + D + A^{-T} N M\\
		&=I_n + A^{-T} + D + A^{-T} (M N)^T\\
		&=I_n + A^{-T} + D + A^{-T} (I_n + A)^T\\
		&=I_n + A^{-T} + D + A^{-T} + I_n\\
		&=D
	\end{align*}
	as required.
\end{proof}

\cref{supmat:thm:clifford decomposition_1} is a direct consequence of the two lemmas we proved above.

\begin{proof} (Proof of Theorem \ref{supmat:thm:clifford decomposition_1})
    Take an arbitrary Clifford operator $C$ and denote its symplectic representation by $\chi$. It follows directly from \cref{supmat:lem:make fist quadrant invertible} and \cref{supmat:lem:XZXZ} that there exist symplectic matrices $\zeta \in \mathcal{Z}_n(1)$, $\alpha, \gamma \in \mathcal{X}_n$ and $\beta, \delta \in \mathcal{Z}_n$ such that $\chi = \zeta \cdot \alpha \cdot \beta \cdot \gamma \cdot \delta$.
	The symplectic representation of Clifford operators used in this manuscript (see \cref{supmat:subsec:review-of-paulis-and-cliffords}) ,  $\Phi: \mathcal{C}_n / \mathcal{P}_n \rightarrow Sp_{2n}(2)$, is defined to right-act on row-vectors, and therefore the order of multiplication must be inverted, i.e., for $C_1, C_2 \in \mathcal{C}_n$ we have $\Phi(C_1 C_2) = \Phi(C_2)\Phi(C_1)$. 
	Since $\mathcal{X}_n$, $\mathcal{Z}_n$ and  $\mathcal{Z}_n(1)$ are precisely the images of the sets of $X$-diagonal, $Z$-diagonal, and depth-1 $Z$-diagonal gates, respectively, under the symplectic representation, the matrix decomposition $\chi = \zeta \alpha \cdot \beta \cdot \gamma \cdot \delta$ corresponds to the desired decomposition of the Clifford operator $C$. 
\end{proof}

In order to prove the second decomposition, described in \cref{supmat:thm:clifford decomposition_2}, we need to prove one additional lemma:
\begin{lem}\label{supmat:lem:decomposition assuming invertible first quadrant}
	Let $\chi = \begin{pmatrix} A & B\\C & D\end{pmatrix} \in Sp_{2n}(2)$ with invertible top-left quadrant $A\in GL_n(2)$. 
    Then there exist symplectic matrices $\alpha \in \mathcal{X}_n$, $\gamma \in \mathcal{Z}_n$ and 
    $\beta = \begin{pmatrix}F & 0 \\ 0 & F^{-T}\end{pmatrix}$ 
    with $F\in GL_n(2)$ such that $\chi = \alpha \cdot \beta \cdot \gamma$.
\end{lem}
\begin{proof}
	Define matrices $\alpha$, $\beta$ and $\gamma$ as in the lemma above. 
	We compute their product:
	\begin{eqnarray*}
		\alpha \cdot \beta \cdot \gamma & = & 
		\begin{pmatrix}
			I & 0\\ E & I
		\end{pmatrix}
		\cdot 
		\begin{pmatrix}
			F & 0 \\ 0 & F^{-T}
		\end{pmatrix}
		\cdot
		\begin{pmatrix}
			I & G\\ 0 & I
		\end{pmatrix} \\ &
		= &
		\begin{pmatrix}
			F & FG\\
			EF & EFG + F^{-T}
		\end{pmatrix}\,.
	\end{eqnarray*}	
	By choosing $E = C A^{-1}$, $F = A$, and $G = A^{-1}B$, the product above yields
	\[
	\alpha \cdot \beta \cdot \gamma = \begin{pmatrix}
		A & B\\
		C & CA^{-1}B + A^{-T}
	\end{pmatrix}
	\]\,
	To prove the lemma, we must show that $CA^{-1}B + A^{-T} = D $, and that our choices of $E$, $F$, and $G$ result in valid symplectic matrices $\alpha$, $\beta$, and $\gamma$.
	Recall that the matrix $\chi$ is symplectic, i.e., it satisfies the symplectic condition $\chi^T \Omega \chi = \Omega$. This yields the following conditions on the submatrices $A$, $B$, $C$, and $D$:
	\begin{align}
		A^T D + C^T B = I\,, \label{supmat:eq:condition_1}\\
		A^T C + C^T A = 0\,, \label{supmat:eq:condition_2}\\
		B^T D + D^T B = 0\,. \label{supmat:eq:condition_3}
	\end{align}
	Since $A$ is invertible, condition (\ref{supmat:eq:condition_1}) 
	can be restated as 
	\begin{equation}\label{supmat:eq:D}
		D = A^{-T} + A^{-T}C^T B\,.
	\end{equation}
	Similarly, condition (\ref{supmat:eq:condition_2}) can be reformulated as 
	\begin{equation}\label{supmat:eq:CA}
		CA^{-1} = A^{-T}C^T\,.
	\end{equation} 
	Combining these two equations, we find that $D = CA^{-1}B + A^{-T}$, and hence $\alpha \beta \gamma = \chi$.
	
	The symplectic condition for $\alpha$ requires $E = E^T$. For our choice $E = CA^{-1}$, this follows from \cref{supmat:eq:CA}.
	The symplectic condition for $\beta$ requires $F\in GL_n(2)$, which is guaranteed for our choice $F=A$.
	Finally, the symplectic condition for $\gamma$ requires $G = G^T$, which for our choice of $G$ becomes $ A^{-1}B + B^T A^{-T} = 0$. Since $A$ is invertible, this condition is equivalent to $AB^T + B A^{T} = 0$.
	We find the following equalities:
	\begin{align*}
		AB^T &+ B A^{T} \\
		&= AB^T + B A^{T} + AB^{T} \left(A^{-T}C^T + CA^{-1} \right) B A^{T}\\
		&= AB^T\left(I + A^{-T}C^TBA^T\right) +  \left(I + AB^TCA^{-1}\right)B A^{T}\\
		&= AB^T D A^T + AD^TBA^T\\
		&= A\left(B^T D + D^TB \right)A^T\\
		&= 0\,,
	\end{align*}
	where the first equality follows from Eq.~\ref{supmat:eq:CA}, the third equality follows from Eq.~\ref{supmat:eq:D}, and the last equality follows form Eq.~\ref{supmat:eq:condition_3}.
\end{proof}

Similar to the proof of \cref{supmat:thm:clifford decomposition_1}, we prove \cref{supmat:thm:clifford decomposition_2} by combining \cref{supmat:lem:make fist quadrant invertible} with \cref{supmat:lem:decomposition assuming invertible first quadrant}.
\begin{proof} (Proof of Theorem \ref{supmat:thm:clifford decomposition_2})
	Take an arbitrary Clifford operator $C$ and denote its symplectic representation by $\chi$. It follows directly from \cref{supmat:lem:make fist quadrant invertible} and \cref{supmat:lem:decomposition assuming invertible first quadrant} that there exist symplectic matrices $\zeta \in \mathcal{Z}_n(1)$, $\alpha, \in \mathcal{X}_n$, $\beta = \begin{pmatrix} F&0 \\ 0&F^{-T} \end{pmatrix}$ with $F\in GL_n(2)$, and $\gamma \in \mathcal{Z}_n$ such that $\chi = \zeta \alpha \cdot \beta \cdot \gamma$.
	
	Note that $\mathcal{X}_n$, $\mathcal{Z}_n$ and  $\mathcal{Z}_n(1)$ are precisely the images of $X$-diagonal, $Z$-diagonal, and depth-1 $Z$-diagonal gates under the symplectic representation, respectively. Also note that $\beta$ is the symplectic representation of a CNOT circuit.
	Finally, recall that we defined the symplectic representation of Clifford operators,  $\Phi: \mathcal{C}_n / \mathcal{P}_n \rightarrow Sp_{2n}(2)$, to right-act on row-vectors, and therefore the order of multiplication must be inverted, i.e., for $C_1, C_2 \in \mathcal{C}_n$ we have $\Phi(C_1 C_2) = \Phi(C_2)\Phi(C_1)$. 
	The decomposition $\chi = \zeta \alpha \cdot \beta \cdot \gamma$ thus corresponds to the desired decomposition of Clifford operator $C$. This concludes the proof.	
\end{proof}

\subsection{Cost of arbitrary Clifford operators}
The Clifford decompositions introduced above can be used in conjunction with the results for the depth of various types of logical gates, detailed in Sections \ref{supmat:sec: cnot gate characterisation}, \ref{supmat:sec: diagonal gate characterisation} and \ref{supmat:sec: Hadamard}, to determine an upper bound on the depth required for an arbitrary logical Clifford operator. Since the decomposition in \cref{supmat:thm:clifford decomposition_1}, relies entirely on $X$- and $Z$-diagonal operators, it is straightforward to find an upper bound for the total depth. Note that one can always choose $DZ(1)$ to contain exclusively in-block diagonal gates, which means it can always be implemented in a depth no greater than $r^2 +5r + 2$ for $r\geq 4$ ($r^2 + 8r + 2$ for $r=3$). For a Clifford operator acting on $b$ code blocks, the other components can each be implemented in depth $br^2 + (b + 4)r + (4b - 2)$ when $b$ is even, and $(b+1)r^2 + (b + 5)r + (4b + 2)$ when $b$ is odd, for $r\geq 4$ ($16b + 25$ and $16b + 41$, respectively, for $r=3$).
Most importantly, since no logical in-block CNOT gates are required, this method for synthesizing an arbitrary Clifford operator does not require any auxiliary code blocks.
We therefore find the following result, which provides a comprehensive costing of Clifford operator implementations in the SHYPS codes:
\begin{thm}\label{supmat:thm:final-theorem}
    Let $r\geq 4$. Any Clifford operator on $b$ blocks of $SHYPS(r)$ is implemented fault-tolerantly in depth at most
    \begin{align}
        (4b+1)r^2 + (4b+21)r + 16b - 6, \,\,\text{for } b \textbf{ even} \\
        (4b+5)r^2 + (4b+25)r + 16b + 10, \,\,\text{for } b \textbf{ odd}
    \end{align}
    
    Moreover this implementation requires no auxiliary code blocks.
    When $r=3$, the depth is at most $64b+135$ for $b$ even and  $64b+199$ for $b$ odd.
\end{thm}

For completeness, we also compute an upper bound to the depth of an arbitrary logical Clifford operator synthesized through the decomposition detailed in \cref{supmat:thm:clifford decomposition_2}. Since this decomposition may contain in-block CNOT gates, it requires up to $b$ auxiliary code blocks to perform a logical Clifford operator on $b$ code blocks. Despite this space overhead, there might be specific Clifford operators for which this decomposition is advantageous.

Note that since the depth of multi-block CNOT gates was determined up to a logical permutation, we should add the cost of a $b$-block permutation to it. This, however, does not contribute to the leading order of the depth of general CNOT circuits because multi-block permutations have an implementation of depth $O(r^2)$ while $b$ block CNOT circuits require $O(br^2)$. Also note that $DZ(1)$ can always be chosen to be an in-block depth-1 diagonal circuit. Furthermore, it can be chosen to contain no more than $r$ $S$ gates per $SHYPS(r)$ code block, which allows us to use \cref{supmat:cor:depth Xi} along with two in-block permutations. One of these in-block permutations can be included in the general permutation required for the CNOT circuit and can thus be ignored, resulting in a total depth of $8r+7$.


Using the constructions of logical generators presented in previous sections, we then find the following total depth for a Clifford synthesized with this decomposition:
\begin{align*}
	& DZ:  & br^2 + (b + 4)r + 4b - 2\\
	+&\, CX:  & (2b + 35)r^2 + (2b+2)r + 8b + 2\\ 
	+&\, DX: & br^2 + (b + 4)r + 4b - 2\\
	+&\, DZ(1):  & 8r + 7\\
	\hline
	&\textrm{Total Depth:} &  (4b+35)r^2 + (4b+18)r + 16b + 5 
\end{align*}
The costing for multi-block CNOT circuits assumes $b=2^a$ for some integer $a$.
The above costing of diagonal gates assumes that $r\geq 4$. As shown in \cref{supmat:thm:diagonal-decomp-invertible}, in-block diagonal gates may incur an additional depth of up to $3r$ when $r=3$, and incorporating this in the above calculation produces an overall depth upper bound of $64b+392$ in this case.

\subsection{Space-time volume}

Besides the depth (expressed as number of syndrome extraction rounds) required to execute various kinds of logical operators, another valuable metric to compare the performance of error correcting codes is the total space-time volume (expressed in units of ``physical qubits $\times$ syndrome extraction rounds'') required for said logical operations. 
In particular, when compiling logical circuits, there is often an opportunity to trade space for time, i.e., reduce the total depth of the circuit by introducing auxiliary qubits. Therefore the total space-time volume of a given logical operator usually provides a more balanced comparison point than only the depth. 
Below, we compute the space-time volume required to execute an arbitrary $m$-qubit logical Clifford operator in SHYPS codes. We then compare this to the space-time volume required to execute such an operator in rotated surface codes using both transversal operators and lattice surgery.

Performing an arbitrary logical Clifford operator on $m$ qubits encoded with the SHYPS($r$) code requires $\ceil{m/r^2}$ data code blocks, and hence $(2^{r}-1)^2 \ceil{m/r^2}$ physical qubits in total.
The worst-case depth of the circuit is listed in \cref{supmat:thm:final-theorem}. Note that no auxiliary code blocks are required to achieve this depth. 
The exact space-time volume is then readily found by multiplying space and time costs.
To leading order, it scales as $16 k \ceil*{\frac{m}{k}}^2 d^2 \approx 16 m^2d^2/k$, where $k=r^2$ is the number of logical qubits per code block.

In surface codes, one may execute any Clifford operator using either (fold- or permutation-) transversal gates or lattice surgery. 
We compute the space-time volume of the former based on the Clifford decomposition detailed in \ref{supmat:thm:clifford decomposition_1} and the upper bounds on the circuit depth for diagonal operators found in Ref.~\cite{Maslov2022}. 
The resulting space-time volume for a logical $m$-qubit Clifford operator is given by
\[
4\floor*{\frac{m}{2} +0.4993\log_2(m)^2 + 3.0191 \log_2(m) - 10.9139} + 5\,,
\]
where we used that encoding $m$ logical qubits in rotated surface codes takes $md^2$ physical qubits.
Note that to leading order, this scales as $2m^2d^2$. Hence, for sufficiently large distances (and hence a sufficiently large $k$ value for the SHYPS code), SHYPS outperforms surface codes for this metric. 

Finally, we consider rotated surface codes with lattice surgery.
We first note that any $m$-qubit Clifford operator can be performed using $m$ auxiliary qubits and $2m$ Pauli product measurements (see, for instance, Fig.~21 in Ref. \cite{Nickerson.2022}). 
For the space and time cost of performing said Pauli product measurements using lattice surgery, we use the ``fast block'' configuration detailed in Ref.~\cite{litinski2019_game}.
The resulting space requirement is $(4m + 4\ceil{\sqrt{m}})d^2$, and each Pauli product measurement requires $d$ rounds of syndrome extraction, leading to a total space-time cost of
\[
8m^2 d^3 + 8 m\ceil*{\sqrt{m}} d^2\,.
\]

We compare all results for distances 8, 16 and 32 and $10\leq m \leq 360$ in \cref{fig:space-time}. 
As expected, SHYPS outperforms the surface code with both transversal gates and lattice surgery on this metric. Even with $d=8$, a SHYPS code block has a sufficient number of logical qubits to cancel out the worse prefactor in the leading term of the space-time volume compared to surface codes with transversal gates.
Note that for small $m$, surface codes with transversal gates require a smaller space-time volume in some cases. 
This is due to the fact that the resources required to perform a Clifford operator with SHYPS codes increases step-wise with each multiple of $k$ (for a given distance), whereas surface codes can scale gradually since each code block only contains a single logical qubit. Once $m$ is sufficiently large (22,27 and 22 for $d=$8,16,32, respectively), this minor advantage of surface codes is undone by the much lower space-overhead of SHYPS codes. 

While it would be interesting to also include the space-time cost of other QLDPC codes with recent code surgery schemes \cite{Brown.2022, cross2024, williamson2024_lowoverhead, ide2025_faulttolerant, cowtan2025_parallel} in the comparison above, at the time of writing, and to the best of our knowledge, no precise calculations of this kind have been done for QLDPC codes with code surgery schemes elsewhere in the literature. 
A complete costing of all recent code surgery procedures in the literature would require a dedicated study and is beyond the scope of this work.

Finally, we note that the above space-time costs are those for worst-case Clifford operators. In practice, when compiling highly structured quantum algorithms, one will typically use specialized compiling techniques optimized for particular subroutines. 
Therefore, while the space-time volumes computed above provide a useful benchmark to compare different codes, a proper in-depth comparison would entail performing a full quantum resource estimation of known quantum algorithms. This is, however, beyond the scope of this work. 

\begin{figure}
    \centering
    \begin{subfigure}[b]{0.45\textwidth}
        \includegraphics[width=\linewidth]{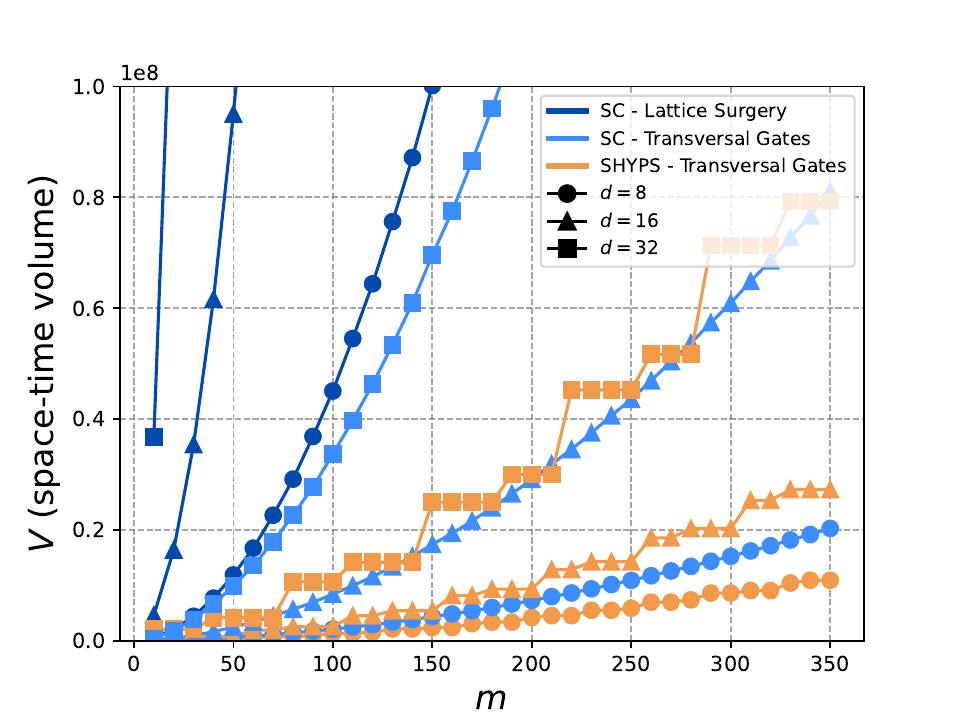}
    \end{subfigure}
    \begin{subfigure}[b]{0.45\textwidth}
        \includegraphics[width=\linewidth]{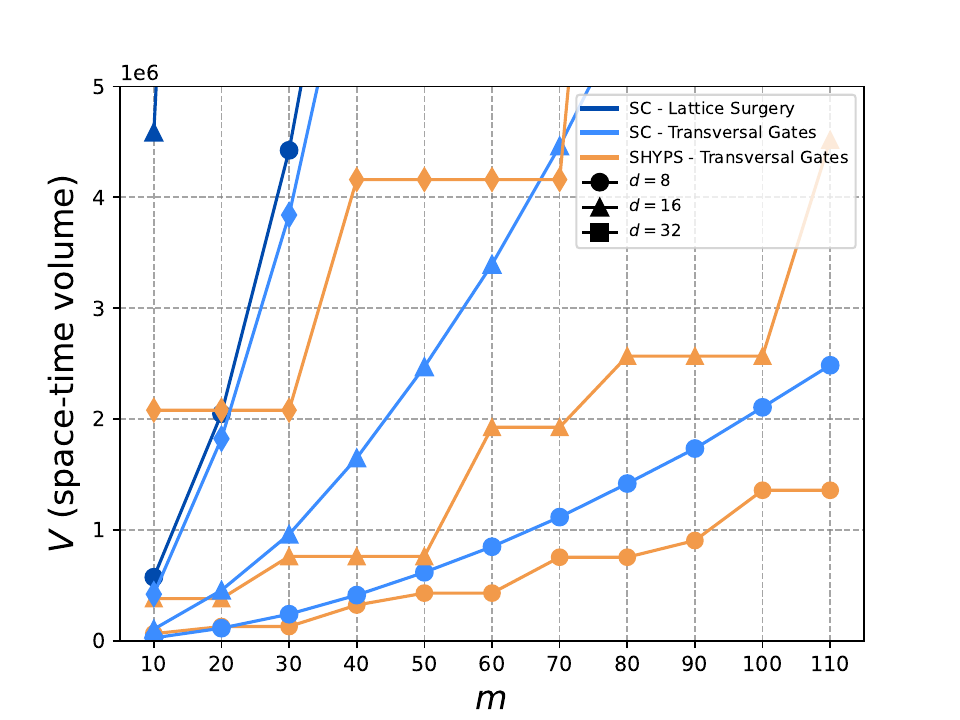}
    \end{subfigure}
    \caption{Space-time volume $V$ of a logical $m$-qubit Clifford operator performed on SHYPS codes, rotated surface codes using transversal operators, and rotated surface codes using lattice surgery for distances 8, 16 and 32.}
    \label{fig:space-time}
\end{figure}


\section{Fault-tolerant demonstration} \label{supmat:fault-tolerant-demonstration}

In this section, we discuss our simulations of quantum memories and Clifford circuits using SHYPS codes. We begin in \cref{supmat:sec:numsims} with a detailed description of the setup used for memory and logic simulations.
Next, in \cref{supmat:sec:syndrome extraction circuits} we discuss fault-tolerant syndrome extraction (SE), as well as SE scheduling options for the SHYPS codes. In \cref{supmat:decoding-details}, we give details of the decoder used in the simulations.

\subsection{Numerical simulations}\label{supmat:sec:numsims}

We use two types of numerical simulations in this paper:
\begin{itemize}
    \item \textbf{Memory simulation} to establish performance of a single code block of the $[49, 9, 4]$ and $[225, 16, 8]$ SHYPS codes as compared to surface codes of analogous scale.
    \item \textbf{Logic simulation} of a random Clifford operation decomposed into efficient logical generators applied across two blocks of the $[49, 9, 4]$ SHYPS code.
\end{itemize}

In what follows, we describe the details of each of these simulations. Although we describe concepts in the context of SHYPS codes, this discussion would also be applicable to simulations of other CSS codes.

\subsubsection{Memory simulations}

Quantum memory simulations are the standard approach to study the circuit-level performance of surface codes, and have recently been extended to analyze QLDPC codes under circuit-level noise \cite{ibm-qmem, closed-branch-decoder, bplsd, Gong2024, riverlane-ac}. Circuits that implement quantum memory experiments follow a specific set of steps:
\begin{enumerate}
    \item \textbf{Transversal Initialization:} Initialize the data qubits in the $Z$ ($X$) basis and perform a single SE round.
    \item \textbf{Syndrome Extraction:} Perform a predefined number of SE rounds. In most cases, $d$ SE rounds are used for a code with distance $d$.
    \item \textbf{Transversal measurement of data qubits:} Measure the data qubits in the $Z$ ($X$) basis.
\end{enumerate}

We use the open-source Clifford simulation package \texttt{Stim} \cite{stim} to build descriptions of these circuits, each annotated with \textit{detectors} and \textit{logical observables}. A detector is the parity of measurement outcome bits in a quantum error correction circuit that is deterministic in the absence of errors, while a logical observable is the linear combination of measurement bits whose outcome corresponds to the measurement of a logical Pauli operator \cite{dems-higgot}. To describe detectors we label the bits produced during SE with the stabilizer index $i$, SE round index $t$, and basis $B \in \{X, Z\}$. Throughout this section, unless explicitly stated, we assume use of the $Z$ basis for initialization and transversal measurement. A similar approach is valid for $X$ basis simulations.

The circuits we use for memory simulation begin with \textbf{transversal initialization}: initializing all data qubits to $\ket{0}$ and performing a single SE round. After this first SE round, all $Z$ stabilizers will be in the $+1$-eigenspace and $X$ stabilizers will be projected randomly to the $+/- 1$ eigenspace. For this reason, at the end of this first SE round we only define detectors based on the $Z$ stabilizer measurement results: $$ D_i^{t=0}(Z) = s_i^{t=0}(Z) \ . $$ 

Next, SE rounds are repeated a predefined number of times. Since the first round of SE forces the $X$ stabilizers to have either a $+1$ or $-1$ eigenvalue, their expected value after the second SE round should no longer be random in the absence of noise, enabling the definition of detectors based on $X$ stabilizer measurements. Note that this state preparation method does not produce a logical all-zero state $\ket{\bar{0}}$ matching the ``all stabilizers in the $+1$-eigenspace" definition, but rather an equivalent version where not all $X$ stabilizers have eigenvalue $+1$. This is purely symbolic though, since by tracking the Pauli frame (easily achieved with \texttt{stim}) we make the $-1$ eigenspace the codespace for those specific $X$ stabilizers that are flipped (equivalent to applying Pauli corrections to flip $-1$ stabilizers into the codespace), and prepare $\ket{\bar{0}}$ in a single-shot manner. We define detectors following subsequent rounds of SE by comparing stabilizer measurement results in between SE rounds: $$D_i^t(Z) = s_i^t(Z) \oplus s_i^{t-1}(Z), \ D_i^t(X) = s_i^t(X) \oplus s_i^{t-1}(X) ,$$ 
where $\oplus$ represents addition modulo 2.

The last step is transversal measurement in the $Z$ basis. In this case, we define detectors by comparing the stabilizer values from the final noisy SE round to the final stabilizer values computed from products of the data qubit measurements. We do not define detectors for $X$ stabilizers as they are unknown following measurement in the $Z$ basis. Logical observables are defined to be the logical $Z$ operators of the SHYPS code. We can write them as 
\begin{equation} \label{eq:observables}
L(Z) = \bigoplus_{m_i \in L_Z} m_i \ ,
\end{equation} where $m_i$ takes values $0$ or $1$ and represents data qubit measurements produced during transversal readout in the $Z$ basis, $L_Z$ represents the chosen basis for the logical $Z$ operators of the SHYPS code and $i \in \{0, \ldots, n-1\}$. 

Once descriptions of memory circuits are constructed, complete with detectors and observables, the decoding problem for memory simulations can be cast within the framework of Detector Error Models (DEM) \cite{dems-higgot, dems-eisert}. DEMs convey information about SE circuits in the form of a detector check matrix $\mathbf{H}_{\text{DCM}}$, a logical observable matrix $L$, and a vector of priors $\mathbf{p}$. The rows and columns of $\mathbf{H}_{\text{DCM}}$ represent detectors and independent error mechanisms in the circuit, respectively. The entry in position $(i, j)$ of $\mathbf{H}_{\text{DCM}}$ will be $1$ iff the $i$-th detector is flipped (recall that detectors are deterministically $0$ in the absence of noise) whenever the $j$-th error occurs and zero otherwise. Similarly, the rows and columns of $L$ represent the $k$ logical observables we are attempting to preserve with our protocol, and the independent error mechanisms in the circuit, respectively. The entry in position $(i, j)$ of $L$ will be $1$ iff the $i$-th logical observable is flipped by error mechanism $j$ and zero otherwise. The vector of priors $\mathbf{p}$ contains the prior error probability for each of the individual error mechanisms in the circuit. 

We use \texttt{Stim} to compute $\mathbf{H}_{\text{DCM}}$, $L$, and $\mathbf{p}$ for memory circuits under standard circuit-level depolarizing noise \cite{circuit-noise-standard}. This noise model assumes that each element in a quantum circuit is independently either ideal or faulty with probability $1-p$ and $p$, respectively, where $p$ is the model parameter called the physical error rate. In the context of our circuits for memory simulations, we have the following faulty operations:

\begin{itemize}
    \item \textbf{State preparation:} With probability $p$, the orthogonal state (e.g., $\ket{0}$ instead of $\ket{1}$) is prepared.
    \item \textbf{Measurement:} With probability $p$, the classical measurement result is flipped (from $0$ to $1$ or vice versa).
    \item \textbf{Single qubit gates:} With probability $p$, apply $X$, $Y$, or $Z$ (the specific Pauli operator is picked uniformly at random). An idle qubit in any time step experiences a noisy $I$ gate.
    \item \textbf{Two qubit gates:} With probability $p$, apply one of the $15$ nontrivial $2$-qubit Pauli operations on the control and the target qubits $\{IX, IY, IZ, XI, \ldots, ZZ\}$ (the specific Pauli operator is picked uniformly at random).
\end{itemize}

Aside from using \texttt{Stim} to compute the triplet $\mathbf{H}_{\text{DCM}}$, $L$, and $\mathbf{p}$, we also use it to simulate our circuits efficiently and produce detector and observable samples over different physical noise realizations. This allows us to formulate the decoding problem for quantum memories: provided with $\mathbf{H}_{\text{DCM}}$, $\mathbf{p}$, and the detector samples, the decoder provides an estimate of the real circuit error, which we denote by $\mathbf{c}$. We assess the accuracy of the error correction protocol by comparing the observable samples provided by \texttt{Stim} to the logical effect of $\mathbf{c}$, computed as $L \cdot \mathbf{c}$. Repeating this procedure over different physical noise realizations allows us to estimate the logical error rate of quantum memories. 

We refer those looking for extensive discussions on detector error models and the circuit-level decoding problem to \cite{dems-higgot, dems-eisert} and \cite{riverlane-ac}, respectively. 

\subsubsection*{Detector considerations}

In recent works, memory simulations have used either strictly $Z$-type or strictly $X$-type detectors \cite{ibm-qmem, closed-branch-decoder, bplsd, Gong2024, riverlane-ac}. More explicitly, $D(X)$ ($D(Z)$) have not been used to decode $Z$ ($X$) basis experiments~\footnote{Note that all stabilizers are still measured, as omitting the $X$ ($Z$) stabilizer measurements in a $Z$ ($X$) memory experiment eliminates the guarantee that the protocol would work to preserve an arbitrary quantum state.}. This is primarily due to a substantial increase in the size of the detector check matrix $H$ when both detector types are used, and the fact that most decoders cannot exploit the traces left by $Y$-errors on both the $X$ and $Z$ stabilizers. To keep comparisons fair, we simulate our memory circuits using only $Z$-type detectors. 

\subsubsection{Simulations of logical operation}

As with memory, we construct descriptions of logical circuits in \texttt{Stim}. The following steps outline a logical circuit:

\begin{enumerate}
    \item \textbf{Transversal Initialization:} Initialize the data qubits in the $Z$ ($X$) basis and perform a round of syndrome extraction.
    \item \textbf{Logical Operations Interleaved with Syndrome Extraction.} The Clifford unitary of interest is synthesized as a depth-$D$ sequence of Clifford generators. Each of these Clifford generators is applied to the circuit followed by a round of stabilizer extraction. For the logic simulation shown in the main text, the circuit simulated comprises of $2 \times 63 = 126$ stabilizer generators interleaved with syndrome extraction. 
    \item \textbf{Transversal measurement of data qubits:} Measure the data qubits in the $Z$ ($X$) basis.
\end{enumerate}

In order to annotate our \texttt{Stim} circuits with deterministic observables using equation \eqref{eq:observables}, we also apply the inverse of the synthesized operator. This is why the total depth of the circuit we used for our Clifford simulation has twice the depth as the sampled Clifford circuit itself. Noise is added to our descriptions of logical circuits using the standard circuit-level noise model. 

It is worth mentioning that it is also possible to perform analysis based on logical measurement results that are random. This was recently done in \cite{correlated-decoding-2Zhou2024}, where the authors override the \texttt{Stim} requirement for deterministic observables by defining \textit{gauge detectors} and then interpreting the random measurement results according to their proposed methodology. Running simulations with non-deterministic observables is important for a full characterization. In fact, it might even be more practical in some ways, as it would eliminate the need for inverting the logical action of the Clifford operators we simulate. As is done in most quantum error correction analyses and simulations, we have focused on the case of deterministic observables for simplicity. We leave the extension to simulations with non-deterministic observables for future work.

\subsubsection*{Detector discovery for logical circuits}

As with memories, the first step in a logical circuit is \textbf{transversal initialization}, so we define detectors in this first stage to be the $Z$ stabilizer measurement outcomes: $$ D_i^{t=0}(Z) = s_i^{t=0}(Z) \ .$$

The next stage in the circuit involves performing multiple rounds of a logical operation followed by SE. Defining detectors in this stage is more nuanced than for quantum memories \footnote{Note that memories are just simulations of trivial logic.} because logical operations preserve the codespace but act non-trivially on the stabilizer generators we infer during SE \footnote{We do not measure the stabilizer generators of the SHYPS code directly. We measure their gauge generators and use those outcomes to compute the stabilizer measurement results.}. For instance, the logical fold-transversal Hadamard gate $H^{\otimes{n}}\tau$ gate 
(see \cref{supmat:sec: Hadamard}) maps $X$ ($Z$) stabilizers in the $(i-1)$-th round to $Z$ ($X$) stabilizers in the $i$-th round. To define valid detectors, we must track the map that each particular logical gate applies on the stabilizer generators. We do so based on a general linear algebra approach that computes the logical action of the operation on the stabilizers and returns the relationship between the stabilizers before and after the logical operation. Alternatively, it is also possible to define detectors in the context of non-trivial logic by using specific update rules for each type of logical gate that may appear in the circuit. In \cite{correlated-decoding-2Zhou2024}, the authors explain how to define detectors following the application of transversal logical $H$, $\text{CNOT}$, and $S$ gates.

\subsection{Syndrome extraction circuits}
\label{supmat:sec:syndrome extraction circuits}

In our simulations we do not measure the stabilizers of a SHYPS code directly to perform SE. Instead, we measure the gauge generators of the code and then aggregate those measurement outcomes accordingly to infer the stabilizer measurement results. We use the SE circuits proposed in \cite{subsystemSurf} to implement gauge generator readout for SHYPS codes in our simulations. This type of SE circuit belongs to the family of \textit{bare-auxiliary} gadgets \cite{readout_methods}, where a single auxiliary qubit is used to readout a stabilizer or gauge generator. In what follows we explain the details of our SE strategy.

\subsubsection{Circuit fault analysis} 
\label{supmat:circuit-fault-analysis}

The bare-auxiliary method for extracting syndromes provides little protection against error-spread from auxiliary qubits to data qubits. However, this is not an issue for SHYPS codes. Consider the top circuit in Figure \ref{supmat:fig:gauge-readout-supmat}, which measures the $Z$-gauge generator $g_{Z, l} = Z_iZ_jZ_k$, where the subscript $Z$ denotes the fact that the gauge generator is a composed purely of single-qubit $Z$ operators and the subscript $l$ is used to represent an arbitrary index. In this figure, numbered boxes are included not to represent operations, but to differentiate moments in the circuit.

\begin{figure}[!h]
    \includegraphics[width=\linewidth]{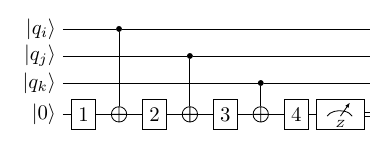}
    \hspace{0.06\linewidth}
    \includegraphics[width=\linewidth]{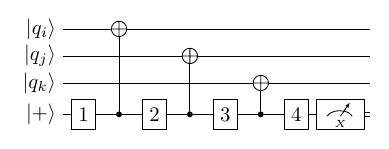}
    \caption{Circuits to measure the eigenvalue of $Z$-type (top) and $X$-type (bottom) gauge generators in an SHYPS code. The numbered boxes represent circuit moments.}
    \label{supmat:fig:gauge-readout-supmat}
\end{figure}

If a single $Z$ error occurs on the auxiliary qubit, depending on the circuit moment at which it happens (positions $1-4$), the error will propagate to one of the following data qubit errors: $\{Z_iZ_jZ_k, Z_jZ_k, Z_k\}$. Error $Z_iZ_jZ_k$ is the gauge generator $g_{Z,l}$ itself, so it can be ignored. By multiplying the second error $Z_jZ_k$ by $g_{Z,l}$, we end up with $Z_i$, which is a weight-$1$ error. The third error is already a weight-$1$ error. Thus, a single $Z$ error on the auxiliary qubit can lead to at most one $Z$ error on the data qubits, modulo gauge generators. An $X$ error on the auxiliary qubit cannot spread to data qubits, so its only effect is flipping the measurement outcome, which will produce a measurement error. 

 The circuit is fault-tolerant against $Z$-errors, as the single $Z$-faults on the auxiliary qubits are equivalent to at worst single $Z$-errors on the data qubits. It is fault-tolerant against $X$ errors because they cannot spread to data qubits. Since a single-qubit error on the auxiliary qubits leads to a single-qubit error in the code block, the circuit is fault-tolerant and hence it is protected from high-weight hook errors \cite{hook}. A similar analysis can be applied to show that the second circuit shown in Figure \ref{supmat:fig:gauge-readout-supmat}, which measures the $X$-gauge generator $g_{X,l} = X_iX_jX_k$, is also fault-tolerant.

 Note that $X$ errors on the auxiliary qubits will alter the measurement outcome, indicating the presence of $X$-errors on the data when there are none. This problem is not unique to this method of syndrome extraction, and is what leads to the frequently seen notion of repeated stabilizer measurements with a number of times matching the code distance.

\subsubsection{Scheduling SE for SHYPS codes}
\label{supmat:scheduling-se-circuit}

There are two approaches to scheduling gauge generator measurement circuits for SHYPS codes in the minimum-possible depth. The first relies on exploiting the particular structure of SHYPS codes while the second makes use of the coloration circuit approach from \cite{beverland-sched}. Both produce gauge generator measurement circuits of the same depth. For the simulations in this paper, we use the coloration circuit to schedule gauge generator readout.

\subsubsection*{Structure-based scheduling} 

Recall the structure of the $X$-gauge generators for the SHYPS code from \cref{supmat:sec:SHSsuppMat}:
\[
G_X = H \otimes I_{n_r},
\]
where $H$ is the over-complete $n_r \times n_r$ parity check matrix of the classical $(n_r,r,d_r)$-simplex code. Moreover, $H$ is the cyclic parity check matrix corresponding to a choice of weight-3 parity-check polynomial $h(x)=1+x^{d_1}+x^{d_1+d_2}$.
So there exist $n_r^2$ gauge generators $r_i\otimes e_j$, corresponding to a choice of rows $r_i$ and $e_j$ from matrices $H$ and $I_n$, respectively. Furthermore, the cyclic structure of $H$ clearly implies that $g = X^{r_i \otimes e_j}$ is supported on qubits $q_{i,j}$, $q_{i+d_1,j}$, and $q_{i+d_1+d_2, j}$, where we denote qubits in a codeblock by $q$ and qubit labels are combined modulo $n_r$.

Hence, we may schedule measurement of all $n_r^2$ $X$-gauge generators in $5$ time steps: 
\begin{enumerate}
    \item Preparation of the auxiliary states $a_{i,j} = \ket{+}_{i,j}$
    \item Apply $\prod CNOT(q_{i,j},a_{i,j})$
    \item Apply $\prod CNOT(q_{i+d_1,j},a_{i,j})$
    \item Apply $\prod CNOT(q_{i+d_1+d_2,j},a_{i,j})$
    \item Measure auxiliary qubits $a_{i,j}$.
\end{enumerate}
This requires a total $3n_r^2$ physical CNOT gates. Note that other configurations of the CNOT circuit are possible, for example Steps $2$, $3$, and $4$ could be taken in any order. A similar schedule applies for extracting the $Z$-gauge syndromes. A naive composition of the $X$ and $Z$-gauge measurement circuits would yield a depth-$10$ circuit. However, a depth-$8$ circuit is possible if: $X$
gauge generator measurement qubits are initialized during the last moment of $Z$ gauge generator extraction, and $Z$ gauge generator measurement qubits are measured in the first moment of $X$ gauge generator extraction.

\subsubsection*{Coloration circuit approach} 
\label{supmat:edge-coloring-scheduling}

We can alternatively schedule gauge generator readout circuits for any SHYPS code using the edge-coloring algorithm depicted in Algorithm \ref{supmat:alg:coloration}, which is a slightly modified version of the algorithm proposed in \cite{beverland-sched}. Specifically, this algorithm works with a modified pair of Tanner graphs $T'_X$ and $T'_Z$, where $T'_X$ and $T'_Z$ are Tanner graphs whose stabilizer check nodes have been substituted by gauge generator nodes while all other aspects of the algorithm remain exactly the same. 

\SetKwComment{Comment}{\# }{}
\begin{algorithm} 
\caption{Edge Coloring Circuit}\label{supmat:alg:coloration}
\KwData{The Tanner graphs $T'_X$ and $T'_Z$, as well as their minimum edge colorings $\mathcal{C}_X$ and $\mathcal{C}_Z$}
\KwResult{Gauge measurement circuit for $X$ and $Z$ gauge generators of an SHYPS code}
\Comment{Z Gauge Generators}
 Initialize all auxiliary qubits that will measure $Z$ gauge generators in the $\ket{0}$ state\;
 \For{$c \in \mathcal{C}_Z$}{
  In the same circuit moment, apply $CNOT_{i\rightarrow j}$ gates from the $i$-th data qubit (control) to the $j$-th auxiliary qubit (target) supported on an edge $\{i, j\}$ with color $c$\;
 }
 \Comment{X Gauge Generators}
 Initialize all auxiliary qubits that will measure $X$ gauge generators in the $\ket{+}$ state\;
 \For{$c \in \mathcal{C}_X$}{
  In the same circuit moment, apply $CNOT_{i\leftarrow j}$ gates from the $j$-th auxiliary qubit (control) to the $i$-th data qubit (target) supported on an edge $\{i, j\}$ with color $c$\;
  }
\end{algorithm}

\subsubsection*{Depth optimality of SE circuits}
\label{supmat:depth-optimality-of-se-circuits}

A critical aspect to consider in the context of SE scheduling is evaluating how \textit{good} the obtained readout circuits are, since there is no guarantee that methods such as the coloration circuit achieve the minimum possible depth \cite{beverland-sched}. Note how, in principle, it is possible to schedule CNOTs for $X$- and $Z$- generators in the same circuit moment such that these interleaved circuits achieve lower depth than the coloration approach, which schedules CNOTs for $X$- and $Z$- stabilizers separately. For instance, the SE circuits produced by the edge-coloring approach for surface codes have depth $10$, while the interleaved scheduling approach yields circuits of depth $6$.

In the context of SHYPS codes, both the structure-based and edge-coloring approaches produce the minimum depth circuits that implement gauge generator readout. This is because SHYPS codes do not allow for interleaving CNOTs involved in $X$ and $Z$ gauge generators, as there are no idling qubits at any given circuit moment of the gauge generator readout circuit. In other words, since every data qubit is involved in each circuit moment, there is no way of pulling CNOT gates through the circuit so that they occur earlier and the depth is reduced.

\subsubsection{Stabilizer aggregation for SHYPS codes} 
\label{supmat:stabilizer-aggregation-for-shs-codes}

Once the gauge generators of an SHYPS code have been measured, we obtain the stabilizer measurement outcomes via stabilizer aggregation. We can determine how to perform the aggregation by exploiting the structure of SHYPS codes. Recall  from \cref{supmat:sec:SHSsuppMat} that the $X$-gauge generators and $X$-stabilizers are given by
\[
G_X = H \otimes I_{n_r},\ S_X = H \otimes G,
\]
respectively. We can write $S_X=(I_{n_r} \otimes G) G_X$. This means that the nonzero entries in the rows of the classical generator matrix $G$ indicate which gauge generator measurement results must be combined to compute each stabilizer measurement outcome: \textit{The indices of the nonzero entries in row $i$ of $G$ are the gauge generators whose combination yields stabilizer generator $i$ of the SHYPS code}. The same argument can be followed to compute the $Z$-stabilizer aggregation.

\subsection{Monte Carlo simulation data}
\label{supmat:monte-carlo-simulation-data}

We produce our simulation results by Monte Carlo sampling memory and logic circuits with \texttt{Stim} and decoding over different physical error rates. We schedule our Monte Carlo simulations with enough runs to sample at least $100$ logical errors for every physical error rate. In this section, we explain how we compute the uncertainties of our simulation results and how we normalize the logical error rate for Figures \ref{fig:memory-simulations-plot} and \ref{fig:logic-simulations-plot} in the main text.

\begin{table*}[h!]
\centering
{\renewcommand{\arraystretch}{1.2}
    \setlength\tabcolsep{5mm}
    \begin{tabular}{l|c|c|c}
         \textbf{Parameters} & SHYPS $[49, 9, 4]$ & SHYPS $[49, 9, 4]$ & SHYPS $[49, 9, 4]$\\
         \hline
         Sliding Window & (2,1) & (3,1) & (4,1) \\ 
         BP Iterations & 100 & 250 & 500\\ 
         MS Static Scaling Factor & 0.1 & 0.1 & 0.1 \\
         LSD Order $\mu$ \ & 1 & 1 & 1 \\ 
         \\
         \\
         \textbf{Parameters} & \ SHYPS $[225, 16, 8]$ & \ SHYPS $[225, 16, 8]$ & \ SHYPS $[225, 16, 8]$\\
         \hline
         Sliding Window & (2,1) & (3,1) & (4,1) \\ 
         BP Iterations & 2000 & 4000 & 6000\\ 
         MS Static Scaling Factor & 0.85 & 0.85 & 0.85 \\
         LSD Order $\mu$ \ & 4 & 4 & 4 \\ 
    \end{tabular}}
    \caption{Decoding parameters for memory simulations}
    \label{table:decoder-parameters-memory}
\end{table*}

\begin{table*}[ht]
    \centering
    {\renewcommand{\arraystretch}{1.2}
    \setlength\tabcolsep{5mm}
    \begin{tabular}{l|c}
         \textbf{Parameters} & \ SHYPS $[49, 9, 4]$ \\
         \hline
         Sliding Window & (3,1) \\ 
         BP Iterations & 1500\\ 
         MS Static Scaling Factor \ & $0.07$ \\ 
         LSD Order $\mu$ \ & 8 \\ 
    \end{tabular}}
    \caption{Decoding parameters for Clifford Simulation}
    \label{table:decoder-parameters-clifford}
\end{table*}

\subsubsection{Uncertainties of simulation results}

We use the 95\% confidence interval, shown as shaded regions in Figs.~\ref{fig:memory-simulations-plot} and~\ref{fig:logic-simulations-plot} of the main text as well as in Figs.~\ref{supmat:fig:swd-shs-49-9-4} and~\ref{supmat:fig:swd-shs-225-16-8} in~\cref{supmat:additional-simulation-results} of the supplementary material, to portray the uncertainties associated to our simulation results. We compute the 95\% confidence interval as follows.

A Monte Carlo simulation that executes $n_s$ runs of independent Bernoulli trials with an observed failure rate of $\tilde{p}$ has a variance 
\begin{equation}
    \sigma^2 = \frac{\tilde{p}(1 - \tilde{p})}{n_s}. 
\end{equation}
Its standard deviation can thus be directly calculated as
\begin{equation}
    \sigma = \sqrt{\frac{\tilde{p}(1 - \tilde{p})}{n_s}}.
\end{equation}
Thus, we can establish the uncertainty bounds of the true failure probability $p$ using the observed failure rate $\tilde{p}$ within a specified confidence interval:
\begin{equation}
    p = \tilde{p} \pm z \cdot \sigma,
\end{equation}
where, for example, $z = 1.96$ corresponds to a $95\%$ confidence level.

Furthermore, when the logical error rates are normalized and scaled based on the number of logical observables $v$, the number of SE rounds $s$, and the number of copies of the code $m$ -- details of which will be provided in the next subsection -- we apply the \textit{propagation of uncertainty} to update the uncertainty bounds accordingly.

\subsubsection{Normalizing and scaling logical error rates}
\label{supmat:normalizing-and-scaling-logical-error-rates}

At the end of our Monte Carlo simulations, we obtain the number of logical error events for a given number of Monte Carlo simulation runs. Let $p_{v,s}$ be the \textit{observed logical error rate} at the end of a simulation that has $v$ logical observables and $s$ syndrome extraction rounds. Note that $p_{v,s}$ is also widely referred to as the \textit{shot error rate}. In \cref{fig:memory-simulations-plot} and \ref{fig:logic-simulations-plot} of the main text, we report a different quantity, the \textit{logical error rate per syndrome extraction round}, which we denote as $p_{v, 1}$. Next, we describe how $p_{v, 1}$ is calculated from $p_{v,s}$, $v$, and $s$.

We begin by determining the \textit{observed logical error rate per logical observable}, represented by $p_{1,s}$. Since any error or flip affecting the logical observables is treated as a single logical error event, we can directly define the relationship between $p_{v,s}$, $p_{1,s}$, and $v$ as follows:
\begin{equation}
    p_{v,s} = 1 - (1 -  p_{1,s}) ^ v.
    \label{eq:pvd}
\end{equation}
By rearranging~\eqref{eq:pvd}, we derive  
\begin{equation}
    p_{1,s} = 1 - (1 - p_{v,s}) ^ {1/v}.
    \label{eq:p1d_1}
\end{equation}

Next, we calculate the \textit{logical error rate per logical observable per syndrome extraction round}, denoted by $p_{1,1}$. Consider that each of the logical observables undergo $s$ successive Bernoulli trials, where an outcome of $0$ represents success and $1$ signifies failure, with each trial having a failure probability of $p_{1,1}$. Consequently, after performing $s$ consecutive trials, the final result is effectively the XOR of the $s$ trials, leading to an observed failure probability of $p_{1,s}$. This is equivalent to stating that each logical observable passes through a binary symmetric channel (BSC) repeated $s$ times in sequence. Therefore, we can express the relationship between $p_{1,s}$, $p_{1,1}$, and $s$ as follows:
\begin{equation}
    p_{1,s} = \frac{1 - \left(1 - 2 p_{1,1} \right) ^ s }{2} 
    \label{eq:p1d_2}
\end{equation}
and by rearranging~\eqref{eq:p1d_2}, we obtain
\begin{equation}
    p_{1,1} = \frac{1 - \left(1 - 2 p_{1,s} \right) ^ {1/s} }{2} 
    \label{eq:p11}
\end{equation}

Finally, we calculate the \textit{logical error rate per syndrome extraction round}, denoted by $p_{v,1}$. Similar to~\eqref{eq:pvd}, we can explicitly express the relationship between $p_{v,1}$, $p_{1,1}$, and $v$ as follows:
\begin{equation}
    p_{v,1} = 1 - \left(1 - p_{1,1} \right) ^ {v}.
    \label{eq:pv1_1}
\end{equation}

By substituting~\eqref{eq:p1d_1} into~\eqref{eq:p11}, and substituting the result into~\eqref{eq:pv1_1}, we obtain a final expression for the logical error rate per syndrome extraction round, $p_{v,1}$, based on the observed logical error rate $p_{v,s}$, the number of logical observables $v$, and the number of syndrome extraction rounds $s$ as follows:
\begin{equation}
    p_{v,1} = 1 - \left( \frac{1 + \left[ 2 (1 - p_{v,s} ) ^ {1/v} - 1 \right] ^ {1/s} }{2} \right) ^ {v}.
    \label{eq:eq:pv1_2}
\end{equation}
The above calculation is equivalent to the method implemented in \texttt{sinter}, a package that integrates directly with \texttt{Stim}, to calculate the so-called \textit{piece error rate} from the \textit{shot error rate}.

To produce the scaled surface code data in \cref{fig:memory-simulations-plot} and the two-block memory data in \cref{fig:logic-simulations-plot} of the main text we calculate the logical error rate per syndrome extraction round $p_{v,1}$ for multiple patches of the same code. We denote this quantity by $p_{v,1}^m$, where $m$ is the number of code patches. We compute the \textit{scaled} logical error rate per syndrome extraction round based on the following equation:
\begin{equation}
    p^m_{v, 1} = 1 - (1 - p_{v, 1}) ^ {m}.
\end{equation}

\subsection{Decoding details}
\label{supmat:decoding-details}

The results for SHYPS codes reported in \cref{fig:memory-simulations-plot} and \ref{fig:logic-simulations-plot} of the main text were generated using a proprietary implementation of a sliding window decoder with min-sum (MS) Belief Propagation + Localized Statistics Decoding (BPLSD)  \cite{bplsd} as the constituent decoder. We used the open-source BPLSD implementation available at \cite{Roffe2024LDPCv2}. Surface code data was generated using the \texttt{pymatching} package \cite{dems-higgot}. Here we give details about our decoding approach. We provide the specific parameter configurations for MS BPLSD (scaling factor, iterations, and LSD order $\mu$) used for each simulation in tables \ref{table:decoder-parameters-memory} and \ref{table:decoder-parameters-clifford}. Decoding parameter values where optimized via Monte Carlo simulation.

\subsubsection{BPLSD}
\label{supmat:BPOSD}

BPLSD is a two-stage decoder that combines BP with LSD. Belief propagation (BP) achieves excellent decoding performance for classical LDPC codes \cite{sum-product, belief-prop}. However, standalone BP decoding may not perform as well for QLDPC codes due to a variety of reasons, including degenerate error patterns, short-cycles in the decoding graph, and split beliefs \cite{roffe2020, degeneracy, trapping-sets}. The quantum-specific issues standalone BP suffers can be alleviated by providing the output of a BP decoder to a secondary decoder for reprocessing. Until recently, ordered statistics decoding (OSD) was the best performing reprocessing decoder for BP-based decoding of QLPDC codes. OSD relies on sorting the reliability of BP outputs and systematizing the decoding graph to improve performance. The resulting combination, referred to as BPOSD, is a high-performing general decoder that has become the state of the art for decoding QLDPC codes \cite{roffe2020, PanteleevOSD, surfacecodealgos}. Unfortunately, the improvements in decoding performance provided by BPOSD come with a price. OS decoding involves an unavoidable inversion step over the entire detector error matrix (which may contain $10,000+$ columns in practice), which drastically increases the complexity when compared to standalone BP decoding. Significant effort is being being made to curtail the complexity of BPOSD without sacrificing error correction performance \cite{closed-branch-decoder, riverlane-ac, bplsd, iolius2024almostlinear}. 

BPLSD \cite{bplsd} is a recently proposed decoding algorithm that achieves decoding performance on par with BPOSD, but with significantly reduced computational complexity. The algorithm is based on the observation that, in the sub-threshold regime, errors are typically sparse and confined to disconnected, localized regions of the decoding graph. BPLSD employs a novel cluster growth strategy to efficiently identify these regions in parallel. Matrix inversions are then applied only to the localized subgraphs, in contrast to the global detector graph used in BPOSD, resulting in notably improved decoding efficiency. The performance of BPLSD matches that of BPOSD in terms of decoding accuracy, while offering significantly reduced complexity. Furthermore, we observe for SHYPS codes that the higher-order post-processing variant, BPLSD-$\mu$, consistently outperforms comparable post-processing techniques applied to BPOSD across a range of physical error rates.

\subsubsection*{BPLSD parameters}
\label{supmat:decoding-parameters}

BPLSD offers a variety of tunable parameters that can be adjusted to improve decoding performance:

\begin{itemize}
    \item \textbf{Maximum number of BP iterations, $\text{BP}_{\text{it}}^{\text{max}}$}: BP is an iterative message-passing algorithm that will not generally terminate on its own \cite{sum-product}. As such, the maximum number of BP iterations represents the number of times that messages are allowed to be exchanged over the decoding graph, and, generally, more iterations result in better performance. If the BP decoder fails to find a converging solution after reaching the maximum number of BP iterations, its outputs will be provided to the LSD for a subsequent attempt at decoding.
    \item \textbf{LSD-$\mu$ or LSD-order}: Inspired by higher-order OSD, LSD also allows for improved correction capabilities via higher order reprocessing. In OSD, the so-called \textit{order} or \textit{search depth} $w \in \mathbb{Z}_{0}^+$, determines the number of candidates to evaluate as possible solutions to the decoding problem. A larger $w$ incurs higher complexity while providing better error correction performance. LSD offers a similar parameter, denoted by $\mu \leq 0$, that can be tuned for higher order reprocessing. As is the case for BPLSD with no higher order reprocessing, LSD-$\mu$ achieves decoding improvements on par with BPOSD-$w$ at reduced complexity by localizing operations.
    \item \textbf{MS scaling factor}: In the log‑likelihood ratio domain, the exact check‑node update rule for a BP decoder, commonly referred to as the \emph{sum–product algorithm} (SPA), requires the calculation of the $\tanh$ function and its inverse. The \emph{min–sum algorithm} (MSA) variant dispenses with these transcendental functions and keeps only the sign and the minimum magnitude. While MSA is hardware‑friendly, it systematically over-estimates the reliability of every bit, culminating in a performance loss with respect to SPA. A simple but effective remedy to such performance decrements is the introduction of the so-called \emph{static scaling factor} $\alpha$, which is chosen offline to deflate the resultant log-likelihood ratio coming from the MSA update rule. This results in a check-node update rule referred to as the \emph{normalized min–sum algorithm} (NMSA) \cite{chen2002near}. The simple heuristic $\alpha \approx \frac{1}{\sqrt{d_c - 1}}$, where $d_c$ is the average check nodes degree, provides a generic rule of thumb based on the check‑node degree to select values for $\alpha$. It serves as a useful starting point that can be further refined with a parameter space search via Monte Carlo simulation to find the static scaling factor value that produces the best performance. Classically, a density evolution (DE) and/or extrinsic information transfer (EXIT) analysis can be used to find the optimal static scaling factor for NMSA. However, the DCMs in circuit-level noise decoding simulations exhibit significantly more short cycles than those encountered in the classical realm, which curtails the effectiveness of these classical methods. On top of that, since LSD only runs on the rare convergence failures of BP, tuning $\alpha$ for standalone BP does not guarantee optimum performance for the cascaded BPLSD. In practice, the LSD stage may actually benefit from an $\alpha$ value that would be sub-optimal for a standalone BP decoder. Therefore, the recommended procedure to find the scaling factor value that yields the best decoding performance is a direct Monte Carlo parameter sweep using the combined decoder. 
\end{itemize}

\subsubsection{Sliding window decoder}
\label{supmat:sliding-window-decoder}

The dimensions of a detector check matrix (DCM), which we denote by $\mathbf{H}_{\text{DCM}}$, grow with the number of SE rounds. This implies that decoding successively deeper circuits with many SE rounds becomes increasingly difficult. For reference, the DCM associated to our logical Clifford simulation has $10,668$ rows and $387,590$ columns. However, empirical evidence shows that DCMs generally possess the characteristics of classical spatially-coupled LDPC codes \cite{Huang2023, Berent2024, Gong2024}. In the context of classical coding theory, these codes can be decoded efficiently using a sliding window decoder (SWD) \cite{iyengar2011windowed}. Instead of decoding across the entire DCM, which may be prohibitively large or add unnecessary decoding latency, SWD performs sequential decoding on smaller \textit{window} decoding regions or subsets of the DCM. Decoding within the decoding regions is conducted with a standard decoding algorithm, such as standalone BP, BPOSD, or BPLSD. Once the window is decoded, the decoder shifts forward by a specified \textit{commit} step to the next decoding region of the DCM. To ensure continuity, consecutive windows typically overlap. The overlapping decoding region allows the decoder to account for correlations between adjacent subsets of the DCM. By focusing on smaller, localized regions, SWD reduces the computational load compared to decoding across the entire DCM at once. Memory usage is also significantly reduced since only the subsets of the DCM within the window decoding region need to be stored and processed. Despite its reduced complexity, SWD often achieves near-optimal performance for spatially-coupled LDPC codes as long as the information obtained from localized decoding propagates well to adjacent decoding regions.

The full DCM for multiple SE rounds has the following structure:
\begin{equation}
    \mathbf{H}_{\text{DCM}} = \left[ 
        \begin{array}{llllll}
             H_{1,1}    &           &           &           &               & \\
             H_{2,1}    & H_{2,2}   &           &           &               & \\
                        & H_{3,2}   & H_{3,3}   &           &               & \\
                        &           & H_{4,3}   & H_{4,4}   &               & \\
                        &           &           & \cdots    & \cdots        & \\
                        &           &           &           & H_{t,t-1}   & H_{t,t} \\
        \end{array}
    \right],
\end{equation}
where $t$ is the total number of sets of the detectors used in the simulation. In our simulation setup, $t = d + 2$, where $d$ is the number of SE rounds, and the additive factor of $2$ comes from the detectors defined after transversal initialization and transversal measurement. This formula holds for both memory and Clifford circuit simulations. 

Given the measured detector values $\mathbf{s}$, the decoding problem becomes finding the most likely error pattern $\widehat{\boldsymbol{e}}$ that satisfies
\begin{equation}
    \mathbf{s}^T = \mathbf{H}_{\text{DCM}} \cdot \widehat{\mathbf{e}}^T,
\end{equation}
where
\begin{align}
    \widehat{\mathbf{e}} &= \left[ \widehat{e}_{1} \ \ \widehat{e}_{2} \ \ \widehat{e}_{3} \ \ \cdots \ \ \widehat{e}_{t} \right], \\
    \mathbf{s} &= \left[ s_{1} \ \ s_{2} \ \ s_{3} \ \ \cdots \ \ s_{t} \right].
\end{align}
Based on the above expansion, we can write the decoding problem as the following set of equations:
\begin{align}
    s^T_{1} &= H_{1,1} \cdot \widehat{e}_{1}^T \nonumber \\
    s^T_{2} &= H_{2,1} \cdot \widehat{e}_{1}^T + H_{2,2} \cdot \widehat{e}_{2}^T \nonumber \\
    & \vdots \\
    s^T_{t-1} &= H_{t-1,t-2} \cdot \widehat{e}_{t-2}^T + H_{t-1,t-1} \cdot \widehat{e}_{t-1}^T \nonumber \\
    s^T_{t} &= H_{t,t-1} \cdot \widehat{e}_{t-1}^T + H_{t,t} \cdot \widehat{e}_{t}^T \nonumber
\end{align}
Since both formulations of the problem are equivalent, we can re-express the original decoding problem into a set of smaller decoding problems. 

An $\text{SWD}(w,c)$ has two defining parameters: the window size $w$ and the commit size $c$. The window size $w$ determines the number of sets of detectors that are used as inputs for each decoding round, while the commit size $c$ determines the number of detectors that we fix for committing the error correction. To envisage how an $\text{SWD}(w,c)$ works, take $\text{SWD}(3,1)$ as an example and let $i$ denote the index of the decoding round. 

In the first decoding round $(i = 1)$ of an SWD with $w=3$, we will take the detector values $\left[ s_{1} \ \ s_{2} \ \ s_{3} \right]$ and apply our chosen decoding algorithm to produce estimated errors $\left[ \widehat{e}_{1} \ \ \widehat{e}_{2} \ \ \widehat{e}_{3} \right]$ using the following segment of the DCM:
\begin{equation}
    \mathbf{H}^{(i = 1)}_{\text{DCM}} = \left[ 
        \begin{array}{lll}
             H_{1,1}    &           &           \\
             H_{2,1}    & H_{2,2}   &           \\
                        & H_{3,2}   & H_{3,3}   \\
        \end{array}
    \right].
\end{equation}
For $c = 1$, we commit the correction of the detector values $s_{1}$, which means that we commit correction on the estimated error $\widehat{e}_{1}$. After we commit to the estimated error, we move on to the next decoding round $(i = 2)$, where we now take the detector values $\left[ s_{2} \ \ s_{3} \ \ s_{4} \right]$ at the start of the decoding round. However, $s_{2}$ involves the estimated error $\widehat{e}_{1}$ that we committed in the previous decoding round $(i = 1)$, and we do not want to double-correct this error. To circumvent this, we update $s_{2}$ in this decoding round $(i = 2)$ as follows:
\begin{equation}
    {s^{T}_{2}}' = s^T_{2} + H_{2,1} \cdot \widehat{e}_{1}^T.
\end{equation}

Now we perform decoding using the detector values $\left[ s'_{2} \ \ s_{3} \ \ s_{4} \right]$ and apply our chosen decoding algorithm to produce estimated errors $\left[ \widehat{e}_{2} \ \ \widehat{e}_{3} \ \ \widehat{e}_{4} \right]$ on the following segment of the DCM:
\begin{equation}
    \mathbf{H}^{(i = 2)}_{\text{DCM}} = \left[ 
        \begin{array}{lll}
             H_{2,2}    &           &           \\
             H_{3,2}    & H_{3,3}   &           \\
                        & H_{4,3}   & H_{4,4}   \\
        \end{array}
    \right].
\end{equation}
We use the output of the second decoding round to commit $\widehat{e}_{2}$: the correction of the detector values $s_{2}$. Remember that we have committed to the correction of $\widehat{e}_{1}$ in the first decoding round, so we do not need to commit on it again. This overlapping decoding procedure is performed sequentially until we reach the window decoding region that involves the detector values $\left[ s_{t-2} \ \ s_{t-1} \ \ s_{t} \right]$. Recall that we need to update $s_{t-2}$ to $s'_{t-2}$ accordingly before starting this final round. In the last decoding round $(i = t - 2)$, we commit to the estimated errors in the entirety of $\left[ e_{t-2} \ \ e_{t-1} \ \ e_{t} \right]$ after applying our chosen decoding algorithm on the following segment of the DCM:
\begin{equation}
    \mathbf{H}^{(i = t - 2)}_{\text{DCM}} = \left[ 
        \begin{array}{lll}
             H_{t-2,t-2}    &               &           \\
             H_{t-1,t-2}    & H_{t-1,t-1}   &           \\
                            & H_{t, t-1}    & H_{t,t}   \\
        \end{array}
    \right].
\end{equation}
Finally, we obtain the estimated error $\widehat{\mathbf{e}} = \left[ \widehat{e}_{1} \ \ \widehat{e}_{2} \ \ \widehat{e}_{3} \ \ \cdots \ \ \widehat{e}_{t} \right]$ after performing $(t - 2)$ sequential decoding rounds using an $\text{SWD}(3,1)$.

\subsection{Single-shot aspects of SHYPS}
\label{supmat:additional-simulation-results}

\begin{figure}[th!]
    \includegraphics[width=\linewidth]{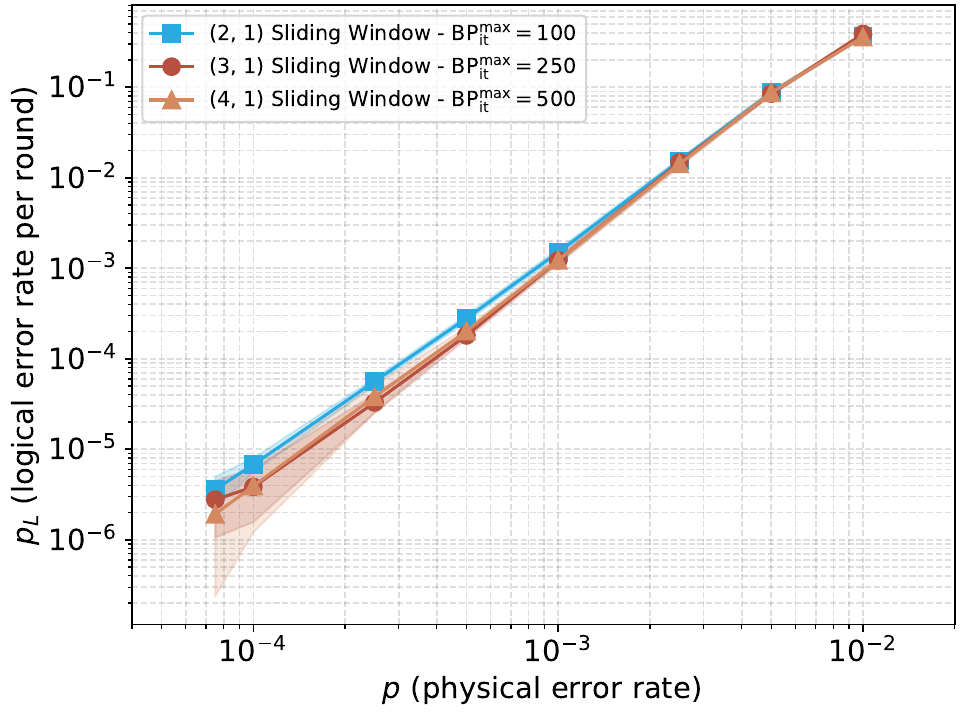}
    \caption{Logical error rate performance of the $[49,9,4]$ SHYPS code using a sliding window decoder when choosing different window sizes.}
    \label{supmat:fig:swd-shs-49-9-4}
\end{figure}

\begin{figure}[th!]
    \includegraphics[width=\linewidth]{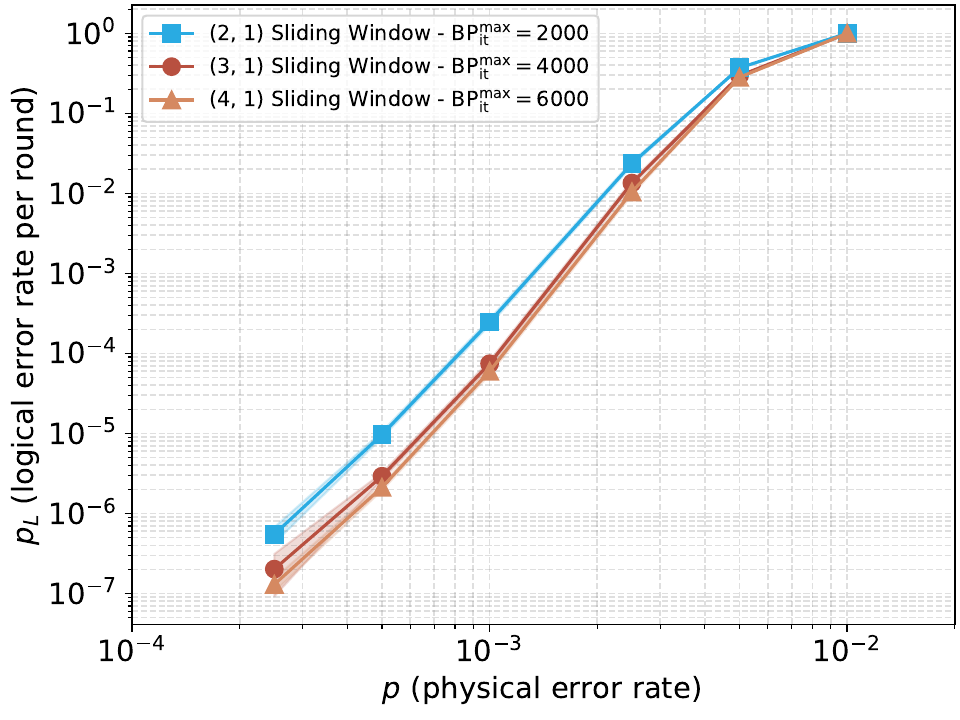}
    \caption{Logical error rate performance of the $[225,16,8]$ SHYPS code using a sliding window decoder when choosing different window sizes.}
    \label{supmat:fig:swd-shs-225-16-8}
\end{figure}

In this section, we discuss the features of SHYPS codes that make them decodable in constant depth. We stress that our goal is not to demonstrate full single-shot fault tolerance in the formal sense, but rather to identify and analyze practical single-shot features in SHYPS codes.

To this end, we investigate two representative examples: the $[49,9,4]$ and $[225,16,8]$ codes, shown in Figures \ref{supmat:fig:swd-shs-49-9-4} and \ref{supmat:fig:swd-shs-225-16-8}, respectively. For both, we observe signatures of single-shot behavior by confirming the existence of pseudo-thresholds under circuit-level noise simulation for all window sizes. Additionally, we find that decreasing the window size does not noticeably impact the logical error rate performance of SHYPS codes. For the $[49,9,4]$ SHYPS code (see Figure \ref{supmat:fig:swd-shs-49-9-4}), performance is practically the same for all window sizes, even while also decreasing $\text{BP}_{\text{it}}^{\text{max}}$. For the $[225,16,8]$ SHYPS code (see Figure \ref{supmat:fig:swd-shs-225-16-8}), similar behaviour is observed, with near identical logical error rates for $w = 3$ and $w = 4$. The logical error rate performance slightly degrades for $w = 2$ ($\text{BP}_{\text{it}}^{\text{max}} = 2000$), but this gap can likely be closed with further optimization of the BP parameters. A constant logical error rate suppression with decreasing window size is indicative of single-shot behaviour.

Following the approach proposed in \cite{lin2025abelianmulticyclecodessingleshot}, we also examine the confinement profile of the $[49,9,4]$ and $[225,16,8]$ SHYPS codes. The confinement profile is defined as the sequence of minimum syndrome weights corresponding to errors of weight $1$ up to $(d - 1)$. For both codes, we find a flat confinement profile of $[3, 3, 3, \ldots, 3]$. Furthermore, SHYPS codes have constant single-shot distance $d_{ss} = 3$ across the whole family. While this does not qualify as \emph{good} confinement in a formal sense \cite{Quintavalle.2021}, it still indicates that for sufficiently low physical error rates, the decoder can still leverage some confinement-like properties. 

In summary, while SHYPS codes do not satisfy all the formal criteria for single-shot fault tolerance, they demonstrate key practical signatures, such as logical error rate stability with decreasing window size and constant single-shot distance. This enables the use of a single syndrome extraction round between each logical operation and makes SHYPS codes decodable in constant depth.

\subsection{Simulating larger SHYPS memories}

Here we provide memory data for the $r=5$ SHYPS code with parameters $[[961, 25, 16]]$ code, the largest member of the family we have simulated. The results, alongside memory performance for smaller SHYPS codes, are shown in Figure \ref{fig:shyps-mems}. As is generally observed for subsystem code constructions, increasing the code distance tends to reduce the pseudo-threshold due to the additional gauge degrees of freedom and the correspondingly larger decoding search space~\cite{Poulin2005,subsystemSurf}. Consistent with this behavior, the pseudo-threshold for the $[[961,25,16]]$ SHYPS code is $p_{\mathrm{th}} = 0.0018$, lower than those of the smaller members of the family ($p_{\mathrm{th}} = 0.0032$ for $[[49,9,4]]$ and $p_{\mathrm{th}} = 0.0035$ for $[[225,16,8]]$). Nonetheless, the boost in error suppression provided by increased distance means that the $[[961, 25, 16]]$ code quickly catches and outpaces the logical error rate of the $[[225, 16, 8]]$ code. More specifically, the $[[961, 25, 16]]$ code achieves logical error rates of $10^{-6}$ at physical error rates approximately $6\times10^{-4}$. Such performance is commensurate with current trapped-ion hardware, which has demonstrated single-, two-qubit-, and SPAM-error rates in this regime \cite{helios}, while comparable fidelities are rapidly approaching feasibility in leading superconducting platforms \cite{willow}.


Considering that the $[[961,25,16]]$ SHYPS code provides a $5.85\times$ reduction in physical-qubit overhead relative to the $[[5625,25,15]]$ rotated surface code (modeled as $25$ copies of the $[[225,1,15]]$ code), and that continued progress in decoding and post-selection techniques may further relax the required physical error rate, the $r{=}4$ and $r{=}5$ SHYPS codes represent credible candidates for early-scale fault-tolerant computation.

\begin{figure}[t!]
\centering
\setlength{\abovecaptionskip}{0pt}
\setlength{\belowcaptionskip}{-1em}
    \includegraphics[width=\linewidth]{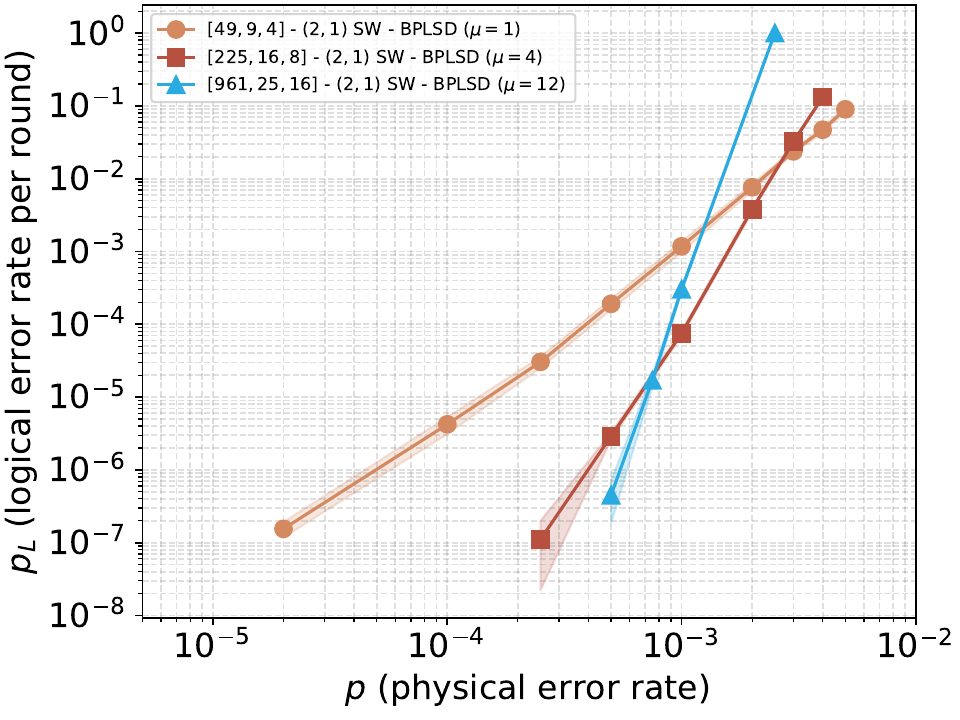}
    \caption{\centering Simulation results for quantum memories under circuit-level noise for SHYPS codes. For these simulations, only $Z$-type detectors are used.}
    \label{fig:shyps-mems}
\end{figure}

\section{{Physical connectivity}} \label{supmat:connectivity}

\begin{figure}[ht]
    \includegraphics[width=0.7\linewidth]{Figures/stab-conn.pdf}
    \caption{Gauge generator connectivity for $r=3$ SHYPS code. Black circles denote data qubits. Orange (blue) squares indicate auxiliary qubits used to measure 
$X$ ($Z$) gauge generators. Each gauge generator has weight 3 and is implemented using three CNOT gates, depicted as edges connecting the corresponding data qubits to a single auxiliary qubit. Two representative gauge generators are illustrated: one $X$-type and one $Z$-type. The full set of gauge generators can be obtained by translating this pattern under periodic boundary conditions.}
    \label{supmat:fig:stabs}
\end{figure}
The physical connectivity requirements of SHYPS codes are best understood by looking at its constituent classical codes. Consider a physical layout for the simplex code in which its Tanner graph is arranged on a line, with alternating data and parity-check nodes. Each parity check connects to its adjacent data nodes as well as to one additional data node via a ``long-range" check with periodic boundary conditions, consistent with the cyclic structure of the code.

The subsystem hypergraph product construction described in Eq.~\eqref{supmat:eqn:subsystem-hgp} naturally lifts this 1D structure to the 2D grid layout illustrated in Fig.~\ref{supmat:fig:stabs}. Here, data qubits are shown as black circles and $Z$ ($X$) auxiliary qubits are represented by orange (blue) squares. Auxiliary qubits serve as $X$- and $Z$-type parity checks along rows and columns, respectively.

The implementation of logical gates in SHYPS codes includes physical two-qubit gates, specifically CNOT and CZ gates. In the following, we examine the corresponding two-qubit gate connectivity requirements.

\subsection{Diagonal operators}
\begin{figure*}[t!]
    \includegraphics[width=0.8\linewidth]{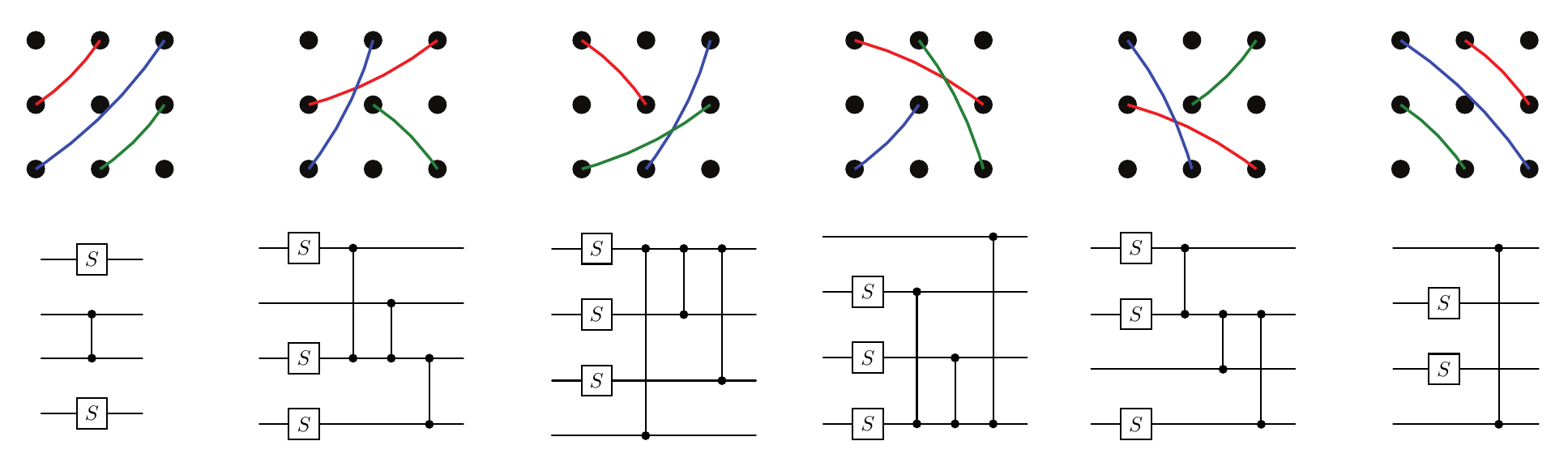}
    \caption{Diagonal generators for $r=2$. Black circles represent data qubits, edges indicate physical CZ gates (colors for visualization purposes only), and isolated vertices each support a physical S gate. Each of the six generators corresponds to an automorphism  $\sigma_i$. All generators are related to each other by a row and column involution. For example, the second generator is related to the first by the column permutation $(23)$ and the row permutation $\text{id}$. The logical action of each generator is shown as an equivalent quantum circuit. Note that $r = 2$ was chosen for visualization purposes and is not part of the SHYPS code family, which requires $r \ge 3$. That is, the six logical generators shown here do not span the full Diagonal group.}
    \label{supmat:fig:diagonal-gens}
\end{figure*}

Physical diagonal generators are given in Eq.~\eqref{supmat:eq:physicalDiag}. 
Note that we can rewrite this
\begin{align}
    U(\sig):= \mat{I & (\sig \otimes \sig^T)\cdot \tau_{n_r} \\ 0  & I}
    = 
    \pi_\sigma
    \mat{I & \tau_{n_r} \\ 0  & I}
    \pi_\sigma^T \,.
\end{align}
where $\pi_\sig$ is the physical row-column permutation matrix
\begin{align}
    \pi_\sig = \mat{ \alpha \otimes \beta & \\ & \alpha \otimes \beta}
\end{align}
and where we have decomposed $\sigma = \alpha \cdot \beta$ as a product of two involutions. Thus, all physical generators are related to $U(\text{id})$  by pairwise row and column swaps, where $\text{id}\in S_n$ denotes the identity permutation.

The physical diagonal operator $U(\text{id})$ corresponds to a physical circuit in which CZ gates are applied between qubits at positions $(i, j)$ and $(j, i)$ (with additional $S$ gates at positions $(i,i)$). This construction is shown in the leftmost panel of Fig.~\ref{supmat:fig:diagonal-gens} for $r=2$.
The remaining panels in Fig.~\ref{supmat:fig:diagonal-gens} display all additional diagonal generators, each obtained as a row and column involution of $U(\text{id})$, along with circuits representing their logical action. Note that the small value of $r$ is chosen for visualization clarity and does not represent a member of the SHYPS family, which requires $r \geq 3$.

\subsection{CNOT operators}
\begin{figure}[th!]
    \includegraphics[width=\linewidth]{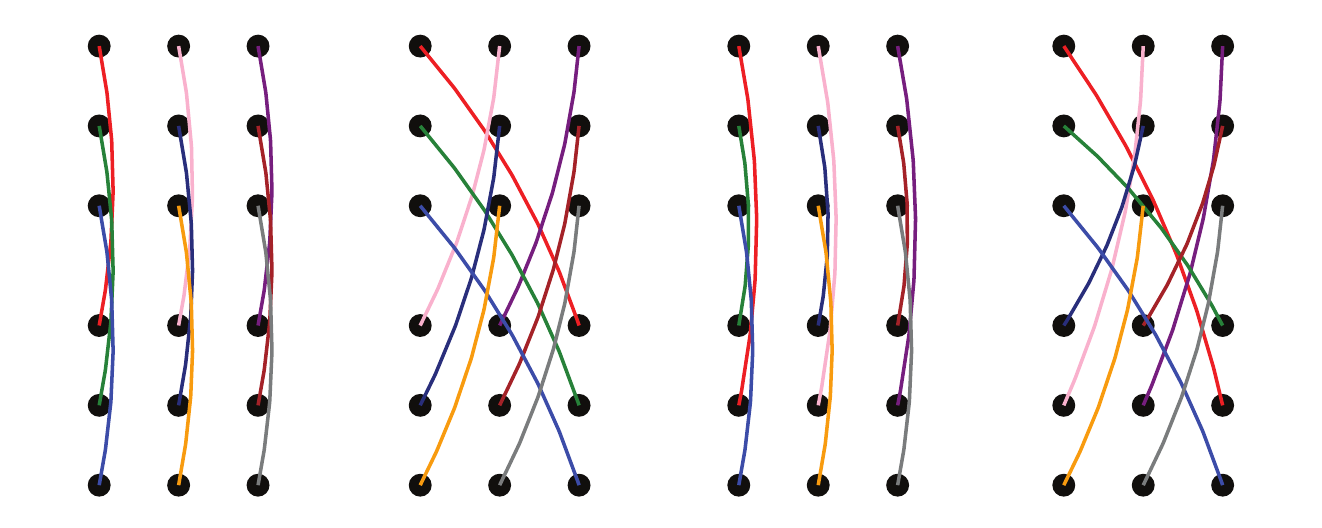}
    \caption{Cross-block CNOT generators. Each black circle represents a data qubit, arranged in two code blocks. Edges denote physical CNOT gates, with control qubits on the upper block and targets on the lower block (colors are for visualization purposes only). The leftmost diagram shows the standard transversal CNOT. The remaining diagrams illustrate additional valid CNOT generators obtained by applying a row permutation $\sigma_1 \in \{\text{id},\ (12)\}$ and a column permutation $\sigma_2 \in \{\text{id},\ (132)\}$ to the targets.}
    \label{supmat:fig:cnot-gens}
\end{figure}
Logical CNOT operators require transversal CNOT circuits between two code blocks, depicted in Fig~.\ref{supmat:fig:generalised-transversal-cnot}.
For completeness, we illustrate this on a grid layout in Fig.~\ref{supmat:fig:cnot-gens}. All generators are related to each other
by a permutation $\pi = \sigma_1 \otimes \sigma_2$, where $\sigma_1$ ($\sigma_2$) are row (column) permutations from the automorphism group.

\subsection{Hardware implementation}
Given that the gauge generators of SHYPS codes include a single non-local or ``long-range" check, syndrome extraction for SHYPS codes does not require stringent hardware connectivity. This means that SHYPS quantum memories are accessible to hardware platforms with slightly better than nearest neighbor connectivity. However, implementing logical operations necessitates all-to-all connectivity within code blocks and between separate blocks, implying that using SHYPS codes for computation will only be suitable for hardware platforms that naturally support non-local connectivity. Promising candidates include photonic architectures that enable non-local entanglement distribution via optical switch networks~\cite{inc2024distributedquantumcomputingsilicon}, as well as ion-trap and neutral atom systems~\cite{Xu2023}.

\end{document}